\newtheorem{assumption}{Assumption}
\newcommand{\ind}[1]{\mathbb{I}(#1)}
\newtheorem{pro}{Proposition}
\newtheorem{theorem}{Theorem}
\newtheorem{lemma}{Lemma}
\numberwithin{equation}{section}
\def\cp{\mathop{\rightarrow}\limits^{p}}
\def\cd{\mathop{\rightarrow}\limits^{d}}
\newtheorem{remark}{Remark}
\def\pr{\mathbb{P}}
\def\E{\mathbb{E}}
\def\var{\mathrm{Var}}
\def\cov{\mathrm{Cov}}
\def\X{{\boldsymbol X}}
\def\S{{\boldsymbol S}}
\def\W{{\bm {W}}}
\def\tilde{\widetilde}
\def\U{\bm U}
\def\tr{{\rm tr}}
\def\Y{{\bm Y}}
\def\diag{\hbox{diag}}
\def\A{{\bf A}}
\def\D{{\bf D}}
\def\V{{\bm V}}
\def\T{{\bm T}}
\def\R{{\bf R}}
\def\bth{{\bm\theta}}
\def\C{{\mathbf{C}}}
\title{\bf\Large Spatial-Sign based High dimensional Change Point Inference}
\author{Jixuan Liu$^1$, Long Feng$^1$, Liuhua Peng$^2$ and Zhaojun Wang$^1$\\
School of Statistics and Data Science, KLMDASR, LEBPS, and LPMC,\\ Nankai University$^1$\\
University of Melbourne$^2$}
\date{\today}
\begin{document}

\maketitle
\begin{abstract}
High-dimensional changepoint inference, adaptable to diverse alternative scenarios, has attracted significant attention in recent years. In this paper, we propose an adaptive and robust approach to changepoint testing. Specifically, by generalizing the classical mean-based cumulative sum (CUSUM) statistic, we construct CUSUM statistics based on spatial medians and spatial signs. We introduce test statistics that consider the maximum and summation of the CUSUM statistics across different dimensions, respectively, and take the maximum across all potential changepoint locations. The asymptotic distributions of test statistics under the null hypothesis are derived. Furthermore, the test statistics 
exhibit asymptotic independence under mild conditions. Building on these results, we propose an adaptive testing procedure that combines the max-$L_\infty$-type and max-$L_2$-type statistics to achieve high power under both sparse and dense alternatives. Through numerical experiments and theoretical analysis, the proposed method demonstrates strong performance and exhibits robustness across a wide range of signal sparsity levels and heavy-tailed distributions. 

    {\it Keywords:} Adaptive testing, Changepoint inference, High dimensional data, Spatial Median, Spatial sign.
\end{abstract}

\section{Introduction}

High-dimensional data often exhibit complex heterogeneity, arising in genomics, finance, neuroscience, and environmental monitoring.  One key form of heterogeneity is the changepoint structure, where the data process suddenly changes at some unknown time point or location.  Detecting and localizing such changepoints is vital: in genomics it can indicate copy number alterations; in finance it reveals market regime shifts; and in network monitoring it signals emerging anomalies.
For an extensive review, see \cite{aue2013structural,niu2016multiple,casini2019structural,truong2020selective}.


In this paper, we consider a sequence of $p$-dimensional random vectors of size $n$, i.e., $\{\X_i:=(X_{i,1},\ldots,X_{i,p})^\top\in\mathbb{R}^p\}_{i=1}^n$, from the following mean-change model:
\begin{align}\label{cpm}
    \X_i = {\boldsymbol\theta}_0 + \boldsymbol\delta \ind{i>\tau} + {\boldsymbol\epsilon}_i,\ \ i=1,\ldots,n,
\end{align}
where ${\boldsymbol\theta}_0\in\mathbb{R}^p$ represents the baseline mean level, $\boldsymbol\delta=(\delta_1,\ldots,\delta_p)^\top\in\mathbb{R}^p$ is the change signal parameter measuring the magnitude of the change in mean, $\tau\in\{1,\ldots,n\}$ denotes a potential changepoint, and $\{{\boldsymbol\epsilon}_i=(\epsilon_{i,1},\ldots,\epsilon_{i,p})^\top\in\mathbb{R}^p\}_{i=1}^n$ are random noises with zero mean. The goal of interest is to test whether there exists a changepoint, that is,
\begin{align}\label{H0}
\begin{gathered}
    H_0: \tau=n\ \text{and}\ \boldsymbol\delta=\mathbf{0}\ \text{~~versus~~} 
    H_1: \text{there exists}\ \tau\in\{1,\ldots,n-1\}\ \text{and}\ \boldsymbol\delta\neq \mathbf{0},
\end{gathered}
\end{align}
under the scenario where both the sample size $n$ and dimension $p$ grow to infinity. 
A review of recent developments in various testing procedures for \eqref{H0} is provided by \cite{liu+zhang+liu-2022high}.

The most widely used methods for testing \eqref{H0} are to construct statistics that compare segments of the data. Among these, the mean-based cumulative sum (CUSUM) statistic is the most common approach.
Specifically, the CUSUM statistic $\{\breve\C_{\gamma}(k)\}_{k=1}^{n}$ is frequently used with $\gamma=0$ or $0.5$, where
\begin{equation*}
    \begin{aligned}
        \breve\C_{\gamma}(k)=&\left\{\frac{k}{n}\left(1-\frac{k}{n}\right)\right\}^{1-\gamma}\sqrt{n}\breve{\D}^{-1/2}\left(\breve{\boldsymbol\theta}_{1:k}-
        \breve{\boldsymbol\theta}_{k+1:n}\right)\,.
    \end{aligned}
\end{equation*}
Here, $\breve{\boldsymbol\theta}_{a:b}=(b-a+1)^{-1}\sum_{i=a}^b\X_i$ for $1\leq a\leq b \leq n$, and $\breve{\D}^{-1}$ is an estimator for the inverse of the (long-run) variance. A common choice is a diagonal matrix $\breve{\D}=\diag\{\breve\sigma_1^{2},\breve\sigma_2^2,\ldots,\breve\sigma_n^2\}$ with $\breve{\sigma}_j^2$ being the sample variance of $\{X_{1,j},X_{2,j},\ldots,X_{n,j}\}$ for $j=1,2,\ldots,p$.

For the mean-based CUSUM statistic, various methods for aggregating dimensions and locations have been explored. \cite{bai2010common, horvath2012change, 2016-jin+pan+yang+zhou-p2355} considered 
the max-$L_2$-type statistic $\max_{1 \leq k \leq n} \|\breve\C_0(k)\|^2$, and established its convergence, after normalization, to the supremum of a Gaussian process under $H_0$. 
\citet{wang2022inference} replaced each component of $\breve\C_0(k)$ with a self-normalized $U$-statistic.
\citet{2013-chan+Horvath+Huskova-p955} proposed $ \max_{\lambda_n \leq k \leq n - \lambda_n} \|\breve\C_{0.5}(k)\|^2 $ 
with $ \lambda_n \in [1, n/2] $ as a user-specified boundary removal parameter, and showed convergence to the extreme value distribution of the Gumbel type under $ H_0 $.
Alternatively, \citet{wang+zou+wang+yin-2019-Multiple} considered a sum-$L_2$-type statistic $\sum_{k=1}^{n-1} \|\breve\C_{0.5}(k)\|^2$.
Beyond $ L_2 $-aggregations, $ L_\infty $-aggregations in conjunction with the maximum operator have also attracted considerable attention. \cite{Jirak-2015} proposed the max-$L_\infty$-type statistic $ \max_{1 \leq k \leq n} \|\breve\C_0(k)\|_\infty $, and showed that it converges to the Gumbel distribution under $ H_0 $. \citet{yu2021finite} considered $ \max_{\lambda_n \leq k \leq n - \lambda_n} \|\breve\C_{0.5}(k)\|_\infty $ and employed a multiplier bootstrap to approximate its null distribution.
Furthermore, \cite{wang2023} also considered $ \max_{\lambda_n \leq k \leq n - \lambda_n} \|\breve\C_{0.5}(k)\|_\infty $, and established its convergence to the Gumbel distribution under $H_0$, thereby enabling simple implementation that avoids numerical approximations.

Many changepoint detection methods rely on sample means or assume Gaussian or other light-tailed distributions \citep{horvath2012change,2013-chan+Horvath+Huskova-p955,2016-jin+pan+yang+zhou-p2355}, leading to poor performance under heavy-tailed data.
In traditional multivariate analysis, \citet{matteson2014nonparametric} developed a homogeneity test based on energy distance combined with a maximum-type statistic. 
In addition, \citet{lung2015homogeneity} introduced a rank-based method that extends the Mann–Whitney–Wilcoxon two-sample test to changepoint detection using a maximum operator. 
However, both methods are limited to fixed dimensions, and they fail or lack theoretical guarantees as the dimension $p$ tends to infinity.
To fill in this gap, we propose changepoint tests based on spatial medians and spatial signs \citep{oja2010multivariate}, which are robust to heavy-tailed data and have been extensively applied to high-dimensional data analysis \citep{zou2014multivariate,wang+peng+li-2015high,feng2016multivariate,cheng2023statistical,liu+feng+wang+2024}. 
In this paper, we develop max-$L_\infty$-type tests based on spatial medians, which are powerful under sparse change signals, and max-$L_2$-type tests based on spatial signs, which are effective when the change is dense.

In practice, whether the alternatives are dense or sparse is often unknown. To address this, adaptive strategies have been developed that combine $L_2$-type and $L_\infty$-type tests, which are sensitive to dense weak signals and sparse strong signals, respectively.
These adaptive strategies are designed to be effective across a wide range of alternative change patterns.
Let $\breve\C_{\gamma}(k)=(\breve{C}_{\gamma,1}(k),\ldots,\breve{C}_{\gamma,p}(k))^{\top}$,
\cite{liu2020unified} introduced $\breve{T}_{q,s_0}=\max_{\lambda\leq k\leq n-\lambda}\{ \sum_{j=1}^{s_0}\vert \breve{C}_{0,(j)}(k)\vert^q \}^{1/q}$ with $1\leq q\leq \infty$ and $1\leq s_0\leq p$, where $\vert \breve{C}_{0,(1)}(k)\vert\geq \cdots\geq \vert \breve{C}_{0,(q)}(k)\vert$ are the order statistics of $\{ \vert \breve{C}_{0,j}(k)\vert \}_{j=1}^p$.
They then proposed an adaptive procedure by taking the minimum of $p$-values corresponding to $T_{q,s_0}$ over a series of $q$ values with a fixed $s_0$. 
Similarly, \cite{zhang2022adaptive} considered an adaptive test over a series of self-normalized $U$-statistic-based CUSUM statistics. 
\citet{wang2023} proposed double-max-sum methods that combine $p$-values from max-$L_\infty$-type and sum-$L_2$-type tests using their asymptotic independence.
However, all these methods are based on sample means and are not robust to heavy-tailed distributions. This motivates us to develop adaptive strategies that combine spatial-median- and spatial-sign-based $L_\infty$-type and $L_2$-type tests.

In this paper, we propose CUSUM statistics based on spatial medians and spatial signs, respectively.
These are used to construct max-$L_\infty$-type and max-$L_2$-type test statistics, each defined by taking the maximum over all possible changepoint locations.
The proposed tests apply to a general model that accommodates heavy-tailed distributions. In addition, we develop adaptive strategies that combine the $p$-values from the two types of tests using Fisher combination, thereby leveraging the strengths of both tests under different change signals.
The contributions of this paper are outlined as follows.
\begin{itemize}
    \item[(i)] Our proposed methods are based on spatial medians and spatial signs, which are well-recognized techniques for analyzing heavy-tailed data.
    These approaches not only exhibit advantageous performance for heavy-tailed data but also maintain results comparable to mean-based methods when applied to normally distributed data. Although spatial-sign based methods have been extensively studied in the literature, this is the first paper to apply them to changepoint inference. Our work pioneers the integration of spatial-sign techniques into this area, offering a robust and distribution-free approach for testing changepoints and detecting structural changes. This novel application not only broadens the scope of spatial-sign methods but also provides new insights and tools for high-dimensional change point analysis.
    
     \item[(ii)] The adaptive strategies proposed in this paper, which combine $p$-values from both max-$L_2$-type and max-$L_\infty$-type tests, effectively adjust to different levels of signal sparsity. Extensive simulation studies demonstrate that the combined test consistently outperforms existing methods, particularly under heavy-tailed distributions. Therefore, our proposed methods offer dual robustness--they are not only resilient to heavy-tailed data but also highly adaptive to varying sparsity levels of alternatives. This dual advantage marks a significant contribution to the literature on high-dimensional change point inference.
     
     \item[(iii)] 
     Theoretically, we derive the asymptotic null distributions of the max-$L_2$-type and max-$L_\infty$-type test statistics under a general model. Furthermore, we establish the asymptotic independence between the two statistics, which motivates the adaptive procedure that combines their $p$-values. Finally, we characterize the asymptotic behavior of the proposed tests under the local alternative. This paper is the first to study the asymptotic independence between two Gumbel-type limit distributions in high-dimensional settings. In contrast, most existing works focus on asymptotic independence between a Gumbel distribution and an asymptotically normal distribution. Establishing such a result is highly nontrivial and requires the development of several new technical tools. Our work thus fills an important gap in the literature and opens new avenues for studying extreme value theory under high-dimensional asymptotics.
 \end{itemize}

The paper is organized as follows. Section~\ref{sec:preliminary} reviews spatial medians and spatial signs with model assumptions. Sections~\ref{Sec:MAX} and~\ref{Sec:SUM} introduce max-$L_\infty$-type and max-$L_2$-type tests, respectively, and derive their asymptotic properties. Section~\ref{Sec:adaptive} presents the adaptive combination strategy and its theoretical justification. Simulation studies are reported in Section~\ref{Sec:Simus}, and real data applications are presented in Section~\ref{Sec:Real data}. Concluding remarks are in Section~\ref{Sec:Con Remarks}.

\textbf{Notations:}  For a $d$-dimensional vector $\boldsymbol x$, denote its Euclidean norm and maximum-norm as $\Vert \boldsymbol x\Vert$ and $\Vert \boldsymbol x\Vert_\infty$, respectively. Denote $a_n\lesssim b_n$ if there exists constant $C$, $a_n\leq C b_n$ and $a_n \asymp b_n$ if  both $a_n\lesssim b_n$ and $b_n\lesssim a_n$ hold. For $a, b \in \mathbb{R}$, we write $a \wedge b=\min \{a, b\}$. Let $\psi_{\alpha_0}(x)=\exp \left(x^{\alpha_0}\right)-1$ be a function defined on $[0, \infty)$ for $\alpha_0>0$. Then the Orlicz norm $\|\cdot\|_{\psi_{\alpha_0}}$ of a random variable $X$ is defined as $\|X\|_{\psi_{\alpha_0}}=\inf \left\{t>0, \E\left\{\psi_{\alpha_0}(|X| / t)\right\} \leqslant 1\right\}$. Let $\operatorname{tr}(\cdot)$ be a trace for matrix, $\lambda_{min}(\cdot)$ and $\lambda_{max}(\cdot )$ be the minimum and maximum eigenvalue for symmetric martix. For a symmetric matrix $\A=(a_{ij})_{p\times p}$, we denote $\Vert \A\Vert_1=\Vert \A\Vert_\infty=\max_{1\leq j\leq p}\sum_{i=1}^p\vert a_{ij}\vert$, $\Vert \A\Vert_F=\left\{\tr(\A^2)\right\}^{1/2}$. Denote $\mathbf I_p$ as the $p$-dimensional identity matrix, and $ \operatorname{diag}\{v_1,v_2,\ldots,v_p\}$ to be the diagonal matrix with entries $\boldsymbol v=(v_1,v_2,\ldots,v_p)^{\top}$. 

\section{Preliminary}\label{sec:preliminary}

In this paper, we consider the following model for random noises $\{\boldsymbol\epsilon_i\}_{i=1}^n$:
\begin{align}\label{modelx}
\boldsymbol\epsilon_i=\nu_i\mathbf\Gamma \boldsymbol W_i,
\end{align}
where $\boldsymbol{\Gamma}$ is a nonrandom and invertible $p\times p$ matrix, $\nu_i$ is a nonnegative univariate random variable that is independent with the spatial sign of $\boldsymbol W_i$, and $\boldsymbol W_i=\left(W_{i, 1}, \ldots, W_{i, p}\right)^{\top}$ is a
$p$-dimensional random vector satisfies the following assumption.
\begin{assumption}\label{ass:max1}
    $W_{i, 1}, \ldots, W_{i, p}$ are i.i.d.~symmetric random variables with $\E\left(W_{i, j}\right)=0$, $\E\left(W_{i, j}^2\right)= 1$, and $\left\|W_{i, j}\right\|_{\psi_{\alpha_0}} \leqslant c_0$ with some constant $c_0>0$ and $1 \leqslant \alpha_0 \leqslant 2$.
\end{assumption}

\begin{remark}
    Model (\ref{modelx}) has been widely adopted in high-dimensional spatial median and spatial sign-based approaches \citep{wang+peng+li-2015high,cheng2023statistical,liu+feng+wang+2024}. It encompasses a broad class of widely used multivariate models and distribution families, such as the independent components model \citep{nordhausen2009signed,ilmonen2011semiparametrically,yao2015sample} with $\nu_i$ as a nonnegative constant and the family of elliptical distributions \citep{hallin2006semiparametrically,oja2010multivariate,fang2018symmetric} with $\boldsymbol{W}_i\sim N(\mathbf{0}, \mathbf{I}_p)$. 
    Assumption \ref{ass:max1} is identical to Condition C1 in \cite{cheng2023statistical}, ensuring that $\boldsymbol{\theta}_0 + \boldsymbol{\delta}\mathbb{I}(i>\tau)$ coincides with the population spatial median of $\X_i$ and that $W_{i,j}$ follows a sub-exponential distribution. 
    For elliptical distributions where $\boldsymbol{W}_i\sim N(\mathbf{0},\mathbf{I}_p)$, Assumption \ref{ass:max1} holds automatically.
\end{remark}

The spatial sign is an extension of the univariate sign to vectors and the spatial sign function is defined as $U(\boldsymbol{x})=\|\boldsymbol{x}\|^{-1}\boldsymbol{x}\mathbb{I}(\boldsymbol{x}\neq \boldsymbol{0})$. Spatial sign-based techniques are widely employed for inference on location parameters in multivariate and high-dimensional settings \citep{oja2010multivariate,cheng2023statistical}. These methods offer improved efficiency compared to mean-based approaches in heavy-tailed distributions.
They typically require an estimator of the location parameter, for which we adopt the sample spatial median in this paper.

Based on $\X_a,\ldots,\X_b$ for $1\leq a\leq b \leq n$, the classical sample spatial median $\tilde{\boldsymbol\theta}_{a:b}$ is defined as  
\begin{equation*}
    \begin{aligned}
        \tilde{\boldsymbol \theta}_{a:b}=\arg\min_{\boldsymbol \beta}\sum_{i=a}^{b}\|\X_i-\boldsymbol\beta\|,
    \end{aligned}
\end{equation*}
serving as an estimator of the corresponding population spatial median. While $\tilde{\boldsymbol{\theta}}_{a:b}$ demonstrates robustness in multivariate settings \citep{oja2010multivariate,cheng2023statistical}, it discards scalar information for each variable and may perform poorly when substantial differences exist across dimensions. 
To address this limitation, \cite{feng2016multivariate} proposed a scalar-transformation-invariant method that jointly estimates the median and a diagonal matrix to standardize each variable to a common scale, accounting for variance heterogeneity.
In particular, we seek a pair of diagonal matrix ${\mathbf{D}}$ and vector ${\boldsymbol{\theta}}$ that jointly satisfy
\begin{align}\label{eqs}
\frac{1}{b-a+1} \sum_{i=a}^{b} U({\boldsymbol{\varepsilon}}_i)=\mathbf{0} \text { ~~and~~ } \frac{p}{b-a+1} \operatorname{diag}\left\{\sum_{i=a}^{b} U({\boldsymbol{\varepsilon}}_i) U({\boldsymbol{\varepsilon}}_i)^\top\right\}=\mathbf{I}_{p},
\end{align}
where ${\boldsymbol{\varepsilon}}_i={\mathbf{D}}^{-1 / 2}(\boldsymbol{X}_i-\boldsymbol{\theta})$. The pair $\left({\D},{\boldsymbol{\theta}}\right)$ can be viewed as a simplified version of the Hettmansperger-Randles (HR) estimator \citep{hettmansperger2002practical}, ignoring the off-diagonal elements of the scatter matrix. To solve \eqref{eqs}, we can adapt the recursive algorithm of \cite{feng2016multivariate}, iterating the following three steps until convergence:
\begin{itemize}
\item[(i)] $\bm\varepsilon_i \leftarrow \D^{-1/2}(\X_i-\bth)$,
~~$i=a,\ldots,b$;
\item[(ii)] $\bth \leftarrow
\bth+\frac{\D^{1/2}\sum_{j=a}^{b}U(\bm\varepsilon_i)}{\sum_{j=a}^{b}\|\bm\varepsilon_i\|^{-1}}$;
\item[(iii)] $\D \leftarrow p
\D^{1/2}\diag\{(b-a+1)^{-1}\sum_{i=a}^{b}U(\bm\varepsilon_i)U(\bm\varepsilon_i)^\top\}\D^{1/2}$.
\end{itemize}
The resulting estimators of location and diagonal matrix based on $\X_a,\ldots,\X_b$ are denoted as $\hat{\bm \theta}_{a:b}$ and $\hat{\D}_{a:b}$. The algorithm can be initialized using the sample mean and sample variances.

For $i=1,\ldots,n$, we denote
$\boldsymbol U_i=U(\mathbf D^{-1/2}\boldsymbol\epsilon_i)$ and $R_i=\Vert \mathbf D^{-1/2}\boldsymbol\epsilon_i\Vert$ as the scale-invariant spatial-sign and radius of the random noise is $\boldsymbol\epsilon_i$, respectively. 
Denote $\D=\diag\{d_1^2,\ldots,d_p^2\}$ and $\boldsymbol W_i=\left(W_{i, 1}, \ldots, W_{i, p}\right)^{\top}$, we impose the following assumptions.

\begin{assumption}\label{ass:max2}
    The moments $\zeta_k=\E\left(R_i^{-k}\right)$ for $k=1,2,3,4$ exist for large enough $p$. In addition, there exist two positive constants $\underline{b}$ and $\bar{B}$ such that $\underline{b} \leqslant \lim \sup _p \E\left(R_i / \sqrt{p}\right)^{-k} \leqslant \bar{B}$ for $k=1,2,3,4$.
\end{assumption}

\begin{assumption}\label{ass:max3}
    There exist some positive constant $\underline d$ such that $\lim\inf_{p\rightarrow \infty}\min_{j=1,2,\ldots,p}d_j>\underline{d}$.
    In addition,
    the shape matrix $\R=\mathbf D^{-1/2}\mathbf\Gamma\mathbf\Gamma^\top \mathbf D^{-1/2}=\left(\sigma_{j \ell}\right)_{p \times p}$ satisfies: (i) $\tr(\R)=p$; (ii) there exist positive constants $\underline{m}$ and $\overline{M}$ such that $\underline{m}\leq \sigma_{jj}\leq \overline{M}$ for $j=1,2,\ldots,p$; (iii) $\max _{j=1,\ldots,p}\sum_{\ell=1}^p\left|\sigma_{j \ell}\right| \leqslant a_0(p)$, where $a_0(p)\asymp p^{1-\eta_0}$ for some positive constant $\eta_0\leq 1/2$. (iv) $\tr\left(\R^2\right)-p=o\left(n^{-1} p^2\right)$.
\end{assumption}

\begin{remark}
     Assumption \ref{ass:max2} extend Assumption 1 in \cite{zou2014multivariate}, which indicates that $\zeta_k\asymp p^{-k/2}$ for $k=1,2,3,4$. This is a mild condition introduced to prevent $\boldsymbol X_i$ from concentrating too much near its population spatial median. 
     It has been verified in \cite{zou2014multivariate} that Assumption \ref{ass:max2} holds for multivariate normal, Student-$t$, and mixtures of multivariate normal distributions.
     For further discussions on similar assumptions, see \cite{cardot2013efficient, zou2014multivariate, cheng2023statistical}.
\end{remark}

\begin{remark}
    Conditions (i)--(iii) on $\R$ in Assumption \ref{ass:max3} are commonly adopted and are similar to Condition C3 in \cite{cheng2023statistical}, where a similar condition is imposed on $\boldsymbol{\Gamma}\boldsymbol{\Gamma}^{\top}$ instead of on $\R$.
    The introduction of $\D$ enhances the efficiency of our methods compared to those based on $\widetilde{\boldsymbol\theta}_{a:b}$, particularly when there are significant variance differences across dimensions. Conditions (iv) on $\R$ in Assumption \ref{ass:max3} is crucial for establishing the consistency of the diagonal matrix estimators \citep{liu+feng+wang+2024}.
\end{remark}

\begin{remark}
    Assumptions \ref{ass:max1}--\ref{ass:max3} ensure that under $H_0$, when $b-a\to\infty$ satisfies $\log p = o((b-a)^{1/3})$ and $\log (b-a) = o(p^{1/3 \wedge \eta_0})$, $\hat{\boldsymbol\theta}_{a:b}$ admits a Bahadur representation with a maximum-norm bound on the remainder term \citep{liu+feng+wang+2024}. Specifically, we have
    \begin{align*}
        \hat{\D}_{a:b}^{-1/2}\left(\hat{\bth}_{a:b}-\bth_0\right)=\frac{1}{b-a+1}\zeta_{1}^{-1}\sum_{i=a}^b \U_i+\mathbf{C}_{a:b},
    \end{align*}
    where $\Vert \mathbf{C}_{a:b}\Vert_{\infty}=(b-a)^{-1/2}O_p[(b-a)^{-1/4}\log^{1/2}\{(b-a)p\}+p^{-(1/6\wedge \eta_0/2)}\log^{1/2}\{(b-a)p\}] = o_{p}((b-a)^{-1/2})$.
\end{remark}

\section{Max-$L_\infty$-type tests}\label{Sec:MAX}

It is well known that $L_\infty$-type statistics are particularly effective in detecting sparse alternatives. In this section, we introduce two max-$L_\infty$-type test statistics based on spatial median for testing \eqref{H0}. 


We account for the potential changepoint in Model \eqref{modelx} under the alternative hypothesis when estimating the diagonal matrix $\D$. Assume that the changepoint $\tau$ does not occur within the first or last $\varrho$-proportion of the samples, where $\varrho\in(0,1/2)$ is a fixed constant. This assumption is commonly adopted in the changepoint detection literature; see, for example, \cite{zhao2022optimal}. Denote $(\hat{\bth}_{1}^{(\varrho)},\hat{\D}_{1}^{(\varrho)}):=(\hat{\bth}_{1:[n\varrho]},\hat{\D}_{1:[n\varrho]})$ and $(\hat{\bth}_{2}^{(\varrho)},\hat{\D}_{2}^{(\varrho)}):=(\hat{\bth}_{(n-[n\varrho]+1):n}, \hat{\D}_{(n-[n\varrho]+1):n})$ as the estimators of $(\bth,\D)$ based on the first $[n\varrho]$ and the last $[n\varrho]$ samples, respectively. 
Denote by $\hat{d}_{1,1}^{(\varrho)2}$ and $\hat{d}_{2,1}^{(\varrho)2}$ the first diagonal element of $\hat{\D}_{1}^{(\varrho)}$ and $\hat{\D}_{2}^{(\varrho)}$, respectively. These quantities serve as estimators of $d_1^2$ in $\D$. 
Define $$\hat{\D} =\left(\hat{\D}_{1}^{(\varrho)}/\hat{d}_{1,1}^{(\varrho)2} + \hat{\D}_{2}^{(\varrho)}/\hat{d}_{2,1}^{(\varrho)2}\right)/2,$$ which serves as a consistent estimator of $\D/d_1^2$ under both the null and alternative hypotheses.  The consistency of $\hat{\D}$ can be established similarly to the proof of Lemma 2 in the Supplementary Materials of \cite{liu+feng+wang+2024} under suitable conditions.

For $k=1,\ldots,n$, we define the spatial-median-based CUSUM statistic as
\begin{equation}
    \notag 
    \C_{\gamma}(k)=\left\{\frac{k}{n}\left(1-\frac{k}{n}\right)\right\}^{1-\gamma}\sqrt{n}\hat{\D}^{-1/2}\left(\hat{\bth}_{1:k}-\hat{\bth}_{(k+1):n}\right)\,.
\end{equation}
Given the relatively slow convergence rate of the maximum norm of $\C_{\gamma}(k)$, we proposed two versions of the adjusted max-$L_\infty$-type statistics, defined as
\begin{align*}
    M_{n,p}:=\max_{\lambda_n\leq k\leq n-\lambda_n}\|\C_{0}(k)\|_{\infty}\cdot (1-n^{-1/2})\ \text{~and~}\
    M^{\dagger}_{n,p}:=\max_{\lambda_n\leq k\leq n-\lambda_n}\|\C_{0.5}(k)\|_{\infty}\cdot (1-n^{-1/2})\,,
\end{align*}    
where $\lambda_n\in[1,n/2]$ is a pre-specified boundary removal parameter.

Many researchers have studied the mean-based max-$L_\infty$-type statistics \citep{Jirak-2015,yu2021finite,wang2023}, defined as $\breve{M}_{n,p}=\max_{1 \leq k \leq n} \|\breve\C_{0}(k)\|_{\infty} $ and $ \breve{M}_{n,p}^{\dagger}=\max_{\lambda_n \leq k \leq n - \lambda_n} \|\breve\C_{0.5}(k)\|_{\infty}$.
When $n\wedge p\rightarrow\infty$, \cite{Jirak-2015} showed that $\breve{M}_{n,p}$ weakly converges to the Gumbel distribution under certain decay conditions on componentwise correlations, provided that $H_0$ holds. \cite{yu2021finite} proposed a multiplier bootstrap method to approximate the distribution of $\breve{M}^\dagger_{n,p}$ under $H_0$. \cite{wang2023} further derived the asymptotic null distribution for both $\breve{M}_{np}$ and $\breve{M}^\dagger_{n,p}$ under weaker conditions on componentwise correlations among $p$ variables compared to \cite{Jirak-2015}.

We now derive the asymptotic distribution of $M_{n,p}$ and $M_{n,p}^{\dagger}$ under $H_0$.
To accommodate dependence across dimensions, we introduce the following assumption, which is less restrictive than the logarithmic decay condition imposed in \cite{Jirak-2015}. For a more detailed discussion of this assumption, we refer to \citet{liu+feng+wang+2024}.


\begin{assumption}[Componentwise correlations]\label{ass:cor}
        Assume that $\max_{1\leq j<\ell\leq p}|\sigma_{j\ell}|\leq \varrho_0$ for all $p\geq 2$ for some $\varrho_0 \in(0,1)$. Let $\left\{\varpi_p\right\}_{p \geq 1}$ and $\left\{\kappa_p\right\}_{p \geq 1}$ be sequences of positive constants satisfying $\varpi_p=o(1 / \log p)$ and $\kappa_p \rightarrow 0$ as $p \rightarrow \infty$. For $1 \leq j \leq p$, define $B_{p, j}=\left\{1 \leq \ell \leq p: \left|\sigma_{j\ell}\right| \geq \varpi_p\right\}$ and $C_p=\left\{1 \leq j \leq p: \left|B_{p, j}\right| \geq p^{\kappa_p}\right\}$. We assume that $\left|C_p\right| / p \rightarrow 0$ as $p \rightarrow \infty$.
\end{assumption}

\begin{theorem}\label{thm:Max-Max}
Suppose Assumptions \ref{ass:max1}--\ref{ass:cor} hold, if $\log^7 n=o(p^{1/6\wedge\eta_0/2})$ and $\log^2 p=o(n^{1/5\wedge \lambda/3})$ for some positive constant $\lambda\in(0,1)$, then under $H_0$, 
\begin{itemize}
\item[(i)] If $\lambda_n\sim n^{\lambda}$, as $(n,p)\to\infty$,
\[
    \pr\left(p^{1/2}\zeta_1 M_{n,p}\leq u_p\{\exp(-x)\}\right) \to \exp\{-\exp(-x)\},
\]
where $u_p\{\exp(-x)\}=\sqrt{\{x+\log(2p)\}/2}$.

\item[(ii)] If $\lambda_n\sim n^{\lambda}$, as $(n,p)\to\infty$,
\[
    \pr\left(p^{1/2}\zeta_1 M^{\dagger}_{n,p}\leq \frac{x+D(p\log h_n)}{A(p\log h_n)}\right) \to \exp\{-\exp(-x)\},
\]
where $A(x)=\sqrt{2\log x}$, $D(x)=2\log x+2^{-1}\log\log x-2^{-1}\log\pi$ and $h_n=\left\{(\lambda_n/n)^{-1}-1\right\}^2$.
\end{itemize}
\end{theorem}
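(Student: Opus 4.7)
The plan is to reduce both statistics to the extreme-value behaviour of a high-dimensional Gaussian proxy of the leading spatial-sign partial-sum process, and then invoke different extreme-value tools for the two different weightings.

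First, using the Bahadur representation stated in the remark after Assumption~\ref{ass:max3}, together with the consistency of $\hat{\D}$ for $\D/d_1^2$ under $H_0$, I would linearize $\C_\gamma(k)$. A short computation (plugging the Bahadur expansion into $\hat{\bth}_{1:k}$ and $\hat{\bth}_{(k+1):n}$ and combining the two diagonal matrices $\hat{\D}_{1:k},\hat{\D}_{(k+1):n}$) shows that, writing $\S_k=\sum_{i=1}^k \U_i$, for $\gamma\in\{0,1/2\}$
\begin{equation*}
p^{1/2}\zeta_1\,\C_\gamma(k) \;=\; \left\{\frac{k(n-k)}{n^2}\right\}^{-\gamma}\sqrt{\frac{p}{n}}\left(\S_k-\frac{k}{n}\S_n\right) + r_\gamma(k),
\end{equation*}
where $\max_{\lambda_n\leq k\leq n-\lambda_n}\|r_\gamma(k)\|_\infty=o_p(1/\sqrt{\log(p\log h_n)})$. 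This uses the max-norm control on $\mathbf{C}_{a:b}$ with $b-a\gtrsim\lambda_n\asymp n^\lambda$, the consistency of $\hat{\D}$, and the rate conditions $\log^7 n=o(p^{1/6\wedge \eta_0/2})$ and $\log^2 p=o(n^{1/5\wedge \lambda/3})$ in the theorem.

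Second, I would invoke a high-dimensional Gaussian approximation of Chernozhukov–Chetverikov–Kato type, extended to sub-exponential coordinates via Assumption~\ref{ass:max1}, to couple the leading term uniformly in $(k,j)$ with a centred Gaussian process whose coordinates $G_j(\cdot)$ are Brownian-bridge-like with covariance inherited from $\R$. Since $\E(U_{i,j}^2)=\sigma_{jj}/p$ and $\sigma_{jj}\in[\underline{m},\overline{M}]$, the variance of $G_j(t)$ equals $\sigma_{jj}\,t(1-t)$; together with $\tr(\R)=p$ and the normalization in \eqref{eqs}, the marginals are asymptotically standard Brownian bridges.

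Third, I would carry out the extreme-value analysis on the Gaussian proxy. For part~(i) with $\gamma=0$, the weighting is one, so $\max_{j}\sup_{t\in[\lambda_n/n,1-\lambda_n/n]}|G_j(t)|$ dominates. Using the classical Brownian-bridge tail $\pr(\sup_{t\in[0,1]}|B(t)|>c)\sim 2e^{-2c^2}$ and a Chen–Stein/Poisson approximation that exploits Assumption~\ref{ass:cor} to handle the cross-coordinate dependence, one gets $\pr(\text{exceedance above }c)\sim 2pe^{-2c^2}$, and setting this equal to $e^{-x}$ yields exactly $c^2=(x+\log 2p)/2$. For part~(ii) with $\gamma=1/2$, the weighting $\{k(n-k)/n^2\}^{-1/2}$ rescales each coordinate to unit marginal variance; after the time change $u=\log\{t/(1-t)\}$, the standardized process becomes stationary Ornstein–Uhlenbeck-like, so the supremum over $k\in[\lambda_n,n-\lambda_n]$ over a single coordinate admits the Darling–Erd\H{o}s limit on an interval of effective length $\log h_n=2\log(n/\lambda_n-1)$. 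Taking the max over $j$ via the same Poisson approximation multiplies the effective count by $p$, yielding the announced $(A,D)$-normalization with argument $p\log h_n$.

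The main obstacle is the extreme-value step. In contrast with the mean-based literature, the coordinates of $\U_i$ are only weakly, not exactly, independent across $j$, and the partial sums $\S_k$ are strongly temporally dependent; Assumption~\ref{ass:cor} is exactly what is needed to obtain a Chen–Stein bound across the $p$ components, while the Darling–Erd\H{o}s time-change handles the temporal dependence for (ii). A second technical issue is that the remainder $r_\gamma(k)$ and the Gaussian-coupling error must be controlled at scale $1/\sqrt{\log p}$ for (i) and $1/\sqrt{\log(p\log h_n)}$ for (ii) to avoid perturbing the Gumbel limit; the rate conditions on $(n,p)$ and the deterministic $(1-n^{-1/2})$ correction in $M_{n,p},M_{n,p}^\dagger$ are precisely calibrated to absorb this remainder and the second-order shift produced by the Bahadur expansion.
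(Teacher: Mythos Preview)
Your overall architecture---Bahadur linearization, Gaussian coupling, then an extreme-value step across the $p$ coordinates---matches the paper's. The paper, however, outsources both the coupling and the extreme-value analysis: it invokes Steps 1--3 of the proof of Theorem~C.4 in \cite{Jirak-2015} to pass from the sub-exponential $\U_i$'s to Gaussian $\Y_i\sim N(0,\R/p)$, and then simply cites Theorem~1 of \cite{wang2023} for the Gumbel limit of the Gaussian max-$L_\infty$ CUSUM, rather than redoing the Brownian-bridge/Darling--Erd\H{o}s calculation you sketch. Your direct route via CCK coupling plus Chen--Stein/Poisson would also work and is arguably more self-contained, but the paper's approach is shorter because both cited results already contain the correlation control under Assumption~\ref{ass:cor}.

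Two points of divergence are worth flagging. First, a minor factual slip: $\E(U_{i,j}^2)=p^{-1}+O(p^{-1-\eta_0/2})$, not $\sigma_{jj}/p$; this is precisely the second constraint in \eqref{eqs} defining $\D$ (see Lemma~\ref{LemmaA4}(iii)). The variances are therefore already equalized, which is why no further studentization is needed. Second, and more substantively, you treat the remainder control in one sentence by citing the max-norm bound on $\mathbf{C}_{a:b}$, but this is where the paper spends almost all of its effort. The issue is that you need the bound \emph{uniformly over $k\in[\lambda_n,n-\lambda_n]$}, whereas the Bahadur remainder from the remark after Assumption~\ref{ass:max3} is for a single segment. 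The paper decomposes the uniform remainder into three pieces $E_1,E_2,E_3$ (coming from the second-order terms in the Taylor expansion of the spatial-sign map), and bounds each via the maximal inequality of Lemma~\ref{LemmaE.1central} together with a separate argument that $\max_k k^{1/2}\|\hat{\bth}_{1:k}-\bth\|_\infty\lesssim\log^2(np)$. This uniform-in-$k$ control is the step that actually consumes the rate conditions $\log^7 n=o(p^{1/6\wedge\eta_0/2})$ and $\log^2 p=o(n^{1/5\wedge\lambda/3})$; your proposal should make explicit how you obtain uniformity over $k$ rather than just invoking the pointwise Bahadur bound.
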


\begin{remark}
    { A key contribution of Theorem \ref{thm:Max-Max} is showing that, under mild conditions, $M_{n,p}$ and $M_{n,p}^\dagger$ share the same asymptotic Gumbel distribution, extending the mean-based results of \citet{wang2023} to spatial medians. While prior work focused on a single spatial median $\hat{\bth}_{1:n}$ \citep{liu+feng+wang+2024}, our analyses of $M_{n,p}$ and $M_{n,p}^\dagger$ involve a sequence of dependent spatial medians $\hat{\bth}_{1:k}$ for $k\in\{\lambda_n,\ldots,n-\lambda_n\}$, which is more theoretically challenging.
    }
\end{remark}






To implement the max-$L_\infty$-type tests based on $M_{n,p}$ and $M^{\dagger}_{n,p}$, we need to estimate the unknown quantity $\zeta_1$. 
To eliminate the effect of potential changepoints, we estimate $\zeta_1$ by
\begin{align*}
    \hat{\zeta}_{1}=\frac{1}{2[n\varrho]}\sum_{i=1}^{[n\varrho]}\|\hat{\D}^{-1/2}(\X_i-\hat{\bm \theta}_{1}^{(\varrho)})\|^{-1}+\frac{1}{2[n\varrho]}\sum_{i=n-[n\varrho]+1}^{n}\|\hat{\D}^{-1/2}(\X_i-\hat{\bm \theta}_{2}^{(\varrho)})\|^{-1},
\end{align*}
which ensures that the estimation $\hat{\zeta_1}$ is derived by the stable and homogeneous segments of data. Similar to the Proof of Lemma 3 in the Supplementary Materials of \cite{liu+feng+wang+2024}, it can be shown that $\hat{\zeta}_1$ is a consistent estimator of $\zeta_1/d_1$, i.e., $\hat{\zeta}_1\cp \zeta_1/d_1$ as $(n,p)\rightarrow \infty$, under both the null and alternative hypotheses.

Based on Theorem \ref{thm:Max-Max}, we obtain the $p$-values associated with $M_{n,p}$ and $M^{\dagger}_{n,p}$ as
\begin{align*}
    {\rm p}_{M_{n,p}} &:= 1 - G\big(2p\hat \zeta_{1}^2M_{n,p}^2-\log(2p)\big)\ \text{~~and}\\
    {\rm p}_{M^{\dagger}_{n,p}} &:= 1 - G\big( p^{1/2}\hat \zeta_{1}A(p\log h_n)M^{\dagger}_{n,p}-D(p\log h_n)\big)\,,
\end{align*}
where $G(x)=\exp\{-\exp(-x)\}$ denotes the standard Gumbel distribution.
If the $p$-value falls below a pre-specified significant level $\alpha\in(0,1)$, we reject the null hypothesis that there is no changepoint in the data sequence.
It can be expected that either max-$L_\infty$-type testing procedure would be effective in detecting sparse and strong change signals.

\begin{pro}\label{prop:max-Linf}
    Suppose Assumptions \ref{ass:max1}--\ref{ass:cor} hold and $\tau=[cn]$ for some $c\in (0,1)$. Then, if $\log^7 n=o(p^{1/6\wedge\eta_0/2})$, $\lambda_n\sim n^\lambda$ and $\log^2 p=o(n^{1/5\wedge \lambda/3})$ for some positive constant $\lambda\in(0,1)$, we have, (i) the test based on $M_{np}$ is consistent if $\|\boldsymbol\delta\|_\infty\geq C\sqrt{\log p/n}$ for large enough constant $C$; (ii) the test based on $M_{np}^\dagger$ is consistent if $\|\boldsymbol\delta\|_\infty\geq C\sqrt{\log \{p\log(h_n)\}/n}$ for large enough constant $C$.
\end{pro}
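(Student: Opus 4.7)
My overall strategy is to lower bound each of $M_{n,p}$ and $M_{n,p}^\dagger$ by evaluating the corresponding CUSUM statistic at the true changepoint $k=\tau$, and then show that this lower bound eventually exceeds the rejection threshold implied by Theorem~\ref{thm:Max-Max}. Because $\tau=[cn]$ with $c$ bounded away from $0$ and $1$ (in particular $c\in(\varrho,1-\varrho)$, which is required for the pooled estimator $\hat{\D}$ to be based on two homogeneous subsamples under $H_1$) and because $\lambda_n\sim n^\lambda$ with $\lambda\in(0,1)$, we have $\tau\in[\lambda_n,n-\lambda_n]$ for all large $n$, so $M_{n,p}\geq \|\C_0(\tau)\|_\infty(1-n^{-1/2})$ and likewise for $M_{n,p}^\dagger$.

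Next I would derive a decomposition of $\C_\gamma(\tau)$ under the alternative. By Assumption~\ref{ass:max1} and location equivariance of the spatial median, $\{\X_i\}_{i=1}^\tau$ and $\{\X_i\}_{i=\tau+1}^n$ are two homogeneous subsamples with population spatial medians $\bth_0$ and $\bth_0+\boldsymbol\delta$, respectively. Applying the Bahadur representation recalled just before Section~\ref{Sec:MAX} to each subsample and using the consistency $\hat{\D}\cp\D/d_1^2$ together with $\hat{\D}_{1:\tau}\cp\D$ and $\hat{\D}_{(\tau+1):n}\cp\D$ (all guaranteed by the homogeneity of the relevant segments under $c\in(\varrho,1-\varrho)$), I obtain
\begin{equation*}
\C_\gamma(\tau)=-\{c(1-c)\}^{1-\gamma}\sqrt{n}\,\hat{\D}^{-1/2}\boldsymbol\delta+\bm{r}_{n,p}^{(\gamma)},
\end{equation*}
where the remainder $\bm{r}_{n,p}^{(\gamma)}$ aggregates the scaled spatial-sign average $\{c(1-c)\}^{1-\gamma}d_1\sqrt{n}\,\zeta_1^{-1}\bigl\{\tau^{-1}\sum_{i=1}^\tau\U_i-(n-\tau)^{-1}\sum_{i=\tau+1}^n\U_i\bigr\}$ and the Bahadur remainders. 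By Assumption~\ref{ass:max2}, $\zeta_1^{-1}\asymp\sqrt{p}$ and coordinatewise $\mathrm{Var}(\U_{i,j})\asymp\sigma_{jj}/p$, so the martingale part is $O_p(1)$ at any fixed coordinate; the Bahadur remainders are $o_p(1)$ coordinatewise under $\log^7 n=o(p^{1/6\wedge\eta_0/2})$ and $\log^2 p=o(n^{1/5\wedge\lambda/3})$. Selecting the coordinate $j^\ast$ that attains $\|\hat{\D}^{-1/2}\boldsymbol\delta\|_\infty$ and invoking Assumption~\ref{ass:max3} to bound the eigenvalues of $\hat{\D}$, this yields
\begin{equation*}
\|\C_\gamma(\tau)\|_\infty\geq c_\gamma\sqrt{n}\,\|\boldsymbol\delta\|_\infty-O_p(1)
\end{equation*}
for a positive constant $c_\gamma$ depending only on $c$, $\varrho$, and the scale constants of Assumption~\ref{ass:max3}.

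The comparison with the threshold is then routine. Since $\hat\zeta_1\cp\zeta_1/d_1$ and $\zeta_1\asymp p^{-1/2}$, the quantity $p\hat\zeta_1^2$ converges to a positive constant, so Theorem~\ref{thm:Max-Max}(i) implies that the rejection region $\{{\rm p}_{M_{n,p}}<\alpha\}$ is equivalent to $M_{n,p}\geq K_1\sqrt{\log p}$ for a constant $K_1=K_1(\alpha)>0$. The lower bound above exceeds this threshold with probability tending to $1$ whenever $\|\boldsymbol\delta\|_\infty\geq C\sqrt{\log p/n}$ with $C$ sufficiently large to absorb the constants, proving (i). For (ii), the same decomposition with $\gamma=0.5$, together with $D(x)/A(x)=A(x)(1+o(1))$ in Theorem~\ref{thm:Max-Max}(ii), shows the rejection region is equivalent to $M_{n,p}^\dagger\geq K_2\sqrt{\log(p\log h_n)}$, giving consistency under $\|\boldsymbol\delta\|_\infty\geq C\sqrt{\log(p\log h_n)/n}$.

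The main technical obstacle I anticipate is preserving the $o_p(1)$ rate of the Bahadur remainders after replacing the segment-specific normalizers $\hat{\D}_{1:\tau}^{-1/2}$ and $\hat{\D}_{(\tau+1):n}^{-1/2}$ (that appear in the representation stated in the remark) by the pooled normalizer $\hat{\D}^{-1/2}$. This reduces to showing that $\hat{\D}^{-1/2}\hat{\D}_{a:b}^{1/2}=d_1\mathbf{I}_p+o_p(1)$ in an operator norm strong enough to preserve the max-norm rates on the right-hand side, which can be carried out by adapting the consistency analysis of the diagonal-matrix estimator in the supplementary material of \cite{liu+feng+wang+2024} while exploiting the homogeneity of the truncated segments $\{\X_i\}_{i=1}^{[n\varrho]}$ and $\{\X_i\}_{i=n-[n\varrho]+1}^n$ under $c\in(\varrho,1-\varrho)$.
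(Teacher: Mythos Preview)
Your proposal is correct and follows essentially the same route as the paper: evaluate the CUSUM at $k=\tau$, extract the signal $\sqrt{n}\,\hat\D^{-1/2}\boldsymbol\delta$ via the Bahadur representation, and compare with the $\sqrt{\log p}$-order critical value. The paper's proof is in fact terser than yours---it simply writes $\|\C_0(\tau)\|_\infty=\frac{\tau(n-\tau)}{n^{3/2}}\|\D^{-1/2}\boldsymbol\delta\|_\infty+O_p(\sqrt{\log p})$ and cites Assumptions~\ref{ass:max1}--\ref{ass:cor} without spelling out the normalizer-replacement step you flagged.

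One small difference worth noting: you control the noise at the single deterministic coordinate $j^\ast=\arg\max_j|\delta_j/d_j|$, obtaining an $O_p(1)$ remainder, whereas the paper bounds the full max-norm of the noise and hence carries an $O_p(\sqrt{\log p})$ remainder. Your version is marginally sharper but requires a word on why $j^\ast$ can be taken deterministic (you should use the population $\D^{-1/2}$, not $\hat\D^{-1/2}$, to define $j^\ast$, since the latter would make $j^\ast$ random); the paper's version avoids this wrinkle at the cost of a looser remainder. Either suffices, since the critical value is already of order $\sqrt{\log p}$ and the signal term dominates once $C$ is large enough.
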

Proposition \ref{prop:max-Linf} establishes the consistency of the max-$L_\infty$-type tests based on $M_{n,p}$ and $M_{n,p}^\dagger$ under $H_1$, subject to certain conditions on the magnitude of the changes. This result aligns with the optimal rate (up to a logarithmic factor) for sparse changepoint alternatives in the literature \citep{liu+zhang+liu-2022high}.

\section{Max-$L_2$-type tests}\label{Sec:SUM}

For the max-$L_2$-type approach, we introduce two types of scalar-transformation-invariant spatial-sign-based CUSUM test statistics, motivated by \citet{wang+peng+li-2015high}, \citet{feng2016multivariate}, and \citet{feng2016}. Specifically, for $k=1,\ldots,n$, we define
\begin{equation}
    \begin{aligned}
    \tilde{\C}_{\gamma}(k)=\left\{\frac{k}{n}\left(1-\frac{k}{n}\right)\right\}^{-\gamma}\sqrt{\frac{p}{n}}\left(\hat{\S}_{k}-\frac{k}{n}\hat{\S}_{n}\right),
    \end{aligned}
\end{equation}
where $\hat{\S}_{k}=\sum_{i=1}^k \hat{\U}_{i}$ for $k=1,\ldots,n$ with $\hat{\U}_{i}=U\big(\hat{\D}^{-1/2}(\X_i-\hat{\boldsymbol\theta}_{1:n})\big)$. 

For $\gamma=0$, we define the max-$L_2$-type test statistic $S_{n,p}$ as
\begin{align}
S_{n,p}=\max_{\lambda_n\le k\le n-\lambda_n}\left\{\tilde{\C}_{0}(k)^\top \tilde{\C}_{0}(k)-\frac{k(n-k)p}{n^2}\right\}\cdot(1-n^{-1/2}).
 \end{align}
This statistic serves as a spatial-sign-based analogue to the mean-based max-$L_2$-type statistic $\max_{1\leq k\leq n}\|\breve\C_0(k)\|^2$ \citep{bai2010common,horvath2012change,2016-jin+pan+yang+zhou-p2355}.
Following \citet{feng2016multivariate}, we impose the following assumption.
\begin{assumption}\label{ass:sum_R4_0}
(i) $\operatorname{tr}\left(\R^4\right)/\operatorname{tr}^2\left(\R^2\right)=o\left(1\right)$, (ii) $n^{-2} p^2 / \operatorname{tr}\left(\R^2\right)=O(n^{-\omega_0})$ for some $\omega_0\in (0,2)$.
\end{assumption}

\begin{remark}
    Assumption \ref{ass:sum_R4_0} (i) is a common condition for $L_2$-type test statistic in high dimension \citep{chen2010two,feng2016multivariate,wang+zou+wang+yin-2019-Multiple}, requiring that the eigenvalues of $\R$ do not diverge excessively. If all the eigenvalues of $\R$ are bounded, then $\tr(\R^2)=O(p)$ and $\tr(\R^4)=O(p)$. Consequently, Assumption \ref{ass:sum_R4_0}(i) holds trivially, while Assumption \ref{ass:sum_R4_0}(ii) simplifies to $p = O(n^{2-\omega_0})$ in this case.  
\end{remark}

\begin{theorem}\label{thm:Max-Sum}
    Suppose Assumptions \ref{ass:max1}--\ref{ass:max3} and \ref{ass:sum_R4_0} hold, and that $\log p=o(n)$. Then, under $H_0$, if $\lambda_n\rightarrow \infty$ and $\lambda_n/n\rightarrow 0$ as $n\rightarrow\infty$, it holds that
    \begin{align*}
        \frac{S_{n,p}}{\sqrt{2\tr(\R^2)}} \cd \max_{0\le t\le 1}V(t)\,,
    \end{align*}
    where $V(t)$ is a continuous Gaussian process with $\E\{V(t)\}=0$ and $\E\{V(t)V(s)\}=(1-t)^2s^2$ for $ 0\le s\le t \le 1$.
\end{theorem}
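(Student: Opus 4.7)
My plan is a three-step reduction: first, linearize the plug-in spatial signs $\hat{\U}_i$ via a Bahadur expansion and reduce to the oracle quantities $\U_i=U(\D^{-1/2}\boldsymbol\epsilon_i)$; second, recognize the centred max-$L_2$ statistic as a clean bilinear form in the $\U_i$ whose second-moment structure matches that of the target process $V$; and third, upgrade this to functional convergence in $D[0,1]$ and apply the continuous mapping theorem.

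\emph{Step 1 (Linearization).} Scale invariance of $U(\cdot)$ combined with consistency of $\hat\D$ for $\D/d_1^2$ (established as in Lemma 2 of \cite{liu+feng+wang+2024}) lets me replace $\hat\D^{-1/2}$ by $d_1\D^{-1/2}$ inside $\hat{\U}_i$ at negligible cost. A first-order Taylor expansion of $U$ about $\D^{-1/2}\boldsymbol\epsilon_i$, together with the Bahadur expansion for $\hat{\bth}_{1:n}$ recalled in the Remark following Assumption \ref{ass:max3}, yields
\[
\hat{\U}_i = \U_i - \frac{1}{R_i}(\mathbf I - \U_i\U_i^\top)\D^{-1/2}(\hat{\bth}_{1:n}-\bth_0) + \bm r_i,
\]
with a remainder $\bm r_i$ of smaller order uniformly in $i$. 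Writing $c_{i,k}=\ind{i\le k}-k/n$ and using $\sum_i c_{i,k}=0$, the task becomes to show that
\[
\max_{\lambda_n\le k\le n-\lambda_n}\frac{p}{n}\Bigl|\,\bigl\|\hat\S_k-\tfrac{k}{n}\hat\S_n\bigr\|^2 - \bigl\|\S_k-\tfrac{k}{n}\S_n\bigr\|^2\Bigr| = o_p\!\bigl(\sqrt{\tr(\R^2)}\bigr),
\]
which follows by bounding the correction norm using the rate $\|\D^{-1/2}(\hat{\bth}_{1:n}-\bth_0)\|=O_p(\sqrt{p/n})$ and the dimension condition in Assumption \ref{ass:sum_R4_0}(ii).

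\emph{Step 2 (Bilinear form and moments).} Because $\|\U_i\|^2=1$ exactly, after linearization
\[
T(k):=\tilde{\C}_0(k)^\top\tilde{\C}_0(k)-\frac{k(n-k)p}{n^2} = \frac{p}{n}\sum_{1\le i\ne j\le n} c_{i,k}c_{j,k}\,\U_i^\top\U_j + o_p\!\bigl(\sqrt{\tr(\R^2)}\bigr).
\]
A direct second-moment computation, using i.i.d.\ symmetry of the $\U_i$ and the asymptotic identification $\E(\U_i\U_i^\top)=\R/p\cdot(1+o(1))$ under Assumptions \ref{ass:max1}--\ref{ass:max3}, gives, for $\ell\le k$,
\[
\cov\bigl(T(k),T(\ell)\bigr)=\frac{2\tr(\R^2)}{n^2}\Bigl(\sum_i c_{i,k}c_{i,\ell}\Bigr)^2(1+o(1)) = 2\tr(\R^2)\,(\ell/n)^2(1-k/n)^2(1+o(1)),
\]
which, after normalization by $\sqrt{2\tr(\R^2)}$, reproduces exactly the covariance kernel $s^2(1-t)^2$ of $V(s),V(t)$ for $s\le t$.

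\emph{Step 3 (FCLT, continuous mapping, and main obstacle).} Finite-dimensional convergence of $\{T(k)/\sqrt{2\tr(\R^2)}\}$ to the target Gaussian vector at selected nodes follows from a martingale CLT applied to arbitrary linear combinations $\sum_r a_r T(k_r)$ (written as martingale-difference sums $\sum_{i=2}^n\sum_{j<i}b_{ij}\U_i^\top\U_j$, in the spirit of \cite{feng2016multivariate}), with Assumption \ref{ass:sum_R4_0}(i), $\tr(\R^4)/\tr^2(\R^2)=o(1)$, verifying the Lindeberg condition. Tightness in $D[0,1]$ is obtained from a fourth-moment bound on increments of the form $\E|T(k_2)-T(k_1)|^2\lesssim \tr(\R^2)\cdot|k_2-k_1|/n$, derived by the same moment algebra and absorbing quadratic-form remainders via Assumption \ref{ass:sum_R4_0}(ii). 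Since $V(0)=V(1)=0$ and $V$ is sample-path continuous, the boundary trimming $\lambda_n/n\to 0$ and the factor $(1-n^{-1/2})\to 1$ are asymptotically harmless, so the continuous mapping theorem applied to the sup functional yields $S_{n,p}/\sqrt{2\tr(\R^2)}\cd \max_{0\le t\le 1}V(t)$. The main obstacle is this functional CLT for the bilinear-form spatial-sign process in the high-dimensional regime: the $\U_i$ live on the unit sphere rather than in a linear space, so the identity $\E(\U_i\U_i^\top)=\R/p$ only holds up to lower-order terms that must be shown to aggregate harmlessly across the $\asymp n^2$ pairs and uniformly in $k$, while tightness requires controlling fourth-order mixed moments of $\sum_{i\ne j}c_{i,k}c_{j,k}\U_i^\top\U_j$ in terms of $\tr(\R^2)$ and $\tr(\R^4)$; ensuring simultaneously that the Bahadur remainder from Step 1 stays $o_p(\sqrt{\tr(\R^2)})$ under the same normalization is where the growth conditions in Assumptions \ref{ass:max3}(iv) and \ref{ass:sum_R4_0}(ii) bite.
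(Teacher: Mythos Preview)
Your three-step plan matches the paper's proof almost exactly: the paper also first expands $\hat{\U}_i^\top\hat{\U}_j$ around $\U_i^\top\U_j$ to show the plug-in error is $o_p(1)$ after normalization, then writes the centred statistic as the bilinear form $W(\lfloor nt\rfloor)=2g(\lfloor nt\rfloor)\sum_{i<j}\tilde\upsilon_{i,t}\tilde\upsilon_{j,t}\U_i^\top\U_j$ and verifies finite-dimensional convergence to a Gaussian vector with covariance $s^2(1-t)^2$ via a martingale CLT and Cram\'er--Wold. The one place you are imprecise is tightness: the stated bound $\E|T(k_2)-T(k_1)|^2\lesssim\tr(\R^2)\,|k_2-k_1|/n$ is borderline and does \emph{not} by itself imply tightness through Kolmogorov's criterion; the paper instead uses the decomposition $W(nt)=-t(1-t)K_1+(1-t)K_{2,t}+tK_{3,t}$, where $K_1$ is a single $O_p(1)$ random variable and $K_{2,t},K_{3,t}$ are martingales in $t$, and then handles each piece by Doob's maximal inequality combined with Theorem~8.4 of \cite{billingsley+1968}.
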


In practice, it is essential to construct a ratio-consistent estimator of $\tr(\R^2)$ under both the null and alternative hypotheses. 
To this end, we estimate $\tr(\R^2)$ using the first and last $[n\varrho]$ samples, as follows:
\begin{equation*}
    \begin{aligned}
        \widehat{\tr(\R^2)}=&\frac{p^2}{2[n\varrho]([n\varrho]-1)}\sum_{1\leq i\not=j\leq [n\varrho]}\left\{ U(\hat{\D}^{-1/2}(\X_{i}-\hat{\boldsymbol\theta}_{1}^{(\varrho)}))^\top U(\hat{\D}^{-1/2}(\X_{j}-\hat{\boldsymbol\theta}_{1}^{(\varrho)})) \right\}^2\\
        &+\frac{p^2}{2[n\varrho]([n\varrho]-1)}\sum_{n-[n\varrho]+1\leq i\not=j\leq n}\left\{ U(\hat{\D}^{-1/2}(\X_{i}-\hat{\boldsymbol\theta}_{2}^{(\varrho)}))^\top U(\hat{\D}^{-1/2}(\X_{j}-\hat{\boldsymbol\theta}_{2}^{(\varrho)})) \right\}^2.
    \end{aligned}
\end{equation*}
By Proposition 1 in \cite{li2016simpler}, it follows directly that $\widehat{\tr(\R^2)}/\tr(\R^2)\cp 1$ as $(n,p)\rightarrow \infty$.

According to Theorem \ref{thm:Max-Sum}, the $p$-value of the test based on $S_{n,p}$ is given by
\begin{align}
    {\rm p}_{S_{n,p}}=1-F_{V}\left(\frac{S_{n,p}}{\sqrt{2\widehat{\tr(\R^2)}}}\right),
\end{align}
where $F_{V}(\cdot)$ is the cumulative distribution funcion (cdf) of $\max_{0\leq t\leq 1} V(t)$.

\begin{remark}
    The quantiles of $\max_{0\leq t\leq 1} V(t)$ can be accurately approximated via Monte Carlo simulation. Consider a uniform discretization $T=\{t_i=i / N_d: i= 1, \ldots, N_d\}$ and the number of simulations $B$.
    For $b=1,\ldots,B$, let $v_b=\max _{t \in T} V_b(t)$, where $\left(V_b\left(t_1\right), \ldots, V_b\left(t_{N_d}\right)\right)^{\top}$ is sampled from the $N_d$-dimensional multivariate normal distribution with mean zero and covariance matrix with the $(j,\ell)$-th element given by $(1-t_j)^2 t_\ell^2$ for $1\leq \ell \leq j\leq N_d$.
    Then, the sample quantile of $\{v_b\}_{b=1}^{B}$ is used to approximate the theoretical quantile of $\max_{0\leq t\leq 1} V(t)$.
\end{remark}

For $\gamma=0.5$, we define the corresponding max-$L_2$-type test statistic as
\begin{align}
S^\dagger_{n,p}=\max_{\lambda_n\le k\le n-\lambda_n}\left\{\tilde \C_{0.5}(k)^\top \tilde \C_{0.5}(k)-p\right\}\cdot (1-n^{-1/2})\,,
\end{align}
which is the spatial-sign-based analogue to the mean-based statistic $\max_{\lambda_n\leq k\leq n-\lambda_n}\|\breve\C_{0.5}(k)\|^2$ \citep{2013-chan+Horvath+Huskova-p955}.

\begin{assumption}\label{ass:sum_R4}
    There exists a constant $\omega_1\in(0,1/4)$ such that: (i) $\operatorname{tr}\left(\R^4\right)/\operatorname{tr}^2\left(\R^2\right)=O\left(n^{-1+2\omega_1}\right)$; (ii) $\operatorname{tr}\left(\R^4\right)/\operatorname{tr}^2\left(\R^2\right)\exp\{-p/128\lambda^2_{\max}(\R) \}=O\left(n^{-1+2\omega_1}\right)$; (iii) $n=O(p^{1/(1-2\omega_1)})$; and (iv) $p^2n^{-2}/\tr(\R^2)=O(n^{-\omega_1})$.
\end{assumption}
\begin{remark}
      Assumption \ref{ass:sum_R4} is a stronger condition than Assumption \ref{ass:sum_R4_0}, ensuring that the remainder term in $S_{n,p}^\dagger$ remains $o_p(1)$. If all eigenvalues of $\R$ are bounded, this assumption reduces to $p = O(n^{2-\omega_1})$ and $n=O(p^{1/(1-2\omega_1)})$ for some $0<\omega_1<1/4$.
\end{remark}

\begin{theorem}\label{Thm:S_np5}
Suppose Assumptions \ref{ass:max1}--\ref{ass:max3} and \ref{ass:sum_R4} hold. Then, under $H_0$, if $\log p=o(n)$ and $\lambda_n\sim n^\lambda$ for some $\lambda\in (0,1)$, it holds that
\begin{align*}
\pr\left(\frac{A(\log(n^2/\lambda_n^2))}{\sqrt{2\tr(\R^2)}}|S_{n,p}^\dagger|\le x+D(\log(n^2/\lambda_n^2))\right)\to \exp\{-2\exp(-x)\}\,,
\end{align*}
where $A(x)$ and $D(x)$ are defined in Theorem \ref{thm:Max-Max}.
\end{theorem}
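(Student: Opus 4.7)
The plan is to reduce $S_{n,p}^\dagger$ to the extreme of a standardized chi-square-type CUSUM process built from the oracle spatial signs, and then to apply a Darling--Erd\H{o}s type argument. The centering by $-p$ and normalization by $\sqrt{2\tr(\R^2)}$ make the per-$k$ summand approximately a standardized $\chi^2_p-p$ variable, so $\log(n^2/\lambda_n^2)$ is the natural logarithmic time scale on which the process fluctuates over $k\in[\lambda_n,n-\lambda_n]$. As the first reduction I replace $\hat\U_i = U(\hat\D^{-1/2}(\X_i-\hat\bth_{1:n}))$ by the oracle sign $\U_i = U(\D^{-1/2}\boldsymbol\epsilon_i)$. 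The Bahadur expansion quoted in the Remark after Assumption~\ref{ass:max3}, together with consistency of $\hat\D$ and $\hat\bth_{1:n}$ under $H_0$, gives $\hat\S_k-(k/n)\hat\S_n = \S_k-(k/n)\S_n+\bm r_{n,k}$ with $\|\bm r_{n,k}\|_\infty$ controlled uniformly in $k$. One then needs $\|\bm r_{n,k}\|^2$ and the cross term $(\S_k-(k/n)\S_n)^\top\bm r_{n,k}$, after the $\tilde\C_{0.5}(k)$ scaling, to be uniformly $o_p\big(\sqrt{\tr(\R^2)/\log(n/\lambda_n)}\big)$; Assumption~\ref{ass:sum_R4}(iii)--(iv) supply exactly the bounds needed for the remainder to beat the logarithmic inflation of the Darling--Erd\H{o}s normalization.

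Next I standardize and couple. Writing $\bm Z_k=\S_k-(k/n)\S_n$, the spatial-sign covariance identity $E[\U_i\U_i^\top]\approx \R/p$ (which follows from Assumption~\ref{ass:max2}) gives $E\|\bm Z_k\|^2 = \{k(n-k)/n\}(1+o(1))$ and $\var(\|\bm Z_k\|^2) = 2\{k(n-k)/n\}^2\tr(\R^2)/p^2\cdot(1+o(1))$, so that each $k$-term $(\tilde\C_{0.5}(k)^\top\tilde\C_{0.5}(k)-p)/\sqrt{2\tr(\R^2)}$ has mean $o(1)$ and variance $1+o(1)$. I then construct a Gaussian coupling of $\{\S_{\lfloor nt\rfloor}:t\in[\lambda_n/n,1-\lambda_n/n]\}$ to a $p$-dimensional Brownian motion $\bm B(t)$ with covariance $\R/p$, and combine it with a vector Hanson--Wright inequality so that the quadratic CUSUM process is uniformly close to the analogous functional of the Brownian bridge $\bm B(t)-t\bm B(1)$. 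Assumption~\ref{ass:sum_R4}(i) is used here to bound $\tr(\R^4)/\tr^2(\R^2)$ and so to control the non-Gaussian part of the quadratic form.

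For the Darling--Erd\H{o}s limit, the key observation is that after the $\gamma=0.5$ Brownian-bridge standardization, the process $(\|\bm B(t)-t\bm B(1)\|^2-t(1-t))/\{t(1-t)\sqrt{2\tr(\R^2)}/p\}$, once reparametrized by the log-time $u=\log\{t/(1-t)\}$, is asymptotically a stationary Gaussian field on an interval of length $\asymp 2\log(n/\lambda_n)$. Classical Darling--Erd\H{o}s arguments then yield a Gumbel limit $\exp(-\exp(-x))$ for each end of this log-time interval. Because $|S_{n,p}^\dagger|$ picks up extremes both at $k\approx\lambda_n$ and at $k\approx n-\lambda_n$, and these two extremes are asymptotically independent (by symmetry in $k\leftrightarrow n-k$ together with the weak dependence between the two ends of the process), their contributions combine into $\exp(-2\exp(-x))$, with normalizers $A(\log(n^2/\lambda_n^2))$ and $D(\log(n^2/\lambda_n^2))$ matching the effective log-time length $2\log(n/\lambda_n)$.

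The hardest part is carrying out the Gaussian coupling above with an error sharp enough to preserve the Gumbel limit, which is only logarithmically tight, while simultaneously propagating the Bahadur remainder through the quadratic form. Assumption~\ref{ass:sum_R4}(ii), whose form involves $\exp\{-p/(128\lambda_{\max}^2(\R))\}$, is designed precisely to supply an exponential tail bound that makes the approximation error uniformly negligible across $k\in[\lambda_n,n-\lambda_n]$ on the Darling--Erd\H{o}s scale.
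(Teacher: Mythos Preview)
Your overall architecture --- replace $\hat\U_i$ by the oracle $\U_i$, then run a Darling--Erd\H{o}s argument to get the Gumbel limit, with the factor $2$ in $\exp\{-2\exp(-x)\}$ coming from the asymptotic independence of the left and right boundary regimes --- matches the paper. But the central approximation step you propose is genuinely different from what the paper does, and your version is the harder road.

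The paper never couples the $p$-dimensional partial-sum process $\S_k$ to a vector Brownian motion. Instead it works directly with the \emph{scalar} process
\[
Z_p(k)=\frac{p}{\sqrt{2\tr(\R^2)}}\bigl(\S_k^\top\S_k-k\bigr)
=\frac{2p}{\sqrt{2\tr(\R^2)}}\sum_{j=2}^k\U_j^\top\S_{j-1},
\]
observes that this is a martingale in $k$, and applies the Skorokhod embedding theorem for scalar martingales (Hall--Heyde) to produce a single Wiener process $W_{n,p}$ with $\max_k|Z_p(k)-W_{n,p}(2k^2)|/k^{3/4+\omega_1}=O_p(1)$. The moment control needed for this embedding reduces to bounds like $\E(v_j^4)\lesssim j^2$ with $v_j=2p\U_j^\top\S_{j-1}/\sqrt{2\tr(\R^2)}$, which follow from elementary calculations on $\E\{(U(\W_1)^\top\R U(\W_2))^4\}$; Assumption~\ref{ass:sum_R4}(i) enters here through $\tr(\R^4)/\tr^2(\R^2)$. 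After this, $H_{n,p}(k)$ is decomposed as $\frac{n}{k(n-k)}Z_p(k)$ plus two cross terms shown to be $O_p(1)$ or smaller, and the Darling--Erd\H{o}s limit is read off from $\max_k|W_{n,p}(2k^2)|/k$ on $[\lambda_n,n/\lambda_n]$ together with its mirror on $[n-n/\lambda_n,n-\lambda_n]$, the two halves being exactly independent because $\{\S_k:k\le n/2\}$ and $\{\S_n-\S_k:k>n/2\}$ are.

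By contrast, your plan to couple $\S_{\lfloor nt\rfloor}$ to a $p$-dimensional Brownian motion and then transfer to the quadratic via Hanson--Wright must contend with dimension-dependent rates in the strong approximation. KMT-type constructions in growing dimension typically carry polynomial-in-$p$ error factors, and after squaring the coupling error against $\|\S_k-(k/n)\S_n\|\asymp\sqrt{k(n-k)/n}$ you would need the resulting bound to beat $\sqrt{\tr(\R^2)/\log\log n}$ uniformly in $k$; it is not clear any off-the-shelf vector coupling delivers that under Assumption~\ref{ass:sum_R4}. The paper's scalar-martingale route sidesteps the dimension entirely. Also, your reading of Assumption~\ref{ass:sum_R4}(ii) as supplying an exponential tail for the coupling error does not match the paper's proof of this theorem, which does not invoke that clause; the relevant inputs are (i), (iii), and (iv).
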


Theorem \ref{Thm:S_np5} implies that the $p$-value of the test based on $S^{\dagger}_{n,p}$ is
\begin{align*}
    {\rm p}_{S^{\dagger}_{n,p}} &:= 1 - \tilde{G}\big(A(\log(n^2/\lambda_n^2))|S_{n,p}^\dagger|/\sqrt{2\tr(\R^2)}-D(\log(n^2/\lambda_n^2)\big),
\end{align*}
where $\tilde{G}(x)=\exp\{-2\exp(-x)\}$ denotes the Gumbel distribution with a factor of 2 in the exponent. Both sum-$L_\infty$-type testing procedures based on $S_{np}$ and $S_{np}^\dagger$ are expected to be effective in detecting dense change signals. The following proposition establishes the consistency of these two tests under $H_1$.
\begin{pro}\label{prop:max-L2}
     Suppose Assumptions \ref{ass:max1}--\ref{ass:max3} and \ref{ass:sum_R4} hold. Under $H_1$ with $\tau=[cn]$ for some $c\in (0,1)$, if $\log p=o(n)$ and $\lambda_n\sim n^{\lambda}$ for some positive constant $\lambda\in (0,1)$, $\Vert\boldsymbol\delta\Vert_\infty=o((n\wedge p)^{1/2})$ and $\Vert\boldsymbol\delta\Vert^{-1}\Vert\boldsymbol\delta\Vert_{\infty}=o(p^{1/2}n^{-1/2})$, then the tests based on $S_{np}$ or $S_{np}^\dagger$ are consistent as $\Vert\boldsymbol\delta\Vert\rightarrow\infty$.
\end{pro}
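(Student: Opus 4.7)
The plan is a ``plug-in-at-the-true-changepoint'' argument. Because $\tau=[cn]$ lies strictly inside $[\lambda_n,n-\lambda_n]$ for $n$ large, the definitions of the statistics give
\[
S_{n,p}\ge\bigl\{\tilde{\C}_0(\tau)^\top\tilde{\C}_0(\tau)-\tau(n-\tau)p/n^2\bigr\}(1-n^{-1/2}),\quad
S_{n,p}^\dagger\ge\bigl\{\tilde{\C}_{0.5}(\tau)^\top\tilde{\C}_{0.5}(\tau)-p\bigr\}(1-n^{-1/2}).
\]
Theorems~\ref{thm:Max-Sum} and~\ref{Thm:S_np5} show that the null critical values of the two tests are of order $\sqrt{\tr(\R^2)}$ and $\sqrt{\tr(\R^2)\log n}$, respectively. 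It therefore suffices to prove that the right-hand sides above, normalized by $\sqrt{\tr(\R^2)\log n}$, diverge in probability under $H_1$.

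To isolate the signal at $k=\tau$, I will linearize the spatial signs under $H_1$. A Bahadur-type expansion applied separately on the two homogeneous regimes, together with the first-order optimality condition of the pooled spatial median, gives $\hat{\boldsymbol\theta}_{1:n}=\boldsymbol\theta_0+(1-c)\boldsymbol\delta+o_p(\|\boldsymbol\delta\|_\infty)$; the scale estimator $\hat{\D}$ remains consistent for $\D/d_1^2$ because it is built from the two homogeneous boundary blocks. Using the pointwise expansion $U(\boldsymbol y+\boldsymbol a)=U(\boldsymbol y)+\|\boldsymbol y\|^{-1}\{\mathbf I-U(\boldsymbol y)U(\boldsymbol y)^\top\}\boldsymbol a+O(\|\boldsymbol a\|^2/\|\boldsymbol y\|^2)$ at $\boldsymbol y=\D^{-1/2}\boldsymbol\epsilon_i$ and $\boldsymbol a=\boldsymbol\alpha_{\mathbb{I}(i>\tau)}$, with $\boldsymbol\alpha_0=-(1-c)\D^{-1/2}\boldsymbol\delta$ and $\boldsymbol\alpha_1=c\D^{-1/2}\boldsymbol\delta$, together with $\zeta_1\asymp p^{-1/2}$ and $E[\U_i\U_i^\top]=\R/p+o(1)$, one obtains $E[\hat{\U}_i]=\zeta_1(\mathbf I-\R/p)\boldsymbol\alpha_{\mathbb{I}(i>\tau)}+O(\|\boldsymbol\delta\|_\infty^2/p)$. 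Summing over $i$ (and noting that the pseudo-true centre $\boldsymbol\theta^*=\boldsymbol\theta_0+(1-c)\boldsymbol\delta$ is precisely what makes $E[\hat{\S}_n]\approx 0$) yields
\[
\boldsymbol\mu(\tau):=E[\tilde{\C}_0(\tau)]\approx -c(1-c)\sqrt{np}\,\zeta_1(\mathbf I-\R/p)\D^{-1/2}\boldsymbol\delta,
\]
whose squared norm is of order $n\|\boldsymbol\delta\|^2$, since $\R/p$ has operator norm $O(p^{-\eta_0})$ and the entries of $\D$ are uniformly bounded by Assumption~\ref{ass:max3}. An analogous computation yields $\|E[\tilde{\C}_{0.5}(\tau)]\|^2\asymp n\|\boldsymbol\delta\|^2$.

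Writing $\tilde{\C}_\gamma(\tau)=\boldsymbol\mu(\tau)+\tilde{\C}_\gamma^{(0)}(\tau)$ with a centred remainder, the lower bound decomposes as $\|\boldsymbol\mu(\tau)\|^2+2\boldsymbol\mu(\tau)^\top\tilde{\C}_\gamma^{(0)}(\tau)+\{\|\tilde{\C}_\gamma^{(0)}(\tau)\|^2-\mathrm{centering}\}$. The last bracket is $O_p(\sqrt{\tr(\R^2)})$ at the single location $k=\tau$ by the $L_2$ computation underlying Theorems~\ref{thm:Max-Sum} and~\ref{Thm:S_np5} (the extra $\log n$ factor in Theorem~\ref{Thm:S_np5} arises only from the maximum over $k$). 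For the cross term, $\var(\boldsymbol\mu^\top\tilde{\C}_\gamma^{(0)})\le\|\boldsymbol\mu\|^2\lambda_{\max}(\cov(\tilde{\C}_\gamma^{(0)}(\tau)))\lesssim\|\boldsymbol\mu\|^2 a_0(p)$, so by Chebyshev it is $o_p(\|\boldsymbol\mu\|^2)$ whenever $\|\boldsymbol\mu\|^2\gg a_0(p)$, which follows from $\|\boldsymbol\delta\|\to\infty$ together with Assumption~\ref{ass:max3}(iii). The two quantitative conditions on $\boldsymbol\delta$ are deployed precisely to kill the linearization remainders: $\|\boldsymbol\delta\|_\infty=o((n\wedge p)^{1/2})$ controls the aggregated quadratic remainder in the spatial-sign expansion, and $\|\boldsymbol\delta\|^{-1}\|\boldsymbol\delta\|_\infty=o(p^{1/2}n^{-1/2})$ absorbs the residual error from replacing $\hat{\boldsymbol\theta}_{1:n}$ and $\hat{\D}$ by their population limits, so that the deterministic signal $\|\boldsymbol\mu(\tau)\|^2$ survives intact.

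The main obstacle I foresee is this linearization step: establishing a Bahadur representation for $\hat{\boldsymbol\theta}_{1:n}$ that is valid across the changepoint with a max-norm bound on the remainder, since the pooled-sample Bahadur result cited in the excerpt is derived under $H_0$. Once such a heterogeneous-mean expansion is in hand, the signal-versus-noise bookkeeping above goes through by Chebyshev, yielding $S_{n,p}/\sqrt{\tr(\R^2)}\to\infty$ and $S_{n,p}^\dagger/\sqrt{\tr(\R^2)\log n}\to\infty$ in probability, and hence consistency of both tests.
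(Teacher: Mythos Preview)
Your plug-in-at-$\tau$ strategy is the right shell, but the linearization that drives it has two genuine gaps. First, the claim $\hat{\boldsymbol\theta}_{1:n}=\boldsymbol\theta_0+(1-c)\boldsymbol\delta+o_p(\|\boldsymbol\delta\|_\infty)$ is the behaviour of the sample \emph{mean}, not of the spatial median of a two-component mixture. The paper shows that if $\bth_\kappa$ denotes the population spatial median of the pooled sample and $\kappa=\tau/n\neq 1/2$, then as $\|\boldsymbol\delta\|\to\infty$ the median collapses to the centre of the \emph{larger} group (e.g.\ $\|\tilde{\D}^{-1/2}(\bth_\kappa-\bth_0-\boldsymbol\delta)\|$ stays bounded when $\kappa<1/2$), not to the convex combination $\bth_0+(1-\kappa)\boldsymbol\delta$. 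Second, the expansion $U(\boldsymbol y+\boldsymbol a)=U(\boldsymbol y)+\|\boldsymbol y\|^{-1}(\mathbf I-UU^\top)\boldsymbol a+O(\|\boldsymbol a\|^2/\|\boldsymbol y\|^2)$ needs $\|\boldsymbol a\|=o(\|\boldsymbol y\|)$. For indices in the smaller group the shift has norm $\asymp\|\boldsymbol\delta\|$ while $\|\boldsymbol y\|=R_i\asymp\sqrt p$; the hypotheses only bound $\|\boldsymbol\delta\|_\infty$, so $\|\boldsymbol\delta\|$ may far exceed $\sqrt p$ and the Taylor remainder swamps the first-order term. Because spatial signs are unit vectors, the signal cannot scale like $n\|\boldsymbol\delta\|^2$ indefinitely; it saturates at order $np$. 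Your approach is essentially the local-alternative analysis (cf.\ Theorem~\ref{thm:ind_H1}), not the large-signal regime of this proposition.

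The paper instead uses a direction-convergence argument. It first locates $\bth_\kappa$ as above and shows $\|\tilde{\D}^{-1/2}(\hat{\bth}_{1:n}-\bth_\kappa)\|=O_p(p^{1/2}n^{-1/2})$. For $i,j$ in the smaller group, $\X_i-\hat{\bth}_{1:n}$ is dominated by $-\boldsymbol\delta$, so $\hat{\U}_i^\top\hat{\U}_j\to 1$ as $\|\boldsymbol\delta\|\to\infty$; this is where the condition $\|\boldsymbol\delta\|^{-1}\|\boldsymbol\delta\|_\infty=o(p^{1/2}n^{-1/2})$ enters, to absorb the $\hat{\D}^{-1/2}\tilde{\D}^{1/2}$ factor times the estimation error of $\hat{\bth}_{1:n}$. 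For $i,j$ in the larger group the shift after centering is bounded, so a genuine Taylor step gives $\hat{\U}_i^\top\hat{\U}_j=\U_i^\top\U_j+O_p(\|\boldsymbol\delta\|_\infty n^{-1/2}+\|\boldsymbol\delta\|_\infty p^{-1/2})$, which is where $\|\boldsymbol\delta\|_\infty=o((n\wedge p)^{1/2})$ is spent. Plugging these limits into $\tilde{\C}_0(\tau)^\top\tilde{\C}_0(\tau)$ yields a lower bound $\asymp \tau^2(n-\tau)^2p/n^3$, and dividing by $\sqrt{\tr(\R^2)}$ gives divergence. The two groups must be handled asymmetrically; a single linearization around a common pseudo-centre cannot capture this.
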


\begin{remark}
    Proposition \ref{prop:max-L2} shows the consistency of the proposed max-$L_2$-type tests under a sequence of local alternatives. The condition $\Vert\boldsymbol\delta\Vert_\infty = o((n\wedge p)^{1/2})$ prevents excessively large signal components, preserving the model structure in high-dimensional setting. Similar constraints on the signal magnitude are also imposed in~\cite{wang+peng+li-2015high,feng2016multivariate} to ensure that the properties of the test statistic under the alternative hypothesis can be properly characterized. The condition $\Vert\boldsymbol\delta\Vert^{-1}\Vert\boldsymbol\delta\Vert_\infty=o(p^{1/2}n^{-1/2})$ restricts the signal from being overly sparse. Let $\delta_{\max}=\max\{|\delta_1|,\ldots,|\delta_p|\}$ and $\delta_{\min}=\min\{|\delta_1|,\ldots,|\delta_p|\}$, and $s_0$ the number of nonzero components in $\boldsymbol\delta$, this condition simplifies to $s_0/n \gg p^{-1}$ when $\delta_{\max} \asymp \delta_{\min}$.  
\end{remark}

\section{Adaptive Strategy}\label{Sec:adaptive}

In practice, whether the potential signal is sparse or dense across dimensions is often unknown. To capture different types of signals, we propose integrating max-$L_\infty$-type and max-$L_2$-type testing procedures, inspired by \citet{wang2023}, which focused on test statistics based on sample means. A key characteristic of this combined approach is that, under some mild conditions and $H_0$, the max-$L_\infty$-type and max-$L_2$-type statistics are asymptotically independent. To proceed, we introduce the following additional assumption:
\begin{assumption}\label{ass:ind1}
    There exist constants $\eta_1>0$ and $\varrho_0 \in(0,1)$ such that $\max _{1 \leq j<\ell \leq p}\left|\sigma_{j \ell}\right| \leq \varrho_0$ and $\max _{1 \leq j \leq p} \sum_{\ell=1}^p \sigma_{j\ell}^2 \leq(\log p)^{\eta_1}$ for all $p \geq 3$. In addition, there exist some constants  $0<\underline{c}<\bar{c}<\infty$, such that $\underline{c}\leq \lambda_{\min }({\R}) \leq \lambda_{\max }(\R)\leq \bar{c}$. 
\end{assumption}

\begin{remark}
    Assumption \ref{ass:ind1} is stronger than Assumption \ref{ass:cor} and \ref{ass:sum_R4_0} (i). Under Assumption \ref{ass:ind1}, $\tr(\R^4)/\tr^2(\R^2)=O(p^{-1})$. Therefore, Assumptions \ref{ass:sum_R4_0} (ii) and \ref{ass:sum_R4} are satisfied if $n=O(p^{2-4(-\omega_1+1/4)/(1-2\omega_1)})$ and $p^{3/(4-2\omega_1)}(\log p)^{-\eta_1/(2-\omega_1)}=O(n)$. Intuitively, if $\lim_{n\to\infty} p/n \in (0,\infty)$, all of Assumptions \ref{ass:cor}--\ref{ass:sum_R4} are satisfied.
\end{remark}

\begin{theorem}\label{thm:ind_H0}
Suppose $H_0$ and Assumptions \ref{ass:max1}--\ref{ass:max3} and \ref{ass:sum_R4}--\ref{ass:ind1} hold, if $\log^7 n=o(p^{1/6\wedge\eta_0/2})$ and $\log^2 p=o(n^{1/5\wedge \lambda/3})$, we have,
\begin{itemize}
\item[(i)] If $\lambda_n\sim n^{\lambda}$ for some $\lambda\in(0,1)$ , then, as $(n,p)\to\infty$, $M_{n,p}$ is asymptotically independent of $S_{n,p}$ in the sense that
\[
    \pr\left(p^{1/2} \zeta_1 M_{n,p}\leq u_p\{\exp(-x)\}, \frac{S_{n,p} }{\sqrt{2{\tr(\R^2)}}}\leq y\right) \to \exp\{-\exp(-x)\}\cdot F_V(y);
\]

\item[(ii)] If $\lambda_n\sim n^{\lambda}$ for some $\lambda\in(0,1)$, then, as $(n,p)\to\infty$, $M^{\dagger}_{n,p}$ is asymptotically independent of $S_{n,p}^{\dagger}$ in the sense that

\begin{equation*}
    \begin{aligned}
        \pr\left(p^{1/2} \zeta_1 M^{\dagger}_{n,p}\leq \frac{x+D(p\log h_n)}{A(p\log h_n)}, \frac{A(\log(n^2/\lambda_n^2))}{\sqrt{2\tr(\R^2)}}|S_{n,p}^\dagger|\le y+D(\log(n^2/\lambda_n^2))\right) \\
        \to \exp\{-\exp(-x)\}\cdot \exp\{-2\exp(-x)\}.
    \end{aligned}
\end{equation*}
\end{itemize}
\end{theorem}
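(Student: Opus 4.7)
The plan is to reduce both pairs of statistics to joint functionals of a single Gaussian process via a Bahadur linearization followed by a high-dimensional Gaussian approximation, and then prove asymptotic independence on the Gaussian side by a localization-plus-weak-dependence argument. First I would invoke the Bahadur representation recalled at the end of Section~\ref{sec:preliminary}, together with the consistency of $\hat{\D}$ and $\hat\zeta_1$, to write each spatial-median CUSUM $\C_\gamma(k)$ as a linear functional of $\S_k=\sum_{i=1}^k\U_i$ plus a remainder that is $o_p(n^{-1/2})$ in the maximum norm uniformly over $k\in[\lambda_n,n-\lambda_n]$ under the stated rate conditions; the scale-invariant spatial-sign CUSUM $\tilde\C_\gamma(k)$ reduces similarly. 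Thus both $(M_{n,p},S_{n,p})$ and $(M_{n,p}^\dagger,S_{n,p}^\dagger)$ are, up to negligible perturbations, joint functionals of the spatial-sign partial-sum process. Next, since the $\U_i$ are i.i.d.\ with covariance $p^{-1}\R$, I would apply a high-dimensional Gaussian approximation (of Chernozhukov--Chetverikov--Kato type for the max part and a Lindeberg / Hanson--Wright bound for the quadratic form) to replace $\S_k$ by a Gaussian partial-sum process $\mathcal G_k$ with matching covariance; the approximation errors are controlled by the rates already exploited in the proofs of Theorems~\ref{thm:Max-Max}, \ref{thm:Max-Sum} and~\ref{Thm:S_np5}.

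For part~(i), I would localize the extreme of the max-$L_\infty$ statistic: under Assumption~\ref{ass:ind1}, a Chen--Stein/Poisson approximation for the Gaussian maximum over $(k,j)$ shows that the event $\{p^{1/2}\zeta_1\widetilde M_{n,p}\le u_p(\exp(-x))\}$ is asymptotically determined by the values of $\mathcal G$ at an asymptotically sparse random set $\mathcal L$ of coordinate-location pairs whose standardized values exceed a threshold near the Gumbel centre. Because $\widetilde S_{n,p}$ is a quadratic form summing contributions over all $p$ coordinates with $\tr(\R^2)\asymp p$, and $|\mathcal L|$ is only polylogarithmic in $p$, the net contribution of the coordinates in $\mathcal L$ to $\widetilde S_{n,p}/\sqrt{2\tr(\R^2)}$ is $o_p(1)$; hence $\widetilde S_{n,p}$ may be replaced by a version $\widetilde S_{n,p}^{\mathcal L^c}$ that does not depend on coordinates in $\mathcal L$. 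The weak cross-coordinate correlation assumption then implies that $\widetilde M_{n,p}$ and $\widetilde S_{n,p}^{\mathcal L^c}$ become asymptotically independent, and Slutsky together with the marginal limits from Theorems~\ref{thm:Max-Max}(i) and~\ref{thm:Max-Sum} delivers the product limit.

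For part~(ii), the same localization produces a sparse set $\mathcal L$ of $(k,j)$ pairs driving $\widetilde M_{n,p}^\dagger$, but now $\widetilde S_{n,p}^\dagger$ itself converges to a Gumbel law driven by a sparse set $\mathcal K$ of $k$-indices at which the normalized quadratic form exceeds its own Gumbel threshold. The key additional step is to show that $\mathcal K$ and the $k$-projection of $\mathcal L$ are asymptotically disjoint: for $k\in\mathcal K$ the quadratic form is extreme because many coordinatewise contributions aggregate, whereas for $(k,j)\in\mathcal L$ a single coordinate value is atypically large. A two-scale Chen--Stein argument, combined with the sparsity bound $a_0(p)\asymp p^{1-\eta_0}$ from Assumption~\ref{ass:max3}(iii) and the exponential factor in Assumption~\ref{ass:sum_R4}(ii), controls the number of overlap $k$'s and shows it is $o_p(1)$. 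After excising those few overlapping indices, the two extremes are supported on disjoint random sets, and joint asymptotic independence follows from the marginal limits in Theorems~\ref{thm:Max-Max}(ii) and~\ref{Thm:S_np5}.

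The main obstacle is precisely the disjointness claim in part~(ii): unlike the max-versus-sum analyses of \citet{wang2023} in which the sum statistic has a continuous Gaussian-process limit that is insensitive to finitely many perturbations, here both limits are Gumbel, so one cannot simply ``remove'' the rare-event coordinates from the sum and invoke a CLT. Proving that the rare events driving the coordinatewise extreme and those driving the quadratic extreme live on asymptotically disjoint location sets requires a bespoke two-parameter Poisson approximation that simultaneously controls both types of exceedances at a common $k$, together with a careful union bound exploiting the sparsity and correlation-decay conditions. This is exactly the Gumbel--Gumbel asymptotic independence highlighted as the main technical novelty in the introduction.
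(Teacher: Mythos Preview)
Your first reduction step (Bahadur linearization of the spatial medians and Taylor expansion of the spatial signs, so that both $M_{n,p}$ and $S_{n,p}$ are $o_p(1)$-perturbations of functionals of $\S_k=\sum_{i\le k}\U_i$) is exactly what the paper does. The non-Gaussian to Gaussian transfer is also in the paper, but note that it is done \emph{jointly}: both statistics are simultaneously smoothed by log-sum-exp functions $W(\cdot)$ and $V(\cdot)$, a bivariate test function $f(W,V)\in C_b^3$ is applied, and a single Lindeberg swap $\U_d\mapsto\Y_d$ is carried out, with the error controlled by third-order Taylor remainders in \emph{both} arguments. Your description (``CCK for the max, Hanson--Wright for the quadratic'') suggests two separate univariate approximations, which would give the marginals but not the joint law; the bivariate swap is what transports asymptotic independence from the Gaussian side.

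The substantive divergence is in the independence argument on the Gaussian side. You propose a random localization: identify a sparse random set $\mathcal L$ of extreme $(k,j)$ pairs via a Chen--Stein/Poisson approximation, remove those coordinates from the quadratic, and argue weak dependence of the remainder. The paper instead uses a deterministic Bonferroni reduction: with $B_j=\{p^{1/2}\zeta_1\max_k|C^U_{0,j}(k)|>u_p\}$ and $A_p$ the sum-statistic event, inclusion--exclusion reduces the problem to showing $\zeta(p,d)=\sum_{j_1<\cdots<j_d}|\pr(A_pB_{j_1}\cdots B_{j_d})-\pr(A_p)\pr(B_{j_1}\cdots B_{j_d})|\to 0$ for each \emph{fixed} integer $d$. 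For a fixed $d$-tuple the Gaussian regression decomposition $\tilde\U_{i,(2)}=\V_i+\T_i$ with $\V_i\perp\tilde\U_{i,(1)}$ (Muirhead's conditional normal lemma) replaces the sum statistic by a version built from $\V_i$ up to an error $\Theta$ satisfying $\pr(|\Theta|>\varsigma\sqrt{2\tr(\R^2)})\le p^{-t}$, $t\to\infty$, using only $d$ fixed and Assumption~\ref{ass:ind1}. This yields $\zeta(p,d)\le\Delta_{p,\varsigma}H(p,d)+2\binom{p}{d}p^{-t}\to 0$.

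The key consequence is that this argument is \emph{agnostic} to the limit law of the sum statistic: nothing in the $\Theta$ bounds or in the Bonferroni bookkeeping uses whether $A_p$ converges to $F_V$ or to a Gumbel cdf. Hence part~(ii) goes through by the identical proof, merely replacing $F_V$ by the Gumbel limit of $S_{n,p}^\dagger$. The ``disjoint random sets $\mathcal K,\mathcal L$'' obstacle you isolate is therefore an artifact of the random-localization route; the paper's fixed-subset approach sidesteps it entirely, and the Gumbel--Gumbel case requires no new idea beyond part~(i). If you wish to keep your localization viewpoint, the cleanest fix is to replace the random $\mathcal L$ by the Bonferroni family of fixed $d$-subsets and use the regression decomposition in place of the two-scale Chen--Stein step.
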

\begin{remark}
    {\cite{wang2023} established the asymptotic independence between max-$L_{\infty}$-type and sum-$L_2$-type statistics, which converge marginally to the Gumbel and normal distributions, respectively. 
    To the best of our knowledge, this paper is the first to study the asymptotic independence between max-$L_{\infty}$-type and max-$L_2$-type statistics, both of which converge to Gumbel-type limits in high-dimensional settings.
    This advances the theoretical understanding of extreme-value behavior in high dimensions and represents an important contribution to the literature.
    }
\end{remark}
According to Theorem \ref{thm:ind_H0}, we propose combining the individual $p$-values from the max-$L_\infty$ and max-$L_2$-type test statistics using Fisher's method \citep{littell1971asymptotic,littell1973asymptotic}. Specifically, we define the combined $p$-values as
\begin{align*}
    { p}_{M,S} &:= 1 - F_{\chi^2_4}\Big(-2(\log { p}_{M_{n,p}} + \log { p}_{S_{n,p}})\Big)\ \text{~and}\\
    { p}_{M^\dagger,S^\dagger} &:= 1 - F_{\chi^2_4}\Big(-2(\log { p}_{M^{\dagger}_{n,p}} + \log { p}_{S^{\dagger}_{n,p}})\Big),
\end{align*}
where $F_{\chi^2_4}$ denotes the cdf of the chi-squared distribution with 4 degrees of freedom.
The justification for this approach lies in the asymptotic independence of the two types of test statistics under the null hypothesis, as established in Theorem~\ref{thm:ind_H0}. Consequently, both
$-2(\log { p}_{M_{n,p}} + \log { p}_{S_{n,p}})$ and $-2(\log { p}_{M^{\dagger}_{n,p}} + \log { p}_{S^{\dagger}_{n,p}})$ converges in distribution to $F_{\chi^2_4}$ under $H_0$.
Therefore, either ${\rm p}_{M,S}$ or ${\rm p}_{M^{\dagger},S}$ can be used as the final $p$-value for testing $H_0$. If the combined $p$-value is smaller than a pre-specified significance level $\alpha\in(0,1)$, then we reject $H_0$.
The size of the combined test is asymptotically controlled according to Theorem \ref{thm:ind_H0}.

We now turn to analyze the power of the combined test under the local alternative hypothesis:
\begin{equation*}
    H_{1;n,p}:\vert\mathcal{A}\vert= o\{p/(\log\log p)^2\wedge\sqrt{\tr(\R^2)}/\log n\} \text{ ~and~ } \Vert\boldsymbol\delta\Vert^2=o\{n^{-1}\sqrt{2\tr(\R^2)}
    \},
\end{equation*}
where $\mathcal{A}=\{1\leq 
 j\leq p :\delta_j\not= 0\}$ is the support of $\boldsymbol\delta$. 
 
The next theorem establishes that the max-$L_\infty$-type and max-$L_2$-type test statistics remain asymptotically independent under the local alternative.

\begin{theorem}\label{thm:ind_H1}
    Suppose Assumptions \ref{ass:max1}-\ref{ass:max3} and \ref{ass:sum_R4}-\ref{ass:ind1} hold. Under $H_{1;n,p}$, if $\log^7 n=o(p^{1/6\wedge\eta_0/2})$ and $\log^2 p=o(n^{1/5\wedge \lambda/3})$, we have,
    \begin{itemize}
\item[(i)] If $\lambda_n\sim n^{\lambda}$ for some $\lambda\in(0,1)$, then, as $(n,p)\to\infty$, $M_{n,p}$ is asymptotically independent of $S_{n,p}$ in the sense that
\[
    \pr\Big(p^{1/2} \zeta_1 M_{n,p}\leq u_p\{\exp(-x)\}, \frac{S_{n,p}}{\sqrt{2{\tr(\R^2)}}}\leq y\Big) \to \exp\{-\exp(-x)\}\cdot F_V(y);
\]

\item[(ii)] If $\lambda_n\sim n^{\lambda}$ for some $\lambda\in(0,1)$, then, as $(n,p)\to\infty$, $M^{\dagger}_{n,p}$ is asymptotically independent of $S_{n,p}^{\dagger}$ in the sense that

\begin{equation*}
    \begin{aligned}
        \pr\Big(p^{1/2} \zeta_1 M^{\dagger}_{n,p}\leq \frac{x+D(p\log h_n)}{A(p\log h_n)}, \frac{A(\log(n^2/\lambda_n^2))}{\sqrt{2\tr(\R^2)}}|S_{n,p}^\dagger|\le y+D(\log(n^2/\lambda_n^2))\Big) \\
        \to \exp\{-\exp(-x)\}\cdot  \exp\{-2\exp(-x)\}.
    \end{aligned}
\end{equation*}
\end{itemize}
\end{theorem}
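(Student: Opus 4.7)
The plan is to reduce Theorem \ref{thm:ind_H1} to Theorem \ref{thm:ind_H0} by establishing that, under the local alternative $H_{1;n,p}$, each of the four test statistics $M_{n,p}$, $M^{\dagger}_{n,p}$, $S_{n,p}$, $S^{\dagger}_{n,p}$ is, after its appropriate standardization, asymptotically equivalent to the same statistic computed under $H_0$ (that is, with $\boldsymbol\delta=\mathbf 0$). Since joint weak convergence is preserved under $o_p(1)$ perturbations in either coordinate, the asymptotic independence statements in Theorem \ref{thm:ind_H1} will then follow directly from Theorem \ref{thm:ind_H0} applied to the null-versions.

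The first step is to derive a uniform-in-$k$ decomposition of the two CUSUM processes. For the spatial-median CUSUM, substituting $\X_i=\bth_0+\boldsymbol\delta\mathbb{I}(i>\tau)+\boldsymbol\epsilon_i$ into the Bahadur representation of $\hat\bth_{a:b}$ stated in the remark after Assumption \ref{ass:max3}, I would write $\C_\gamma(k)=\C^{(0)}_\gamma(k)+\boldsymbol b_\gamma(k;\boldsymbol\delta)+\boldsymbol r_{n,k}$, where $\C^{(0)}_\gamma(k)$ is the CUSUM built from the centered noise $\{\boldsymbol\epsilon_i\}$ alone, $\boldsymbol b_\gamma(k;\boldsymbol\delta)$ is a deterministic piecewise-linear-in-$k$ shift whose supremum over $k\in[\lambda_n,n-\lambda_n]$ is of order $\sqrt{n}\{k(n-k)/n^2\}^{1-\gamma}\|\boldsymbol\delta\|_\infty$, and $\boldsymbol r_{n,k}$ collects the Bahadur-type remainders of the two segment medians. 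For the spatial-sign CUSUM $\widetilde\C_\gamma(k)$, I would obtain an analogous decomposition by linearizing $U(\hat{\D}^{-1/2}(\X_i-\hat\bth_{1:n}))$ around $U(\D^{-1/2}\boldsymbol\epsilon_i)$ via a first-order Taylor expansion of the spatial-sign function, with the second-order correction absorbed into a remainder controlled by Assumption \ref{ass:max2}.

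The second step is to verify that the shift and remainder terms are negligible relative to the corresponding normalization scales under $H_{1;n,p}$. For the $L_\infty$ statistics the relevant scale is $p^{-1/2}/\sqrt{\log p}$, so I need $\max_k\|\boldsymbol b_\gamma(k;\boldsymbol\delta)\|_\infty$ and $\max_k\|\boldsymbol r_{n,k}\|_\infty$ to be $o_p(p^{-1/2}/\sqrt{\log p})$; the bias bound is obtained by combining $\|\boldsymbol\delta\|^2=o\{n^{-1}\sqrt{\tr(\R^2)}\}$ with the sparsity $|\mathcal{A}|=o\{\sqrt{\tr(\R^2)}/\log n\}$ to control $\|\boldsymbol\delta\|_\infty$ via a support-restricted Cauchy--Schwarz, while the remainder bound follows from the growth conditions $\log^7 n=o(p^{1/6\wedge\eta_0/2})$ and $\log^2 p=o(n^{1/5\wedge\lambda/3})$ together with a maximal-inequality strengthening of the pointwise Bahadur expansion. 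For the $L_2$ statistics, expanding $\|\widetilde\C^{(0)}_\gamma(k)+\widetilde{\boldsymbol b}_\gamma(k;\boldsymbol\delta)\|^2$ reduces the bias control to the two terms $O(n\|\boldsymbol\delta\|^2)$ and $2\max_k\langle \widetilde\C^{(0)}_\gamma(k),\widetilde{\boldsymbol b}_\gamma(k;\boldsymbol\delta)\rangle$; the first is $o(\sqrt{\tr(\R^2)})$ directly by assumption, and the cross-term can be bounded by $o_p(\sqrt{\tr(\R^2)})$ using the null second-moment estimates for $\widetilde\C^{(0)}_\gamma(k)$ established in the proofs of Theorems \ref{thm:Max-Sum} and \ref{Thm:S_np5}.

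Once the asymptotic equivalences $M_{n,p}=M^{(0)}_{n,p}+o_p(p^{-1/2}/\sqrt{\log p})$ and $S_{n,p}=S^{(0)}_{n,p}+o_p(\sqrt{\tr(\R^2)})$, together with their dagger analogues, are in hand, Slutsky applied at the joint level turns Theorem \ref{thm:ind_H0} for the null-versions into the claimed joint limits for Theorem \ref{thm:ind_H1}. The main obstacle I anticipate is precisely the uniform-in-$k$ control of both the Bahadur remainder and the spatial-sign linearization error across the entire window $[\lambda_n,n-\lambda_n]$, since the pointwise rate in the remark after Assumption \ref{ass:max3} is not strong enough on its own; overcoming this requires a maximal-inequality refinement along the sequence of dependent spatial medians $\{\hat\bth_{1:k}\}$ analogous to the tool developed for Theorem \ref{thm:Max-Max}, now carried out in the presence of a small but nonzero mean shift whose effect on the spatial-sign operator is, by continuity and the magnitude conditions on $\boldsymbol\delta$, asymptotically negligible.
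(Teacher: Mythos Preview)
Your reduction strategy has a genuine gap in the treatment of the max-$L_\infty$ statistics. You claim that under $H_{1;n,p}$ one can establish $M_{n,p}=M^{(0)}_{n,p}+o_p(p^{-1/2}/\sqrt{\log p})$, and you propose to bound the deterministic shift $\max_k\|\boldsymbol b_\gamma(k;\boldsymbol\delta)\|_\infty\asymp\sqrt{n}\,\|\boldsymbol\delta\|_\infty$ by ``combining $\|\boldsymbol\delta\|^2=o\{n^{-1}\sqrt{\tr(\R^2)}\}$ with the sparsity $|\mathcal A|=o\{\sqrt{\tr(\R^2)}/\log n\}$ via a support-restricted Cauchy--Schwarz.'' But Cauchy--Schwarz runs the wrong way here: from $\|\boldsymbol\delta\|$ and $|\mathcal A|$ the only upper bound available on $\|\boldsymbol\delta\|_\infty$ is $\|\boldsymbol\delta\|_\infty\le\|\boldsymbol\delta\|$. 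Under Assumption~\ref{ass:ind1} one has $\tr(\R^2)\asymp p$, so $H_{1;n,p}$ only yields $\sqrt{n}\,\|\boldsymbol\delta\|_\infty=o(p^{1/4})$, which is far too large to be $o(1/\sqrt{\log p})$ on the $p^{1/2}\zeta_1$-scale. In other words, the local alternative allows signals on $\mathcal A$ that are detectable by $M_{n,p}$, so $M_{n,p}$ is \emph{not} in general an $o_p$-perturbation of its null version, and your Slutsky step collapses.

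The paper circumvents this by never attempting to kill the signal in $M_{n,p}$. Instead it (i) shows that $S_{n,p}/\sqrt{2\tr(\R^2)}$, after removing the $\tilde{\boldsymbol\Delta}_S\lesssim n\|\boldsymbol\delta\|^2/\sqrt{\tr(\R^2)}=o(1)$ signal term, depends only on the coordinates $\{U_{i,l}:l\in\mathcal A^c\}$ up to $o_p(1)$ (this is where the sparsity bound $|\mathcal A|=o\{\sqrt{\tr(\R^2)}/\log n\}$ is actually used, via a maximal inequality on $\sum_{l\in\mathcal A}\sum_{i\neq j}\upsilon_{i,k}\upsilon_{j,k}U_{i,l}U_{j,l}$); and then (ii) after the Gaussian reduction, writes $\U_{i,(2)}=\V_i+\T_i$ with $\V_i\perp\U_{i,(1)}$ as in Lemma~\ref{lemma_normaldecom} and shows the $\T_i$-contribution to $S_{n,p}$ is negligible using $|\mathcal A|=o\{p/(\log\log p)^2\}$. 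This makes $S_{n,p}$ asymptotically independent of the entire $\sigma$-field generated by $\{\U_{i,(1)}\}$, hence of whatever $M_{n,p}$ does on the signal coordinates; asymptotic independence with the $\mathcal A^c$-part of $M_{n,p}$ comes from Theorem~\ref{thm:ind_H0}. The key conceptual difference is that the paper proves $S_{n,p}$ is asymptotically independent of \emph{all} of $M_{n,p}$ (signal included), rather than arguing that the signal is invisible to $M_{n,p}$.
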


{
\begin{remark}
    Theorem \ref{thm:ind_H1} shows the asymptotic independence of the max-$L_\infty$-type and max-$L_2$-type test statistics under the local alternative $H_{1;n,p}$. 
    Notably, the signal strength conditions required under $H_{1;n,p}$ in our setting are more restrictive than those in \citet{wang2023}. This is mainly because, unlike the sample mean with its explicit additive form, the spatial median and spatial sign require a Bahadur representation for asymptotic analysis. However, this expansion relies on the assumption of i.i.d.~symmetric data \citep{feng2016multivariate, cheng2023statistical}. Under strong signals, structural changes break this symmetry, causing the spatial median to diverge from the mean and invalidating the expansion. Therefore, we focus on the local alternative regime, where the signal is weak enough that the Bahadur representation remains approximately valid, ensuring analytical tractability.
\end{remark}
}
Based on Theorem \ref{thm:ind_H1}, we compare the power of the adaptive tests to their non-adaptive counterparts.
Let $M$ denote either $M_{n,p}$ or $M_{n,p}^\dagger$, and $S$ denote either $S_{n,p}$ or $S_{n,p}^\dagger$, with corresponding $p$-values $p_M$ and $p_S$. For a given significance level $\alpha \in (0,1)$, let $\beta_{M,\alpha}$ and $\beta_{S,\alpha}$ be the power functions of $M$ and $S$, respectively. According to \citet{littell1971asymptotic,littell1973asymptotic}, the power of Fisher’s combination test is comparable to that of the minimal $p$-value test, $\min\{p_M, p_S\}$, with power function $\beta_{M \wedge S, \alpha} = \pr(\min\{p_M, p_S\} \leq 1 - \sqrt{1 - \alpha})$. On one hand, we have the bound
\begin{equation}\label{eq:power_H1}
\begin{aligned}
\beta_{M\wedge S, \alpha} &\ge \pr(\min\{{p}_{M},{ p}_{S}\}\leq \alpha/2)\\
&= \beta_{M,\alpha/2}+\beta_{S,\alpha/2}-\pr({p}_{M}\leq \alpha/2, {p}_{S}\leq \alpha/2)\\
&\ge \max\{\beta_{M,\alpha/2},\beta_{S,\alpha/2}\}.
\end{aligned}
\end{equation}
On the other hand, under the local alternative $H_{1;n,p}$, the asymptotic independence of $M$ and $S$ in Theorem~\ref{thm:ind_H1} yields
\begin{equation}\label{eq:power_H1np}
    \beta_{M\wedge S, \alpha} \ge \beta_{M,\alpha/2}+\beta_{S,\alpha/2}-\beta_{M,\alpha/2}\beta_{S,\alpha/2}+o(1),
\end{equation}
For small $\alpha$, the difference between $\beta_{M,\alpha}$ and $\beta_{M,\alpha/2}$ (and similarly for $S$) is small. Therefore, \eqref{eq:power_H1} and \eqref{eq:power_H1np} suggest that the adaptive test achieves power at least comparable to, and often exceeding, that of the individual max-$L_\infty$-type or max-$L_2$-type tests. Similar discussions can be found in \citet{wang2023}.

\begin{remark}\label{rem:ada_esti}
    Similar to \cite{wang2023}, when the null hypothesis is rejected, we propose two adaptive changepoint estimation methods by combining $L_\infty$-type and $L_2$-type statistics:
 \begin{equation}
    \notag \hat{\tau} :=\begin{cases}\hat{\tau}_M:=\underset{\lambda_n \leq k \leq n-\lambda_n}{\arg \max } \|\C_{0}(k)\|_{\infty}, & \text { if } p_{M_{n, p}}<p_{S_{n, p}}, \\ \hat{\tau}_S:=\underset{\lambda_n \leq k \leq n-\lambda_n}{\arg \max } \|\tilde{\C}_{0}(k)\|^2, & \text { otherwise, }\end{cases}
\end{equation}
or
\begin{equation}
    \notag \hat{\tau}^\dagger :=\begin{cases}\hat{\tau}_{M^{\dagger}}:=\underset{\lambda_n \leq k \leq n-\lambda_n}{\operatorname{argmax}} \|\C_{0.5}(k)\|_{\infty}, & \text { if } p_{M_{n, p}^{\dagger}}<p_{S^\dagger_{n, p}}, \\ \hat{\tau}_{S^\dagger}:=\underset{\lambda_n \leq k \leq n-\lambda_n}{\arg \max } \|\tilde{\C}_{0.5}(k)\|^2, & \text { otherwise }\,.\end{cases}
\end{equation}
These estimators adaptively choose between the $L_\infty$-based and $L_2$-based statistics based on which corresponding $p$-value provides stronger evidence against the null. Notably, they replace the conventional mean-based CUSUM statistic \citep{wang2023} with a spatial-sign-based CUSUM statistic.
\end{remark}

\section{Simulation studies}\label{Sec:Simus}

To evaluate the performance of the proposed spatial-median and spatial-sign-based methods, we conduct a series of simulation studies to assess test size, power, and changepoint estimation accuracy, with respect to sample size $n$, dimension $p$, signal strength $\boldsymbol\delta$, sparsity level and noise distribution. 
We include a broad range of competing methods for comparison:
\begin{itemize}
    \item Our proposed tests, with $p$-values ${ p}_{M_{n,p}}$, ${ p}_{M_{n,p}^\dagger}$, ${ p}_{S_{n,p}}$, ${ p}_{S^{\dagger}_{n,p}}$, ${ p}_{M,S}$, and ${ p}_{M^\dagger,S^\dagger}$,  referred to as SMAX(0), SMAX(0.5), SSUM(0), SSUM(0.5), SCMS(0) and SCMS(0.5);
    \item The max-$L_2$-aggregation methods proposed by \cite{2013-chan+Horvath+Huskova-p955} and \cite{2016-jin+pan+yang+zhou-p2355}, referred to as CHH and JPYZ, respectively;
    \item The double-max-sum methods proposed by \cite{wang2023}, referred to as 
    DMS(0) and DMS(0.5). 
    \item The adaptive procedures in \cite{liu2020unified} over $q\in \{1,2,3,4,5,\infty\}$ with $s_0=p/2$ and \cite{zhang2022adaptive} over $q\in \{2,6\}$, referred to as LZZL and ZWS, respectively.
\end{itemize}
In particular, SMAX(0), SMAX(0.5), SSUM(0), SSUM(0.5), SCMS(0), SCMS(0.5), CHH, DMS(0.5), and LZZL require a boundary removal parameter. For a fairness comparison, we set this parameter to $\lambda_n:=\lfloor 0.2 n\rfloor$ for all methods. For our proposed spatial-sign-based tests, we set $\varrho=0.2$ when estimating $\zeta_1$ and $\D$. 

The following scenarios are considered for random noises:
\begin{itemize}
    \item \textbf{I}: Multivariate normal distribution with mean zero and covariance matrix $\mathbf\Sigma$.
    \item \textbf{II}: Multivariate $t$-distribution with degrees of freedom $6$ and covariance matrix $\mathbf\Sigma$. 
     \item \textbf{III}: Multivariate mixture normal distribution with pdf $\gamma f_{p}(\mathbf{0}, \mathbf\Sigma) + (1 - \gamma) f_{p}(\mathbf{0}, 9\mathbf\Sigma)$, where $f_{p}(\cdot; \cdot)$ is the density function of $p$-dimensional multivariate normal distribution, and $\gamma$ is set to $0.8$.
\end{itemize}
In all scenarios, the covariance matrix is specified as $\mathbf\Sigma=(0.5^{\vert j-\ell\vert})_{1\leq j,\ell\leq p}$. 
Each method's empirical size, power, and changepoint estimation accuracy are evaluated over 500 Monte Carlo replications, with a nominal significance level of $\alpha = 5\%$.

\subsection{Size performance}

To evaluate the size performance, we consider $n=200$ with $p\in \{100,200,300,400\}$ for illustration. Table \ref{tab:size1} presents the size of each test for different $ (n, p) $ under Scenarios I--III.  It is evident that our proposed tests--SMAX(0), SMAX(0.5), SSUM(0), SSUM(0.5), SCMS(0), and SCMS(0.5)--maintain good control over the Type I error rate as $ (n, p) $ increases.
Most of the other methods also demonstrate good Type I error control, with the exception of the CHH method, which exhibits inflation in the Type I error rate.
This inflation is due to the CHH method being primarily designed for normally distributed data with independent components, failing to adapt to other distributions and correlations between dimensions. In contrast, our proposed method allows for heavy-tailed distributions and takes into account the correlations between dimensions.

\begin{table}[!htp]
\centering
\footnotesize
\begin{tabular}{lcccccc}
\toprule
$(n,p)$ & SMAX(0)& SSUM(0)& SCMS(0)& SMAX(0.5)& SSUM(0.5)& SCMS(0.5)\\
\midrule
\multicolumn{7}{l}{\textbf{Scenario (I)}}\\ 
   (200,100) &5.0&6.4&7.4&4.0&1.4&3.6\\ 

   (200,200) &6.0&5.4&7.2&4.6&1.0&3.0\\ 
   (200,300) &5.4&4.2&5.8&3.2&1.2&2.8\\ 
   (200,400) &5.4&5.0&5.4&4.8&0.4&3.0\\ 
 \midrule
\multicolumn{7}{l}{\textbf{Scenario (II)}}\\ 
   (200,100) &4.6&6.8&8.4&4.0&0.8&3.0\\ 

   (200,200) &4.2&6.2&7.2&4.8&0.4&3.2\\ 
   (200,300) &4.2&4.2&6.0&3.6&0.8&2.2\\ 
   (200,400) &4.6&4.0&6.0&4.2&0.6&3.4\\  
  \midrule
\multicolumn{7}{l}{\textbf{Scenario (III)}}\\ 
   (200,100) &5.0&8.6&8.8&4.6&1.6&4.8\\ 

   (200,200) &4.6&8.2&7.8&5.6&1.4&5.0\\ 
   (200,300) &4.6&5.2&6.6&4.0&0.6&2.0\\ 
   (200,400) &4.2&2.6&4.0&3.2&0.2&1.6\\ 

 \bottomrule
 \toprule
 $(n,p)$ & JPYZ & CHH & DMS(0)& DMS(0.5)& LZZL& ZWS\\
\midrule
\multicolumn{7}{l}{\textbf{Scenario (I)}}\\ 
   (200,100) &10.2&10.6&8.6&7.6&7.0&5.0\\ 
   (200,200) &7.6&8.6&7.8&6.6&4.8&6.0\\ 
   (200,300) &5.8&8.6&8.3&8.0&6.2&7.2\\ 
   (200,400) &6.2&7.0&4.8&3.6&3.6&6.2\\  
 \midrule
\multicolumn{7}{l}{\textbf{Scenario (II)}}\\  
   (200,100) &6.6&10.6&6.0&6.4&3.6&5.6\\ 

   (200,200) &3.6&15.8&4.8&5.0&2.8&7.2\\ 
   (200,300) &2.8&19.2&3.8&3.4&3.4&6.0\\ 
   (200,400) &2.4&18.6&5.0&4.8&4.8&5.6\\ 

 \midrule
\multicolumn{7}{l}{\textbf{Scenario (III)}}\\ 
   (200,100) &4.4&14.8&3.8&4.0&3.4&8.0\\ 

   (200,200) &2.0&21.2&5.2&3.4&4.0&7.0\\ 
   (200,300) &1.2&26.0&3.0&3.0&2.2&6.8\\ 
   (200,400) &0.8&32.8&2.8&4.4&4.2&5.2\\ 
\bottomrule
\end{tabular}
\caption{Empirical size (in $\%$) performance under Scenarios I--III.}\label{tab:size1}
\end{table}

\subsection{Power performance}

To evaluate the power performance across different levels of sparsity under alternatives, we consider $\delta_{j}=\sqrt{\Delta/k}$ for $j=1,2,\ldots,k$ and $\delta_{j}=0$ otherwise, such that $\Vert \boldsymbol\delta\Vert^2=\Delta$. Figures \ref{fig:power1_tau05}--\ref{fig:power1_tau025} present the empirical power of different methods for varying signal strength $\Delta$, signal sparsity levels $k$, and changepoint locations $\tau$, with $(n,p)=(200,200)$ for illustration. 

In Scenario I, the ensemble methods LZZL and ZWS show a slight advantage when $ \tau/n = 0.5 $, while the DMS(0.5) method performs better when $\tau/n = 0.25$. However, in Scenarios II and III, these ensemble methods exhibit a faster power decay as $k$ increases, and their performance is significantly inferior to that of the SCMS(0) and SCMS(0.5) methods. As expected, the spatial-sign-based methods demonstrate significantly higher power compared to other approaches for heavy-tailed data. Notably, the two adaptive methods, SCMS(0) and SCMS(0.5), perform well across various sparsity levels. When $\tau/n = 0.5$, SCMS(0) achieves outstanding performance compared to all other methods. Moreover, even when $\tau/n = 0.25$, i.e., the changepoint is closer to the boundary, SCMS(0) still outperforms SCMS(0.5). This is primarily due to the slower convergence rate of the statistic in SSUM(0.5), which hinders its ability to take advantage of the statistic after scaling, thereby affecting the performance of the adaptive method. This warrants further investigation.

\begin{figure}[!ht]
\centering
\begin{subfigure}[b]{0.9\textwidth}
        \centering
        \includegraphics[width=\textwidth]{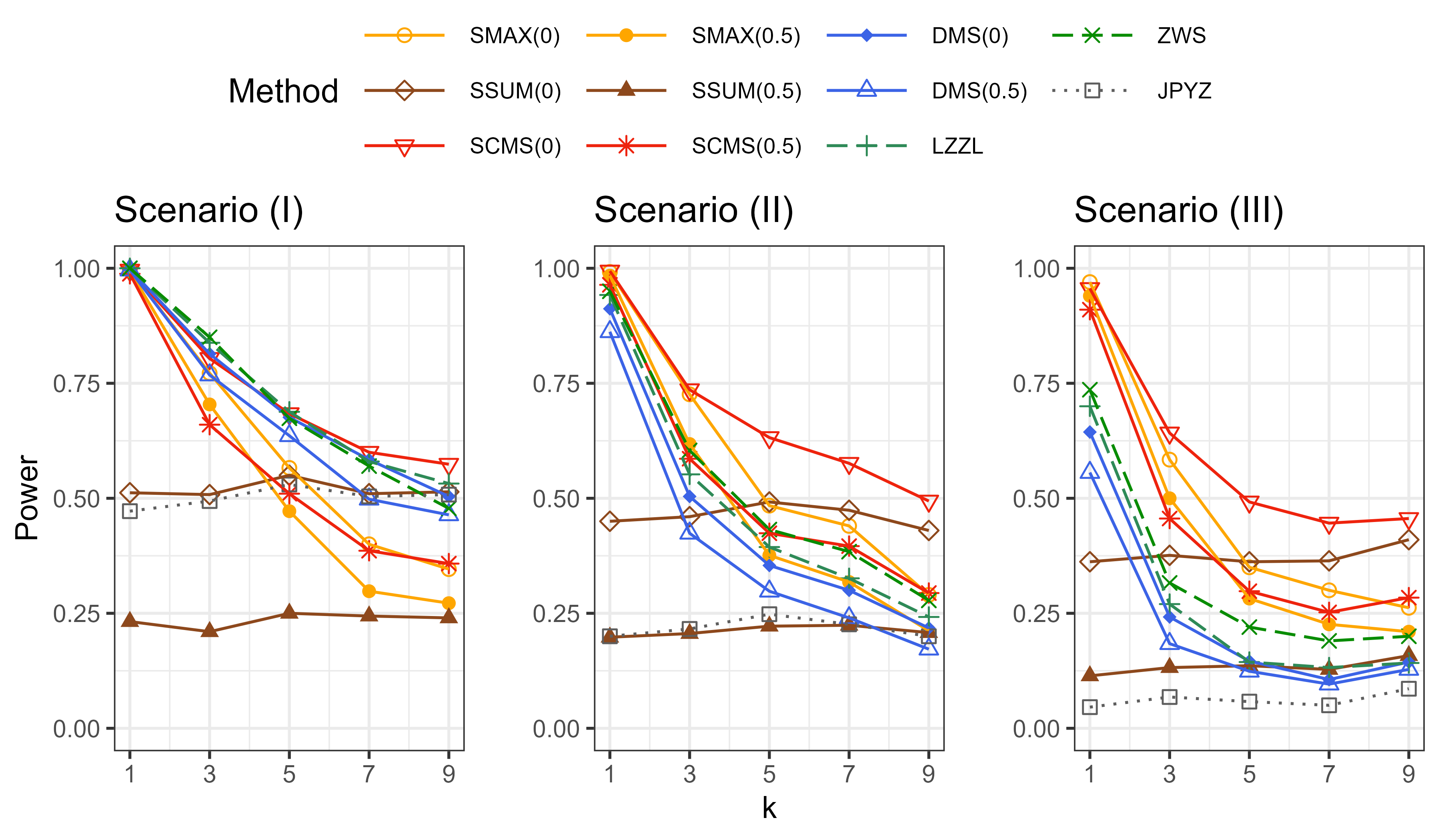}
        \caption{$\Delta=1$}
    \end{subfigure}
    \begin{subfigure}[b]{0.9\textwidth}
        \centering
        \includegraphics[width=\textwidth]{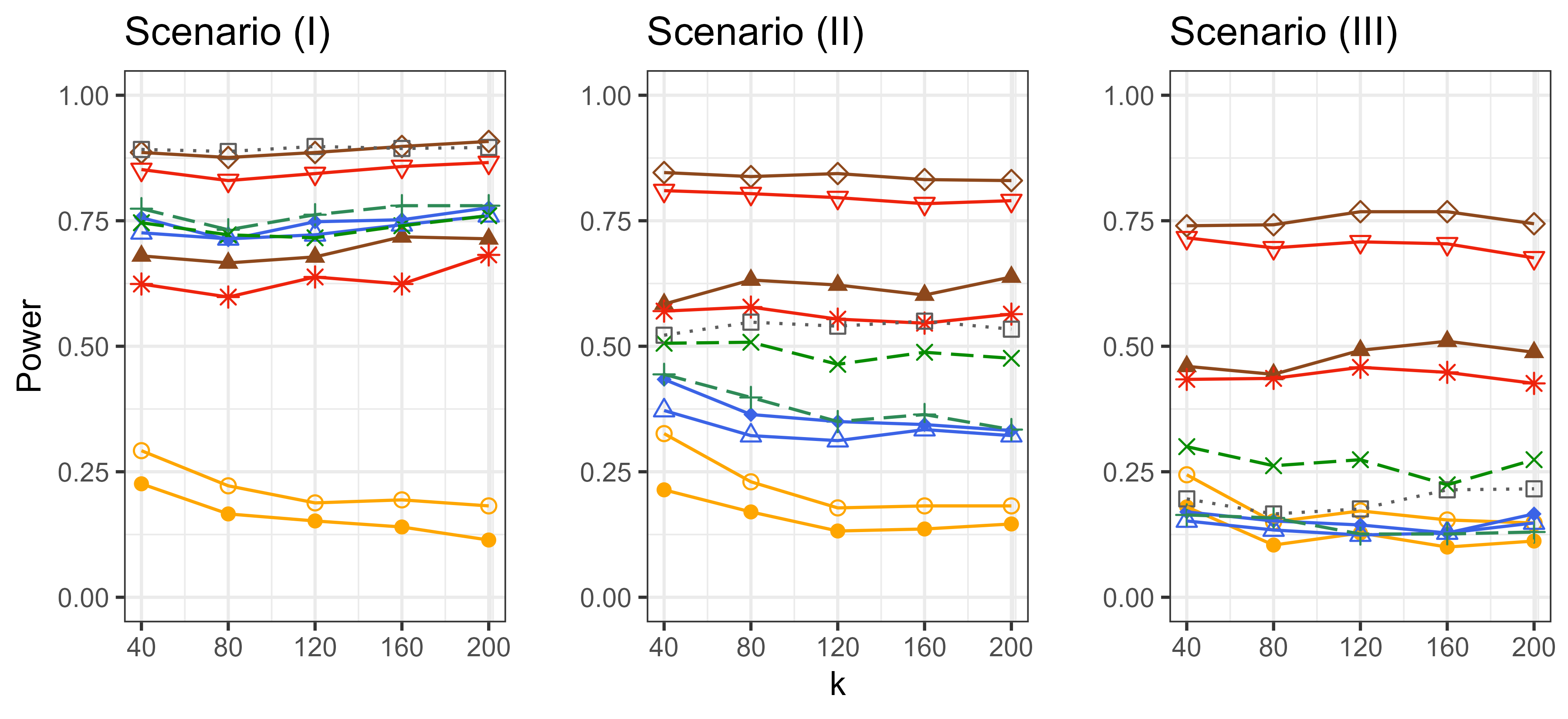}
        \caption{$\Delta=2$}
    \end{subfigure}
\caption{Power of tests with different signal strength $\Delta$, signal sparsity levels $k$, and changepoint locations $\tau$ for Scenarios I--III with $(n,p)=(200,200)$ and $\tau/n=0.5$. \label{fig:power1_tau05}}
\end{figure}
\begin{figure}[!ht]
\centering
\begin{subfigure}[b]{0.9\textwidth}
        \centering
        \includegraphics[width=\textwidth]{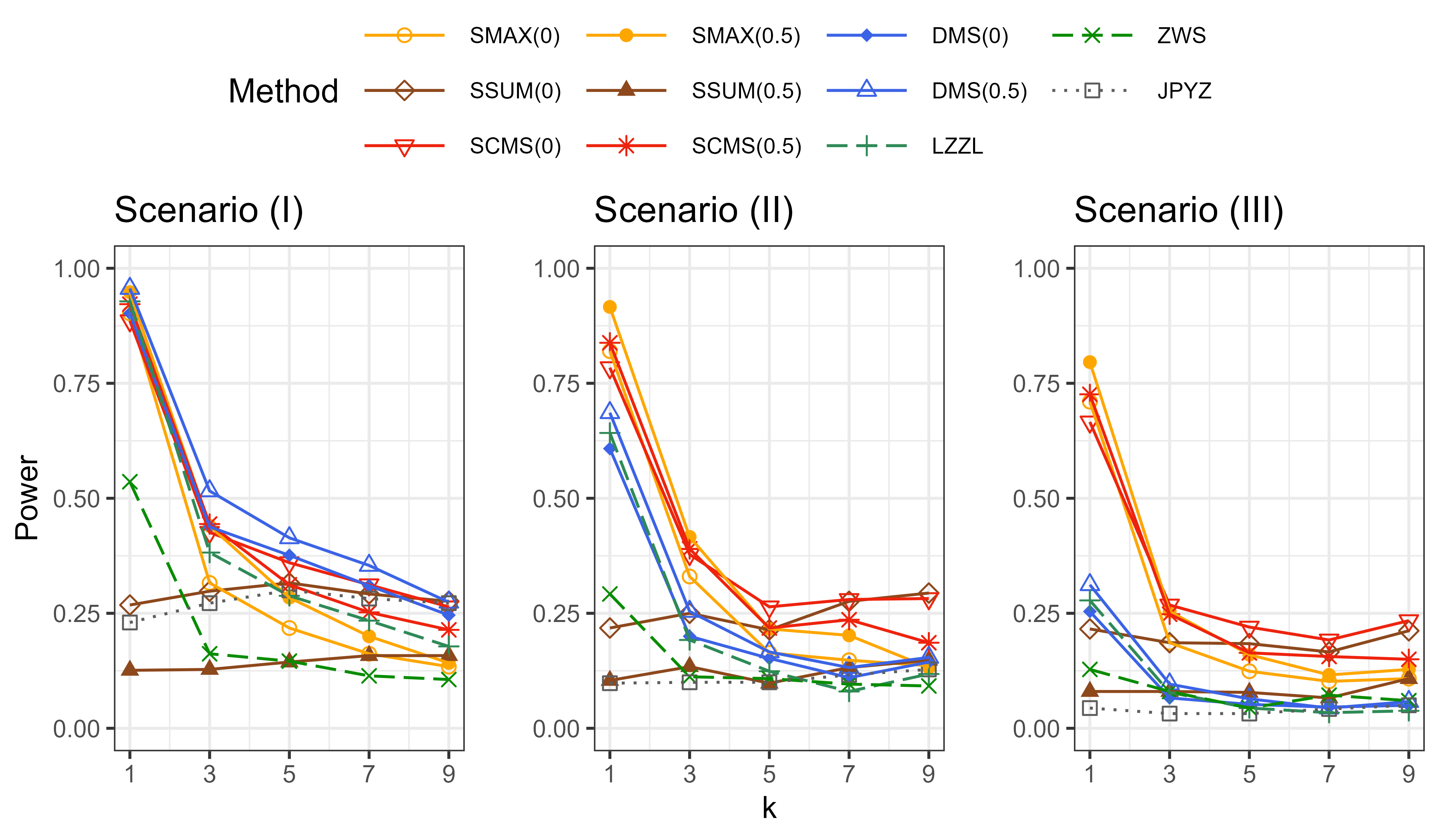}
        \caption{$\Delta=1$}
    \end{subfigure}
    \begin{subfigure}[b]{0.9\textwidth}
        \centering
        \includegraphics[width=\textwidth]{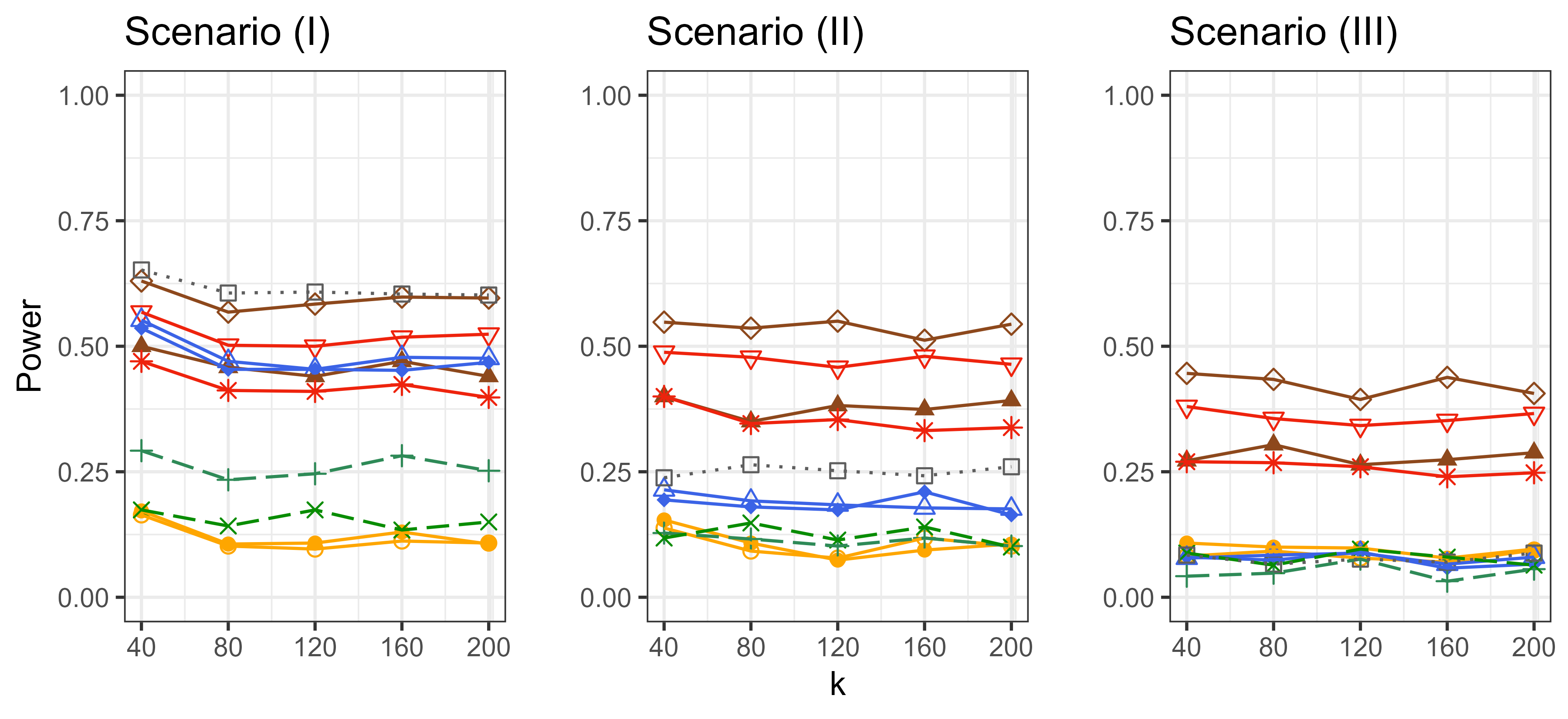}
        \caption{$\Delta=2$}
    \end{subfigure}
\caption{Power of tests with different signal strength $\Delta$, signal sparsity levels $k$, and changepoint locations $\tau$ for Scenarios I--III with $(n,p)=(200,200)$ and $\tau/n=0.25$. \label{fig:power1_tau025}}
\end{figure}

\subsection{Estimation accuracy}

We next evaluate the accuracy of single changepoint estimation. We consdier the spatial-sign based methods: SMAX(0) - $\hat{\tau}_M$, SSUM(0) - $\hat{\tau}_S$, SCMS(0) - $\hat{\tau}$,  SMAX(0.5) - $\hat{\tau}_{M^\dagger}$, SSUM(0.5) - $\hat{\tau}_{S^\dagger}$, SCMS(0.5) - $\hat{\tau}^\dagger$. 
For comparison, we also implement several procedures recommended in \cite{wang2023}: MAX(0), MAX(0.5), SUM(0.5), DMS(0), and DMS(0.5).

\begin{figure}[!ht]
\centering
\begin{subfigure}[b]{0.9\textwidth}
        \centering
        \includegraphics[width=\textwidth]{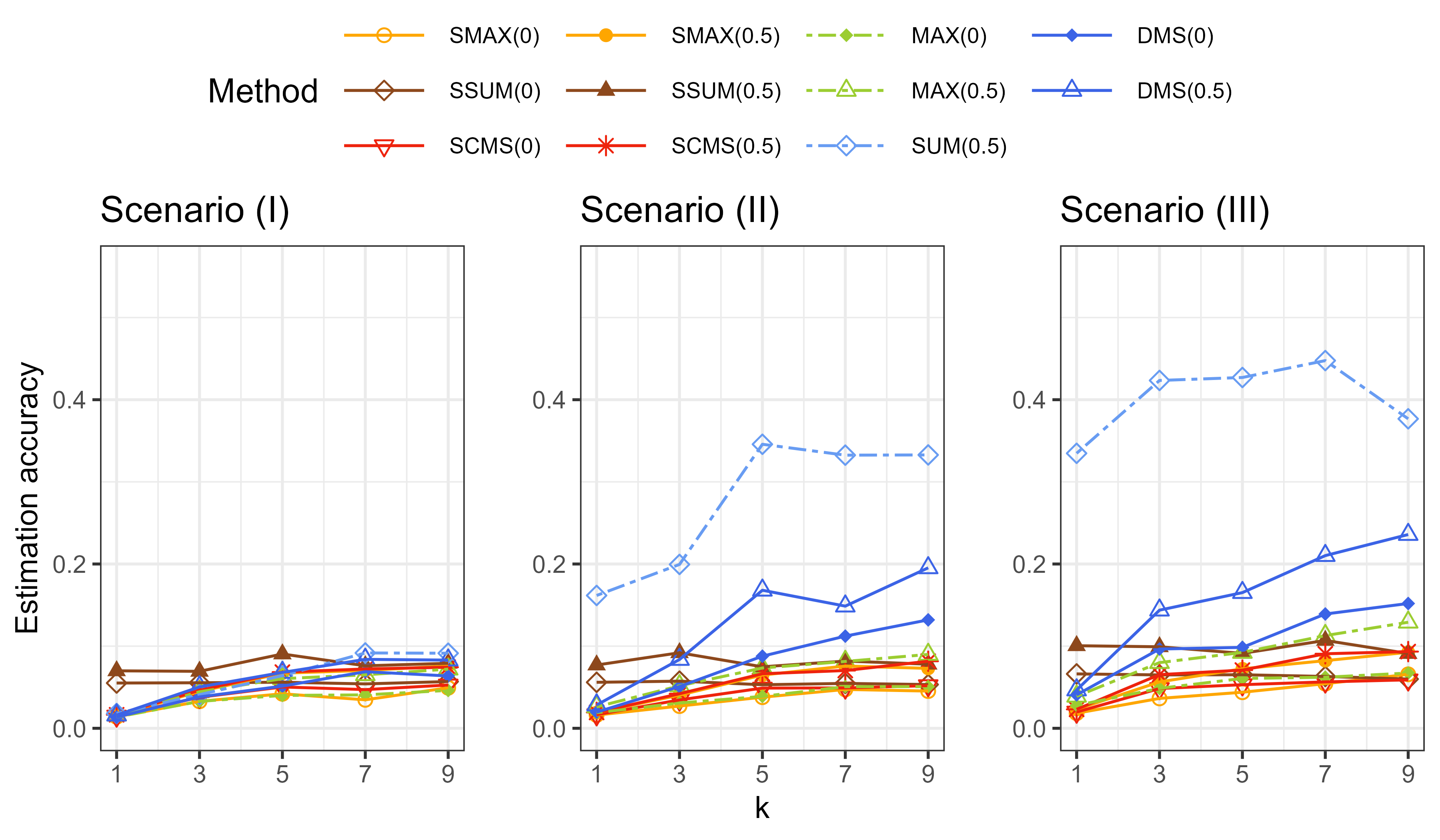}
        \caption{$\Delta=1$}
    \end{subfigure}
    \begin{subfigure}[b]{0.9\textwidth}
        \centering
        \includegraphics[width=\textwidth]{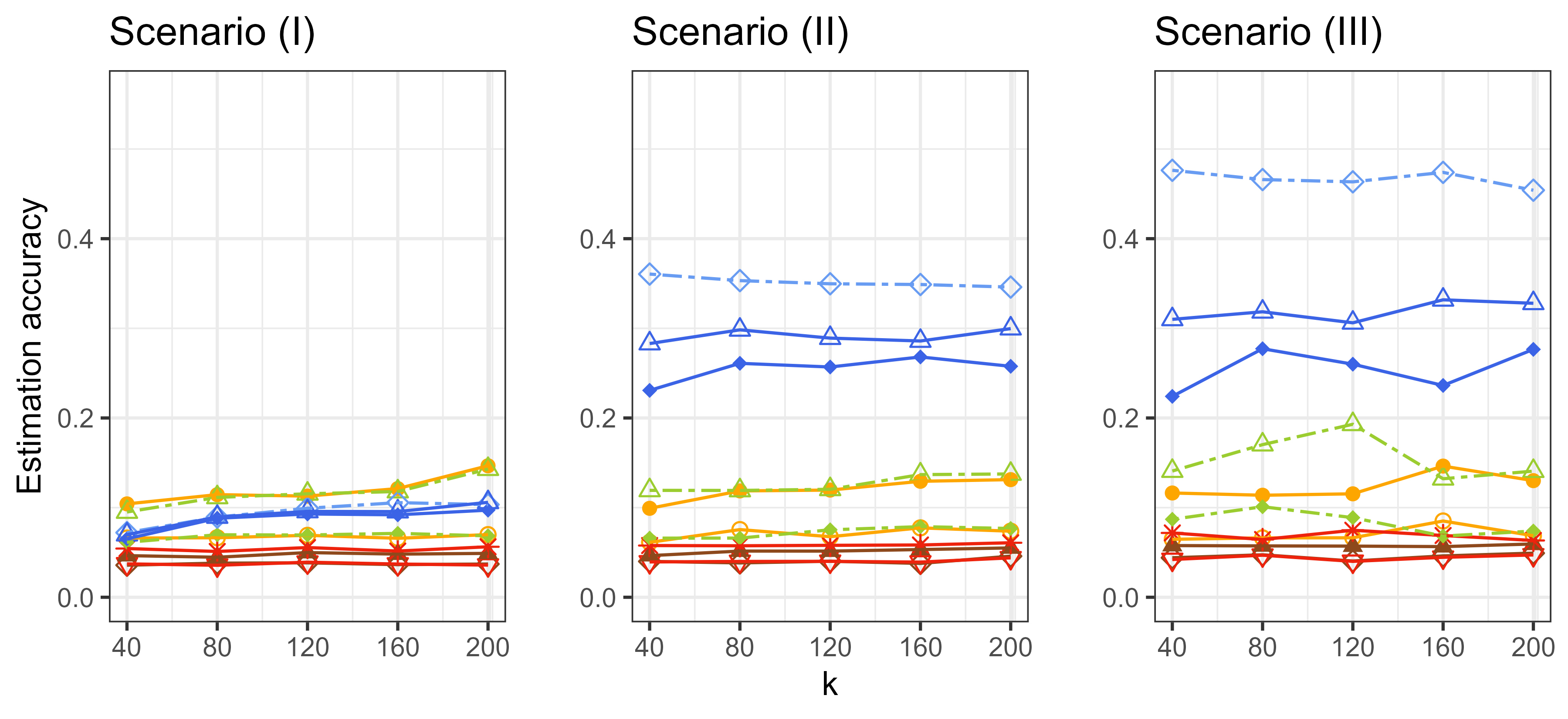}
        \caption{$\Delta=2$}
    \end{subfigure}
\caption{Comparison of changepoint estimation accuracy with different signal strength $\Delta$, signal sparsity levels $k$, and changepoint locations $\tau$ for 
 Scenarios I--III with $(n,p)=(200,200)$ and $\tau/n=0.5$. \label{fig:acc_tau05}}
\end{figure}
\begin{figure}[!ht]
\centering
\begin{subfigure}[b]{0.9\textwidth}
        \centering
        \includegraphics[width=\textwidth]{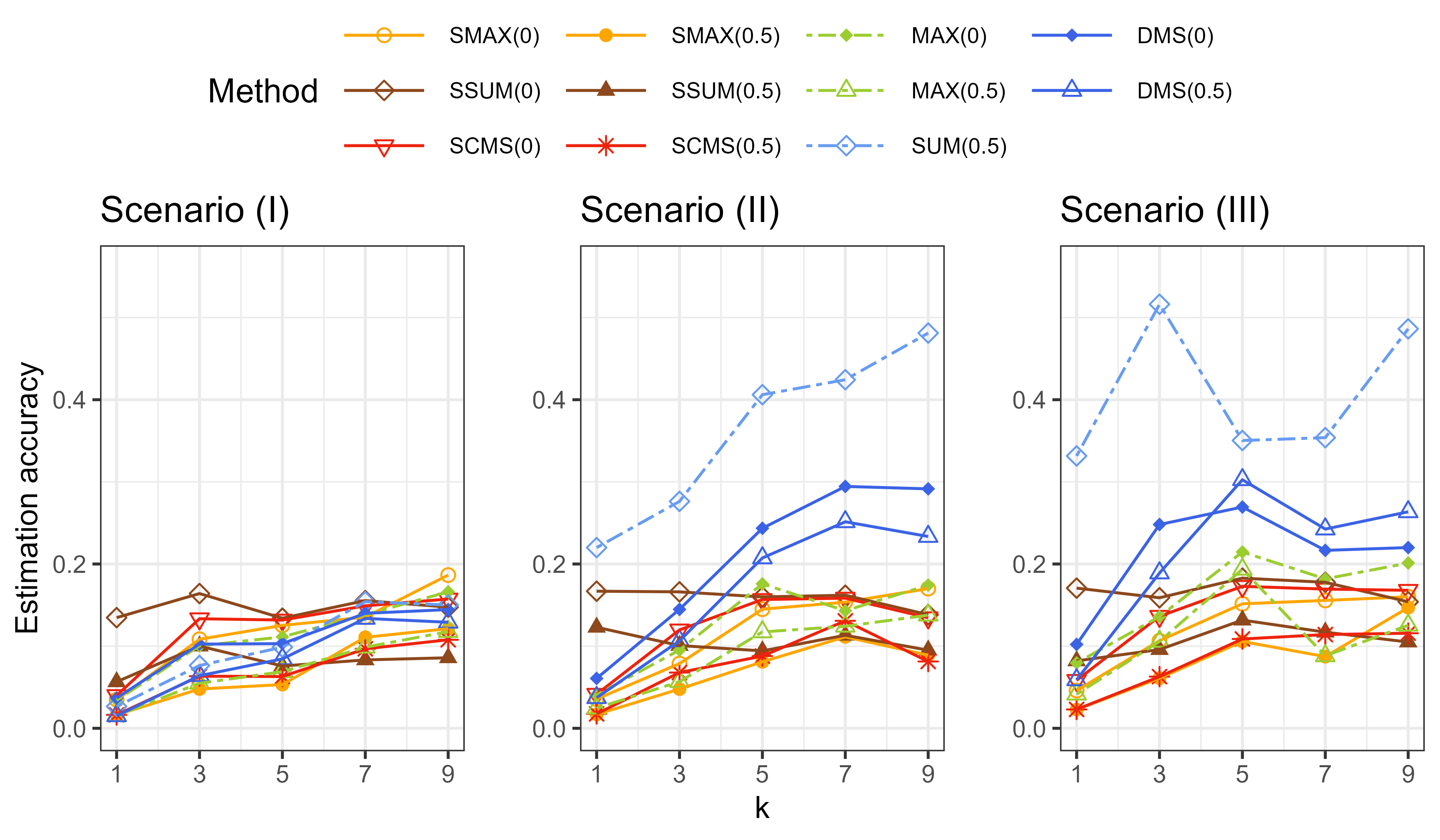}
        \caption{$\Delta=1$}
    \end{subfigure}
    \begin{subfigure}[b]{0.9\textwidth}
        \centering
        \includegraphics[width=\textwidth]{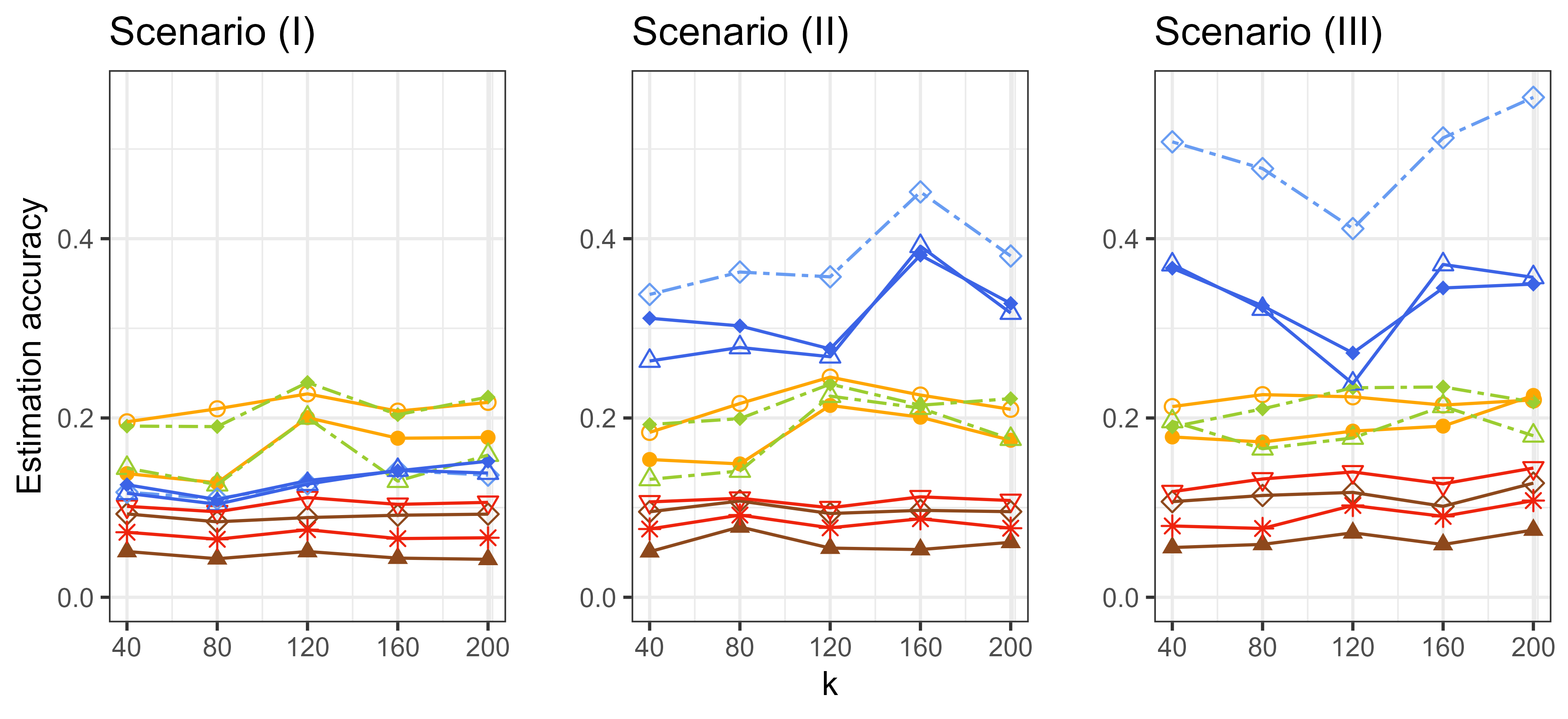}
        \caption{$\Delta=2$}
    \end{subfigure}
\caption{Comparison of changepoint estimation accuracy with different signal strength $\Delta$, signal sparsity levels $k$, and changepoint locations $\tau$ for 
 Scenarios I--III with $(n,p)=(200,200)$ and $\tau/n=0.25$. \label{fig:acc_tau025}}
\end{figure}

Figures \ref{fig:acc_tau05}--\ref{fig:acc_tau025} present the estimation accuracy, defined as the absolute distance between the estimated and true changepoints, scaled by the sample size $n$. 
It is observed that max-type methods are more effective in sparse settings, whereas sum-type methods perform better in dense scenarios. Adaptive methods demonstrate consistent accuracy across different levels of sparsity. When the changepoint is near the center of the sequence, SCMS(0) yields smaller errors, while SCMS(0.5) outperforms SCMS(0) when the changepoint is closer to the boundary. Similar trends are observed for both SMAX and SSUM methods. Notably, under the normality assumption, i.e., Scenario I, the SSUM(0.5) and SSUM(0) methods exhibit superior performance in dense signal settings for $\tau/n=0.25$ and $\tau/n=0.5$, respectively. In sparse signal scenarios, the max-type method shows a slight advantage in Scenario I.
Under heavy-tailed or mixture distributions (Scenarios II and III), the spatial-sign-based methods, particularly SMAX and SCMS, outperform the other methods.

\section{Real data applications}\label{Sec:Real data}
\subsection{US stocks data}

We begin with an analysis of financial data from the Standard \& Poor’s 500 Index (S\&P 500), a widely used benchmark in economics, finance, and statistics.
Comprising 500 large publicly traded companies across diverse sectors, this index reflects overall market trends and is sensitive to macroeconomic conditions, policy shifts, and investor sentiment.
As such, historical S\&P 500 data have been widely used in studies of market volatility, asset pricing, portfolio optimization, and financial risk management.

In this paper, we analyze daily closing prices of the S\&P 500 constituent stocks over the period from January 2019 to October 2024.
Weekly return rates were computed, resulting in 294 observations per stock during this period.
To ensure data consistency, we first excluded companies not continuously listed throughout the entire period, yielding a dataset of 486 stocks.
The weekly return rates were then standardized.
Recognizing the potential presence of autocorrelation in return rates, we applied the Ljung–Box test \citep{ljung1978measure} at the 5\% significance level to test whether each stock exhibited zero autocorrelation. Based on this, 340 stocks were retained for further analysis. It is worth noting that including all 486 stocks would have introduced autocorrelation into the dataset, potentially violating our model assumptions and necessitating further investigation.

Table \ref{tab:SP500} summarizes the $p$-values for testing changepoints in the weekly return rates.
At the 5\% significance level, the DMS(0), DMS(0.5), and LZZL tests fail to reject the null hypothesis. 
In contrast, both SCMS(0) and ZWS yield significantly small $p$-values, leading to a rejection of the null hypothesis and indicating a significant change in weekly return rates. SCMS(0.5) also suggests potential evidence of change, producing a $p$-value close to the significance threshold.
Notably, the max-type tests, SMAX(0) and SMAX(0.5), also detect a significant change, whereas the sum-type tests, SSUM(0) and SSUM(0.5), fail to reject the null. These divergent results imply that the underlying change in weekly return rates is likely sparse rather than dense.

\begin{table}[!htp]
\centering
\footnotesize
\begin{tabular}{ccccc}
\toprule
 SMAX(0)& SSUM(0)& SCMS(0)& SMAX(0.5)& SSUM(0.5) \\
\midrule
0.0049 & 0.2044 & 0.0079 & 0.0197 & 0.4963 \\
\bottomrule
\toprule
SCMS(0.5) & DMS(0) & DMS(0.5) & LZZL & ZWS \\
\midrule
 0.0550 & 0.9041 & 0.9241 &0.6287& 0.0187 
\\
\bottomrule
\end{tabular}
\caption{The $p$-values for testing changepoints in weekly return rates.}\label{tab:SP500}
\end{table}

\subsection{Array comparative genomic hybridization data}

We then analyze an array comparative genomic hybridization (aCGH) dataset, which is used to detect DNA sequence copy number variations in individuals with bladder tumors. The dataset, available in the R package \texttt{ecp}, consists of log-transformed fluorescence intensity ratios of DNA segments across $n=2215$ loci for $p = 43$ individuals.

We apply the changepoint testing procedures to the aCGH dataset and observe that all methods yield significantly small $p$-values, indicating the presence of at least one changepoint.
To localize the changepoints, we adopt the binary segmentation approach used in \citet{liu2020unified,wang2023}.
Specifically, for any interval $[l,r]$, where $l$ and $r$ are integers satisfying $1\leq l<r\leq n$, we first apply the adaptive test to assess the presence of a changepoint. If the null is rejected, we estimate the changepoint location $t$ using the adaptive procedure described in Remark~\ref{rem:ada_esti}, and then divide the interval $[l,r]$ into two subintervals: $[l,t]$ and $[t,r]$. This procedure is recursively applied to each subinterval until no further changepoints are detected.

Following the setup in \citet{liu2020unified,wang2023}, we set $\gamma=0.5$, the boundary parameter $\lambda_n = 40$, and the nominal significance level at 5\%.
The number of detected changepoints by SMAX(0.5), SSUM(0.5), SCMS(0.5), SMAX(0), SSUM(0), and SCMS(0) are 43, 41, 41, 40, 42, and 42, respectively. 
For illustration, Figure~\ref{fig:aCGH} displays the changepoints estimated by SCMS(0.5), which closely align with findings in previous studies \citep{matteson2014nonparametric,liu2020unified,wang2023}, demonstrating the effectiveness of the proposed procedure.

\begin{figure}[!ht]
\centering
\includegraphics[width=\textwidth]{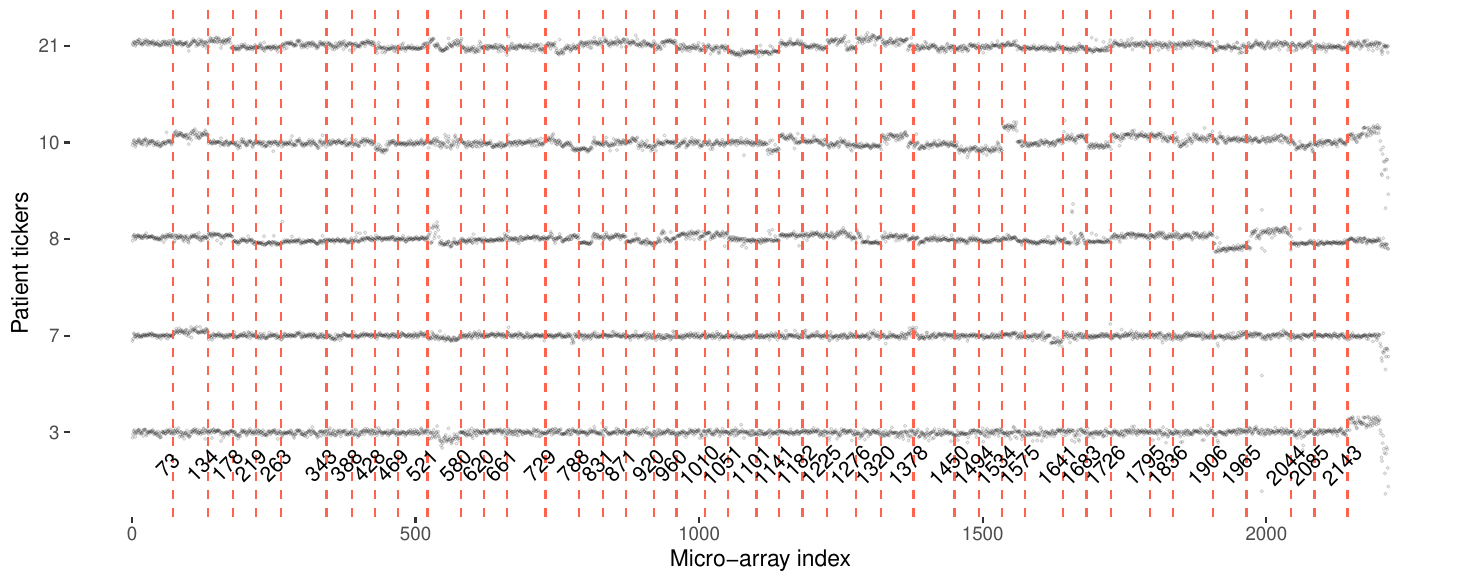}
\caption{Changepoint estimation in the aCGH data using the SCMS(0.5) method with binary segmentation.}\label{fig:aCGH}
\end{figure}

\section{Concluding remarks}\label{Sec:Con Remarks}

This paper introduces a robust and adaptive framework for high-dimensional changepoint detection, particularly suited to heavy-tailed data. Based on spatial medians and spatial signs, we construct max-$L_\infty$-type tests for sparse signals and max-$L_2$-type tests for dense signals. We derive their asymptotic null distributions and establish their asymptotic independence under mild conditions. Building on this, we develop adaptive testing procedures by combining the two test types via Fisher’s method, offering strong power across varying levels of signal sparsity.

Several avenues for future work remain. First, our theoretical results rely on the i.i.d.~assumption. Extending these to dependent settings \citep{chang2024central} is challenging but promising.
Second, our max-$L_2$-type tests consider spatial directions but omit radius information, which has been shown to improve power in other contexts \citep{feng2021inverse,huang2023high}. Incorporating radius-based features while preserving asymptotic properties is an important extension. 
Lastly, enhancing adaptive estimation strategies to accommodate multiple changepoints or structured dependencies may broaden real-world applicability.

\appendix
\setcounter{figure}{0}
\setcounter{table}{0}   
\setcounter{equation}{0}
\setcounter{lemma}{0}

\renewcommand{\thetable}{S\arabic{table}}
\renewcommand{\thefigure}{S\arabic{figure}}
\renewcommand{\theequation}{S\arabic{equation}}
\renewcommand{\thelemma}{S\arabic{lemma}}

\section{Additional numerical studies}
\subsection{Comparison with the mean-based max-$L_2$-type testing}\label{simu:max-L2}
Recall that the spatial-sign based max-$L_2$-type statistics are defined as $S_{n,p}=\max_{1\leq k\leq n}\|\tilde{\C}_{0}(k)\|^2$ and $S_{n,p}^\dagger=\max_{\lambda_n\leq k\leq n-\lambda_n}\|\tilde{\C}_{0.5}(k)\|^2$ if we ignore some constants. We also introduce the mean-based max-$L_2$-type methods with 
$$\breve{C}_{\gamma,j}(k) = \left\{ \frac{k}{n}(1-\frac{k}{n}) \right\}^{-\gamma}\frac{1}{\sqrt{n}}(\breve{S}_{kj}-\frac{k}{n}\breve{S}_{nj})/\check{\sigma}_j,$$
where $\breve{S}_{kj}=\sum_{i=1}^k X_{ij}$ and $\check{\sigma}_j$ is Bartlett's estimators, also used in \cite{wang2023}. Further, $\tr(\R^2)$ can be estimated by $$\widetilde{\tr(\R^2)} = \frac{1}{4(n-3)}\sum_{i=1}^{n-3}\left\{ 
(\X_i-\X_{i+1})^\top \check{\D}^{-1}_{(i,i+1,i+2,i+3)}(\X_{i+2}-\X_{i+3}) \right\}^2,$$ where for any $(i_1,i_2,\ldots,i_m)\subset \{1,2,\ldots,n\}$ with $m\geq 1$, $$\check{\D}_{(i,i+1,i+2,i+3)}=\diag \{\check{\sigma}^2_{1(i_1,\ldots,i_m)},\ldots,\check{\sigma}^2_{p(i_1,\ldots,i_m)}\},$$ and $\check{\sigma}^2_{j(i_1,\ldots,i_m)}=\{2\vert \mathcal{A}_m \vert\}^{-1}\sum_{i\in \mathcal{A}_m}(X_{ij}-X_{i-1,j})^2$ with $\mathcal{A}_m=\{2,3,\ldots,n\}\setminus \{i_1,i_2,\ldots,i_m\}$ for $j=1,2,\ldots,p$, the ratio consistency is shown in \cite{wang+zou+wang+yin-2019-Multiple}. Accordingly, we term them as MSUM(0) when $\gamma=0$ and MSUM(0.5) when $\gamma=0.5$. Similarly, we define the adaptive methods by combining the corresponding $p$-values using Fisher's method.  To wit,
\begin{align*}
    { p}_{MCMS(0)} &:= 1 - F_{\chi^2_4}\Big(-2(\log { p}_{MAX(0)} + \log { p}_{MSUM(0)})\Big)\ \text{and}\\
    { p}_{MCMS(0.5)} &:= 1 - F_{\chi^2_4}\Big(-2(\log { p}_{MAX(0.5)} + \log { p}_{MSUM(0.5)})\Big),
\end{align*}
We term the two adaptive methods as MCMS(0) and MCMS(0.5) respectively.

Figure \ref{fig:power2_tau05}-\ref{fig:power2_tau025} 
present the power comparison for spatial-sign based methods -- SMAX(0), SMAX(0.5), SSUM(0), SSUM(0.5), mean-based max-$L_2$-methods -- MSUM(0), MSUM(0.5), max-$L_\infty$-methods -- MAX(0), MAX(0.5)\citep{wang2023} and sum-$L_2$-method -- SUM(0.5) \citep{wang+zou+wang+yin-2019-Multiple} and corresponding adaptive methods SCMS(0), SCMS(0.5), MCMS(0), MCMS(0.5) and DMS(0), DMS(0.5) methods. The size performance of MSUM(0), MSUM(0.5), MCMS(0) and MCMS(0.5) are shown in Table \ref{tab:size2}.
\begin{figure}[!ht]
\centering
\begin{subfigure}[b]{0.9\textwidth}
        \centering
        \includegraphics[width=\textwidth]{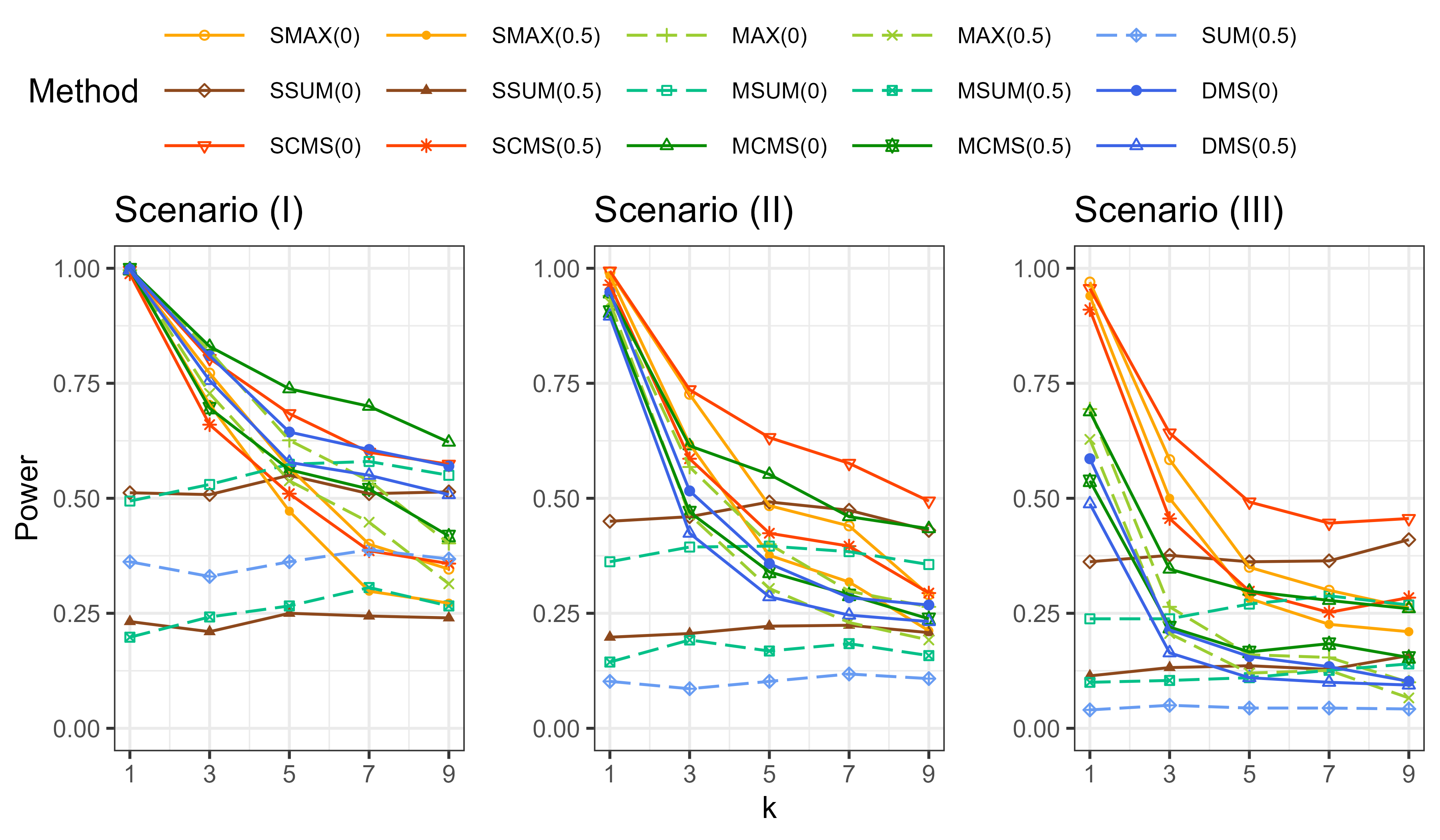}
        \caption{$\Delta=1$}
    \end{subfigure}
    \begin{subfigure}[b]{0.9\textwidth}
        \centering
        \includegraphics[width=\textwidth]{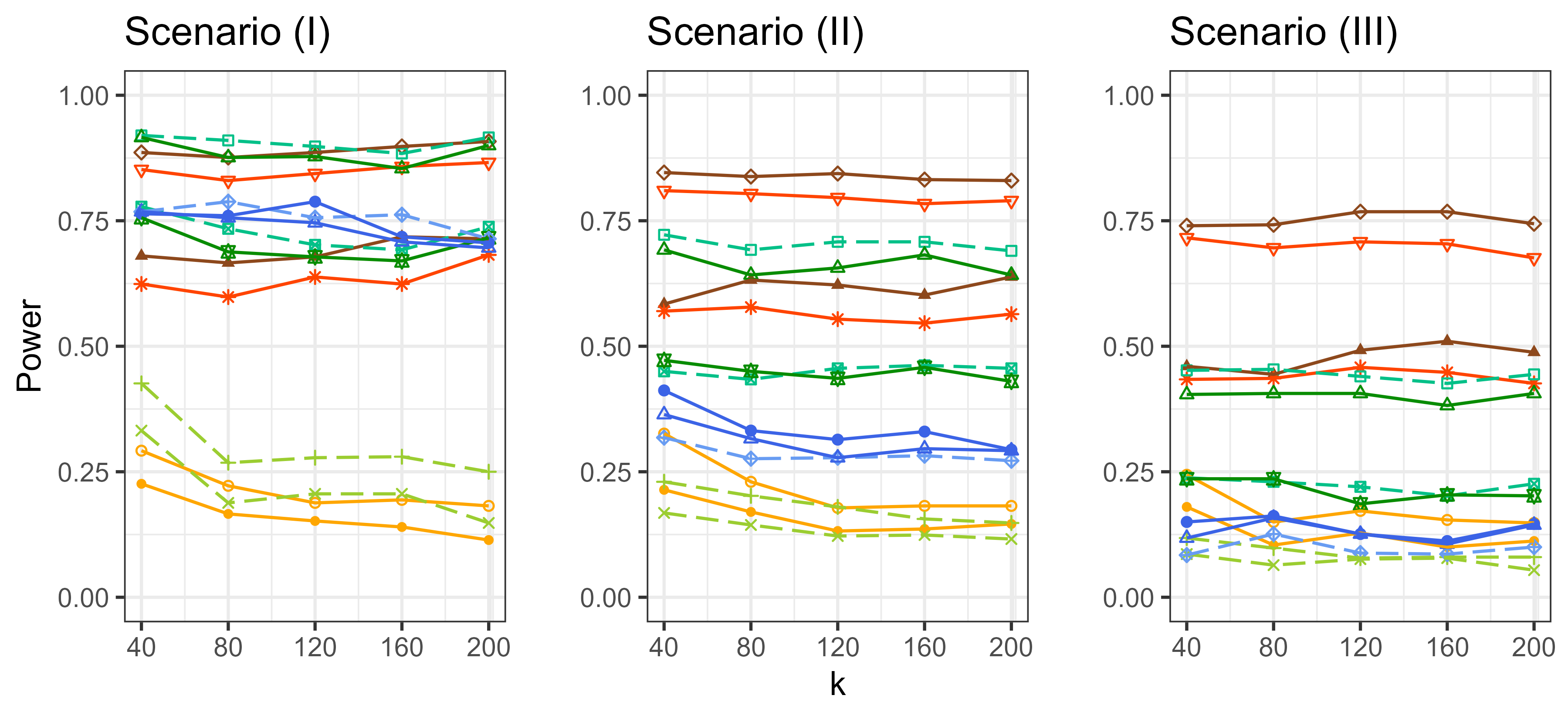}
        \caption{$\Delta=2$}
    \end{subfigure}
\caption{Comparison of the power of max-$L_2$-aggregation and spatial-sign based max-$L_2$-type method with different signal strength for Scenarios I to III over $(n,p)=(200,200)$ and $\tau/n=0.5$. \label{fig:power2_tau05}}
\end{figure}
\begin{figure}[!ht]
\centering
\begin{subfigure}[b]{0.9\textwidth}
        \centering
        \includegraphics[width=\textwidth]{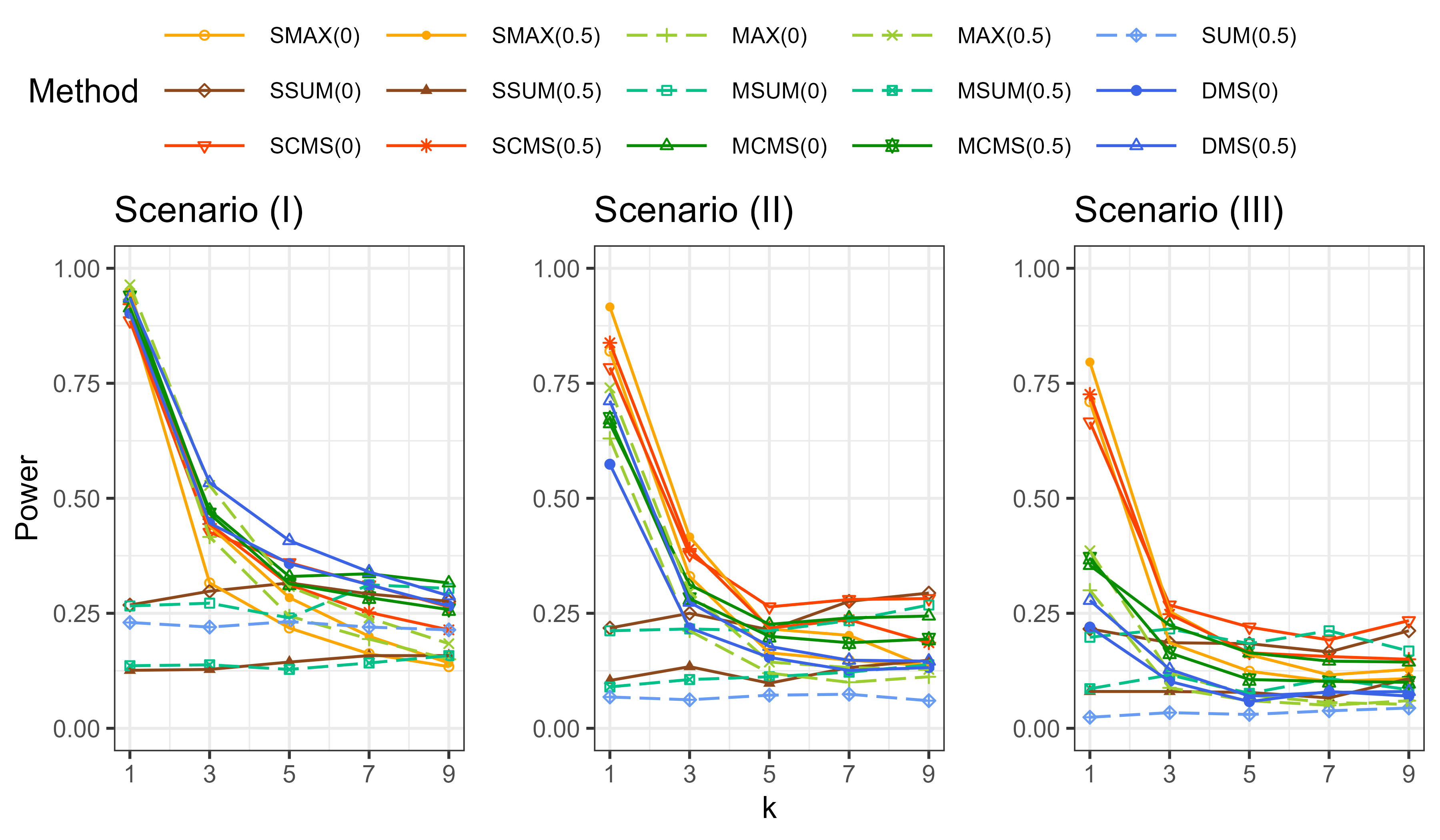}
        \caption{$\Delta=1$}
    \end{subfigure}
    \begin{subfigure}[b]{0.9\textwidth}
        \centering
        \includegraphics[width=\textwidth]{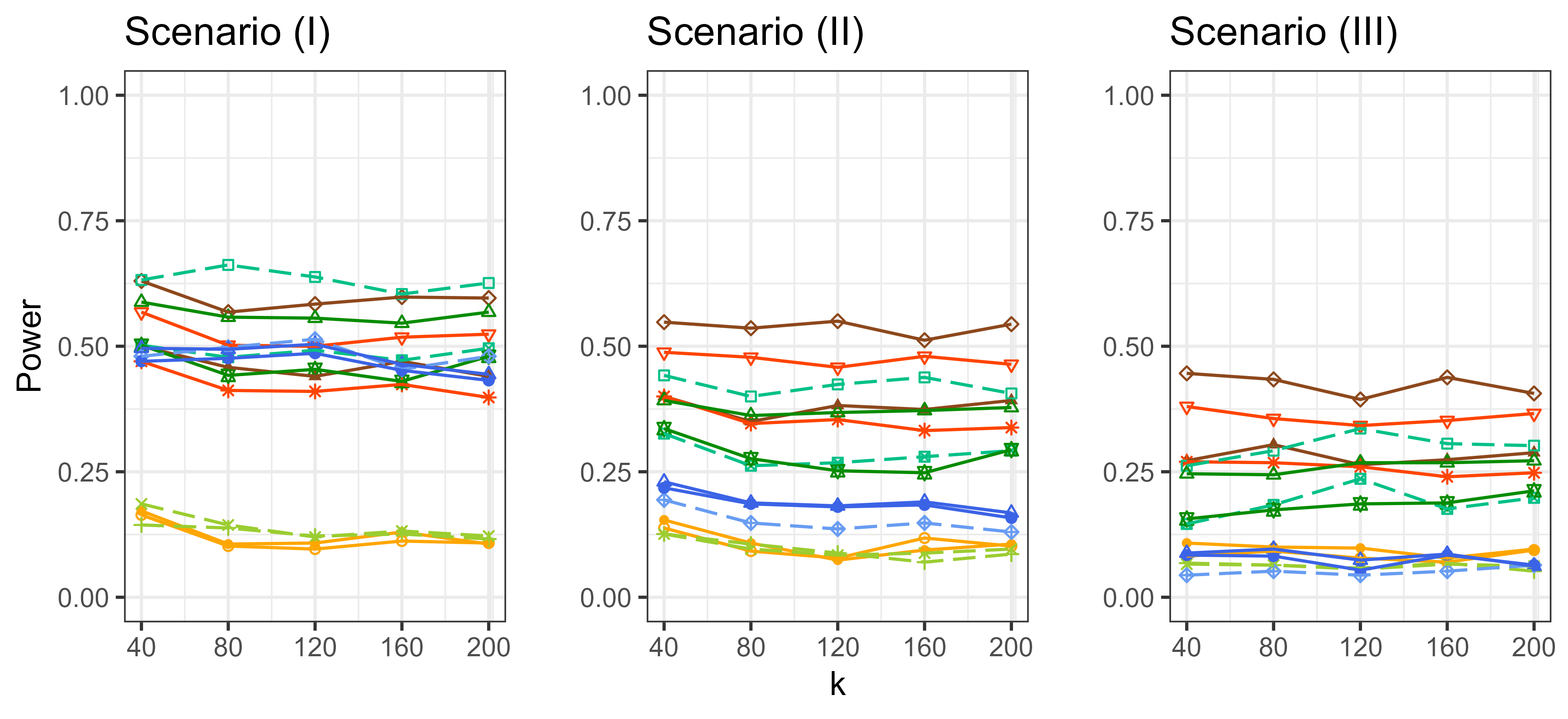}
        \caption{$\Delta=2$}
    \end{subfigure}
\caption{Comparison of the power of max-$L_2$-aggregation and spatial-sign based max-$L_2$-type method with different signal strength for Scenarios I to III over $(n,p)=(200,200)$ and $\tau/n=0.25$. \label{fig:power2_tau025}}
\end{figure}
\begin{table}[!ht]
\centering
\footnotesize
\begin{tabular}{lcccc}
\toprule
$(n,p)$ & MSUM(0)& MCMS(0)&  MSUM(0.5)& MCMS(0.5)\\
\midrule
\multicolumn{5}{l}{\textbf{Scenario (I)}}\\ 
 (200,100)&9.8&11.8&3.6&4.4\\
 (200,200) &8.4&9.4&1.4&5.0\\ 
 (200,300) &7.0&8.6&1.6&5.4\\ 
 (200,400) &7.4&8.8&1.6&3.6\\ 
 \midrule
\multicolumn{5}{l}{\textbf{Scenario (II)}}\\ 
 (200,100)&12.8&12.6&5.8&8.2\\
 (200,200) &9.4&10.0&3.6&5.4\\ 
 (200,300) &13.6&11.8&5.6&7.0 \\
 (200,400) &13.8&13.4&6.2&6.0\\ 
 \midrule
 \multicolumn{5}{l}{\textbf{Scenario (III)}}\\ 
 (200,100)&10.8&9.8&5.0&5.2\\
 (200,200) &14.8&12.0&7.0&7.0\\ 
 (200,300) &16.4&13.6&9.4&10.0\\ 
 (200,400) &21.0&17.2&9.6&10.8\\ 
\bottomrule
\end{tabular}
\caption{Empirical size(in $\%$) performance under Scenarios I to III for max-$L_2$-aggregation methods}\label{tab:size2}
\end{table}

It can be seen that max-$L_\infty$-type methods outperform max-$L_2$-type methods at sparse signal levels, while they fall behind under moderate and dense signal levels. Adaptive methods, on the other hand, demonstrate competitive performance across all levels, which is consistent with the findings in \cite{wang2023}.  We also observe that the MCMS methods perform exceptionally well across Scenarios I--III, consistently achieving higher power than the DMS methods. However, it is worth noting that when $n$ and $p$ are relatively small or the data deviates from normality, the MSUM method shows some inflation in size, which warrants further investigation. Notably, spatial sign-based methods clearly outperform others when the data deviates from normality, highlighting their robustness to heavy-tailed distributions.

\section{Proofs}
In this section, we provide the proofs of all the theorems presented in the paper, along with the main lemmas required for their proofs. We introduce some notations.

Denote $a_n\lesssim b_n$ if there exists constant $C$, $a_n\leq C b_n$ and $a_n \asymp b_n$ if  both $a_n\lesssim b_n$ and $b_n\lesssim a_n$ hold. Let $\psi_{\alpha_0}(x)=\exp \left(x^{\alpha_0}\right)-1$ be a function defined on $[0, \infty)$ for $\alpha_0>0$. Then the Orlicz norm $\|\cdot\|_{\psi_{\alpha_0}}$ of a $\boldsymbol X$ is defined as $\|\boldsymbol X\|_{\psi_{\alpha_0}}=\inf \left\{t>0, \E\left\{\psi_{\alpha_0}(|\boldsymbol X| / t)\right\} \leqslant 1\right\}$. For $d$-dimensional vector $\boldsymbol x=(x_1,\ldots,x_p)^\top$, denote its Euclidean norm and maximum-norm as $\Vert \boldsymbol x\Vert$ and $\Vert \boldsymbol x\Vert_\infty$ respectively. The spatial sign function is defined as $U(\boldsymbol x)=\Vert \boldsymbol x\Vert^{-1}\boldsymbol x\ind{\boldsymbol x\neq 0}$. In particular, the $ i$th component of $U(\boldsymbol{x})$ is given by $U(\boldsymbol x)_i=\Vert \boldsymbol x\Vert^{-1}x_i$, $i=1,\ldots,p$. Let $\operatorname{tr}(\cdot)$ be a trace for matrix, $\lambda_{min}(\cdot)$ and $\lambda_{max}(\cdot )$ be the minimum and maximum eigenvalue for symmetric martix. For a symmetric matrix $\A=(a_{ij})_{p\times p}$, we denote $\Vert \A\Vert_1=\Vert \A\Vert_\infty=\max_{1\leq j\leq p}\sum_{i=1}^p\vert a_{ij}\vert$, $\Vert \A\Vert_F=\left\{\tr(\A^2)\right\}^{1/2}$. $\mathbf I_p$ represents a $p$-dimensional identity matrix, and $ \operatorname{diag}\{v_1,v_2,\ldots,v_p\}$ represents the diagonal matrix with entries $\boldsymbol v=(v_1,v_2,\ldots,v_p)^{\top}$. 

Recall that, for a sequence of $p$-dimensional random noises $\{\boldsymbol\epsilon_i=\nu_i\mathbf\Gamma\W_i\in \mathbb R^p\}_{i=1}^n$, $\W_i=(W_{i,1},\ldots,W_{i,p})^{\top}$, the $\boldsymbol U_i=U(\mathbf D^{-1/2}\boldsymbol\epsilon_i)=(U_{i,1},\ldots,U_{i,p})^\top$ and $R_i=\Vert \mathbf D^{-1/2}\boldsymbol\epsilon_i\Vert$ are the scale-invariant spatial-sign and radius of the random noise is $\boldsymbol\epsilon_i$, respectively, where $\D$ is a diagonal matrix $\D=\diag\{d_1^2,\ldots,d_p^2\}$. The $(\bth,\D)$-estimated version of $\U_i$ is $\hat{\U}_i=U(\hat{\D}^{-1/2}(\X_i-\hat{\bth}_{1:n}))$. The moments of $R_i^{-k}$ are $\zeta_k=\E (R_i^{-k})$, $k=1,2,3,4$.

\subsection{Proof of main lemmas}

\begin{lemma}\label{lemma:UWi1}
    Under Assumption \ref{ass:max1}, we have for any $1\leq l\neq k\leq p$, \\
    (i) $\E \{U(\W_i)_l^2 \}=p^{-1}$; and \\(ii) $\E \{U(\W_i)_l U(\W_i)_k\}=O(p^{-5/2}).$
\end{lemma}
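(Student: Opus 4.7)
The proof of this lemma hinges on two simple symmetry arguments, and I would carry them out in the natural order.

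For part (i), the plan is to exploit the fact that $U(\bm W_i)$ is a unit vector, so $\sum_{j=1}^{p} U(\bm W_i)_j^{2}=1$ almost surely. Since $W_{i,1},\ldots,W_{i,p}$ are i.i.d.\ under Assumption \ref{ass:max1}, the random vector $(W_{i,1},\ldots,W_{i,p})$ is exchangeable, and this exchangeability is inherited by $(U(\bm W_i)_1,\ldots,U(\bm W_i)_p)$ because $U(\cdot)$ is equivariant under permutations of coordinates. Taking expectations in the identity above and using exchangeability gives $p\,\E\{U(\bm W_i)_l^{2}\}=1$, hence $\E\{U(\bm W_i)_l^{2}\}=1/p$.

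For part (ii), the plan is to write
\[
U(\bm W_i)_l\,U(\bm W_i)_k \;=\; \frac{W_{i,l}W_{i,k}}{W_{i,l}^{2}+W_{i,k}^{2}+Z_{l,k}},
\qquad Z_{l,k}:=\sum_{j\neq l,k} W_{i,j}^{2},
\]
and observe that $Z_{l,k}$ is independent of $(W_{i,l},W_{i,k})$. Conditioning on $(W_{i,k},Z_{l,k})$, the resulting function of $W_{i,l}$ is odd in $W_{i,l}$, because the denominator depends on $W_{i,l}$ only through $W_{i,l}^{2}$. Since $W_{i,l}$ is symmetric about $0$ and independent of $(W_{i,k},Z_{l,k})$, the conditional expectation vanishes, yielding
\[
\E\{U(\bm W_i)_l\,U(\bm W_i)_k\}=0,
\]
which in particular is $O(p^{-5/2})$. (If one preferred, a Taylor expansion of $(W_{i,l}^{2}+W_{i,k}^{2}+Z_{l,k})^{-1}$ around the deterministic centering $\E(Z_{l,k})\asymp p$, together with the Orlicz control on $W_{i,j}$ from Assumption \ref{ass:max1}, would produce the rate $p^{-5/2}$ term by term; each surviving moment in the expansion is a product of odd-order moments that are forced to zero by symmetry.)

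The proof contains no real obstacle: part (i) is immediate from exchangeability, and the only thing to verify carefully in part (ii) is that $\|\bm W_i\|^{2}$ is an even function of $W_{i,l}$, which is obvious. The one measure-theoretic nuisance is the event $\{\bm W_i=\mathbf 0\}$ on which $U$ is defined to be $\mathbf 0$; since $W_{i,j}$ has unit variance and sub-exponential tails it need not be continuous, but $\{\bm W_i=\mathbf 0\}\subseteq\{W_{i,l}=0\}$ and symmetry of $W_{i,l}$ ensures the odd-function argument still delivers zero conditional expectation on the complementary event, so this does not affect the conclusion.
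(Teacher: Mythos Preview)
Your argument is correct. Part (i) is identical to the paper's proof: exchangeability of the coordinates of $U(\bm W_i)$ together with $\sum_j U(\bm W_i)_j^2=1$ immediately gives $p^{-1}$.

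For part (ii) you take a genuinely different and more elementary route than the paper. You exploit the coordinatewise sign-symmetry in Assumption~\ref{ass:max1}: since $W_{i,l}\stackrel{d}{=}-W_{i,l}$ independently of the other coordinates, and $\|\bm W_i\|^2$ is even in $W_{i,l}$, the map $W_{i,l}\mapsto -W_{i,l}$ leaves the law of $\bm W_i$ invariant while flipping the sign of $W_{i,l}W_{i,k}/\|\bm W_i\|^2$; hence the expectation is exactly $0$, not merely $O(p^{-5/2})$. The paper instead averages over all off-diagonal pairs by exchangeability, subtracts the centering $1/p$ from $\|\bm W_i\|^{-2}$, and then bounds the resulting fluctuation term via Cauchy--Schwarz and a concentration event for $\|\bm W_i\|^2$, arriving only at the rate $O(p^{-5/2})$. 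Your symmetry argument is both shorter and sharper; the paper's approach, while heavier here, mirrors the truncation-and-concentration machinery that is genuinely needed elsewhere (e.g., for mixed moments like $\E\{U(\bm W_i)_l^2 U(\bm W_i)_k^2\}$ where no oddness is available), so it may have been chosen for uniformity of technique rather than necessity.
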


\begin{proof}
    (i) By symmetry, all components of $U(\W_i)$ have the same marginal distribution. Since $\sum_{j=1}^p U(\W_i)_j^2 = U(\W_i)^{\top}U(\W_i) = 1$, we have
    \[
        \E\{U(\W_i)_l^2\} = p^{-1} \E\left\{ \sum_{j=1}^p U(\W_i)_j^2 \right\} = p^{-1},
    \]
    for any $1\leq l\leq p$.

(ii) Let 
\[
\mathcal{A}_{1i} = \left\{ p - \varsigma p^{(1+\eta_0)/2} \leq \|\W_i\|^2 \leq p + \varsigma p^{(1+\eta_0)/2} \right\},
\]
for some fixed $0 < \varsigma < 1$. Using Lemmas \ref{LemmaA1}--\ref{LemmaA2}, Assumption \ref{ass:max1}, and the inequality 
\[
\frac{1}{p(p-1)} \sum_{1 \leq l \neq k \leq p} W_{i,l} W_{i,k} \leq \frac{1}{p} \sum_{j=1}^p W_{i,j}^2,
\]
we obtain
\begin{align*}
& \quad \E \{U(\W_i)_l U(\W_i)_k\} = \E\left\{\frac{W_{i,l} W_{i,k}}{\|\W_i\|^2}\right\} \\
&= \E\left\{ \frac{1}{p(p-1)} \sum_{1 \leq l \neq k \leq p} \frac{W_{i,l} W_{i,k}}{\|\W_i\|^2} \right\} \\
&= \E\left\{ \frac{1}{p(p-1)} \sum_{1 \leq l \neq k \leq p} W_{i,l} W_{i,k} \left( \|\W_i\|^{-2} - \frac{1}{p} \right) \right\} \\
&= -p^{-1} \E\left\{ \frac{1}{p(p-1)} \sum_{1 \leq l \neq k \leq p} W_{i,l} W_{i,k} \|\W_i\|^{-2} (\|\W_i\|^2 - p) \right\} \\
&= -p^{-1} \E\left\{ \frac{1}{p(p-1)} \sum_{1 \leq l \neq k \leq p} W_{i,l} W_{i,k} \|\W_i\|^{-2} (\|\W_i\|^2 - p) \ind{\mathcal{A}_{1i}} \right\} \\
&\quad - p^{-1} \E\left\{ \frac{1}{p(p-1)} \sum_{1 \leq l \neq k \leq p} W_{i,l} W_{i,k} \|\W_i\|^{-2} (\|\W_i\|^2 - p) \ind{\mathcal{A}_{1i}^c} \right\} \\
& \leq p^{-1}\{p - \varsigma p^{(1+\eta_0)/2}\}^{-1} \left[ \E \left\{ \frac{1}{p(p-1)} \sum_{1 \leq l \neq k \leq p} W_{i,l} W_{i,k} \right\}^2 \right]^{1/2} \left\{ \E (\|\W_i\|^2 - p)^2 \right\}^{1/2} \\
& \quad + p^{-2}\E\left| \|\W_i\|^2 - p \right| \ind{\mathcal{A}_{1i}^c} \\
& = p^{-1} \{p-\varsigma p^{(1+\eta_0)/2}\}^{-1}\left\{p(p-1)\right\}^{-1/2} O(p^{1/2})+p^{-2}O(p^{1/2}) c_1^{1/2}\exp\left\{ -c_2 p^{\eta_0\alpha_0 /(4\alpha_0+4)} \right\}\\
& = O(p^{-5/2})\,.
\end{align*}
We finish the proof of this lemma.
\end{proof}

\begin{lemma}\label{lemma1_like_2016}
    Under Assumption \ref{ass:max1}, for any nonrandom symmetric matrix $\mathbf{M}$, we have \\
    (i) $\E \left[\{ U(\bm W_i)^\top \mathbf{M}  U(\boldsymbol{W}_i)\}^2\right]=O\{p^{-2}\tr(\mathbf{M}^\top \mathbf{M})\}$; \\
    (ii)$\E \left[\{ U(\bm W_i)^\top \mathbf{M}  U(\boldsymbol{W}_i)\}^4\right]=O\{p^{-4}\tr^2(\mathbf{M}^\top \mathbf{M})\}$; and \\
    (iii) $\E \left[\{ U(\bm W_i)^\top \mathbf{M}  U(\boldsymbol{W}_i)\}^8\right]=O\{p^{-8}\tr^4(\mathbf{M}^\top \mathbf{M})\}$.
\end{lemma}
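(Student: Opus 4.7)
The plan is to reduce moments of the normalized quadratic form $U(\W_i)^\top \mathbf{M} U(\W_i)$ to moments of the unnormalized quadratic form $\W_i^\top \mathbf{M} \W_i$ by exploiting the concentration of $\|\W_i\|^2$ around $p$, and then to control the latter by combinatorial moment expansions for the second moment and by a Hanson--Wright-type tail bound for the higher moments.

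\textbf{Step 1 (reformulation and truncation).} Write $Q_i := U(\W_i)^\top \mathbf{M} U(\W_i) = \|\W_i\|^{-2}\, \W_i^\top \mathbf{M} \W_i$, and define the concentration event $\mathcal{A}_{1i} = \{p - \varsigma p^{(1+\eta_0)/2} \leq \|\W_i\|^2 \leq p + \varsigma p^{(1+\eta_0)/2}\}$ for a small $\varsigma \in (0,1)$. As in the proof of Lemma \ref{lemma:UWi1}, Lemmas \ref{LemmaA1}--\ref{LemmaA2} combined with Assumption \ref{ass:max1} give $\pr(\mathcal{A}_{1i}^c) \leq c_1 \exp\{-c_2 p^{\eta_0 \alpha_0 /(4\alpha_0+4)}\}$, which decays super-polynomially in $p$.

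\textbf{Step 2 (splitting).} Decompose $\E[Q_i^{2k}] = \E[Q_i^{2k} \ind{\mathcal{A}_{1i}}] + \E[Q_i^{2k} \ind{\mathcal{A}_{1i}^c}]$ for $k=1,2,4$. On $\mathcal{A}_{1i}$, $\|\W_i\|^{-4k} \leq (1+o(1))\, p^{-2k}$, whence
\[
\E[Q_i^{2k} \ind{\mathcal{A}_{1i}}] \leq (1+o(1))\, p^{-2k}\, \E\bigl[(\W_i^\top \mathbf{M} \W_i)^{2k}\bigr].
\]
On $\mathcal{A}_{1i}^c$, use the crude bound $|Q_i| \leq \|\mathbf{M}\|_{\mathrm{op}} \leq \{\tr(\mathbf{M}^\top \mathbf{M})\}^{1/2}$ together with the tail estimate of Step 1 to obtain
\[
\E[Q_i^{2k} \ind{\mathcal{A}_{1i}^c}] \leq \{\tr(\mathbf{M}^\top \mathbf{M})\}^{k}\, \pr(\mathcal{A}_{1i}^c) = o\bigl(p^{-2k}\, \{\tr(\mathbf{M}^\top \mathbf{M})\}^k\bigr).
\]

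\textbf{Step 3 (quadratic-form moments).} For case (i), expand $(\W_i^\top \mathbf{M} \W_i)^2 = \sum_{j,k,l,m} M_{jk} M_{lm}\, W_{i,j} W_{i,k} W_{i,l} W_{i,m}$. Since the $W_{i,j}$ are i.i.d., zero-mean, symmetric with unit variance and bounded Orlicz norm, only the pairings of indices contribute, and direct enumeration yields the Frobenius-dominated bound $\E[(\W_i^\top \mathbf{M} \W_i)^2] = O(\tr(\mathbf{M}^\top \mathbf{M}))$. For (ii) and (iii), the combinatorial enumeration becomes unwieldy, so I would invoke the Hanson--Wright inequality for quadratic forms in i.i.d.~sub-exponential variables, applicable under Assumption \ref{ass:max1}:
\[
\pr\bigl(|\W_i^\top \mathbf{M} \W_i - \tr(\mathbf{M})| > t\bigr) \leq 2 \exp\Bigl\{-c\, \min\bigl((t/\|\mathbf{M}\|_F)^{2},\ (t/\|\mathbf{M}\|_{\mathrm{op}})^{\alpha_0/2}\bigr)\Bigr\}.
\]
Integrating this tail and using $\E[|X|^{2k}] = 2k \int_0^\infty t^{2k-1} \pr(|X|>t)\, dt$ gives $\E[(\W_i^\top \mathbf{M} \W_i)^{2k}] = O(\{\tr(\mathbf{M}^\top \mathbf{M})\}^k)$ for $k = 2, 4$, under the implicit scalings (e.g., $\tr(\mathbf{M}) = O(\sqrt{\tr(\mathbf{M}^\top \mathbf{M})})$ and $\|\mathbf{M}\|_{\mathrm{op}} = O(\sqrt{\tr(\mathbf{M}^\top \mathbf{M})})$) relevant to the applications of this lemma. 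Combining with Step 2 completes the proof.

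\textbf{Main obstacle.} The principal technical difficulty is the eighth-moment bound in (iii): the naive combinatorial expansion produces $O(p^8)$ index patterns whose pairings must be sorted into at most quartic powers of $\tr(\mathbf{M}^\top \mathbf{M})$ without incurring extra factors of $p$. The Hanson--Wright tail bound short-circuits this combinatorics, but a subtle point is that the mean-subtracted expansion introduces trace-type residuals $\tr(\mathbf{M})^{2k}$ that must be absorbed into the Frobenius bound; I would verify this absorption by separately controlling $(\tr \mathbf{M})^{2k}/\{\tr(\mathbf{M}^\top\mathbf{M})\}^k$ via Cauchy--Schwarz for the matrices to which the lemma is actually applied downstream (typically bounded-spectrum shape matrices $\R$ or their centered variants), rather than attempting a uniform bound over all symmetric $\mathbf{M}$.
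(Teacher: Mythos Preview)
Your approach via truncation and Hanson--Wright is genuinely different from the paper's. The paper never un-normalizes: it works directly with the components $U(\W_i)_l$ on the unit sphere and exploits the deterministic constraint $\sum_l U(\W_i)_l^2 = 1$ to obtain elementary bounds such as $\E\{U(\W_i)_l^2 U(\W_i)_k^2\} \leq p^{-2}$, $\E\{U(\W_i)_l^4\} \leq p^{-1}$, and their higher-order analogues for parts (ii)--(iii). These per-component moment bounds are then combined with a Cauchy--Schwarz bound on the coefficient sums $\sum m_{lk}m_{st}$ to give the claims purely combinatorially, with no concentration inequality and no truncation. What this buys the paper is simplicity: no tail integration, no separate handling of $\mathcal{A}_{1i}$ versus $\mathcal{A}_{1i}^c$, and no need to invoke Hanson--Wright in the sub-exponential regime.

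Your route has a real gap exactly where you flag it, and it already bites in part (i). Your Step~3 claim $\E[(\W_i^\top \mathbf{M} \W_i)^2] = O(\tr(\mathbf{M}^\top \mathbf{M}))$ is false for general symmetric $\mathbf{M}$: the standard expansion gives $(\tr \mathbf{M})^2 + 2\tr(\mathbf{M}^2) + (\E W_{i,1}^4 - 3)\sum_j m_{jj}^2$, and $(\tr \mathbf{M})^2$ is not $O(\tr(\mathbf{M}^2))$ in general (take $\mathbf{M} = I_p$). The same $(\tr \mathbf{M})^{2k}$ residual contaminates the Hanson--Wright route for (ii) and (iii). Your proposed fix---restricting to the specific matrices used downstream and appealing to $\tr(\mathbf{M}) = O(\sqrt{\tr(\mathbf{M}^2)})$---does not prove the lemma as stated, and is fragile because the lemma is later applied conditionally with random $\mathbf{M}$ (e.g., in bounding $\E[\{U(\W_1)^\top \R U(\W_2)\}^4]$ one conditions on $\W_2$ and takes $\mathbf{M}=\R U(\W_2)U(\W_2)^\top \R$). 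The paper's on-sphere calculation avoids introducing this centered-versus-uncentered split altogether.
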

\begin{proof}
(i) 
    By Cauchy–Schwarz inequality and Assumption \ref{ass:max1}, we have 
\begin{equation}\label{eq:UW}
    \begin{aligned}
    &\E \left\{U(\bm W_i)_l^2 U(\bm W_i)_k^2\right\} \leq p^{-2}\E\left\{\sum_{s=1}^p\sum_{t=1}^p U(\bm W_i)_s^2 U(\bm W_i)_t^2\right\}=p^{-2},\\
    &\E \left\{U(\bm W_i)_l^4\right\} \leq p^{-1}\E\left\{\sum_{s=1}^p U(\bm W_i)_s^4\right\} \leq p^{-1}\E\left\{\sum_{s=1}^p\sum_{t=1}^p U(\bm W_i)_s^2 U(\bm W_i)_t^2\right\}= p^{-1},
    \end{aligned}
\end{equation}
and consequently
$$
\E\left\{ U(\bm W_i)_l U(\bm W_i)_k U(\bm W_i)_s U(\bm W_i)_t\right\} \leq \sqrt{\E\left\{U(\bm W_i)_l^2 U(\bm W_i)_k^2\right\} \E\left\{U(\bm W_i)_s^2 U(\bm W_i)_t^2\right\}}\leq p^{-2}.
$$

Let $\mathbf{M} = (m_{lk})_{p \times p}$. Using Cauchy–Schwarz again,
$$
\sum_{l,k,s,t}m_{lk}m_{st}\leq \sqrt{\sum_{l,k}m_{lk}^2\sum_{s,t}m_{st}^2}\\
\leq \sqrt{\sum_{l,k}^p m_{lk}^2\sum_{s,t}^pm_{st}^2}=\text{tr}(\mathbf M^\top \mathbf M).
$$
Combining the above,
$$
\begin{aligned}
&\E\left[\left\{U( \W_i)^\top \mathbf{M} U(\W_i)\right\}^2\right]\\
=&\sum_{1\leq l\not= k\leq p} \sum_{1\leq s\not= t\leq p} m_{l k} m_{s t} \E\left\{ U(\bm W_i)_l U(\bm W_i)_k U(\bm W_i)_s U(\bm W_i)_t\right\} +\sum_{l=1}^p \sum_{s=1}^p m_{l l} m_{s s} \E\left\{ U(\bm W_i)_l^2 U(\bm W_i)_s^2 \right\} \\
\leq & p^{-2}\frac{p^4-p^2}{p^4}\text{tr}(\mathbf{M}^\top \mathbf M)+ p^{-1}\frac{p^2}{p^4}\text{tr}(\mathbf M^\top \mathbf M)=O\{p^{-2}\text{tr}(\bf M^\top \bf M)\}.
\end{aligned}
$$

(ii)
Similarly, by Assumption \ref{ass:max1}, we have
\begin{equation*}
    \begin{aligned}
            &\E\left\{ U(\bm W_i)_l^8\right\} \leq p^{-1}\E\left\{\sum_{s=1}^p U(\bm W_i)_s^8\right\} \leq p^{-1}\E\left\{\sum_{s=1}^p U(\bm W_i)_s^2\right\}^4= O(p^{-1}),\\
&\E \left\{U(\bm W_i)_{t_1}^6 U(\bm W_i)_{t_2}^2\right\} \leq O(p^{-2})\E\left\{\sum_{s=1}^p U(\bm W_i)_s^2\right\}^4=O(p^{-2}),\\
&\E \left\{U(\bm W_i)_{t_1}^4 U(\bm W_i)_{t_2}^4\right\} \leq O(p^{-2})\E\left\{\sum_{s=1}^p U(\bm W_i)_s^2\right\}^4=O(p^{-2}),\\
&\E \left\{U(\bm W_i)_{t_1}^4 U(\bm W_i)_{t_2}^2U(\bm W_i)_{t_3}^2\right\} \leq O(p^{-3})\E\left\{\sum_{s=1}^p U(\bm W_i)_s^2\right\}^4=O(p^{-3}),\\
&\E \left\{U(\bm W_i)_{t_1}^2 U(\bm W_i)_{t_2}^2U(\bm W_i)_{t_3}^2U(\bm W_i)_{t_4}^2\right\} \leq O(p^{-4})\E\left\{\sum_{s=1}^p U(\bm W_i)_s^2\right\}^4= O(p^{-4}),\\
    \end{aligned}
\end{equation*}
and by Cauchy inequality, 
\begin{equation*}
    \begin{aligned}
        &\E \left\{U(\bm W_i)_{t_1}^7 U(\bm W_i)_{t_2}\right\} \leq \sqrt{\E \left\{U(\bm W_i)_{t_1}^8\right\}  \E\left\{U(\bm W_i)_{t_1}^6 U(\bm W_i)_{t_2}^2\right\}}=O( p^{-3/2}),\\
&\E \left\{U(\bm W_i)_{t_1}^5 U(\bm W_i)_{t_2}^3 \right\}\leq \sqrt{E\left\{U(\bm W_i)_{t_1}^6 U(\bm W_i)_{t_2}^2\right\}  \E\left\{U(\bm W_i)_{t_1}^4 U(\bm W_i)_{t_2}^4\right\}}=O( p^{-2}),\\
&\E \left\{U(\bm W_i)_{t_1}^5 U(\bm W_i)_{t_2}^2U(\bm W_i)_{t_3}\right\}  \leq \sqrt{\E\left\{U(\bm W_i)_{t_1}^4 U(\bm W_i)_{t_2}^2U(\bm W_i)_{t_3}^2\right\}  \E\left\{U(\bm W_i)_{t_1}^6 U(\bm W_i)_{t_2}^2\right\}}=O( p^{-5/2}).\\
    \end{aligned}
\end{equation*}
Similarly, we calculate the terms and show the results as follows,
\begin{equation*}
    \begin{aligned}
&\E \left\{U(\bm W_i)_{t_1}^5 U(\bm W_i)_{t_2}U(\bm W_i)_{t_3}U(\bm W_i)_{t_4}\right\} \leq O(p^{-5/2}),\\
&\E \left\{U(\bm W_i)_{t_1}^4 U(\bm W_i)_{t_2}^2U(\bm W_i)_{t_3}U(\bm W_i)_{t_4} \right\}\leq O(p^{-3}),\\
&\E \left\{U(\bm W_i)_{t_1}^3 U(\bm W_i)_{t_2}^3U(\bm W_i)_{t_3}U(\bm W_i)_{t_4}\right\} \leq O(p^{-3}),\\
&\E \left\{U(\bm W_i)_{t_1}^3 U(\bm W_i)_{t_2}^2U(\bm W_i)_{t_3}^2U(\bm W_i)_{t_4}\right\} \leq O(p^{-7/2}),\\
&\E \left\{U(\bm W_i)_{t_1}^4 U(\bm W_i)_{t_2}U(\bm W_i)_{t_3}U(\bm W_i)_{t_4}U(\bm W_i)_{t_5}\right\} \leq O(p^{-3}),\\
&\E \left\{U(\bm W_i)_{t_1}^3 U(\bm W_i)_{t_2}^2U(\bm W_i)_{t_3}U(\bm W_i)_{t_4}U(\bm W_i)_{t_5}\right\} \leq O(p^{-7/2}),\\
&\E \left\{U(\bm W_i)_{t_1}^2 U(\bm W_i)_{t_2}^2U(\bm W_i)_{t_3}^2U(\bm W_i)_{t_4}U(\bm W_i)_{t_5}\right\} \leq O(p^{-4}),\\
&\E \left\{U(\bm W_i)_{t_1}^3 U(\bm W_i)_{t_2}U(\bm W_i)_{t_3}U(\bm W_i)_{t_4}U(\bm W_i)_{t_5} U(\bm W_i)_{t_6}\right\}\leq O(p^{-7/2}),\\
&\E \left\{U(\bm W_i)_{t_1}^2 U(\bm W_i)_{t_2}^2U(\bm W_i)_{t_3}U(\bm W_i)_{t_4}U(\bm W_i)_{t_5} U(\bm W_i)_{t_6}\right\}\leq O(p^{-4}),\\
&\E \left\{U(\bm W_i)_{t_1}^2 U(\bm W_i)_{t_2}U(\bm W_i)_{t_3}U(\bm W_i)_{t_4}U(\bm W_i)_{t_5} U(\bm W_i)_{t_6}U(\bm W_i)_{t_7}\right\}\leq O(p^{-4}),\\
&\E \left\{U(\bm W_i)_{t_1} U(\bm W_i)_{t_2}U(\bm W_i)_{t_3}U(\bm W_i)_{t_4}U(\bm W_i)_{t_5} U(\bm W_i)_{t_6}U(\bm W_i)_{t_7}U(\bm W_i)_{t_8}\right\}\leq O(p^{-4}),\\
\end{aligned}
\end{equation*}
where  $t_1,t_2,\ldots ,t_8\in \{1,2,\ldots,p\}$ are not equal and $l\in \{1,2,\ldots,p\}$.

 By Cauchy inequality, 
\begin{equation*}
    \begin{aligned}
 &\sum_{i_1,i_2,i_3,i_4=1}^p\sum_{j_1,j_2,j_3,j_4=1}^p m_{i_1 j_1}m_{i_2 j_2}m_{i_3 j_3}m_{i_4 j_4}\\
 \leq & \frac{1}{4}\sum_{i_1,i_2,i_3,i_4=1}^p\sum_{j_1,j_2,j_3,j_4=1}^p (m_{i_1 j_1}^2+m_{i_2 j_2}^2)(m_{i_3 j_3}^2+m_{i_4 j_4}^2)\\
 =& \sum_{i_1,i_2,j_1,j_2=1}^p m_{i_1,j_1}^2 m_{i_2,j_2}^2=\text{tr}^2(\mathbf M^\top \mathbf M).
    \end{aligned}
\end{equation*}
Thus, we get
$$
\begin{aligned}
&\E\left[\left\{U(\bm W_i)^\top \mathbf{M} U(\bm W_i)\right\}^4\right]\\
=&\sum_{l_1,l_2,\ldots,l_8=1}^p  m_{l_1 l_2} m_{l_3 l_4} m_{l_5 l_6}m_{l_7 l_8}\E\left\{ U(\bm W_i)_{l_1} \cdots U(\bm W_i)_{l_8}\right\} +\sum_{l=1}^p \sum_{s=1}^p m_{l l} m_{s s} \E\left\{ U(\bm W_i)_l^2 U(\bm W_i)_s^2 \right\} \\
\leq & p^{-4}\frac{p^8-O(p^6)}{p^8}\text{tr}^2(\mathbf{M}^\top \mathbf M) \\
&\quad+\frac{p^1p^{-1}+p^2p^{-3/2}+p^3p^{-5/2}+p^4p^{-5/2}+p^5p^{-3}+p^6p^{-7/2}}{p^8}\text{tr}^2(\mathbf M^\top \mathbf M)\\
=&O\{p^{-4}\text{tr}^2(\bf M^\top \bf M)\}.
\end{aligned}
$$

(iii) Using similar techniques as in (ii), with higher-order moments and more combinatorial terms, we can show part (iii).
\end{proof}

\subsection{Proof of Theorem \ref{thm:Max-Max}}\label{sec:proofMM}
According to Lemma 1 in \cite{liu+feng+wang+2024}, we can approximate $\C_{\gamma}(k)$ as
\begin{align*}
\C_{\gamma}(k)=\left\{\frac{k}{n}\left(1-\frac{k}{n}\right)\right\}^{-\gamma}\frac{1}{\sqrt{n}}\zeta_1^{-1}\left(\S_{k}-\frac{k}{n}\S_{n}\right)+\boldsymbol J_{n;k}^\gamma,
\end{align*}
where $\S_k=\sum_{i=1}^k\U_i$. Then, $M_{np}$ and $M_{np}^\dagger$ can be decomposed as
\begin{equation}\label{eq:Jngamma}
    \begin{aligned}
        M_{np}&=\max_{\lambda_n\leq k\leq n-\lambda_n}\frac{1}{\sqrt{n}}\zeta_1^{-1}\left(\S_{k}-\frac{k}{n}\S_{n}\right) +J_{n}^0,\\
        M_{np}^\dagger&=\max_{\lambda_n\leq k\leq n-\lambda_n}\left\{\frac{k}{n}\left(1-\frac{k}{n}\right)\right\}^{-1/2}\frac{1}{\sqrt{n}}\zeta_1^{-1}\left(\S_{k}-\frac{k}{n}\S_{n}\right)+J_{n}^{1/2},
    \end{aligned}
\end{equation}
and the details of $J_n^{\gamma}$ are provided later. From the proof of Theorem 1 in \cite{wang2023}, we see that the conclusion holds when $\U_i$ follows a multivariate normal distribution, i.e. 
\begin{equation*}
    \pr(p^{1/2}\zeta_1\max_{\lambda_n\leq k\leq n-\lambda_n}\Vert \C^{\text{Nor}}_0(k)\Vert_\infty\leq u_p\{\exp(-x)\})\rightarrow \exp\{-\exp(-x)\}.
\end{equation*}
as $n\rightarrow\infty$ and $\lambda_n/n\rightarrow 0$. By Lemma \ref{LemmaA4}, we see that the vectors $\U_i$ are i.i.d. and each of their components follows a sub-exponential distribution. We follows the Steps 1--3 in the proof of Theorem C.4 in \cite{Jirak-2015}, and acquire 
\begin{align}
    \notag & \pr(p^{1/2}\zeta_1\max_{\lambda_n\leq k\leq n-\lambda_n}\Vert \C_0(k)\Vert_\infty\leq u_p\{\exp(-x)\}) \\
    \notag & ~~~~~~~~~ -\pr(p^{1/2}\zeta_1\max_{\lambda_n\leq k\leq n-\lambda_n}\Vert \C_0^{\text{Nor}}(k)\Vert_\infty\leq u_p\{\exp(-x)\})\rightarrow 0,
\end{align}
where $\C_0^{\text{Nor}}(k)=\left\{\frac{k}{n}\left(1-\frac{k}{n}\right)\right\}^{-\gamma}\frac{1}{\sqrt{n}}\zeta_1^{-1}\left(S^{\text{Nor}}_{k}-\frac{k}{n}S^{\text{Nor}}_{n}\right)$ and $S^{\text{Nor}}_k=\sum_{i=1}^k\Y_i$ with $\Y_i\sim N(0,\R/p)$.

We next to show that the remainders shown in Equation (\ref{eq:Jngamma}) is $ J_{n}^\gamma=o_p(1)$. By the Bahadur representation of $\hat{\boldsymbol\theta}_{1:k}$ and $\hat{\boldsymbol\theta}_{k+1:n}$, we have

\begin{equation*}
    \begin{aligned}
        J_{n}^0&=\max_{\lambda_n\leq k\leq n-\lambda_n}\max_{1\leq j\leq p}( E_1+ E_2+ E_3),\\
        J_{n}^{1/2}&=\max_{\lambda_n\leq k\leq n-\lambda_n}\max_{1\leq j\leq p}\left\{\frac{k}{n}\left(1-\frac{k}{n}\right)\right\}^{-1/2}(E_1+E_2+E_3).
    \end{aligned}
\end{equation*}
where 
\begin{equation*}
    \begin{aligned}
        E_1&=-n^{-1/2}(1-\frac{k}{n})\zeta_1^{-1}\sum_{i=1}^k \varsigma_{1,i,1:k}U_{i,j}+n^{-1/2}\frac{k}{n}\zeta_1^{-1}\sum_{i=k+1}^n \varsigma_{1,i,k+1:n}U_{i,j},\\
        E_2&=n^{-1/2}(1-\frac{k}{n})\sum_{i=1}^k \zeta_1^{-1} R_i^{-1} \{\U_i^\top (\hat{\boldsymbol\theta}_{1:k}-\boldsymbol\theta)\} U_{i,j}-n^{-1/2}\frac{k}{n}\sum_{i=k+1}^n \zeta_1^{-1} R_i^{-1} \{\U_i^\top (\hat{\boldsymbol\theta}_{k+1:n}-\boldsymbol\theta)\} U_{i,j},\\
        E_3&=-n^{-1/2}(1-\frac{k}{n})\zeta_1^{-1}\sum_{i=1}^k \left\{R_i^{-1}(1+\varsigma_{1,i,1:k}+\varsigma_{2,i,1:k})-1\right\}(\hat\theta_{1:k,j}-\theta)\\
        &\qquad\qquad+n^{-1/2}\frac{k}{n}\zeta_1^{-1}\sum_{i=k+1}^n \left\{R_i^{-1}(1+\varsigma_{1,i,k+1:n}+\varsigma_{2,i,k+1:n})-1\right\}(\hat\theta_{k+1:n,j}-\theta),
    \end{aligned}
\end{equation*}
where $\varsigma_{1,i,1:k}\lesssim R_i^{-2}\Vert\hat{\boldsymbol\theta}_{1:k}-\boldsymbol\theta\Vert^2\{1+O_p(R_i^{-1}\Vert\hat{\boldsymbol\theta}_{1:k}-\boldsymbol\theta\Vert)\}=O_p(k^{-1})$ and $\varsigma_{2,i,1:k}=R_i^{-1}\W_i^\top (\hat{\boldsymbol\theta}_k-\boldsymbol\theta)-2^{-1}R_i^{-2}\Vert\hat{\boldsymbol\theta}_k-\boldsymbol\theta\Vert^2$ which are proved in \cite{cheng2023statistical}.

When $\gamma=0$, the first term is
\begin{equation*}
\begin{aligned}
    \max_{\lambda_n\leq k\leq n-\lambda_n}\max_{1\leq j\leq p}E_1=&\max_{\lambda_n\leq k\leq n-\lambda_n}\max_{1\leq j\leq p}n^{-1/2}(1-\frac{k}{n})\sum_{i=1}^k\zeta_1^{-1}\varsigma_{1,i,1:k}U_{i,j}\\
    \lesssim&n^{-1/2}\max_{\lambda_n\leq k\leq n-\lambda_n}k\Vert\hat{\boldsymbol\theta}_{1:k}-\boldsymbol\theta\Vert^2\max_{1\leq k\leq n}\left\vert\frac{1}{k}\sum_{i=1}^k\zeta_1^{-1}R_i^{-2}\U_i\right\vert_\infty\\
     =& O_p\left\{n^{-1/2}\left(\log n\right)^2\log(np)\right\},\\
\end{aligned}
\end{equation*}
by Lemma \ref{LemmaA4}, Cauchy inequality and  $\varsigma_{1,i,1:k}=O_p(R_i^{-2}\Vert\hat{\boldsymbol\theta}_{1:k}-\bth\Vert^2)$. For $s\rightarrow\infty$, we have $s^{1/2}(\hat{\boldsymbol\theta}_{1:s}-\boldsymbol\theta)\stackrel{d}{\rightarrow}N(0,\Sigma_{\boldsymbol\theta})$ and $\Vert\hat{\boldsymbol\theta}_{1:s}-\boldsymbol\theta\Vert^2=O_p(\zeta_1^{-2}s^{-1})$. Taking the same procedure as in the proof of the Lemma \ref{lemma2_dj}, we have, $\max_{s\leq k\leq n}k\Vert\hat{\boldsymbol\theta}_{1:k}-\boldsymbol\theta\Vert^2=O_p\{\zeta_1^{-2}(\log n)^2\}$ as $s\rightarrow\infty$.

Similarly, we decompose the second term
\begin{equation*}
    \begin{aligned}
         &\max_{\lambda_n\leq k\leq n-\lambda_n}\max_{1\leq j\leq p}E_2 \\
         &\leq  \max_{\lambda_n\leq k\leq n-\lambda_n}n^{-1/2}(1-\frac{k}{n})\left\vert\sum_{i=1}^k\zeta_1^{-1} \{R_i^{-1} \U_i\U_i^\top-\E (R_i^{-1} \U_i\U_i^\top)  \}(\hat{\boldsymbol\theta}_{1:k}-\boldsymbol\theta)\right\vert_\infty\\
         &\qquad+\max_{\lambda_n\leq k\leq n-\lambda_n}n^{-1/2}\frac{k}{n}\left\vert\sum_{i=k+1}^n\zeta_1^{-1} \{R_i^{-1} \U_i\U_i^\top -\E (R_i^{-1} \U_i\U_i^\top )\}(\hat{\boldsymbol\theta}_{k+1:n}-\boldsymbol\theta)\right\vert_\infty\\
          &\qquad+  \max_{\lambda_n\leq k\leq n-\lambda_n}n^{-1/2}\frac{k(n-k)}{n}\zeta_1^{-1}\left\vert\E (R_i^{-1} \U_i\U_i^\top)\left\{(\hat{\boldsymbol\theta}_{1:k}-\boldsymbol\theta)-(\hat{\boldsymbol\theta}_{k+1:n}-\boldsymbol\theta)\right\}\right\vert_\infty\\
          :&= E_{21}+E_{22}+E_{23}.
    \end{aligned}
\end{equation*}

For $E_{21}$,
\begin{equation*}
    \begin{aligned}
        E_{21}\leq& n^{-1/2}\zeta_1^{-1}\max_{1\leq k\leq n}(1-\frac{k}{n})k^{-1/2}\left\vert\sum_{i=1}^k \left\{R_i^{-1} \U_i\U_i^\top-\E (R_i^{-1} \U_i\U_i^\top  )\right\}\right\vert_1 \cdot\max_{\lambda_n\leq k\leq n-\lambda_n} k^{1/2}\left\vert\hat{\boldsymbol\theta}_{1:k}-\boldsymbol\theta\right\vert_\infty\\
        =:&n^{-1/2}\zeta_1^{-1}E_{211}\cdot E_{212},\\
    \end{aligned}
\end{equation*}
where
\begin{equation*}
    \begin{aligned}
         E_{211}:=&\max_{1\leq k\leq n}(1-\frac{k}{n})k^{-1/2}\left\vert\sum_{i=1}^k \left\{R_i^{-1} \U_i\U_i^\top-\E (R_i^{-1} \U_i\U_i^\top  )\right\}\right\vert_1 \\
         =&\sum_{j=1}^p\max_{1\leq k\leq n}\max_{1\leq l\leq p}\left\vert\sum_{i=1}^{k}(1-\frac{k}{n})k^{-1/2}\left\{R_i^{-1}U_{ij}U_{il}-\E (R_i^{-1}U_{ij}U_{il})\right\}\right\vert,\\
         E_{212}:=&\max_{\lambda_n\leq k\leq n-\lambda_n} k^{1/2}\vert \hat{\boldsymbol\theta}_{1:k}-\boldsymbol\theta\vert_\infty.
    \end{aligned}
\end{equation*}

To bounding $E_{211}$, we define
\begin{equation*}
    \begin{aligned}
        \phi_j^2 &=\max_{1\leq k\leq n}\max_{1\leq l\leq p}(1-\frac{k}{n})^2\E\left\{R_i^{-1}U_{ij}U_{il}-\E (R_i^{-1}U_{ij}U_{il})\right\}^2\\
        &\leq \max_{1\leq l\leq p}\E (R_i^{-2}U_{ij}^2 U_{il}^2),\\
        M_j & = \max_{1\leq k\leq n,1\leq i\leq k}\max_{1\leq l\leq p}\left\vert (1-\frac{k}{n})k^{-1/2}\left\{R_i^{-1}U_{ij}U_{il}-\E (R_i^{-1}U_{ij}U_{il})\right\}\right\vert.
    \end{aligned}
\end{equation*}
For $\phi_j$,
\begin{equation*}
    \begin{aligned}
        \sum_{j=1}^p \phi_j\leq& \max_{1\leq l\leq p}\sum_{j=1}^p\frac{p^{-1}\E (R_i^{-2}U_{il}^2)+\E (R_i^{-2}U_{ij}^2U_{il}^2)}{2\sqrt{p^{-1}\E (R_i^{-2}U_{il}^2)}}\\
        =&\max_{1\leq l\leq p}p\sqrt{p^{-1}\E (R_i^{-2}U_{il}^2)}\leq\zeta_1\max_{1\leq l\leq p}\sigma_{ll}^{1/2},
    \end{aligned}
\end{equation*}
where the last inequality is indicated by taking the same procedure as in the proof of Lemma A3 in \cite{cheng2023statistical}, we have, $\E (R_i^{-2}U_{il}^2)\lesssim \zeta_1 p^{-3/2} \sigma_{ll}+\zeta_1p^{-5/3}+\zeta_1 p^{-3/2-\eta_0/2}$.

For $M_j$,

\begin{equation*}
    \begin{aligned}
        M_j&=\max_{1\leq k\leq n}\max_{1\leq l\leq p}\left\vert k^{-1/2}\left\{R_{k}^{-1}U_{kj}U_{kl}-\E (R_{k}^{-1}U_{kj}U_{kl})\right\}\right\vert\\
        &\leq \max_{1\leq k\leq n}\max_{1\leq l\leq p}\left\vert k^{-1/2}R_{k}^{-1}U_{kj}U_{kl}\right\vert+\max_{1\leq l\leq p}\left\vert \E (R_{1}^{-1}U_{1j}U_{1l})\right\vert\\
        &\leq \max_{1\leq k\leq n}k^{-1/2}R_k^{-1}\max_{1\leq k\leq n}\max_{1\leq l\leq p}\left\vert U_{kj}U_{kl}\right\vert+\max_{1\leq l\leq p}\left\vert \E (R_{1}^{-1}U_{1j}U_{1l})\right\vert\\
        &\lesssim \zeta_1\max_{1\leq k\leq n}\max_{1\leq l\leq p}\left\vert U_{kj}U_{kl}\right\vert+\zeta_1p^{-1}\max_{1\leq l\leq p}\vert\sigma_{jl}\vert,
    \end{aligned}
\end{equation*}
where the last inequality holds by the proof of Lemma A3 in \cite{cheng2023statistical},
\begin{equation}\label{eq:ERUU_1}
    \begin{aligned}
        &\left\vert\E \left\{k^{-1}\sum_{i=1}^k R_i^{-1}U_{ij}U_{il} \right\}-\E \left\{ k^{-1}p^{-1/2}\sum_{i=1}^k \nu_i^{-1}\R^{1/2}_j U(\W_i)U(\W_i)^\top \R^{1/2\top}_l \right\} \right\vert\\
      \quad & \lesssim \zeta_1 p^{-1-\eta_0/2}+\zeta_1 p^{-7/6},
    \end{aligned}
\end{equation}
and
\begin{equation}\label{eq:ERUU_2}
    \begin{aligned}
        \E \left\{ k^{-1}p^{-1/2}\sum_{i=1}^k \nu_i^{-1}\R^{1/2}_j U(\W_i)U(\W_i)^\top \R^{1/2\top}_l \right\}\lesssim \zeta_1 p^{-1}\vert \sigma_{jl}\vert +\zeta_1 p^{-3/2}.
    \end{aligned}
\end{equation}

By the properties of $\psi_{\alpha_0}$ norm, we have 
\begin{equation*}
    \begin{aligned}
        \left\Vert \max_{1\leq k\leq n}\max_{1\leq l\leq p}\left\vert\zeta_1^{-2}U_{kj}U_{kl} \right\vert\right\Vert_{\psi_{\alpha_0/2}}&\leq\left\Vert \max_{1\leq k\leq n}\max_{1\leq j,l\leq p}\left\vert\zeta_1^{-2}U_{kj}U_{kl} \right\vert\right\Vert_{\psi_{\alpha_0/2}}\\
        &\lesssim \left\Vert \max_{1\leq k\leq n}\max_{1\leq l\leq p}\left\vert\zeta_1^{-1}U_{kl} \right\vert^2\right\Vert_{\psi_{\alpha_0/2}}\\
        &= \left\Vert \max_{1\leq k\leq n}\max_{1\leq l\leq p}\left\vert\zeta_1^{-1}U_{kl} \right\vert\right\Vert_{\psi_{\alpha_0/2}}^2\lesssim\log^2(np).
    \end{aligned}
\end{equation*}

It follows that,
\begin{equation*}
    \begin{aligned}
        \Vert M_j\Vert_{\psi_{\alpha_0/2}}\lesssim \zeta_1\left\Vert  \max_{1\leq k\leq n}\max_{1\leq l\leq p}\left\vert U_{kj}U_{kl}\right\vert\right\Vert_{\psi_{\alpha_0/2}}+\zeta_1 p^{-1}\max_{1\leq l\leq p}\vert\sigma_{jl}\vert\lesssim \zeta_1 p^{-1}\log^2(np).
    \end{aligned}
\end{equation*}

By the Lemma \ref{LemmaE.1central}, we have
\begin{equation*}
    \begin{aligned}
        \E(E_{211})\leq &\sum_{j=1}^p \phi_j\sqrt{\log (np)}+ \sum_{j=1}^p\sqrt{\E (M_j^2)}\log(np)\\
         &\qquad \leq\zeta_1 \log^{1/2}(np)+\zeta_1\log^3(np)\lesssim\zeta_1\log^3(np).
    \end{aligned}
\end{equation*}

Similarly,  
\begin{equation*}
    \begin{aligned}
        \E\left\{ \max_{1\leq k\leq n} k^{1/2}\left\vert \zeta_1^{-1}\frac{1}{k}\sum_{i=1}^k \U_i \right\vert_\infty\right\}\lesssim \log^2(np).
    \end{aligned}
\end{equation*}

For $E_{212}$, similar with the proof of the Lemma 1 in \cite{cheng2023statistical}, as $s\rightarrow\infty$, we obtain
\begin{equation*}
    \begin{aligned}
        s^{1/2}\left\vert \hat{\boldsymbol\theta}_{1:s}-\boldsymbol\theta\right\vert_\infty\lesssim s^{1/2}\left\vert \zeta_1^{-1}s^{-1}\sum_{i=1}^s \U_i \right\vert_\infty +\zeta_1^{-1} \left\vert s^{-1}\sum_{i=1}^s R_i^{-1}\U_i\U_i^\top\right\vert_1 s^{1/2}\left\vert \hat{\boldsymbol\theta}_{1:s}-\boldsymbol\theta\right\vert_\infty,
    \end{aligned}
\end{equation*}
then, we have
\begin{equation*}
    \begin{aligned}
        &~~~~\max_{s\leq k\leq n}k^{1/2}\left\vert \hat{\boldsymbol\theta}_{1:k}-\boldsymbol\theta\right\vert_\infty\\
        &\lesssim \max_{s\leq k\leq n}k^{1/2}\left\vert \zeta_1^{-1}k^{-1}\sum_{i=1}^k \U_i \right\vert_\infty +\max_{s\leq k\leq n}\zeta_1^{-1} \left\vert k^{-1}\sum_{i=1}^k R_i^{-1}\U_i\U_i^\top\right\vert_1 k^{1/2}\left\vert \hat{\boldsymbol\theta}_{1:k}-\boldsymbol\theta\right\vert_\infty\\
        &\lesssim \max_{s\leq k\leq n}k^{1/2}\left\vert \zeta_1^{-1}k^{-1}\sum_{i=1}^k \U_i \right\vert_\infty +s^{-1/2}\max_{s\leq k\leq n}k^{1/2}\zeta_1^{-1} \left\vert k^{-1}\sum_{i=1}^k R_i^{-1}\U_i\U_i^\top\right\vert_1 \max_{s\leq k\leq n}k^{1/2}\left\vert \hat{\boldsymbol\theta}_{1:k}-\boldsymbol\theta\right\vert_\infty\\
        &\lesssim\log^2(np)+s^{-1/2}\log^3(np)\max_{s\leq k\leq n}k^{1/2}\left\vert \hat{\boldsymbol\theta}_{1:k}-\boldsymbol\theta\right\vert_\infty.
    \end{aligned}
\end{equation*}
Let $s\asymp n^{\lambda}$, we have
\begin{equation*}
    \begin{aligned}
        \max_{s\leq k\leq n}k^{1/2}\left\vert \hat{\boldsymbol\theta}_{1:k}-\boldsymbol\theta\right\vert_\infty\lesssim \log^2(np).
    \end{aligned}
\end{equation*}

Then,
\begin{equation}\label{eq:max_theta}
    \begin{aligned}
         E_{212}=\max_{\lambda_n\leq k\leq n-\lambda_n}k^{1/2}\left\vert \hat{\boldsymbol\theta}_{1:k}-\boldsymbol\theta\right\vert_\infty\lesssim\log^2(np).
    \end{aligned}
\end{equation}

Thus, we have
\begin{equation*}
    \begin{aligned}
        E_{21}\lesssim n^{-1/2}\log^{5}(np)=o_p(1).
    \end{aligned}
\end{equation*}

Similarly, $E_{22}\lesssim n^{-1/2}\log^{5}(np)=o_p(1)$. For $  E_{23}$,
\begin{equation*}
    \begin{aligned}
        E_{23}\leq& \max_{1\leq k\leq n}\zeta_1^{-1}\left\vert\E (R_i^{-1} \U_i\U_i^\top)\right\vert_1\left\{\left\vert k^{1/2}(\hat{\boldsymbol\theta}_{1:k}-\boldsymbol\theta)\right\vert_\infty+\left\vert (n-k)^{1/2}(\hat{\boldsymbol\theta}_{k+1:n}-\boldsymbol\theta)\right\vert_\infty\right\}\\
    \leq &\max_{1\leq k\leq n}\zeta_1^{-1}\left\vert\E (R_i^{-1} \U_i\U_i^\top)\right\vert_1\left\{\max_{1\leq k\leq n}\left\vert k^{1/2}(\hat{\boldsymbol\theta}_{1:k}-\boldsymbol\theta)\right\vert_\infty+\max_{1\leq k\leq n}\left\vert (n-k)^{1/2}(\hat{\boldsymbol\theta}_{k+1:n}-\boldsymbol\theta)\right\vert_\infty\right\}\\
    \lesssim & p^{-(1/6\wedge \eta_0/2)}\log^7(np)=o_p(1),
    \end{aligned}
\end{equation*}
where the last inequality holds by Equations \eqref{eq:ERUU_1}--\eqref{eq:max_theta}. Then we obtain $\max_{1\leq k\leq n}\max_{1\leq j\leq p}E_2=o_p(1)$. Taking the same procedure, we can also show that $\max_{1\leq k\leq n}\max_{1\leq j\leq p}E_3=o_p(1)$. The result is as follows. Similarly, we can proof the conclusion for $M_{n,p}^\dagger$. The proof is completed.

\subsection{Proof of Theorem \ref{thm:Max-Sum}}\label{sec:proofMS}
For max-$L_2$-type test with $\gamma=0$, 
\begin{equation*}
    \begin{aligned}
        \frac{1}{\sqrt{2\tr(\R^2)}}S_{n,p}&=\frac{1}{\sqrt{2\tr(\R^2)}}\max_{\lambda_n\leq k\leq n-\lambda_n}\left\{\frac{p}{n}(\hat{\S}_k-\frac{k}{n}\hat{\S}_n)^\top(\hat{\S}_k-\frac{k}{n}\hat{\S}_n)-\frac{k(n-k)p}{n^2}\right\}\\
        &:=\max_{\lambda_n\leq k\leq n-\lambda_n}\sum_{i\not= j}\upsilon_{i,k}\upsilon_{j,k} \hat{\U}_i^\top \hat{\U}_j\\
        &=\max_{\lambda_n\leq k\leq n-\lambda_n}\left\{\sum_{i\not= j}\upsilon_{i,k}\upsilon_{j,k} \U_i^\top \U_j+\sum_{i\not= j}\upsilon_{i,k}\upsilon_{j,k} (\hat{\U}_i^\top \hat{\U}_j- \U_i^\top \U_j)\right\},\\
    \end{aligned}
\end{equation*}
where 
\begin{equation*}
    \upsilon_{i,k}= \left\{\frac{p}{n\sqrt{2\tr(\R^2)}}\right\}^{1/2}\frac{n-k}{n},i\leq k;\ 
    \upsilon_{i,k}= -\left\{\frac{p}{n\sqrt{2\tr(\R^2)}}\right\}^{1/2}\frac{k}{n},i> k.\\
\end{equation*}

We first consider the $\hat{\U}_i$, by Taylor expansion, we have
\begin{equation}\label{eq:Ui_expansion}
    \begin{aligned}
    & U\big(\hat{\D}^{-1/2}(\X_i-\hat{\boldsymbol\theta}_{1:n})\big)\\
    =&U\left( \D^{-1/2}(\boldsymbol X_i-\boldsymbol\theta)-\D^{-1/2}(\hat{\boldsymbol\theta}_{1:n}-\boldsymbol\theta)+(\hat{\D}^{-1/2}-\D^{-1/2})(\boldsymbol X_i-\boldsymbol\theta)-(\hat{\D}^{-1/2}-\D^{-1/2})(\hat{\boldsymbol\theta}_{1:n}-\boldsymbol\theta) \right)\\
        =&\left\{ \U_i-R_i^{-1}\D^{-1/2}(\hat{\boldsymbol\theta}_{1:n}-\boldsymbol\theta)+R_i^{-1}(\hat{\D}^{-1/2}-\D^{-1/2})(\boldsymbol X_i-\boldsymbol\theta)\right.\\
        &\qquad\qquad\qquad\qquad\left.-R_i^{-1}(\hat{\D}^{-1/2}-\D^{-1/2})(\hat{\boldsymbol\theta}_{1:n}-\boldsymbol\theta) \right\}(1+\alpha_i)^{-1/2}.\\
    \end{aligned}
\end{equation}
Thus, for $\hat{\U}_i^\top \hat{\U}_j$, we have

\begin{equation*}
    \begin{aligned}
        &\hat{\U}_i^\top \hat{\U}_j\\
        =&\left\{ \U_i-R_i^{-1}\D^{-1/2}(\hat{\boldsymbol\theta}_{1:n}-\boldsymbol\theta)+R_i^{-1}(\hat{\D}^{-1/2}-\D^{-1/2})(\boldsymbol X_i-\boldsymbol\theta)-R_i^{-1}(\hat{\D}^{-1/2}-\D^{-1/2})(\hat{\boldsymbol\theta}_{1:n}-\boldsymbol\theta) \right\}\\
        &\cdot \left\{ \U_j-R_j^{-1}\D^{-1/2}(\hat{\boldsymbol\theta}_{1:n}-\boldsymbol\theta)+R_j^{-1}(\hat{\D}^{-1/2}-\D^{-1/2})(\boldsymbol X_j-\boldsymbol\theta)-R_j^{-1}(\hat{\D}^{-1/2}-\D^{-1/2})(\hat{\boldsymbol\theta}_{1:n}-\boldsymbol\theta) \right\}\\
        &\qquad\qquad\qquad\qquad\qquad\qquad\cdot(1+\alpha_i)^{-1/2}(1+\alpha_j)^{-1/2}\\
        =&\U_i^\top \U_j+R_i^{-1}R_j^{-1}\left\{\D^{-1/2}(\hat{\boldsymbol\theta}_{1:n}-\boldsymbol\theta)\right\}^\top \D^{-1/2}(\hat{\boldsymbol\theta}_{1:n}-\boldsymbol\theta)\\
        &~~~~~~~~~~~~~~~~~~~~~~~~~~~~~~-R_j^{-1}\U_i^\top \D^{-1/2}(\hat{\boldsymbol\theta}_{1:n}-\boldsymbol\theta)-R_i^{-1}\U_j^\top \D^{-1/2}(\hat{\boldsymbol\theta}_{1:n}-\boldsymbol\theta)\\
        &+\U_i^\top\U_j\left\{(1+\alpha_i)^{-1/2}(1+\alpha_j)^{-1/2}-1\right\}-R_j^{-1}\U_i^\top \D^{-1/2}(\hat{\boldsymbol\theta}_{1:n}-\boldsymbol\theta)\left\{(1+\alpha_i)^{-1/2}(1+\alpha_j)^{-1/2}-1\right\}\\
        &-R_i^{-1}\U_j^\top \D^{-1/2}(\hat{\boldsymbol\theta}_{1:n}-\boldsymbol\theta)\left\{(1+\alpha_i)^{-1/2}(1+\alpha_j)^{-1/2}-1\right\}\\
        &+R_i^{-1}R_j^{-1}\left\{\D^{-1/2}(\hat{\boldsymbol\theta}_{1:n}-\boldsymbol\theta)\right\}^\top \D^{-1/2}(\hat{\boldsymbol\theta}_{1:n}-\boldsymbol\theta)\left\{(1+\alpha_i)^{-1/2}(1+\alpha_j)^{-1/2}-1\right\}\\
        &+\left\{R_i^{-1}(\hat{\D}^{-1/2}-\D^{-1/2})(\boldsymbol X_i-\boldsymbol\theta)-R_i^{-1}(\hat{\D}^{-1/2}-\D^{-1/2})(\hat{\boldsymbol\theta}_{1:n}-\boldsymbol\theta)\right\}\\
        &\qquad\cdot\left\{R_j^{-1}(\hat{\D}^{-1/2}-\D^{-1/2})(\boldsymbol X_j-\boldsymbol\theta)-R_j^{-1}(\hat{\D}^{-1/2}-\D^{-1/2})(\hat{\boldsymbol\theta}_{1:n}-\boldsymbol\theta)\right\}(1+\alpha_i)^{-1/2}(1+\alpha_j)^{-1/2}\\
        &+\left\{\U_j-R_j^{-1}\D^{-1/2}(\hat{\boldsymbol\theta}_{1:n}-\boldsymbol\theta)\right\}\left\{R_i^{-1}(\hat{\D}^{-1/2}-\D^{-1/2})(\boldsymbol X_i-\boldsymbol\theta)-R_i^{-1}(\hat{\D}^{-1/2}-\D^{-1/2})(\hat{\boldsymbol\theta}_{1:n}-\boldsymbol\theta)\right\}\\
        &\qquad\qquad\qquad\qquad\qquad\cdot(1+\alpha_i)^{-1/2}(1+\alpha_j)^{-1/2}\\
        &+\left\{\U_i-R_i^{-1}\D^{-1/2}(\hat{\boldsymbol\theta}_{1:n}-\boldsymbol\theta)\right\}\left\{R_j^{-1}(\hat{\D}^{-1/2}-\D^{-1/2})(\boldsymbol X_j-\boldsymbol\theta)-R_j^{-1}(\hat{\D}^{-1/2}-\D^{-1/2})(\hat{\boldsymbol\theta}_{1:n}-\boldsymbol\theta)\right\}\\
        &\qquad\qquad\qquad\qquad\qquad\cdot(1+\alpha_i)^{-1/2}(1+\alpha_j)^{-1/2},\\
    \end{aligned}
\end{equation*}
where $\alpha_i=2\U_i^\top R_i^{-1}(\hat{\D}^{-1/2}-\D^{-1/2})(\boldsymbol X_i-\boldsymbol\theta)-2\U_i^\top R_i^{-1}(\hat{\D}^{-1/2}-\D^{-1/2})(\hat{\boldsymbol\theta}_{1:n}-\boldsymbol\theta)+R_i^{-2}\Vert R_i^{-1}(\hat{\D}^{-1/2}-\D^{-1/2})(\boldsymbol X_i-\boldsymbol\theta)-R_i^{-1}(\hat{\D}^{-1/2}-\D^{-1/2})(\hat{\boldsymbol\theta}_{1:n}-\boldsymbol\theta)\Vert^2+2R_i^{-2}\U_i^\top \D^{-1/2}(\hat{\boldsymbol\theta}_{1:n}-\boldsymbol\theta)+R_i^{-2}(\hat{\boldsymbol\theta}_{1:n}-\boldsymbol\theta)^\top \D^{-1}(\hat{\boldsymbol\theta}_{1:n}-\boldsymbol\theta)$, by the Assumption \ref{ass:max3}(iv) and the same procedure of Theorem 2 in \cite{feng2016}, we have $\alpha_i=O_p\{n^{-1/2}(\log p)^{1/2}\}$. It implies that,
\begin{equation*}
    \begin{aligned}
        \hat{\U}_i^\top \hat{\U}_j=\U_i^\top \U_j&+R_i^{-1}R_j^{-1}\left\{\D^{-1/2}(\hat{\boldsymbol\theta}_{1:n}-\boldsymbol\theta)\right\}^\top \D^{-1/2}(\hat{\boldsymbol\theta}_{1:n}-\boldsymbol\theta)\\
        &-R_j^{-1}\U_i^\top \D^{-1/2}(\hat{\boldsymbol\theta}_{1:n}-\boldsymbol\theta)-R_i^{-1}\U_j^\top \D^{-1/2}(\hat{\boldsymbol\theta}_{1:n}-\boldsymbol\theta)+Q_{n,i,j},\\
    \end{aligned}
\end{equation*}
where $Q_{n,i,j}=o_p\{n^{-1}p^{-1}\sqrt{2\tr(\R^2)}\}$. Then, we have
\begin{equation*}
\begin{aligned}
    &\sum_{1\leq i\not= j\leq n}\upsilon_{i,k}\upsilon_{j,k} \left(\hat\U_i^\top \hat\U_j-\U_i^\top \U_j\right)=\sum_{1\leq i, j\leq n}\upsilon_{i,k}\upsilon_{j,k} \left(\hat\U_i^\top \hat\U_j-\U_i^\top \U_j\right)\\
    =& \sum_{1\leq i\not= j\leq k}\upsilon_{i,k}\upsilon_{j,k} \hat\U_i^\top \hat\U_j+\sum_{k+1\leq i\not= j\leq n}\upsilon_{i,k}\upsilon_{j,k} \hat\U_i^\top \hat\U_j+2\sum_{1\leq i\leq k}\sum_{k+1\leq j\leq n}\upsilon_{i,k}\upsilon_{j,k} \hat\U_i^\top \hat\U_j\\
    :=& \sum_{1\leq i\not= j\leq n}\upsilon_{i,k}\upsilon_{j,k}\U_i^\top \U_j-\sum_{1\leq i\not= j\leq n}\upsilon_{i,k}\upsilon_{j,k}\U_i^\top \U_j\\
    &\qquad-\sum_{1\leq i,j\leq n}\upsilon_{i,k}\upsilon_{j,k}R_i^{-1}(\hat{\boldsymbol\theta}_{1:n}-\boldsymbol\theta)^\top\D^{-1/2}\U_j-\sum_{1\leq i,j\leq n}\upsilon_{i,k}\upsilon_{j,k}R_j^{-1}\U_i^\top\D^{-1/2}(\hat{\boldsymbol\theta}_{1:n}-\boldsymbol\theta)  \\
    &\qquad +\sum_{1\leq i\not= j\leq n} \upsilon_{i,k}\upsilon_{j,k}R_i^{-1}R_j^{-1} (\hat{\boldsymbol\theta}_{1:n}-\boldsymbol\theta)^\top \D^{-1/2}\D^{-1/2}(\hat{\boldsymbol\theta}_{1:n}-\boldsymbol\theta)+\sum_{1\leq i\not= j\leq n}Q_{n,k,i,j}\\
    \lesssim&
    -2\sum_{1\leq i,j\leq n}\upsilon_{i,k}\upsilon_{j,k}R_i^{-1}(\hat{\boldsymbol\theta}_{1:n}-\boldsymbol\theta)^\top\D^{-1/2}\U_j\\
    &\qquad +\sum_{1\leq i\not= j\leq n} \upsilon_{i,k}\upsilon_{j,k}R_i^{-1}R_j^{-1} (\hat{\boldsymbol\theta}_{1:n}-\boldsymbol\theta)^\top \D^{-1/2}\D^{-1/2}(\hat{\boldsymbol\theta}_{1:n}-\boldsymbol\theta).\\
\end{aligned}
\end{equation*}
For two parts, taking the same procedure as in the proof of Lemma A.2 in \cite{feng2016multivariate}, we have
\begin{equation*}
    \begin{aligned}
        &\max_{\lambda_n\leq k\leq n-\lambda_n}\sum_{1\leq i,j\leq n}\upsilon_{i,k}\upsilon_{j,k}R_i^{-1}(\hat{\boldsymbol\theta}_{1:n}-\boldsymbol\theta)^\top\D^{-1/2}\U_j\\
        =&\max_{\lambda_n\leq k\leq n-\lambda_n}\frac{k^2(n-k)^2p}{n^3\sqrt{2\tr(\R^2)}} \left(\frac{1}{k}\sum_{i=1}^k R_i^{-1}-\frac{1}{n-k}\sum_{i=k+1}^n R_i^{-1}\right)\\
        &\qquad\qquad\qquad\qquad\qquad\cdot\left\{\D^{-1/2}(\hat{\boldsymbol\theta}_{1:n}-\boldsymbol\theta)\right\}^\top \left(\frac{1}{k}\sum_{i=1}^k \U_i-\frac{1}{n-k}\sum_{i=k+1}^n\U_i\right)\\
        =&\max_{\lambda_n\leq k\leq n-\lambda_n}\frac{k^2(n-k)^2p}{n^3\sqrt{2\tr(\R^2)}} \left(\frac{1}{k}\sum_{i=1}^k R_i^{-1}-\frac{1}{n-k}\sum_{i=k+1}^n R_i^{-1}\right)\\
        &\qquad\qquad\qquad\qquad\qquad\cdot\{1+o_p(1)\}\left\{\D^{-1/2}(\hat{\boldsymbol\theta}_{1:n}-\boldsymbol\theta)\right\}^\top \left(\frac{1}{k}\sum_{i=1}^k \U_i-\frac{1}{n-k}\sum_{i=k+1}^n\U_i\right)\\
    \end{aligned}
\end{equation*}
and
\begin{equation*}
    \begin{aligned}
        &\left(\frac{1}{n}\sum_{i=1}^n \U_i\right)^\top \left(\frac{1}{k}\sum_{i=1}^k \U_i-\frac{1}{n-k}\sum_{i=k+1}^n\U_i\right)\\
        =&\frac{1}{kn}\sum_{1\leq i\not=j\leq k}\U_i^\top \U_j-\frac{1}{(n-k)n}\sum_{k+1\leq i\not= j\leq n}\U_i^\top \U_j+\frac{n-2k}{k(n-k)n}\sum_{1\leq i\leq k}\sum_{k+1\leq j\leq n}\U_i^\top \U_j.
    \end{aligned}
\end{equation*}

By the proof of Theorem 5 in \cite{liu+feng+wang+2024}, we have
\begin{equation*}
    \begin{aligned}
        \frac{p}{k\sqrt{2\tr(\R^2)}}\sum_{1\leq i\not= j\leq k}\U_i^\top \U_j\stackrel{d}{\rightarrow}N(0,1).
    \end{aligned}
\end{equation*}
Similarly, we have
\begin{equation*}
    \begin{aligned}
        \frac{p}{k^{1/2}(n-k)^{1/2}\sqrt{2\tr(\R^2)}}\sum_{1\leq i\leq k}\sum_{k+1\leq j\leq n}\U_i^\top \U_j\stackrel{d}{\rightarrow}N(0,1).
    \end{aligned}
\end{equation*}
Then,
\begin{equation*}
    \begin{aligned}
        &\max_{\lambda_n\leq k\leq n-\lambda_n}\left(\frac{1}{n}\sum_{i=1}^n \U_i\right)^\top \left(\frac{1}{k}\sum_{i=1}^k \U_i-\frac{1}{n-k}\sum_{i=k+1}^n\U_i\right)\\
        \lesssim &\frac{\sqrt{2\tr\R^2}}{p}\left(n^{-1}\sqrt{\log n} +n^{-1}\sqrt{\log n}+n^{-1/2}\sqrt{\log n}\right)\\
        =&\frac{\sqrt{2\tr\R^2}}{p}n^{-1}\sqrt{\log n}.
        \end{aligned}
\end{equation*}
Similarly, 
\begin{equation*}
    \begin{aligned}
    &\max_{\lambda_n\leq k\leq n-\lambda_n}\frac{1}{k}\sum_{i=1}^k R_i^{-1}-\frac{1}{n-k}\sum_{i=k+1}^n R_i^{-1}\\
    \leq &\max_{\lambda_n\leq k\leq n-\lambda_n}\left\vert\frac{1}{k}\sum_{i=1}^k R_i^{-1}-\zeta_1\right\vert+\max_{\lambda_n\leq k\leq n-\lambda_n}\left\vert\frac{1}{n-k}\sum_{i=k+1}^n R_i^{-1}-\zeta_1\right\vert,
    \end{aligned}
\end{equation*}
and 
\begin{equation*}
    \begin{aligned}
        &\max_{\lambda_n\leq k\leq n-\lambda_n}k^{1/2}\left\vert\frac{1}{k}\zeta_1^{-1}\sum_{i=1}^k R_i^{-1}-1 \right\vert=O_p\left(\sqrt{\log n}\right),
    \end{aligned}
\end{equation*}
by the proof of lemma 3 in \cite{liu+feng+wang+2024}. Thus,
\begin{small}
\begin{equation*}
    \begin{aligned}
    &\max_{\lambda_n\leq k\leq n-\lambda_n}\frac{k^2(n-k)^2p}{n^3\sqrt{2\tr(\R^2)}} \{1+o_p(1)\}\left(\frac{1}{k}\sum_{i=1}^k R_i^{-1}-\frac{1}{n-k}\sum_{i=k+1}^n R_i^{-1}\right)\\
    &\qquad\qquad\qquad\qquad\qquad\qquad\cdot \left(\zeta_1^{-1}\frac{1}{n}\sum_{i=1}^n\U_i\right)^\top \left(\frac{1}{k}\sum_{i=1}^k \U_i-\frac{1}{n-k}\sum_{i=k+1}^n\U_i\right)\\
    \leq&O_p\left\{\max_{\lambda_n\leq k\leq n-\lambda_n}\frac{k^2(n-k)^2}{n^3} \left( k^{-1/2}\sqrt{\log n}+(n-k)^{-1/2}\sqrt{\log n} \right)\left( n^{-1}+\frac{1}{k^{1/2}(n-k)^{1/2}}\right)\sqrt{\log n} \right\}\\
    =&O_p\left\{\max_{\lambda_n\leq k\leq n-\lambda_n} \log n\left( \frac{k^{3/2}(n-k)^2}{n^4}+\frac{k(n-k)^{3/2}}{n^3}+\frac{k^{2}(n-k)^{3/2}}{n^4}+\frac{k^{3/2}(n-k)}{n^3} \right)\right\}\\
    \leq &O_p(n^{-1/2}\log n). 
    \end{aligned}
\end{equation*}
\end{small}
 For the second part, by Assumption \ref{ass:sum_R4_0},
 \begin{small}
\begin{equation*}
    \begin{aligned}
        &\max_{\lambda_n\leq k\leq n-\lambda_n}\sum_{1\leq i, j\leq n} \upsilon_{i,k}\upsilon_{j,k}R_i^{-1}R_j^{-1} (\hat{\boldsymbol\theta}_{1:n}-\boldsymbol\theta)^\top \D^{-1/2}(\mathbf I_p-\U_i\U_i^\top)(\mathbf I_p-\U_j\U_j^\top)\D^{-1/2}(\hat{\boldsymbol\theta}_{1:n}-\boldsymbol\theta)\\
        =&\max_{\lambda_n\leq k\leq n-\lambda_n}\frac{k^2(n-k)^2p}{n^3\sqrt{2\tr(\R^2)}} \{1+o_p(1)\}\left(\zeta_1\frac{1}{n}\sum_{i=1}^n\U_i\right)^\top\left(\frac{1}{k}\sum_{i=1}^k R_i^{-1}-\frac{1}{n-k}\sum_{i=k+1}^n R_i^{-1}\right)^2\left(\zeta_1\frac{1}{n}\sum_{i=1}^n\U_i\right)\\
        \leq &O_p\left(\max_{\lambda_n\leq k\leq n-\lambda_n} \frac{k^2(n-k)^2p}{n^3\sqrt{2\tr(\R^2)}} n^{-1} \{k^{-1/2}\log k+(n-k)^{-1/2}\log(n-k)\}^{2}\right)\\
        =&O_p\left(\frac{p\log n}{n\sqrt{\tr(\R^2)}}\right)=o_p(1).
    \end{aligned}
\end{equation*}
\end{small}

Thus, we get,
\begin{equation*}
    \max_{\lambda_n\leq k\leq n-\lambda_n}\sum_{i\not= j}\upsilon_{i,k}\upsilon_{j,k} (\hat{\U}_i^\top \hat{\U}_j- \U_i^\top \U_j)=o_p(1),
\end{equation*}
i.e.
\begin{equation}\label{eq:Snp0_ram}
\begin{aligned}
  &\qquad\qquad\frac{1}{\sqrt{2\tr(\R^2)}}S_{np}\\
   =&\frac{1}{\sqrt{2\tr(\R^2)}}\max_{\lambda_n\leq k\leq n-\lambda_n}\frac{p}{n}\left(\sum_{i=1}^k \U_i-\frac{k}{n}\sum_{i=1}^n \U_i\right)^\top \left(\sum_{i=1}^k \U_i-\frac{k}{n}\sum_{i=1}^n \U_i\right)-\frac{k(n-k)p}{n^2\sqrt{2\tr(\R^2)}}+o_p(1).
\end{aligned}
\end{equation}

For $\gamma=0.5$, 
\begin{equation*}
    \begin{aligned}
        \frac{1}{\sqrt{2\tr(\R^2)}}S^\dagger_{n,p}=&\max_{\lambda_n\le k\le n-\lambda_n}\tilde \C_{0.5}(k)^\top \tilde \C_{0.5}(k)-p\\
        =&\max_{\lambda_n\le k\le n-\lambda_n}\left\{\frac{k}{n}(1-\frac{k}{n})\right\}^{-1}\sum_{i\not= j}\upsilon_{i,k}\upsilon_{j,k}\hat\U_i^\top \hat{\U}_j.
    \end{aligned}
\end{equation*}

Taking the same procedure of $\gamma=0$, we have
\begin{equation}\label{eq:Snp05_ram}
    \max_{\lambda_n\le k\le n-\lambda_n}\left\{\frac{k}{n}(1-\frac{k}{n})\right\}^{-1}\sum_{i\not= j}\upsilon_{i,k}\upsilon_{j,k}\left(\hat\U_i^\top \hat{\U}_j-\U_i^\top \U_j\right)=o_p(1).
\end{equation}

\subsubsection{The limit distribution for $S_{np}$ under $H_0$}
We next consider the limit distribution for $S_{np}$ under $H_0$. By Equation \eqref{eq:Snp0_ram}, the $S_{np}$ can be rewritten as
\begin{equation*}
    \frac{1}{\sqrt{2\tr(\R^2)}}S_{np}=\max_{\lambda_n\leq k\leq n-\lambda_n} W(k)+o_p(1),
\end{equation*}
where
\begin{equation*}
    \begin{aligned}
      W(k) =&\frac{k^2(n-k)^2p}{n^3\sqrt{2\text{tr}(\R^2)}}(\bar \U_{1k}-\bar \U_{(k+1)n})^\top (\bar \U_{1k}-\bar \U_{(k+1)n})-\frac{k(n-k)p}{n^2 \sqrt{2\text{tr}(\R^2)}},\\
    \end{aligned}
\end{equation*}
and 
\begin{equation*}
    \tilde{\upsilon}_{i,t}= \begin{cases}
    \frac{1}{\lfloor n t\rfloor}, &i \leq \lfloor n t\rfloor,\\
        \frac{-1}{n-\lfloor n t\rfloor}, &i> \lfloor n t\rfloor. 
    \end{cases}
\end{equation*}

Then $W(\lfloor n t\rfloor)$ can be rewritten as, 
\begin{equation}\label{eq:Wnt}
    W(\lfloor n t\rfloor)=2g(\lfloor n t\rfloor)\sum_{i<j}\tilde{\upsilon}_{i,t}\tilde{\upsilon}_{j,t} \U_i^\top \U_j,
\end{equation}
where $g(k)=\frac{k^2(n-k)^2p}{n^3\sqrt{2\text{tr}(\R^2)}}$.
Taking the same procedure as Theorem 5 in \cite{liu+feng+wang+2024}, we only need to calculate the term,
$\sum_{j=2}^n\sum_{i=1}^{j-1}\tilde{\upsilon}_{i,t}^2\tilde{\upsilon}_{j,t}^2$ and replace the term $\sum_{j=2}^n\sum_{i=1}^{j-1}\{n^2(n-1)^2\}^{-1}$. By some calculations, we have
\begin{equation*}
    \begin{aligned}
    &\sum_{j=2}^n\sum_{i=1}^{j-1}\frac{1}{n^2(n-1)^2}=\frac{1}{2}\frac{1}{n(n-1)},\sum_{j=2}^n\sum_{i=1}^{j-1}\tilde{\upsilon}_{i,t}^2\tilde{\upsilon}_{j,t}^2=\frac{1}{2}\frac{1}{n^2t^2(1-t)^2}.
    \end{aligned}
\end{equation*}

Thus, we get
\begin{equation*}
    \frac{t(1-t)\cdot 2\sum_{i<j}\tilde{\upsilon}_{i,t}\tilde{\upsilon}_{j,t} \U_i^\top \U_j}{\sqrt{\frac{2\text{tr}(\R^2)}{n^2p^2}}}\rightarrow N(0,1),
\end{equation*}
i.e.
\begin{equation*}
     W( n t)=2g( n t)\sum_{i<j}\tilde{\upsilon}_{i,t}\tilde{\upsilon}_{j,t} \U_i^\top \U_j,\rightarrow N(0,t^2(1-t)^2).
\end{equation*}
By the fact that $\lfloor n t\rfloor/n\rightarrow t$, we finally acquire, $W(\lfloor n t\rfloor)\rightarrow N(0,t^2(1-t)^2)$.

We next consider the two time points $t$ and $s$ with $s<t$. and consider the limit distribution of $a W(n t)+b W(n s)$,
\begin{equation*}
    \begin{aligned}
        &a W(n t)+b W(n s)
        =\frac{2n p}{\sqrt{2\text{tr}(\R^2)}}\sum_{i<j}\{at^2(1-t)^2\tilde{\upsilon}_{i,t}\tilde{\upsilon}_{j,t}+b s^2(1-s)^2\tilde{\upsilon}_{i,s}\tilde{\upsilon}_{j,s}\}\U_i^\top \U_j.
    \end{aligned}
\end{equation*}
Similarly, 
\begin{equation*}
    \begin{aligned}
        &\sum_{j=2}^n\sum_{i=1}^{j-1}\left\{at^2(1-t)^2\tilde{\upsilon}_{i,t}\tilde{\upsilon}_{j,t}+b s^2(1-s)^2\tilde{\upsilon}_{i,s}\tilde{\upsilon}_{j,s}\right\}^2\\
        =&\frac{1}{2n^2}\left\{\frac{a^2t^4(1-t)^4}{t^2(1-t)^2}+\frac{b^2 s^4(1-s)^4}{s^2(1-s)^s}+2\frac{a b t^2(1-t)^2s^2(1-s)^2}{t^2(1-s)^2}\right\}+o(1)\\
        =&\frac{1}{2n^2}\{a^2t^2(1-t^2+b^2s^2(1-s)^2+2abs^2(1-t)^2\}+o(1),
    \end{aligned}
\end{equation*}
which means that,
\begin{equation*}
    (W(\lfloor n t\rfloor),W(\lfloor n s\rfloor))^\top \stackrel{d}{\rightarrow}N_2(0,\mathbf\Omega_2),
\end{equation*}
and
\begin{equation*}
\mathbf\Omega_2=\begin{pmatrix}
t^2(1-t)^2 & s^2(1-t)^2 \\
s^2(1-t)^2 & s^2(1-s)^2
\end{pmatrix}.
\end{equation*}

A set of three or more times points can be treated in the same way and therefore the finite-dimensional distributions of $W(\lfloor n t\rfloor)$ converge properly. We next prove the tightness of $W(\lfloor n t\rfloor)$. Since $W(\lfloor n 0\rfloor)=0$, the Equation (15.17) in \cite{billingsley+1968} is satisfied.
\begin{equation*}
    \begin{aligned}
        W(n t)=&\frac{t^2(1-t)^2 n p}{\sqrt{2\text{tr}(\R^2)}}\sum_{i<j}\tilde{\upsilon}_{i t}\tilde{\upsilon}_{j t} \U_i^\top \U_j\\
        =& \frac{p}{n\sqrt{2\text{tr}(\R^2)}}\left\{(1-t)^2\sum_{i<j<n t}\U_i^\top \U_j+t^2\sum_{n t<i<j}\U_i^\top \U_j-t(1-t)\sum_{i<n t<j}\U_i^\top \U_j\right\}\\
        =&\frac{p}{n\sqrt{2\text{tr}(\R^2)}}\left\{-t(1-t)\sum_{i<j}\U_i^\top \U_j+(1-t)\sum_{i<j<n t}\U_i^\top \U_j+t\sum_{n t<i<j}\U_i^\top \U_j\right\}\\
        :=& -t(1-t)K_1+(1-t)K_{2,t}+t K_{3,t}.
    \end{aligned}
\end{equation*}

By Theorem 5 in \cite{liu+feng+wang+2024}, $\E\vert K_1\vert^2 \leq C$ for a constant $C$, we see that $t(1-t)K_1$ is tight. We next prove tightness of $(1-t)K_{2,t}$. It's equivalent to show that, for each positive $\varsigma$ and $\vartheta$, there exists a $\varphi$, $0<\varphi<1$, and an integer $n_0$, such that
\begin{equation}\label{eq:tight}
    \pr(\sup_{\vert s-t\vert<\varphi}\left\{(1-t)\sum_{i<j<n t}\U_i^\top \U_j-(1-s)\sum_{i<j<n s}\U_i^\top \U_j\right\}/\{n p^{-1}\sqrt{2\text{tr}(\R^2)}\}\geq \varsigma)\leq \vartheta,~n\geq n_0.
\end{equation}

We rewrite the term in Equation \eqref{eq:tight} as,
\begin{equation*}
    \begin{aligned}
&\left\vert(1-t)\sum_{i<j<n t}\U_i^\top \U_j-(1-s)\sum_{i<j<n s}\U_i^\top \U_j\right\vert\\
=&\left\vert-(t-s)\sum_{i<j<n s}\U_i^\top \U_j+(1-t)\left\{\sum_{1<i<j<n t}\U_i^\top \U_j-\sum_{1<i<j<n s}\U_i^\top \U_j\right\}\right\vert\\
\leq &(t-s)\left\vert\sum_{i<j<n s}\U_i^\top \U_j\right\vert+\left\vert\sum_{1<i<j<n t}\U_i^\top \U_j-\sum_{1<i<j<n s}\U_i^\top \U_j\right\vert.
    \end{aligned}
\end{equation*}

\underline{Proof of Equation (\ref{eq:tight}):}
\begin{equation*}
    \begin{aligned}
        &\pr\left(\sup_{\vert s-t\vert<\varphi}\left\{(1-t)\sum_{i<j<n t}\U_i^\top \U_j-(1-s)\sum_{i<j<n s}\U_i^\top \U_j\right\}/\{n p^{-1}\sqrt{2\text{tr}(\R^2)}\}\geq \varsigma\right)\\
        \leq &\pr\left(\sup_{\vert s-t\vert<\varphi}(t-s)\vert\sum_{i<j<n s}\U_i^\top \U_j\vert/\{n p^{-1}\sqrt{2\text{tr}(\R^2)}\}\geq \varsigma/2\right)\\
        &+\pr\left(\sup_{\vert s-t\vert<\varphi}\vert\sum_{1<i<j<n t}\U_i^\top \U_j-\sum_{1<i<j<n s}\U_i^\top \U_j\vert/\{n p^{-1}\sqrt{2\text{tr}(\R^2)}\}\geq \varsigma/2\right)\\
        :=&K_{2,t,1}+K_{2,t,2}.
    \end{aligned}
\end{equation*}
By Doob's martingale inequality and some discussions for $\sum_{i<j< n}\U_i^\top \U_j$ before, we have  
\begin{equation*}
    \begin{aligned}
        K_{2,t,1}\leq &\pr\left(\vartheta\sup_{1<k\leq n}\vert\sum_{i<j<k}\U_i^\top \U_j\vert/\{n p^{-1}\sqrt{2\text{tr}(\R^2)}\}\geq \varsigma/2\right)\\
        \leq & \frac{4\varphi^2}{\varsigma^2}\E\left(\big[\vert\sum_{i<j< n}\U_i^\top \U_j\vert/\{n p^{-1}\sqrt{2\text{tr}(\R^2)}\}\big]^2\right)\leq \frac{C\varphi^2}{\varsigma^2}=\vartheta,
    \end{aligned}
\end{equation*}
where $\varphi=\varsigma\vartheta^{1/2}/C$, $C$ is a constant and do not depends on $\varsigma$ and $\vartheta$.

For $K_{2,t,2}$, by the Theorem 8.4 of \cite{billingsley+1968}, it reduces to check the following condition: for any $\varsigma>0$, there exists a $\vartheta>1$ and an integer $n_0$ such that for all $k$
\begin{equation*}
    \pr\left(\max_{m\leq n}\vert\sum_{1<i<j<k+m}\U_i^\top \U_j-\sum_{1<i<j<k}\U_i^\top \U_j\vert/\{n p^{-1}\sqrt{2\text{tr}(\R^2)}\}\geq \vartheta\right)\leq \frac{\varsigma}{\vartheta^2},n\geq n_0.
\end{equation*}
Since $\U_i$'s are $i.i.d$, it can further reduces to
\begin{equation*}
    \begin{aligned}
        \pr\left(\max_{m\leq n}\vert\sum_{1<i<j<m}\U_i^\top \U_j\vert/\{n p^{-1}\sqrt{2\text{tr}(\R^2)}\}\geq \vartheta\right)\leq \frac{\varsigma}{\vartheta^2},n\geq n_0.
    \end{aligned}
\end{equation*}
By Doob's martingale inequality, the result is as follows. From the Theorem 15.5 of \cite{billingsley+1968}, we see the limiting process $V(t)$ is continuous. The proof for $S_{np}$ is completed.

\subsubsection{The limit distribution for $S_{np}^\dagger$ under $H_0$}

We next consider the limit distribution for $S_{np}^\dagger$ under $H_0$. By Equation \eqref{eq:Snp05_ram}, we see that
 \begin{equation*}
     \begin{aligned}
         S_{np}^\dagger =& \max_{\lambda_n\leq k\leq n-\lambda_n} \frac{np}{k(n-k)}(\S_k-\frac{k}{n}\S_n)^\top(\S_k-\frac{k}{n}\S_n)-p+o_p(1)\\
         :=& \max_{\lambda_n\leq k\leq n-\lambda_n} H_{np}(k)+o_p(1),
     \end{aligned}
 \end{equation*}
 where
 \begin{equation*}
     \begin{aligned}
         H_{np}(k) =& \frac{np}{k(n-k)}(\S_k-\frac{k}{n}\S_n)^\top(\S_k-\frac{k}{n}\S_n)\\
         =& \frac{n}{k(n-k)}\sum_{i=1}^p \left\{ p(S_{ik}-\frac{k}{n}S_{in})^2-\frac{k(n-k)}{n}\right\}.
     \end{aligned}
 \end{equation*}
 
 We first give the approximation for the process, 
 \begin{equation}\label{eq:Z_pk}
 \begin{aligned}
      Z_p(k)=&\frac{p}{\sqrt{2\text{tr}(\R^2)}}(\S_k^\top \S_k-k)
      = \frac{2p}{\sqrt{2\text{tr}(\R^2)}}\sum_{i<j}\U_i^\top \U_j.\\
        =& \frac{2p}{\sqrt{2\text{tr}(\R^2)}}\sum_{i<j}U(\W_i)^\top \R U(\W_j)+O_p(k^{1/2}),\\
        :=&\tilde{Z}_p(k)+O_p(k^{1/2})
 \end{aligned}
 \end{equation}
where the last equation holds by taking the same procedure as in the proof of theorem 2 in \cite{feng2016}.
 The main step is to show that:
 \begin{lemma}\label{lemma:approxi_wiener}
     Suppose Assumptions \ref{ass:max1}-\ref{ass:max3}, \ref{ass:sum_R4} hold, then for each $n$ and $p$ we can define Wiener process $\{W_{n,p}(k),1\leq k\leq n\}$ such that,
     \begin{equation*}
         \max_{1\leq k\leq n}\vert Z_p(k) - W_{np}(2k^2) \vert/k^{3/4+\omega_1}=O_p(1).
     \end{equation*}
 \end{lemma}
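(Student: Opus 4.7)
The strategy is to reduce to a scalar martingale and then invoke a Skorokhod embedding. The starting point is the $U$-statistic expansion derived in equation~(12.4),
\[
Z_p(k) = \tilde{Z}_p(k) + O_p(k^{1/2}), \qquad \tilde{Z}_p(k):=\frac{2p}{\sqrt{2\tr(\R^2)}}\sum_{1\leq i<j\leq k} U(\W_i)^\top \R\, U(\W_j).
\]
First, upgrade the pointwise remainder in (12.4) to a uniform statement $\max_{1\leq k\leq n}|Z_p(k)-\tilde{Z}_p(k)|=O_p(n^{1/2})$ by applying a maximal inequality to the partial-sum expansions of the approximation residuals; since $n^{1/2}\lesssim k^{3/4+\omega_1}$ for $k\gtrsim n^{2/(3+4\omega_1)}$, this error is absorbed into the target rate (very small $k$ being handled by a deterministic bound on $|Z_p(k)|$).

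Next, recognize $\tilde{Z}_p(k)=\sum_{j=2}^k\tilde{\xi}_j$, where $\tilde{\xi}_j=\frac{2p}{\sqrt{2\tr(\R^2)}}\,\boldsymbol{b}_{j-1}^\top\R\,U(\W_j)$ and $\boldsymbol{b}_{j-1}=\sum_{i<j}U(\W_i)$, as a mean-zero square-integrable martingale difference sequence with respect to $\mathcal{F}_j=\sigma(\W_1,\ldots,\W_j)$. By the Skorokhod embedding for martingale difference sequences, on a possibly enlarged probability space there exist a standard Brownian motion $B=:W_{n,p}$ and non-decreasing stopping times $\{T_k\}_{k\geq 0}$ with $T_0=0$ such that $\tilde{Z}_p(k)\stackrel{\mathrm{a.s.}}{=}B(T_k)$, $\E[T_k-T_{k-1}\mid\mathcal{F}_{k-1}]=\E[\tilde{\xi}_k^2\mid\mathcal{F}_{k-1}]$, and $\E[(T_k-T_{k-1})^2\mid\mathcal{F}_{k-1}]\leq C\E[\tilde{\xi}_k^4\mid\mathcal{F}_{k-1}]$. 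The problem then reduces to the uniform clock estimate
\[
\max_{1\leq k\leq n}|T_k-2k^2|=O_p\bigl(k^{3/2+2\omega_1}\bigr),
\]
after which the classical modulus of continuity of $B$ on $[0,Cn^2]$ yields $|B(T_k)-B(2k^2)|=O_p(|T_k-2k^2|^{1/2}\sqrt{\log n})=O_p(k^{3/4+\omega_1})$, the logarithmic factor being absorbed into $k^{\omega_1}$ as soon as $k\gtrsim(\log n)^{1/(2\omega_1)}$.

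To bound $T_k-2k^2$, decompose $T_k-2k^2=(T_k-V_k)+(V_k-\E V_k)+(\E V_k-2k^2)$ with $V_k=\sum_{j=2}^k \E[\tilde{\xi}_j^2\mid\mathcal{F}_{j-1}]$. For the deterministic drift, $\E[\tilde{\xi}_j^2\mid\mathcal{F}_{j-1}]$ is a quadratic form in $\boldsymbol{b}_{j-1}$ whose coefficient matrix equals $p^{-1}\R^2$ up to $O(p^{-5/2})$ corrections by Lemma~\ref{lemma:UWi1}; taking expectations and summing produces the required asymptotic match with the variance of $W_{n,p}(2k^2)$ under Assumption~\ref{ass:max3}. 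The fluctuation $V_k-\E V_k$ is a centered sum of quadratic forms in partial sums of i.i.d.~spatial signs whose coordinates are sub-exponential by Lemma~\ref{LemmaA4}; a Hanson--Wright-type concentration bound combined with Assumption~\ref{ass:sum_R4}(i)--(ii) calibrates this piece to the rate $k^{3/2+2\omega_1}$. Finally, $T_k-V_k$ is itself a martingale, and Doob's inequality applied with the fourth-moment bound $\E[\tilde{\xi}_j^4]\lesssim j^2$ supplied by Lemma~\ref{lemma1_like_2016}(ii) controls it at the same rate.

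The main obstacle is the uniform-in-$k$ control of $|T_k-2k^2|$ with the precise exponent $3/2+2\omega_1$. Matching this exponent forces the use of both parts of Assumption~\ref{ass:sum_R4}---in particular part~(ii), whose exponential factor $\exp\{-p/128\lambda_{\max}^2(\R)\}$ is what dampens the tails of the quadratic forms sharply enough. Without this stronger spectral control, only the cruder rate compatible with Assumption~\ref{ass:sum_R4_0} would be attainable, which is why Lemma~\ref{lemma:approxi_wiener} inherits $\omega_1$ directly from Assumption~\ref{ass:sum_R4}.
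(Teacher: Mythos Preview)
Your overall architecture---reduce to $\tilde Z_p(k)$, apply the Skorokhod embedding for martingales, control $|T_k-2k^2|$, then transfer via the modulus of continuity of $W_{n,p}$---is exactly the paper's route. The decomposition $T_k-2k^2=(T_k-V_k)+(V_k-\E V_k)+(\E V_k-2k^2)$ matches the paper's split into $\sum\{\tau_l-\E(\tau_l\mid\mathcal G_{l-1})\}$ and $\sum\E(\tau_l\mid\mathcal G_{l-1})$; the latter is then expanded into four explicit pieces $v_{1,k},\ldots,v_{4,k}$ rather than treated by a Hanson--Wright bound.

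There are, however, three concrete inaccuracies in your execution. First, plain Doob does not give the $k$-dependent weight $k^{-(3/2+\omega_1)}$ you need on the martingale $T_k-V_k$; the paper uses the H\'ajek--R\'enyi inequality, which is precisely a weighted maximal inequality for martingales and is what actually delivers $\max_k k^{-(3/2+\omega_1)}|T_k-V_k|=O_p(1)$ from $\E v_l^4\lesssim l^2$. Second, your modulus-of-continuity step has an arithmetic slip: with $|T_k-2k^2|=O_p(k^{3/2+2\omega_1})$ you get $|T_k-2k^2|^{1/2}\sqrt{\log n}=O_p(k^{3/4+\omega_1}\sqrt{\log n})$, leaving no headroom to absorb $\sqrt{\log n}$. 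The paper instead establishes the sharper clock rate $k^{3/2+\omega_1}$ and then, rather than invoking a global L\'evy modulus, bounds $\pr\{k^{-(3/4+\omega_1)}|W(T_k)-W(2k^2)|>x\}$ directly by $\sum_k\exp(-c\,x^2 k^{\omega_1})$, which is summable for every $k\ge1$ and avoids any small-$k$ patching. Third, your attribution of the rate to Assumption~\ref{ass:sum_R4}(ii) is mistaken: in the paper's proof the $v_{3,k},v_{4,k}$ terms are controlled via H\'ajek--R\'enyi together with $\tr(\R^4)/\tr^2(\R^2)=O(n^{-1+2\omega_1})$, which is part~(i); the exponential factor in~(ii) plays no role in this lemma. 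Part~(iii) enters only through the deterministic drift $\E V_k-2k^2$, where the $O(p^{-1/2})$ error in $\tr\{(\R\mathbf\Sigma_u)^2\}/\{p^{-2}\tr(\R^2)\}$ must be beaten by $k^{1/2+\omega_1}$ at $k=n$.
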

 \begin{proof}The proof of Lemma \ref{lemma:approxi_wiener} is based on the Skorokhod representation of martingales\citep{2014-hall+Heyde-martingale}. By Equation \eqref{eq:Z_pk}, it suffices to show $ \max_{1\leq k\leq n}\vert \tilde{Z}_p(k) - W_{np}(2k^2) \vert/k^{3/4+\omega_1}=O_p(1)$. Rewrite $\tilde{Z}_p(k)$ as
     \begin{equation*}
         \begin{aligned}
             \tilde{Z}_p(k)=\sum_{j=1}^k v_j,~v_j={\frac{2p}{\sqrt{2\text{tr}(\R^2)}}\{\R^{1/2}U(\W_j)\}^\top \S^U_{j-1}},
         \end{aligned}
     \end{equation*}
where $\S^U_j=\R^{1/2}\sum_{i=1}^jU(\W_i)$. Let $\mathcal{F}_k=\sigma(\U_j,1\leq j\leq k)$, By Assumption \ref{ass:max1}, $\E(Z_p(k)\mid \mathcal{F}_{k-1})=Z_p(k-1)$, so $\{Z_p(k),\mathcal{F}_k\}$ is a martingale. By the Skorokhod representation theorem for martingales\citep{2014-hall+Heyde-martingale}, we can define a Wiener process $W$ and random variables $\tau_1,\tau_2,\ldots$ satisfying Equations \eqref{eqs:prop_Skorokhod}(i)--(iv) below. Let $w_j=W(T_j)-W(T_{j-1})$ with $T_j= \sum_{l=1}^j\tau_l$, $T_0=0$ and $\mathcal{G}_k=\sigma(w_j,1\leq j\leq k)$. The Wiener process defined by the Skorokhod construction has the following properties:
\begin{equation}\label{eqs:prop_Skorokhod}
    \begin{aligned}
        &(i):\ \left\{\sum_{j=1}^k v_j,1\leq k\leq n\right\}\stackrel{d}{=} \left\{\sum_{j=1}^k w_j,1\leq k\leq n\right\},\\
        &(ii):\ T_k\in \mathcal{G}_k,\\
        &(iii):\ \E(\tau_k\mid \mathcal{G}_{k-1})=\E(w_k^2\mid \mathcal{G}_{k-1}),a.s.,\\
        &(iv):\ \E(\tau_k^r\mid \mathcal{G}_{k-1})\leq C_r^* \E(\vert w_k\vert^{2r}\mid \mathcal{G}_{k-1})\text{ for any } r\geq 1, \text{ where }C^*_r \text{ only depends on }r.
    \end{aligned}
\end{equation}
Due to the modulus of continuity of $W$\citep{2014-csorgo+Revese-strong}, it's enough to show the $T_k$ is approximately $2k^2$. We start with the decomposition
\begin{equation*}
    \begin{aligned}
        T_k=\sum_{l=1}^k \left\{\tau_l-\E (\tau_l\mid \mathcal{G}_{l-1})\right\}+\E (\tau_l\mid \mathcal{G}_{l-1}).
    \end{aligned}
\end{equation*}
It follows from Equation  \eqref{eqs:prop_Skorokhod}(ii) that $\{\sum_{l=1}^k \left\{\tau_l-\E (\tau_l\mid \mathcal{G}_{l-1})\right\},1\leq k\leq n\}$ is a martingale. On account of Equation \eqref{eqs:prop_Skorokhod}(i) and (iv), we have $\E \{\tau_l-\E (\tau_l\mid \mathcal{G}_{l-1})\}^2\leq C_1\E (v_l^4)$.
\begin{small}
\begin{equation*}
    \begin{aligned}
        &~~~~\E (v_j^4)\\
        & = \frac{4p^4}{\tr^2(\R^2)}\left\{\sum_{l=1}^{j-1}\E[\{U(\W_j)^\top \R U(\W_l)\}^4]+6\sum_{l_1\not= l_2}\E[\{U(\W_j)^\top\R U(\W_{l_1})\}^2 \{U(\W_j)^\top \R U(\W_{l_2})\}^2]\right\}\\
        &=\frac{4p^4}{\tr^2(\R^2)}\left\{(j-1)\E[\{U(\W_1)^\top \R U(\W_2)\}^4]+3j(j-1)\E[\{U(\W_1)^\top \R U(\W_2)\}^2 \{U(\W_1)^\top\R  U(\W_3)\}^2]\right\}\\
        &\leq   \frac{28(j-1)^2p^4}{\tr^2(\R^2)}\E[\{U(\W_1)^\top\R  U(\W_2)\}^4].
    \end{aligned}
\end{equation*}
\end{small}

By lemma \ref{lemma1_like_2016}, we have, $\E[ \{U(\W_1)^\top\R U(\W_2)\}^4]=O\{p^{-4}\tr^2(\R^2)\}$. So $\E (v_j^4)\leq C_2 (j-1)^2$. The H{\'a}jek-R{\'e}nyi inequality for martingales\citep{chow2012probability-martingale} yields for all $x>0$,

\begin{equation*}
    \begin{aligned}
        \pr\left\{ \max_{1\leq k\leq n}k^{-(3/2+\omega_1)}\left\vert \sum_{l=1}^k\{\tau_l-\E(\tau_l\mid \mathcal{G}_{l-1})\} \right\vert \geq x\right\}\leq \frac{C_1}{x^2}\sum_{l=1}^n\frac{\E (v_l^4)}{l^{3+2\omega_1}}\leq \frac{C}{x^2},
    \end{aligned}
\end{equation*}
which means that, 
\begin{equation}\label{eq:lemmaA_42}
    \max_{1\leq k\leq n}k^{-(3/2+\omega_1)}\left\vert \sum_{l=1}^k(\tau_l-\E\{\tau_l\mid \mathcal{G}_{l-1})\} \right\vert=O_p(1).
\end{equation}

From Equation \eqref{eqs:prop_Skorokhod} (i) and (iii), we have $\{\E(\tau_l\mid \mathcal{G}_{l-1},1\leq l\leq n)\}\stackrel{d}{=}\{\E(v_l^2\mid \mathcal{F}_{l-1},1\leq l\leq n)\}$. Hence, 
\begin{equation*}
    \begin{aligned}
        &\sum_{l=1}^k\E(\tau_l\mid \mathcal{G}_{l-1})=\sum_{l=1}^k\E(v_l^2\mid \mathcal{F}_{l-1})\\
       =&\sum_{l=1}^k \frac{4p^2}{2\tr(\R^2)}\left\{\sum_{i=1}^{l-1}U(\W_i)^\top \R\Sigma_u\R U(\W_i)+\sum_{i\not= j}U(\W_i)^\top \R\Sigma_u\R U(\W_j)\right\}\\
       =& \frac{4p^2}{2\tr(\R^2)}\left\{\sum_{i=1}^{k-1}(k-i)U(\W_i)^\top \R\Sigma_u\R U(\W_i)+2\sum_{i=1}^{k-2}\sum_{j=i+1}^{k-1}(k-j)U(\W_i)^\top \R\Sigma_u\R U(\W_j)\right\}\\
       :=& 2k v_{1,k}-2v_{2,k}+4kv_{3,k}-4v_{4,k},
    \end{aligned}
\end{equation*}
where $\mathbf\Sigma_u=\E \{U(\W_i)U(\W_i)^\top\}$ and
\begin{equation*}
    \begin{aligned}
        v_{1,k}&=\frac{p^2}{\tr(\R^2)}\sum_{i=1}^{k-1}U(\W_i)^\top \R\mathbf\Sigma_u\R U(\W_i),\ v_{2,k}=\frac{p^2}{\tr(\R^2)}\sum_{i=1}^{k-1}i U(\W_i)^\top \R\mathbf\Sigma_u\R U(\W_i)\\
        v_{3,k}&=\frac{p^2}{\tr(\R^2)}\sum_{i=1}^{k-2}\sum_{j=i+1}^{k-1} U(\W_i)^\top \R\mathbf\Sigma_u\R U(\W_j),\  v_{4,k}=\frac{p^2}{\tr(\R^2)}\sum_{i=1}^{k-2}\sum_{j=i+1}^{k-1}j U(\W_i)^\top \R\mathbf\Sigma_u\R U(\W_j).\\
    \end{aligned}
\end{equation*}
 We note that $\E \{U(\W_i)^\top \R\mathbf\Sigma_u\R U(\W_i)\}= \tr\{(\R\mathbf\Sigma_u)^2\}$ and by Lemma \ref{lemma:UWi1}

\begin{equation*}
    \begin{aligned}
        \tr\{(\R\mathbf\Sigma_u)^2\}=&\sum_{1\leq i,j\leq p}p^{-2}\sigma_{ij}^2+\sum_{1\leq i,j,\leq p}(\sum_{l=1}^p \sigma_{il})^2 O(p^{-5})\\
        &+2\sum_{1\leq i,j\leq p} p^{-1} \sigma_{ij}\sum_{l=1}^p O(p^{-5/2})\\
        =&p^{-2}\tr(\R^2)+O(p^{-1})p^{-2}\tr(\R^2)+2O(p^{-1/2})p^{-2}\tr(\R^2)\\
        =&p^{-2}\tr(\R^2)\{1+O(p^{-1/2})\}=p^{-2}\tr(\R^2)\{1+o(1)\},
    \end{aligned}
\end{equation*}
where 
\begin{equation*}
    \begin{aligned}
        \sum_{1\leq i,j,\leq p}(\sum_{l=1}^p \sigma_{il})^2 O(p^{-5})=&O(p^{-4})\sum_{i=1}^p(\sum_{l=1}^p \sigma_{il})^2\leq O(p^{-3})\sum_{i=1}^p\sum_{l=1}^p \sigma_{il}^2,
    \end{aligned}
\end{equation*}
and
\begin{equation*}
    \sum_{1\leq i,j\leq p} p^{-1} \sigma_{ij}\sum_{l=1}^p O(p^{-5/2})\leq \sqrt{\sum_{1\leq i,j\leq p}p^{-2}\sigma_{ij}^2}\sqrt{\sum_{1\leq i,j\leq p}(\sum_{l=1}^p \sigma_{il})^2 O(p^{-5})}=O(p^{-5/2})\sum_{i=1}^p\sum_{l=1}^p \sigma_{il}^2.
\end{equation*}

Similarly, $\tr\{(\R\mathbf\Sigma_u)^4\}=p^{-4}\tr(\R^4)\{1+o(1)\}$. For $v_{1,k}$, $\E (v_{1,k})=\{p^{-2}\tr(\R^2)\}^{-1}\tr\{(\R\mathbf\Sigma_u)^2\}=1+O(p^{-1/2})$ and 
\begin{equation*}
\begin{aligned}
     &\pr\left\{ \max_{1\leq k\leq n}k^{-(1/2+\omega_1)}\vert v_{1,k}-(k-1)\vert>x\right\}\\
     &\leq  \pr\left\{ \max_{1\leq k\leq n}k^{-(1/2+\omega_1)}\vert v_{1,k}-(k-1)\E (v_{1,k})\vert>x\right\}+\pr\left\{ n^{1/2-\omega_1} O(p^{-1/2})>x\right\}.\\
\end{aligned}
\end{equation*}
 Since ${v}_{1,k}$ is sum of independent random variables, by  H{\'a}jek-- R{\'e}nyi inequality, 
\begin{equation}\label{eq:v1k}
    \begin{aligned}
        &\pr\left\{ \max_{1\leq k\leq n}k^{-(1/2+\omega_1)}\vert v_{1,k}-(k-1)\E (v_{1,k})\vert>x\right\}\\
        &\leq \frac{1}{x^2p^{-4}\tr^2(\R^2)}\sum_{k=1}^n\frac{\var \{U(\W_k)^\top \R\mathbf\Sigma_u\R U(\W_k)\}}{k^{1+2\omega_1}}\lesssim\frac{1}{x^2}\sum_{k=1}^n\frac{1}{k^{1+2\omega_1}}.
    \end{aligned}
\end{equation}
So for $v_{1,k}$, we have $\pr\left\{ \max_{1\leq k\leq n} k^{-(1/2+\omega_1)}\vert v_{1,k}-(k-1)\vert>x\right\}\lesssim x^{-2}\sum_{k=1}^n k^{-1-2\omega_1}$as $x>O(n^{1/2-\omega_1}p^{-1/2})$.

Similarly, for $v_{k,2}$, we have
\begin{equation}\label{eq:v2k}
    \begin{aligned}
        \pr(\max_{1\leq k\leq n} k^{-(3/2+\omega_1)}\vert v_{2,k}-k(k-1)/2 \vert>x)\lesssim \frac{1}{x^2p^{-4}\tr^2(\R^2)}\sum_{k=1}^n\frac{k^2 p^{-4}\tr(\R^4)}{k^{3+2\omega_1}}\lesssim\frac{1}{x^2}\sum_{k=1}^n\frac{1}{k^{1+2\omega_1}},
    \end{aligned}
\end{equation}
as $x>O(n^{1/2-\omega_1}p^{-1/2})$. 
For $v_{k,3}$ and $v_{k,4}$, by  H{\'a}jek--R{\'e}nyi inequality for martingales and Assumption \ref{ass:sum_R4}, 
\begin{equation}\label{eq:v34k}
    \begin{aligned}
        &\pr(\max_{1\leq k\leq n} k^{-(1/2+\omega_1)}\vert v_{3,k} \vert>x)\lesssim \frac{1}{x^2\tr^2(\R^2)}\sum_{k=1}^n\frac{(k-1)\tr(\R^4)}{k^{1+2\omega_1}}\lesssim \frac{1}{x^2} \frac{\tr(\R^4)n^{1-2\omega_1}}{\tr^2(\R^2)}=O(\frac{1}{x^2}),\\
        &\pr(\max_{1\leq k\leq n} k^{-(3/2+\omega_1)}\vert v_{4,k} \vert>x)\lesssim \frac{1}{x^2\tr^2(\R^2)}\sum_{k=1}^n\frac{k^2(k-1)\tr(\R^4)}{k^{3+2\omega_1}}=O(\frac{1}{x^2}).
    \end{aligned}
\end{equation}

Combing Equation \eqref{eq:v1k} and \eqref{eq:v2k} and Assumption \ref{ass:sum_R4}, we have
\begin{equation}\label{eq:lemmaA_43}
    \begin{aligned}
        \max_{1\leq k\leq n} k^{-(3/2+\omega_1)}\left\vert \sum_{l=1}^k\{\E(\tau_l\mid \mathcal{G}_{l-1})-2(l-1)\} \right\vert=O_p(1).
    \end{aligned}
\end{equation}

Due to the modulus of cotinuity of $W$, we get
\begin{equation}\label{eq:Tk_2k2}
    \max_{1\leq k \leq n} k^{-(3/2+\omega_1)} \vert T_k-2k^2\vert =O_p(1),
\end{equation}
by putting together Equation \eqref{eq:lemmaA_42} and \eqref{eq:lemmaA_43}. Let $G(C^*)$ be the event defined by 
\begin{equation*}
    G(C^*)=\{\omega: \vert T_k-2k^2 \vert\leq C^* k^{3/2+\omega_1}\text{ for all }\ 1\leq k\leq n\}.
\end{equation*}
It follows from Equation \eqref{eq:Tk_2k2} that $\lim_{C^*\rightarrow\infty} \pr(G(C^*))=1$. By the Markov property and the scale transformation of $W$, we have
\begin{equation*}
    \begin{aligned}
        &\pr\left\{\max_{1\leq k\leq n}k^{-(3/4+\omega_1)}\left\vert W(T_k)-W(2k^2)  \right\vert>x\text{ and }G(C^*)\right\}\\
        \leq & \pr\left\{ \max_{1\leq k\leq n} k^{-(3/4+\omega_1)} \sup_{\vert h\vert \leq C^* k^{3/2+\omega_1}}\left\vert W(2k^2+h)-W(2k^2)  \right\vert>x \right\}\\
        \leq & 2\sum_{k=0}^\infty \pr\left\{ \sup_{0\leq h\leq C^* k^{3/2+\omega_1}} \left\vert W(2k^2+h)-W(2k^2)  \right\vert>x k^{3/4+\omega_1}   \right\}\\
        =& 2\sum_{k=0}^\infty \pr\left\{ \sup_{0\leq h\leq C^* k^{3/2+\omega_1}} \left\vert W(h)  \right\vert>x k^{3/4+\omega_1}   \right\}\\
        =& 2\sum_{k=0}^\infty \pr\left\{  (C^* k^{3/2+\omega_1})^{1/2}\sup_{0<t<1} \left\vert W(t)  \right\vert>x k^{3/4+\omega_1}   \right\}\\
        =& 2\sum_{k=0}^\infty \pr\left\{ \sup_{0<t<1} \left\vert W(t)  \right\vert>x k^{\omega_1/2} (C^*)^{-1/2}  \right\}\\
        \leq & C\sum_{k=1}^\infty \exp\left( -\frac{x^2k^{\omega_1}}{3C^*}\right)\rightarrow 0,
    \end{aligned}
\end{equation*}
as $x\rightarrow\infty$, where $C$ is a constant and in the last step we used Lemma 1.2.1 of \cite{2014-csorgo+Revese-strong}. The proof of Lemma \ref{lemma:approxi_wiener} is completed. 
 \end{proof}

 We decompose the $H_{np}(k)/\sqrt{2\tr(\R^2)}$ as,
 \begin{equation*}
     \frac{H_{np}(k)}{\sqrt{2\tr(\R^2)}}= \frac{n}{k(n-k)}Z_{np}(k)-H_{np}^{(1)}(k)+H_{np}^{(2)}(k),
 \end{equation*}
 where
 \begin{equation*}
     H_{np}^{(1)}(k)=\frac{2p}{(n-k)\sqrt{2\tr(\R^2)}}(\S_k^\top \S_n-k), H_{np}^{(2)}(k)=\frac{kp}{n(n-k)\sqrt{2\tr(\R^2)}}(\S_n^\top \S_n-n).
 \end{equation*}
 By the definition of $H_{np}^{(2)}(k)$, we have
 \begin{equation*}
     \max_{1\leq k\leq n/2}\frac{n}{k}\vert H_{np}^{(2)}(k) \vert\leq \frac{2p}{n\sqrt{2\tr(\R^2)}}\sum_{1\leq i<j\leq n} \U_i^\top \U_j.
 \end{equation*}
 By the proof of lemma 4 in \cite{liu+feng+wang+2024}, we see that, $2p/n\sum_{1\leq i<j\leq n}\U_i^\top \U_j/\sqrt{2\tr(\R^2)}\stackrel{d}{\rightarrow}N(0,1)$, thus 
 \begin{equation}\label{eq:H_npk^2}
     \max_{1\leq k\leq n/2}\frac{n}{k}\vert H_{np}^{(2)}(k)\vert=O_p(1).
 \end{equation}

By the definition of $H_{np}^{(1)}(k)$, for any $M\leq n/2$, we have
\begin{equation*}
    \max_{1\leq k\leq M} \frac{p}{n\sqrt{2\tr(\R^2)}}\sum_{j=1}^k\sum_{1\leq l\not= j\leq n}\U_j^\top \U_l.
\end{equation*}

By Rosenthal inequality\citep{petrov1995limit}, for any $k_1<k_2$, 
\begin{equation*}
    \begin{aligned}
        &\E \left\vert\sum_{j=k_1}^{k_2}\sum_{1\leq l\not= j\leq n} \U_j^\top \U_l\right\vert^4\\
    \lesssim & \sum_{j=k_1}^{k_2}\sum_{1\leq l\not= j\leq n} \E \left\vert \U_j^\top \U_l\right\vert^4+ \left(\sum_{j=k_1}^{k_2}\sum_{1\leq l\not= j\leq n} \E \left\vert \U_j^\top \U_l\right\vert^2\right)^2\\
    =& (k_2-k_1)(n-1)\frac{\tr^2(\R^2)}{p^4}+(k_2-k_1)^2(n-1)^2\left\{\frac{\tr(\R^2)}{p^2}\right\}^2\\
    \lesssim &\frac{(k_2-k_1)^2n^2\tr^2(\R^2)}{p^4}.
    \end{aligned}
\end{equation*}

 By the maximal inequality of \cite{moricz1982moment}, for all $M\geq 1$,
 \begin{equation}\label{eq:H_npk^1}
     \E\left( \max_{1\leq k\leq M} \frac{p}{n\sqrt{2\tr(\R^2)}}\sum_{j=1}^k\sum_{1\leq l\not= j\leq n}\U_j^\top \U_l \right)^4\lesssim \frac{p^4}{n^4\tr^2(\R^2)}\frac{M^2n^2\tr^2(\R^2)}{p^4}=\frac{M^2}{n^2}
 \end{equation}
 
 Let $M=\lambda_n,n/\lambda_n,n/2$, where $\lambda_n\sim n^{\lambda}$, we get
 \begin{equation}
     \begin{aligned}
          \max_{1\leq k\leq \lambda_n}\vert H_{n p}^{(1)}(k)\vert &=O_p(n^{-(1-\lambda)/2}),\\
         \max_{1\leq k\leq n/\lambda_n}\vert H_{n p}^{(1)}(k)\vert &=O_p(n^{-\lambda/2}),\\
         \max_{1\leq k\leq  n/2}\vert H_{n p}^{(1)}(k)\vert &=O_p(1).\\
     \end{aligned}
 \end{equation}
 
 By Equation \eqref{eq:H_npk^2}, \eqref{eq:H_npk^1} and Lemma \ref{lemma:approxi_wiener}, we have
\begin{equation}\label{eq:H_npk_Wnp2k2}
     \max_{1\leq k\leq n/2}\vert H_{np}(k)-\frac{n}{k(n-k)}W_{np}(2k^2)\vert =O_p(1).
 \end{equation}
 
 According to the law of iterated algorithm, 
 \begin{equation}\label{eq:limsup}
     \limsup_{k\rightarrow\infty} \{4k^2\log\log(2k^2)\}^{-1/2} \vert W(2k^2)\vert=1,\text{ a.s.,}
 \end{equation}
 where $W$ stands for a Wiener process. By the Darling-Erd{\"o}s law\citep{csorgo1997limit}, we have
\begin{equation}\label{eq:W2k2_k_dt_n2}
     \max_{n/\lambda_n\leq k\leq n/2} \vert W(2k^2)\vert/k=O_p\{(\log\log\log n)^{1/2}\},
 \end{equation}
 and
\begin{equation}\label{eq:W2k2_k_bt_dt}
     (\log\log n)^{-1/2}\max_{\lambda_n\leq k\leq n/\lambda_n}\vert W(2k^2)\vert/k\stackrel{p}{\rightarrow}1.
 \end{equation}
 
Since the distribution of $W_{np}$ does not depend on $n$ and $p$, we get
\begin{equation}\label{eq:comp_lambdan_le}
    \max_{1\leq k\leq \lambda_n}\vert H_{np}(k)\vert=O_p\{(\log\log\log n)^{1/2}\},
\end{equation}
by Equation \eqref{eq:H_npk_Wnp2k2} and \eqref{eq:limsup}.

Combining Equation \eqref{eq:H_npk_Wnp2k2}, \eqref{eq:W2k2_k_dt_n2} and \eqref{eq:W2k2_k_bt_dt}, we get,
\begin{equation}\label{eq:comp_lambdan_ri}
    \max_{n/\lambda_n\leq k\leq n/2} \vert H_{np}(k)\vert =O_p\{(\log\log\log n)^{1/2}\},
\end{equation}
and
\begin{equation}\label{eq:comp_lambdan_mid}
    (\log\log n)^{-1/2}\max_{\lambda_n\leq k\leq n/\lambda_n}\vert H_{np}(k)\vert \stackrel{p}{\rightarrow}1.
\end{equation}

Let $\xi_{np}$ denotes the location of the maximum of $H_{np}(k)$ on $[1,n/2]$, then by Equation \eqref{eq:comp_lambdan_le}-\eqref{eq:comp_lambdan_mid}, we have
\begin{equation*}
    \pr(\lambda_n\leq\xi_{np}\leq n/\lambda_n)\rightarrow 1,
\end{equation*}
as $\min(n,p)\rightarrow\infty$, i.e.
\begin{equation}\label{eq:appro_Hnpk_all_bd}
    \pr\left\{ \max_{1\leq k\leq n/2}\vert H_{np}(k)\vert=\max_{\lambda_n\leq k\leq n/\lambda_n}\vert H_{np}(k) \vert \right\}\rightarrow 1.
\end{equation}

By Equation \eqref{eq:H_npk^2} and \eqref{eq:H_npk^1},
\begin{equation}\label{eq:appro_Hnpc_Znpk}
    \max_{\lambda_n\leq k\leq n/\lambda_n} \left\vert H_{np}(k)-k^{-1}Z_{np}(k) \right\vert=O_p(n^{-\lambda/2}).
\end{equation}

Combining Equation \eqref{eq:appro_Hnpk_all_bd} and \eqref{eq:appro_Hnpc_Znpk}, we get
\begin{equation}
    \max_{1\leq k\leq n/2}\vert H_{np}(k)\vert =\max_{\lambda_n\leq k\leq n/\lambda_n}\left\vert k^{-1}\frac{p}{\sqrt{2\tr(\R^2)}}\left(\S_k^\top \S_k-k\right)\right\vert +O_p(n^{-\lambda/2}).
\end{equation}

By symmetric,
\begin{equation}
\begin{aligned}
     &\max_{n/2\leq k\leq n}\vert H_{np}(k)\vert\\
     =&\max_{n-n/\lambda_n\leq k\leq n-\lambda_n}\left\vert (n-k)^{-1}\frac{p}{\sqrt{2\tr(\R^2)}}\left\{(\S_n-\S_k)^\top (\S_n-\S_k)-(n-k)\right\}\right\vert +O_p(n^{-\lambda/2}).
     \end{aligned}
\end{equation}

By the above two equations, it is enough to consider the limit distribution of 
\begin{equation*}
    \begin{aligned} 
   & Q_{np}^{(1)}=\max\left\{ \max_{\lambda_n\leq k\leq n/\lambda_n} \left\vert k^{-1}\frac{p}{\sqrt{2\tr(\R^2)}}\left(\S_k^\top \S_k-k\right)\right\vert , \right.\\
    &\left.\max_{n-n/\lambda_n\leq k\leq n-\lambda_n}\left\vert (n-k)^{-1}\frac{p}{\sqrt{2\tr(\R^2)}}\left\{(\S_n-\S_k)^\top (\S_n-\S_k)-(n-k)\right\}\right\vert \right\}.
        \end{aligned}
\end{equation*}

We notice that, $\{\S_k,1\leq k\leq n/2\}$ and $\{\S_n-\S_k,n/2<k\leq n\}$ are independent. By lemma \ref{lemma:approxi_wiener} we have, 
\begin{equation*}
Q_{np}^{(1)}\stackrel{d}{=}Q_{p}^{(2)}+O_p(n^{-\lambda(1/4-\omega_1)}),
\end{equation*}
where
\begin{equation*}
    Q_{p}^{(2)}=\max\left\{ \max_{\lambda_n,n-n/\lambda_n} \vert W^{(1)}(2k^2) \vert/k,\max_{\lambda_n,n-n/\lambda_n} \vert W^{(2)}(2k^2) \vert/k\right\},
\end{equation*}
where $W^{(1)}$ and $W^{(2)}$ are independent Wiener processes. By the same argument with minor modification in \citep{2013-chan+Horvath+Huskova-p955},  the conclusion follows.

\subsection{Proof of Theorem  \ref{thm:ind_H0}}
\subsubsection{For Gaussian type}
From Sections \ref{sec:proofMM}--\ref{sec:proofMS}, we verify
\begin{equation*}
    M_{n,p}=\max_{\lambda_n\leq k\leq n-\lambda_n}\Vert \C^{U}_{0}(k)\Vert_{\infty}+o_p(1),\ S_{n,p}=\max_{\lambda_n\leq k\leq n-\lambda_n}\Vert \C^{U}_{0}(k)\Vert^2/\sqrt{2\text{tr}(\R^2)}+o_p(1),
\end{equation*}
where $\C^U_0(k)=n^{-1/2}\zeta_1^{-1}\left(\S_{k}-k/n \S_{n}\right)$, $\S_k=\sum_{i=1}^k \U_i$.

We first investigate the asymptotic independence of $p^{1/2}\zeta_1\max_{1\leq k\leq n}\Vert \C^{U}_{0}(k)\Vert_{\infty}$ and $p^{1/2}\zeta_1$
$\max_{\lambda_n\leq k\leq n-\lambda_n}\Vert \C^{U}_{0}(k)\Vert^2$ if $\U_i\sim N(0,\R/p)$. We define $A_p=\{p^{1/2}\zeta_1\max_{\lambda_n\leq k\leq n-\lambda_n}\Vert \C^{U}_{0}(k)\Vert^2\leq \sqrt{2\text{tr}(\R^2)}x\}$ and $B_{j}:=B_j(y)=\{p^{1/2}\zeta_1\max_{\lambda_n\leq k\leq n-\lambda_n}\vert C^U_{0,j}(k)\vert>u_p\{\exp(-y)\}\}$, $j=1,\ldots,p$. Our goal is to prove that,
\begin{equation*}
    \begin{aligned}
    &\pr\left(p^{1/2}\zeta_1\max_{\lambda_n\leq k\leq n-\lambda_n}\Vert \C^{U}_{0}(k)\Vert^2\leq \sqrt{2\text{tr}(\R^2)}x,p^{1/2}\zeta_1\max_{\lambda_n\leq k\leq n-\lambda_n}\Vert \C^{U}_{0}(k)\Vert_{\infty}\leq u_p\{\exp(-x)\}\right)\\
    \rightarrow & F_V(x)\cdot \exp\{-\exp(-y)\},
    \end{aligned}
\end{equation*}
or equivalently,
\begin{equation*}
    \pr(\bigcup_{j=1}^p A_p B_j)\rightarrow F_V(x)\cdot \exp\{-\exp(-y)\}.
\end{equation*}

Let for each $d\geq 1$,
\begin{equation*}
    \zeta(p,d):=\sum_{1\leq j_1,<\cdots<j_d\leq p} \vert \pr(A_p B_{j_1}\cdots B_{j_d})-\pr(A_p)\pr(B_{j_1}\cdots B_{j_d})\vert,
\end{equation*}
and 
\begin{equation*}
    H(p,d):=\sum_{1\leq j_1,<\cdots<j_d\leq p}\pr(B_{j_1}\cdots B_{j_d}).
\end{equation*}

By the inclusion-exclusion principle, we see that, for any integer $k\geq 1$,
\begin{equation*}
    \begin{aligned}
         \pr(\bigcup_{j=1}^p A_p B_j)\leq &\sum_{1\leq j_1\leq p}\pr(A_p B_{j_1})-\sum_{1\leq j_1<j_2\leq p}\pr(A_p B_{j_1}B_{j_2})+\cdots\\
         &\qquad +\sum_{1\leq j_1<\cdots<j_{2k+1}\leq p}\pr(A_p B_{j_1}\cdots B_{j_{2k+1}}),
    \end{aligned}
\end{equation*}
and
\begin{equation*}
\begin{aligned}
    \pr(\bigcup_{j=1}^p B_j)\geq \sum_{1\leq j_1\leq p}&\pr(B_{j_1})-\sum_{1\leq j_1<j_3\leq p}\pr(B_{j_1}B_{j_2})+\cdots\\
    &-\sum_{1\leq j_1<\cdots<j_{2k}\leq p}\pr(B_{j_1}\cdots B_{j_{2k}}).
\end{aligned}
\end{equation*}

Then, we have
\begin{equation*}
    \begin{aligned}
        \pr(\bigcup_{j=1}^p A_p B_j)\leq &\pr(A_p)\left\{\sum_{1\leq j_1\leq p}\pr(B_{j_1})-\sum_{1\leq j_1<j_2\leq p}\pr(B_{j_1}B_{j_2})+\cdots \right.\\
      & \left.- \sum_{1\leq j_1<\cdots<j_{2k}}\pr(B_{j_1}\cdots B_{j_{2k}}) \right\}+\sum_{d=1}^{2k}\zeta(p,d)+H(p,2K+1)\\
       \leq &\pr(A_p)\pr(\bigcup_{j=1}^p B_j)+\sum_{d=1}^{2k}\zeta(p,d)+H(p,2k+1).
    \end{aligned}
\end{equation*}

By fixing $k$ and letting $p\rightarrow \infty$, and combining Lemma \ref{lemma_zetapd}, we obtain
\begin{equation*}
    \limsup_{p\rightarrow\infty }\pr(\bigcup_{j=1}^p A_p B_j)\leq F_V(x)[1-\exp \{-\exp(-y)\}]+\lim_{p\rightarrow\infty }H(p,2k+1).
\end{equation*}

According to the Equation (S.5) and (S.6) in \cite{wang2023}, and $p^{1/2 }\U_i\sim N(0,\R)$, we have $\min_{p\rightarrow \infty } H(p,d)=\frac{1}{d!}\exp(-d x/2)$. By letting $k\rightarrow\infty$ , we have
\begin{equation*}
    \limsup_{p\rightarrow\infty }\pr(\bigcup_{j=1}^p A_p B_j)\leq F_V(x)\left[1-\exp \{-\exp(-y)\}\right].
\end{equation*}

Using the similar arguments, we acquire
\begin{equation*}
    \begin{aligned}
 \pr(\bigcup_{j=1}^p A_p B_j)\geq &\sum_{1\leq j_1\leq p}\pr(A_p B_{j_1})-\sum_{1\leq j_1<j_2\leq p}\pr(A_p B_{j_1}B_{j_2})+\cdots\\
         &\qquad -\sum_{1\leq j_1<\cdots<j_{2k}\leq p}\pr(A_p B_{j_1}\cdots B_{j_{2k}}),    \end{aligned}
\end{equation*}
and 
\begin{equation*}
\begin{aligned}
    \pr(\bigcup_{j=1}^p B_j)\leq \sum_{1\leq j_1\leq p}&\pr(B_{j_1})-\sum_{1\leq j_1<j_3\leq p}\pr(B_{j_1}B_{j_2})+\cdots\\
    &+\sum_{1\leq j_1<\cdots<j_{2k-1}\leq p}\pr(B_{j_1}\cdots B_{j_{2k-1}}).
\end{aligned}
\end{equation*}

We obtain,
\begin{equation*}
    \liminf_{p\rightarrow \infty } \pr(\bigcup_{j=1}^p A_p B_j)\geq F_V(x) [1-\exp\{-\exp(-y)\}].
\end{equation*}
\begin{lemma}\label{lemma_zetapd}
    Suppose the assumptions in Theorem \ref{thm:ind_H0} holds, then for each $d\geq 1$, $\zeta(p,d)\rightarrow 0$.
\end{lemma}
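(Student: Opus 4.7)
The plan is to exploit the Gaussian structure and bound each term $|\pr(A_p B_{j_1}\cdots B_{j_d})-\pr(A_p)\pr(B_{j_1}\cdots B_{j_d})|$ uniformly in the tuple $(j_1,\ldots,j_d)$, then sum using the fact that $H(p,d)$ has a finite limit. First I would condition on the coordinate sequences $(U_{i,j_m})_{i=1,\ldots,n;\,m=1,\ldots,d}$ that fully determine the event $B_{j_1}\cdots B_{j_d}$. Since $\U_i\sim N(0,\R/p)$ in this lemma, under conditioning the remaining coordinates of each $\U_i$ are still jointly Gaussian with a linear mean shift in the conditioning variables and a rank-$d$ downward covariance correction, both governed by the off-diagonal block $\R_{-J,J}$ where $J=\{j_1,\ldots,j_d\}$.

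The central step is to show uniform convergence of the conditional law of the max-$L_2$ statistic defining $A_p$ to its unconditional limit $F_V$. Decomposing $\|\C_0^U(k)\|^2=\sum_{m=1}^d|C_{0,j_m}^U(k)|^2+\|\C_{0,-J}^U(k)\|^2$, the first (removed) piece is bounded by $d\cdot M_{n,p}^2$, which after the scaling $p/\sqrt{2\tr(\R^2)}\asymp\sqrt{p}$ (valid under Assumption \ref{ass:ind1}, where all eigenvalues of $\R$ are bounded) vanishes in probability. The conditional mean shift equals $\R_{-J,J}\R_{J,J}^{-1}$ applied to the conditioning values, and its contribution to $\|\C_{0,-J}^U(k)\|^2$ is controlled by $\lambda_{\max}(\R)\cdot\|\R_{J,J}^{-1}\|\cdot\|(\mathrm{conditioning})\|^2$; on the event $B_{j_1}\cdots B_{j_d}$ this is of order $d\log p/p$, again negligible once divided by $\sqrt{2\tr(\R^2)}$. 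The rank-$d$ covariance correction leaves Assumption \ref{ass:sum_R4_0} intact, so the argument used to prove Theorem \ref{thm:Max-Sum} applies verbatim to the conditioned process and gives the same Gaussian-process limit $V(t)$, hence the same cdf $F_V$.

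Granting this uniform convergence $|\pr(A_p\mid B_{j_1}\cdots B_{j_d})-\pr(A_p)|=o(1)$, we obtain
\[
\zeta(p,d)\leq o(1)\cdot\sum_{1\leq j_1<\cdots<j_d\leq p}\pr(B_{j_1}\cdots B_{j_d})=o(1)\cdot H(p,d)\longrightarrow 0,
\]
since $H(p,d)$ converges to the finite constant $\exp(-dy/2)/d!$. The main obstacle is making the conditional-mean bound truly uniform in $(j_1,\ldots,j_d)$: when several $j_m$ lie in the set $C_p$ of Assumption \ref{ass:cor}, the matrix $\R_{-J,J}$ may have many sizeable entries and the induced shift can blow up. I would circumvent this by splitting the outer sum according to whether any $j_m\in C_p$. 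Tuples meeting $C_p$ contribute at most $d|C_p|/p$ times $H(p,d)$, which is $o(1)$ because $|C_p|/p\to 0$; for tuples with all $j_m\notin C_p$ the bound $|B_{p,j_m}|\leq p^{\kappa_p}$ together with $\varpi_p=o(1/\log p)$ keeps the coordinate-wise mean shift logarithmically small, yielding the required uniform negligibility.
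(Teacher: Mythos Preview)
Your approach shares the paper's core idea---the Gaussian conditional decomposition $\U_{i,(2)}=\V_i+\T_i$ with $\V_i\perp\U_{i,(1)}$---but there is a genuine gap in the execution. The claim that ``on the event $B_{j_1}\cdots B_{j_d}$ this is of order $d\log p/p$'' is incorrect as stated: the events $B_{j_m}$ give only a \emph{lower} bound on $\max_k|C^U_{0,j_m}(k)|$, not an upper bound, so the conditioning values (and hence the mean shift) are not bounded on $B_{j_1}\cdots B_{j_d}$. Even if you reinterpret this probabilistically (e.g., the conditional law of the supremum given exceedance still has Gaussian upper tails), you cannot conclude that $|\pr(A_p\mid\mathcal G)-\pr(A_p)|=o(1)$ almost surely; it holds only outside an exceptional set. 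But then the bound becomes $|\pr(A_pB)-\pr(A_p)\pr(B)|\le o(1)\pr(B)+\pr(\text{exceptional set})$, and summing over the $\binom{p}{d}\sim p^d$ tuples forces you to prove $\pr(\text{exceptional set})=o(p^{-d})$ for every fixed $d$, i.e.\ a quantitative bound of the form $p^{-t}$ with $t\to\infty$. Your proposal never establishes such a bound, and without it the sum does not close.

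The paper handles exactly this point by writing $MS_{n,p}=MS_{n,p}^*+\Theta$ with $MS_{n,p}^*$ a function of $\V_i$ only (hence exactly independent of $\U_{i,(1)}$), and then proving the strong tail estimate $\pr(|\Theta|>\varsigma\sqrt{2\tr(\R^2)})\le p^{-t}$ via Hanson--Wright type inequalities for the Gaussian quadratic forms $\Theta_1,\Theta_2,\Theta_3$. This sandwich gives $|\pr(A_p(x)B)-\pr(A_p(x))\pr(B)|\le\Delta_{p,\varsigma}\pr(B)+2p^{-t}$ with $\Delta_{p,\varsigma}=\pr(A_p(x+2\varsigma))-\pr(A_p(x-2\varsigma))$, and the $p^{-t}$ term survives summation over $\binom{p}{d}$ tuples. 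Finally, your last paragraph about splitting according to $C_p$ is unnecessary here: under Assumption~\ref{ass:ind1} the eigenvalues of $\R$ lie in $[\underline c,\bar c]$, so $\|\R_{-J,J}\R_{J,J}^{-1}\|_{\mathrm{op}}\le\bar c/\underline c$ uniformly in $J$, and no further case analysis is needed.
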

\begin{proof}
For convenience, we define $\tilde{\U}_i=p^{1/2}\U_i\sim N(0,\R)$. For each $i=1, \ldots, n$, let $\tilde{\U}_{i,(1)}=\left(\tilde{\U}_{i, j_1}, \ldots, \tilde{\U}_{i, j_d}\right)^{\top}$ and $\tilde{\U}_{i,(2)}=\left(\tilde{\U}_{i, j_{d+1}}, \ldots, \tilde{\U}_{i, j_p}\right)^{\top}$, and $\R_{k l}=\operatorname{Cov}\left(\tilde{\U}_{i,(k)}, \tilde{\U}_{i,(l)}\right)$ for $k, l \in\{1,2\}$. By Lemma \ref{lemma_normaldecom}, $\tilde{\U}_{i,(2)}$ can be decomposed as $\tilde{\U}_{i,(2)}=$ $\V_i+\T_i$, where $\V_i:=\tilde{\U}_{i,(2)}-\R_{21} \R_{11}^{-1} \tilde{\U}_{i,(1)}$ and $\T_i:=\R_{21} \R_{11}^{-1} \tilde{\U}_{i,(1)}$ satisfying that $\V_i \sim N\left(0, \R_{22}-\R_{21} \R_{11}^{-1} \R_{12}\right), \T_i \sim N\left(0, \R_{21} \R_{11}^{-1} \R_{12}\right)$ and
$\V_i$ and $\tilde{\U}_{i,(1)}$ are independent.
    Let $MS_{n,p}=n^{-1}\sum_{1\leq j\leq p}\max_{\lambda_n\leq k\leq n-\lambda_n}$ $(\sum_{i=1}^k p^{1/2}U_{i,j}-\frac{k}{n}\sum_{i=1}^n p^{1/2}U_{i,j})^2$ and we can decompose it as,
 \begin{equation*}
     \begin{aligned}
         MS_{n,p}=n^{-1}\sum_{l\in\{1,2,\ldots,n-d\}}\max_{\lambda_n\leq k\leq n-\lambda_n}(\sum_{i=1}^k V_{i,l}-\sum_{i=1}^k k/n V_{i,l})^2+\Theta:=MS_{n,p}^{*}+\Theta,
     \end{aligned}
 \end{equation*}
 where 
 \begin{equation*}
 \begin{aligned}
     \Theta\leq&n^{-1}\sum_{j\in\{j_1,\ldots,j_d\}}\max_{\lambda_n\leq k\leq n-\lambda_n}(\sum_{i=1}^k \tilde{U}_{i,j}-\sum_{i=1}^k\frac{k}{n}\tilde{U}_{i,j})^2+\\
     &n^{-1}\sum_{l\in\{1,2,\ldots,n-d\}}\max_{\lambda_n\leq k\leq n-\lambda_n}(\sum_{i=1}^k T_{i,l}-\sum_{i=1}^k\frac{k}{n}T_{i,l})^2+\\
     &\qquad 2n^{-1}\sum_{l\in\{1,2,\ldots,n-d\}}\max_{\lambda_n\leq k\leq n-\lambda_n}(\sum_{i=1}^k T_{i,l}-\sum_{i=1}^k\frac{k}{n}T_{i,l})(\sum_{i=1}^k V_{i,l}-\sum_{i=1}^k\frac{k}{n}V_{i,l})\\
     :=& \Theta_1+\Theta_2+\Theta_3.
     \end{aligned}
 \end{equation*}
 
We claim that, for any $\varsigma>0$, there exists a sequence of positive constant $t=: t_p>0$ with $t_p\rightarrow\infty$ such that,
\begin{equation}\label{eq:theta}
    \pr(\vert\Theta_i\vert\geq \sqrt{2\text{tr}(\R^2)}\varsigma)\leq p^{-t},i=1,2,3.
\end{equation}

Consequently, $\pr\{\vert\Theta\vert>\varsigma \sqrt{2\text{tr}(\R^2)}\}\leq p^{-t}$ for some $t\rightarrow\infty$ and sufficiently large $p$. $A_p(x)$ can rewritten as $A_p=\{MS_{n,p}^*/\sqrt{2\text{tr}(\R^2)}+\Theta/\sqrt{2\text{tr}(\R^2)}\leq x\}$. By Lemma \ref{lemma_normaldecom}, we have 
\begin{equation*}
\begin{aligned}
    \pr(A_p(x)B_{j_1}\cdots B_{j_d})&\leq \pr(A_p(x)B_{j_1}\cdots B_{j_d},\vert\Theta\vert/\sqrt{2\text{tr}(\R^2)}\leq \varsigma)+p^{-t}\\
    &\leq \pr(MS_{n,p}^*/\sqrt{2\text{tr}(\R^2)}\leq x+\varsigma,B_{j_1}\cdots B_{j_d})+p^{-t}\\
    &=\pr(MS_{n,p}^*/\sqrt{2\text{tr}(\R^2)}\leq x+\varsigma)\pr(B_{j_1}\cdots B_{j_d})+p^{-t}.
\end{aligned}
\end{equation*}

We also have 
\begin{equation*}
    \begin{aligned}
        \pr(MS_{n,p}^*/\sqrt{2\text{tr}(\R^2)}\leq x+\varsigma)&\leq \pr(MS_{n,p}^*/\sqrt{2\text{tr}(\R^2)}\leq x+\varsigma,\vert\Theta\vert/\sqrt{2\text{tr}(\R^2)}<\varsigma)+p^{-t}\\
        &\leq \pr(A_{p}(x+2\varsigma))+p^{-t}.
    \end{aligned}
\end{equation*}

Thus, we have
\begin{equation}\label{eq:MS_l}
    \pr(A_p(x)B_{j_1}\cdots B_{j_d})\leq \pr(A_p(x+2\varsigma))\pr(B_{j_1}\cdots B_{j_d})+2p^{-t}.
\end{equation}

On the other hand, we consider
\begin{equation*}
    \begin{aligned}
        &\quad \pr(MS_{n,p}^*/\sqrt{2\text{tr}(\R^2)}\leq x-\varsigma)\leq x-\varsigma)\pr(B_{j_1}\cdots B_{j_d})\\
        &=\pr(MS_{n,p}^*/\sqrt{2\text{tr}(\R^2)}\leq x-\varsigma,B_{j_1}\cdots B_{j_d})\\
        &\leq \pr(MS_{n,p}^*/\sqrt{2\text{tr}(\R^2)}\leq x-\varsigma,B_{j_1}\cdots B_{j_d},\vert\Theta\vert/\sqrt{2\text{tr}(\R^2)}<\varsigma)+p^{-t},
    \end{aligned}
\end{equation*}
and 
\begin{equation*}
    \begin{aligned}
        \pr(A_p(x-2\varsigma))&\leq \pr(A_p(x-2\varsigma),\vert\Theta\vert/\sqrt{2\text{tr}(\R^2)})+p^{-t}\\
        &\leq \pr(MS_{n,p}^*/\sqrt{2\text{tr}(\R^2)}\leq x-\varsigma)+p^{-t}.
    \end{aligned}
\end{equation*}

Thus, we have
\begin{equation}\label{eq:MS_r}
    \pr(A_p(x)B_{j_1}\cdots B_{j_d})\geq \pr(A_p(x-2\varsigma))\pr(B_{j_1}\cdots B_{j_d})-2p^{-t}.
\end{equation}

Combining Equation (\ref{eq:MS_l}) and (\ref{eq:MS_r}), we conclude that
\begin{equation*}
    \vert \pr(A_p(x)B_{j_1}\cdots B_{j_d})-\pr(A_p(x))\pr(B_{j_1}\cdots B_{j_d})\vert\leq\Delta_{p,\varsigma}\pr(B_{j_1}\cdots B_{j_d})+2p^{-t},
\end{equation*}
for sufficiently large $p$, where 
\begin{equation*}
    \begin{aligned}
        \Delta_{p,\varsigma}=\pr(A_p(x+2\varsigma))-\pr(A_p(x-2\varsigma)),
    \end{aligned}
\end{equation*}
since $\pr(A_p(x))$ is increasing in $x$. Thus $\zeta(p,d)$ follows,
\begin{equation*}
    \zeta(p,d)\leq \Delta_{p,\varsigma}H(p,d)+2C_p^d p^{-t}.
\end{equation*}
where $C^d_p=p!/\{d!(p-d)!\}$ and $k!=\prod_{\ell=1}^k \ell$ for $k=1,2,\cdots$.

Since $\pr(A_p)\rightarrow F_V(x)$, $\Delta_{p,\varsigma}\rightarrow F_V(x+2\varsigma)-F_V(x-2\varsigma)$ as $p\rightarrow\infty$, which implies that $\lim_{\varsigma\rightarrow 0}\limsup_{p\rightarrow\infty}\Delta_{p,\varsigma}=0$. For each $d\geq 1$, $H(p,d)\rightarrow\frac{1}{d!}\exp(-d x/2)$ as $p\rightarrow \infty$, we get $\limsup_{p\rightarrow\infty}H(p,d)<\infty$. By some basic calculation, it easy to get $C_p^d p^{-t}\leq p^{d-t}$ for fixed $d\geq 1$. By letting $p\rightarrow\infty$ and then $\varsigma\rightarrow 0$, $\zeta(p,d)\rightarrow 0$ for each $d\geq 1$.

\underline{Proof of Equation (\ref{eq:theta}):}
\begin{equation*}
\begin{aligned}
        \Theta_1=&n^{-1}\max_{\lambda_n\leq k\leq n-\lambda_n}\left(\sum_{i=1}^k\tilde{\U}_{i,(1)}-\frac{k}{n}\sum_{i=1}^n\tilde{\U}_{i,(1)}\right)^\top \left(\sum_{i=1}^k\tilde{\U}_{i,(1)}-\frac{k}{n}\sum_{i=1}^n\tilde{\U}_{i,(1)}\right)\\
        :=&n^{-1}\max_{\lambda_n\leq k\leq n-\lambda_n}\left(\sum_{i=1}^n \breve{\upsilon}_{i,k}\tilde{\U}_{i,(1)}\right)^\top \left(\sum_{i=1}^n \breve{\upsilon}_{i,k}\tilde{\U}_{i,(1)}\right)\\
\end{aligned}
\end{equation*}
For $\Theta_1$,
\begin{equation*}
\begin{aligned}
&\pr(\vert\Theta_{1}\vert>\sqrt{2\text{tr}(\R^2)}\varsigma)\\
=
&\pr\left(n^{-1}\max_{\lambda_n\leq k\leq n-\lambda_n}\big(\sum_{i=1}^n \breve{\upsilon}_{i,k}\tilde{\U}_{i,(1)}\big)^\top \big(\sum_{i=1}^n \breve{\upsilon}_{i,k}\tilde{\U}_{i,(1)}\big)>\sqrt{2\text{tr}(\R^2)}\varsigma\right)\\
\leq & \pr\left(\max_{\lambda_n\leq k\leq n-\lambda_n}\big\{\sum_{i=1}^n \breve{\upsilon}_{i,k}\tilde{\U}_{i,(1)}/(\sum_{j=1}^n\breve{\upsilon}_{i,k}^2)^{1/2}\big\}^\top \big\{\sum_{i=1}^n \breve{\upsilon}_{i,k}\tilde{\U}_{i,(1)}/(\sum_{j=1}^n\breve{\upsilon}_{i,k}^2)^{1/2}\big\}>\sqrt{2\text{tr}(\R^2)}\varsigma\right)\\
\leq& n \pr(\vert\tilde{\U}_{1,(1)}^\top \tilde{\U}_{1,(1)}\vert>C_\varsigma \sqrt{2\text{tr}(\R^2)})\\
\leq & n\exp(-C_\varsigma d^{-1}p^{1/2}),
    \end{aligned}
\end{equation*}
where the last inequality holds by Lemma S.7 in \cite{Feng2022AsymptoticIO}, or the proof of Theorem 4 in \cite{wang2023} and $C_{\varsigma}$ denotes some
positive constant depending on $\varsigma$. Similarly, for $\Theta_2$ and $\Theta_3$,
\begin{equation*}
\begin{aligned}
        \Theta_2=&n^{-1}\max_{\lambda_n\leq k\leq n-\lambda_n}\left(\sum_{i=1}^k\T_i-\frac{k}{n}\sum_{i=1}^n \T_i\right)^\top \left(\sum_{i=1}^k\T_i-\frac{k}{n}\sum_{i=1}^n \T_i\right)\\
        :=&n^{-1}\max_{\lambda_n\leq k\leq n-\lambda_n}\left(\sum_{i=1}^n \breve{\upsilon}_{i,k}\T_{i}\right)^\top \left(\sum_{i=1}^n \breve{\upsilon}_{i,k}\T_{i}\right)\\
\end{aligned}
\end{equation*}
\begin{equation*}
\begin{aligned}
        \Theta_3=&n^{-1}\max_{\lambda_n\leq k\leq n-\lambda_n}\left(\sum_{i=1}^k\T_i-\frac{k}{n}\sum_{i=1}^n \T_i\right)^\top \left(\sum_{i=1}^k \V_i-\frac{k}{n}\sum_{i=1}^n \V_i\right)\\
                :=&n^{-1}\max_{\lambda_n\leq k\leq n-\lambda_n}\left(\sum_{i=1}^n \breve{\upsilon}_{i,k}\T_{i}\right)^\top \left(\sum_{i=1}^n \breve{\upsilon}_{i,k}\V_{i}\right)\\
\end{aligned}
\end{equation*}

\begin{equation*}
    \begin{aligned}
        &~~~~\pr(\vert\Theta_{2}\vert>\sqrt{2\text{tr}(\R^2)}\varsigma)\\
&=\pr\left(n^{-1}\max_{\lambda_n\leq k\leq n-\lambda_n}\big(\sum_{i=1}^n \breve{\upsilon}_{i,k}\T_{i,(1)}\big)^\top \big(\sum_{i=1}^n \breve{\upsilon}_{i,k}\T_{i,(1)}\big)>\sqrt{2\text{tr}(\R^2)}\varsigma\right)\\
        &\leq  \pr\left(\max_{\lambda_n\leq k\leq n-\lambda_n}\big\{\sum_{i=1}^n \breve{\upsilon}_{i,k}\T_{i}/(\sum_{j=1}^n\breve{\upsilon}_{i,k}^2)^{1/2}\big\}^\top \big\{\sum_{i=1}^n \breve{\upsilon}_{i,k}\T_{i}/(\sum_{j=1}^n\breve{\upsilon}_{i,k}^2)^{1/2}\big\}>\sqrt{2\text{tr}(\R^2)}\varsigma\right)\\
        &\leq n \pr(\vert \T_1^\top \T_1\vert>C_{\varsigma}\sqrt{\text{tr}(\R^2)})\\
        &\leq n\exp\left\{-C_\varsigma \frac{\sqrt{2\text{tr}(\R^2)}}{\lambda_{\max}(\R)} \right\},
    \end{aligned}
\end{equation*}
and
\begin{equation*}
    \begin{aligned}
        &~~~~\pr(\vert\Theta_{3}\vert>\sqrt{2\text{tr}(\R^2)}\varsigma)\\
&=\pr\left(n^{-1}\max_{\lambda_n\leq k\leq n-\lambda_n}\big(\sum_{i=1}^n \breve{\upsilon}_{i,k}\T_{i,(1)}\big)^\top \big(\sum_{i=1}^n \breve{\upsilon}_{i,k}\V_{i,(1)}\big)>\sqrt{2\text{tr}(\R^2)}\varsigma\right)\\
        &\leq  \pr\left(\max_{\lambda_n\leq k\leq n-\lambda_n}\big\{\sum_{i=1}^n \breve{\upsilon}_{i,k}\T_{i}/(\sum_{j=1}^n\breve{\upsilon}_{i,k}^2)^{1/2}\big\}^\top \big\{\sum_{i=1}^n \breve{\upsilon}_{i,k}\V_{i}/(\sum_{j=1}^n\breve{\upsilon}_{i,k}^2)^{1/2}\big\}>\sqrt{2\text{tr}(\R^2)}\varsigma\right)\\
        &\leq n \pr(\vert \T_1^\top \V_1\vert>C_{\varsigma}\sqrt{\text{tr}(\R^2)})\\
        &\leq n\exp\left\{-C_\varsigma \frac{\sqrt{2\text{tr}(\R^2)}}{\lambda_{\max}(\R)} \right\},
    \end{aligned}
\end{equation*}

It is then easy to see that the Equation (\ref{eq:theta}) holds.
\end{proof}

\subsubsection{For non-Gaussian type}
From the Section \ref{sec:proofMM}-\ref{sec:proofMS}, we verify
\begin{equation*}
    S_{n,p}=p n^{-1}\max_{\lambda_n\leq k\leq n-\lambda_n}2\sum_{i<j} \breve{\upsilon}_{i,k}\breve{\upsilon}_{j,k}\U_i^\top \U_j/\sqrt{2\text{tr}(\R^2)}+o_p(1),
\end{equation*}
\begin{equation*}
    M_{n,p}=p^{1/2}n^{-1/2}\max_{\lambda_n\leq k\leq n-\lambda_n}\max_{1\leq j\leq p}\vert\sum_{i=1}^n \breve{\upsilon}_{i,k}U_{i,l}\vert+o_p(1),
\end{equation*}
where $\breve{\upsilon}_{i,k}=\ind{i\leq k}-k/n$.
It suffice to show that:
\begin{equation}\label{eq:H0ind}
\begin{aligned}
    \pr&\left(p n^{-1}\max_{\lambda_n\leq k\leq n-\lambda_n}2\sum_{i<j} \breve{\upsilon}_{i,k}\breve{\upsilon}_{j,k}\U_i^\top \U_j/\sqrt{2\text{tr}(\R^2)}\leq x,\right.\\
    &\qquad\qquad\qquad\qquad\left. p^{1/2}\zeta_1\max_{\lambda_n\leq k\leq n-\lambda_n}\max_{1\leq j\leq p}\vert\sum_{i=1}^n \breve{\upsilon}_{i,k}U_{i,l}\vert\leq u_p\{\exp(-y)\}\right)\\
    &\rightarrow F_V(x)\exp\{-\exp(-y)\}.
    \end{aligned}
\end{equation}

For $\boldsymbol z=(z_1,\ldots,z_q)^\top\in \mathbb R^q$, we consider a smooth approximation of the maximum function, namely, 
$$
F_\beta(\boldsymbol z):=\beta^{-1}\log(\sum_{j=1}^q\exp(\beta z_j)),
$$
where $\beta>0$ is the smoothing parameter that controls the level of approximation. An elementary calculation shows that for all $z\in \mathbb R^q$, 
$$
0\leq F_\beta (\boldsymbol z)-\max_{1\leq j\leq q}z_j\leq \beta^{-1}\log q.
$$

We define,
\begin{equation*}
    \begin{aligned}
        W(x_1,\ldots,x_n)&=\beta^{-1}\log\left(\sum_{k=\lambda_n}^{n-\lambda_n}\exp\left\{ 2\beta  p n^{-1}\sum_{i<j} \breve{\upsilon}_{i,k}\breve{\upsilon}_{j,k}x_i^\top x_j/\sqrt{2\text{tr}(\R^2)} \right\}\right)\\
        :&=\beta^{-1}\log\left(\sum_{k=\lambda_n}^{n-\lambda_n}\exp\left\{ \beta  p n^{-1}\sum_{1\leq i<j\leq n} b_{i,j,k}x_i^\top x_j/\sqrt{2\text{tr}(\R^2)} \right\}\right),
    \end{aligned}
\end{equation*}
\begin{equation*}
    V(x_1,\ldots,x_n)=\beta^{-1} \log\left\{ \sum_{j=1}^p \sum_{k=\lambda_n}^{n-\lambda_n} \exp\left(\beta n^{-1/2} \sum_{t=1}^n \breve{\upsilon}_{t,k}x_{t j}\right)\right\}.
\end{equation*}

By setting $\beta=n^{1/8\wedge\omega_1}\log(np)$, Equation \eqref{eq:H0ind} is equivalent to 
\begin{equation}\label{eq:H0ind_WV}
\begin{aligned}
\pr\left(W(\U_1,\ldots,\U_p)\leq x,V(x_1,\ldots,x_n)\leq u_p\{\exp(-y)\}\right)\rightarrow F_V(x)\exp\{-\exp(-y)\}.
\end{aligned}
\end{equation}

Suppose $\{\boldsymbol Y_1,\boldsymbol Y_2,\ldots,\boldsymbol Y_n\}$ are sample from $N(0,\E( \boldsymbol U_1^\top \boldsymbol U_1))$ , and independent with $\boldsymbol U_1,\ldots,\boldsymbol U_n$.  The key idea is to show that: $(W(\boldsymbol U_1,\ldots,\boldsymbol U_n),V(\boldsymbol U_1,\ldots,\boldsymbol U_n))$ has the same limiting distribution as $(W(\boldsymbol Y_1,\ldots,\boldsymbol Y_n),V(\boldsymbol Y_1,\ldots,\boldsymbol Y_n))$.

Let $l^2_b(\mathbb R)$ denote the class of bounded functions with bounded and continuous derivatives up to order 3.It is known that a sequence of randon variables $\{Z_n\}_{n=1}^\infty$ converges weakly to a random variable $Z$ if and only if for every $f\in l^3_b(\mathbb R)$, $\E(f(Z_n))\rightarrow \E(f(Z))$. It suffices to show that:
$$
\E\{f(W(\boldsymbol U_1,\ldots,\boldsymbol U_n),V(\boldsymbol U_1,\ldots,\boldsymbol U_n))\}-\E\{f(W(\boldsymbol Y_1,\ldots,\boldsymbol Y_n),V(\boldsymbol Y_1,\ldots,\boldsymbol Y_n))\}\rightarrow 0,
$$
for every $f\in l_b^3(\mathbb R^2)$ as $(n,p)\rightarrow \infty$.

We introduce $\tilde { W}_d=W(\boldsymbol U_1,\ldots,\boldsymbol U_{d-1},\boldsymbol Y_d,\ldots,\boldsymbol Y_n)$  and $\tilde{ V}_d=V(\boldsymbol U_1,\ldots,\boldsymbol U_{d-1},\boldsymbol Y_d,\ldots,\boldsymbol Y_n)$ for $d=1,\ldots,n+1$, $\mathcal F_d=\sigma\{\boldsymbol U_1,\ldots,\boldsymbol U_{d-1},\boldsymbol Y_{d+1},\ldots,\boldsymbol Y_n\}$ for $d=1,\ldots,n$. If there is no danger of confusion, we simply write $\tilde { W}_d$ and $\tilde { V}_d$ as $ W_d$ and $ V_d$ for this part, respectively. Then,
\begin{equation*}
\begin{aligned}
   &\left\vert\E\left\{f(W(\boldsymbol U_1,\ldots,\boldsymbol U_n),V(\boldsymbol U_1,\ldots,\boldsymbol U_n))\right\}-\E\left\{f(W(\boldsymbol Y_1,\ldots,\boldsymbol Y_n),V(\boldsymbol Y_1,\ldots,\boldsymbol Y_n))\right\}\right\vert\\
\leq &\sum_{d=1}^n  \left\vert \E\{f(W_d,V_d)-\E\{f(W_{d+1},V_{d+1}) \}\right\vert. 
\end{aligned}
\end{equation*}

Let 
\begin{equation*}
\begin{aligned}
W_{d,0}&=\beta^{-1}\log\left(\sum_{k=\lambda_n}^{n-\lambda_n}\exp\left\{ \beta  p n^{-1}\left(\sum_{1\leq i<j\leq d-1} b_{i,j,k}\U_i^\top \U_j +\sum_{d+1\leq i<j\leq n} b_{i,j,k}\Y_i^\top \Y_j \right.\right.\right.\\
&\qquad\qquad\qquad\qquad\qquad\qquad\qquad\left.\left.\left.+\sum_{i=1}^{d-1}\sum_{j=d+1}^n b_{i,j,k}\U_i^\top \Y_j \right)/\sqrt{2\text{tr}(\R^2)}\right\}\right)\in\mathcal F_d,\\
V_{d,0}&= \beta^{-1} \log\left\{ \sum_{j=1}^p \sum_{k=\lambda_n}^{n-\lambda_n} \exp\left(\beta n^{-1/2}p^{1/2} \sum_{t=1}^{d-1} \breve{\upsilon}_{t,k}U_{t j}+\beta n^{-1/2} p^{1/2}\sum_{t=d+1}^{n} \breve{\upsilon}_{t,k}Y_{t j}\right)\right\} \in \mathcal F_d.\\
\end{aligned}
\end{equation*}

By Taylor expansion, we have
$$
\begin{aligned}
f\left(W_d, V_d\right)-f\left(W_{d, 0}, V_{d, 0}\right)= & f_1\left(W_{d, 0}, V_{d, 0}\right)\left(W_d-W_{d, 0}\right)+f_2\left(W_{d, 0}, V_{d, 0}\right)\left(V_d-V_{d, 0}\right) \\
& +\frac{1}{2} f_{11}\left(W_{d, 0}, V_{d, 0}\right)\left(W_d-W_{d, 0}\right)^2+\frac{1}{2} f_{22}\left(W_{d, 0}, V_{d, 0}\right)\left(V_d-V_{d, 0}\right)^2 \\
& +\frac{1}{2} f_{12}\left(W_{d, 0}, V_{d, 0}\right)\left(W_d-W_{d, 0}\right)\left(V_d-V_{d, 0}\right) \\
& +O\left(\left|V_d-V_{d, 0}\right|^3\right)+O\left(\left|W_d-W_{d, 0}\right|^3\right),
\end{aligned}
$$
and
$$
\begin{aligned}
f\left(W_{d+1}, V_{d+1}\right)-f\left(W_{d, 0}, V_{d, 0}\right)= & f_1\left(W_{d, 0}, V_{d, 0}\right)\left(W_{d+1}-W_{d, 0}\right)+f_2\left(W_{d, 0}, V_{d, 0}\right)\left(V_{d+1}-V_{d, 0}\right) \\
& +\frac{1}{2} f_{11}\left(W_{d, 0}, V_{d, 0}\right)\left(W_{d+1}-W_{d, 0}\right)^2\\
&+\frac{1}{2} f_{22}\left(W_{d, 0}, V_{d, 0}\right)\left(V_{d+1}-V_{d, 0}\right)^2 \\
& +\frac{1}{2} f_{12}\left(W_{d, 0}, V_{d, 0}\right)\left(W_{d+1}-W_{d, 0}\right)\left(V_{d+1}-V_{d, 0}\right) \\
& +O\left(\left|V_{d+1}-V_{d, 0}\right|^3\right)+O\left(\left|W_{d+1}-W_{d, 0}\right|^3\right),
\end{aligned}
$$
where for $f:=f(x, y), f_1(x, y)=\frac{\partial f}{\partial x}, f_2(x, y)=\frac{\partial f}{\partial y}, f_{11}(x, y)=\frac{\partial f^2}{\partial^2 x}, f_{22}(x, y)=\frac{\partial f^2}{\partial^2 y}$ and $f_{12}(x, y)=\frac{\partial f^2}{\partial x \partial y}$.

We first consider $V_d,V_{d+1},V_{d,0}$. For $l=k-\lambda_n+1+(j-1)(n-2\lambda_n+1)$, let $z^v_{d,0,l}=n^{-1/2}p^{1/2}\sum_{t=1}^{d-1}U_{t j}\breve{\upsilon}_{t,k}+n^{-1/2}p^{1/2}\sum_{t=d+1}^n Y_{t j}\breve{\upsilon}_{t,k}$, $z^v_{d,l}=z^v_{d,0,l}+n^{-1/2}p^{1/2}Y_{d j}\breve{\upsilon}_{d,k}$ and $z^v_{d+1,l}=z^v_{d,0,l}+n^{-1/2}p^{1/2}U_{d j}\breve{\upsilon}_{d,k}$. Define $z^v_{d,0}=(z^v_{d,0,1},\ldots,z^v_{d,0,n p})^\top$ and $z^v_d=(z^v_{d,1},\ldots,z^v_{d,n p})^\top$. By Taylor's expansion, we have
\begin{equation}\label{eq:expansion_V}
\begin{aligned}
&~~V_d-V_{d, 0}\\
= & \sum_{l=1}^{(n-2\lambda_n+1)p } \partial_l F_\beta\left(\boldsymbol{z}_{d, 0}\right)\left( z^v_{d, l}- z^v_{d, 0, l}\right)\\
&+\frac{1}{2} \sum_{l,k=1}^{(n-2\lambda_n+1)p } \partial_k \partial_l F_\beta\left(\boldsymbol{z}_{d, 0}\right)\left( z^v_{d, l}- z^v_{d, 0, l}\right)\left( z^v_{d, k}- z^v_{d, 0, k}\right) \\
 &+\frac{1}{6} \sum_{l,k,v=1}^{(n-2\lambda_n+1)p }\partial_v \partial_k \partial_l F_\beta\left(\boldsymbol{z}_{d, 0}+\tilde\vartheta\left(\boldsymbol{z}_d-\boldsymbol{z}_{d, 0}\right)\right)\left( z^v_{d, l}- z^v_{d, 0, l}\right)\left( z^v_{d, k}-\boldsymbol z^v_{d, 0, k}\right)\left(\boldsymbol z^v_{d, v}-\boldsymbol z^v_{d, 0, v}\right),
\end{aligned}
\end{equation}
for some $\tilde\vartheta \in(0,1)$. Again, due to $\E\left(\boldsymbol U_t\right)=\E\left(\boldsymbol Y_t\right)=0$ and $\E\left(\boldsymbol  U_t \boldsymbol U_t^{\top}\right)=\E\left(\boldsymbol Y_t \boldsymbol Y_t^{\top}\right)$, we can verify that $\E\left\{ z^v_{d, l}- z^v_{d, 0, l} \mid \mathcal{F}_d\right\}=\E\left\{ z^v_{d+1, l}- z^v_{d, 0, l} \mid \mathcal{F}_d\right\}$ and 
 $\E\left\{\left( z^v_{d, l}- z^v_{d, 0, l}\right)^2 \mid \mathcal{F}_d\right\}=\E\left\{\left( z^v_{d+1, l}- z^v_{d, 0, l}\right)^2 \mid \mathcal{F}_d\right\}$.

By Lemma A.2 in \cite{chernozhukov2013gaussian}, we have
$$
\left|\sum_{l,k,v=1}^{(n-2\lambda_n+1)p }\partial_v \partial_k \partial_l F_\beta\left(\boldsymbol{z}^v_{d, 0}+\tilde\vartheta\left(\boldsymbol{z}^v_d-\boldsymbol{z}^v_{d, 0}\right)\right)\right| \leq C \beta^2,
$$
for some positive constant $C$. 
By Lemma \ref{LemmaA4}, we have $\left\|\zeta_1^{-1}  U_{i, j}\right\|_{\psi_{\alpha_0}} \lesssim \bar{B},$ for all $i=1, \ldots, n$ and $j=1, \ldots, p$, which means $\pr(\vert \sqrt{p}\xi_{i,j}\vert\geq t)\leq 2\exp\{-(ct\sqrt{p}/\zeta_1)^{\alpha_0}\}\lesssim 2\exp\{-(ct)^{\alpha_0}\}$ and  $\pr\left(\max _{1 \leq i \leq n}\left|\sqrt{p}U_{i j}\right|>C \log n \right) \rightarrow 0$. Since $\sqrt{p}Y_{t j} \sim N(0,1)$ and $\pr(\max _{1 \leq i \leq n}\left|\sqrt{p}Y_{i j}\right|>C \log n ) \rightarrow 0$,
\begin{small}
\begin{equation*}
    \begin{aligned}
&\left|\frac{1}{6} \sum_{l,k,v=1}^{(n-2\lambda_n+1)p} \partial_v \partial_k \partial_l F_\beta\left(\boldsymbol{z}^v_{d, 0}+\tilde\vartheta\left(\boldsymbol{z}^v_d-\boldsymbol{z}^v_{d, 0}\right)\right)\left(\boldsymbol z^v_{d, l}-\boldsymbol z^v_{d, 0, l}\right)\left(\boldsymbol z^v_{d, k}-\boldsymbol z^v_{d, 0, k}\right)\left(\boldsymbol z^v_{d, v}-\boldsymbol z^v_{d, 0, v}\right)\right|\\
&\qquad\leq C\beta^2n^{-3/2}\log^3(np),\\
&\left|\frac{1}{6} \sum_{l,k,v=1}^{(n-2\lambda_n+1)p}\partial_v \partial_k \partial_l F_\beta\left(\boldsymbol{z}^v_{d+1, 0}+\tilde\vartheta\left(\boldsymbol{z}^v_{d+1}-\boldsymbol{z}^v_{d, 0}\right)\right)\left(\boldsymbol z^v_{d+1, l}-\boldsymbol z^v_{d, 0, l}\right)\left(\boldsymbol z^v_{d+1, k}-\boldsymbol z^v_{d, 0, k}\right)\left(\boldsymbol z^v_{d+1, v}-\boldsymbol z^v_{d, 0, v}\right)\right|\\
&\qquad\leq C\beta^2n^{-3/2}\log^3(np),
\end{aligned}
\end{equation*}
\end{small}
hold with probability approaching one.

Next we consider $W_d,W_{d+1},W_{d,0}$, Similarly, we define 
\begin{equation*}
    \begin{aligned}
       z^w_{d,0,k}=&p n^{-1}\sum_{1\leq i<j\leq d-1} b_{i,j,k}\U_i^\top \U_j/\sqrt{2\text{tr}(\R^2)} +p n^{-1}\sum_{d+1\leq i<j\leq n} b_{i,j,k}\Y_i^\top \Y_j/\sqrt{2\text{tr}(\R^2)} \\
       &\qquad+p n^{-1}\sum_{i=1}^{d-1}\sum_{j=d+1}^n b_{i,j,k}\U_i^\top \Y_j/\sqrt{2\text{tr}(\R^2)}, \\
       z^w_{d,k}=&z^w_{d,0,k}+p n^{-1}\sum_{i=1}^{d-1}b_{i,d,k}\U_i^\top \Y_d/\sqrt{2\text{tr}(\R^2)}+p n^{-1}\sum_{i=d+1}^n b_{i,d,k}\Y_d^\top \Y_i/\sqrt{2\text{tr}(\R^2)},\\
       z^w_{d+1,k}=&z^w_{d,0,k}+p n^{-1}\sum_{i=1}^{d-1}b_{i,d,k}\U_i^\top \U_d/\sqrt{2\text{tr}(\R^2)}+p n^{-1}\sum_{i=d+1}^n b_{i,d,k}\U_d^\top \Y_i/\sqrt{2\text{tr}(\R^2)},
    \end{aligned}
\end{equation*}
   and let  $z^w_{d,0}=(z^w_{d,0,1},\ldots,z^w_{d,0,n })^\top$ and $z^w_d=(z^w_{d,1},\ldots,z^w_{d,n})^\top$.

By Taylor's expansion, we have
\begin{equation}\label{eq:expansion_U}
\begin{aligned}
&~~W_d-W_{d, 0}= \\
& \sum_{l=\lambda_n}^{n-\lambda_n} \partial_l F_\beta\left(\boldsymbol{z}^w_{d, 0}\right)\left( z^w_{d, l}- z^w_{d, 0, l}\right)+\frac{1}{2} \sum_{l=\lambda_n}^{n-\lambda_n} \sum_{k=\lambda_n}^{n-\lambda_n} \partial_k \partial_l F_\beta\left(\boldsymbol{z}^w_{d, 0}\right)\left( z^w_{d, l}- z^w_{d, 0, l}\right)\left( z^w_{d, k}- z^w_{d, 0, k}\right) \\
& +\frac{1}{6} \sum_{l=\lambda_n}^{n-\lambda_n} \sum_{k=\lambda_n}^{n-\lambda_n} \sum_{v=\lambda_n}^{n-\lambda_n} \partial_v \partial_k \partial_l F_\beta\left(\boldsymbol{z}^w_{d, 0}+\tilde\vartheta\left(\boldsymbol{z}^w_d-\boldsymbol{z}^w_{d, 0}\right)\right)\left( z^w_{d, l}- z^w_{d, 0, l}\right)\left( z^w_{d, k}-\boldsymbol z^w_{d, 0, k}\right)\left(\boldsymbol z^w_{d, v}-\boldsymbol z^w_{d, 0, v}\right),
\end{aligned}
\end{equation}
for some $\tilde\vartheta \in(0,1)$. Again, due to $\E\left(\boldsymbol U_t\right)=\E\left(\boldsymbol Y_t\right)=0$ and $\E\left(\boldsymbol  U_t \boldsymbol U_t^{\top}\right)=\E\left(\boldsymbol Y_t \boldsymbol Y_t^{\top}\right)$, we can verify that $\E\left\{\left( z^w_{d, l}- z^w_{d, 0, l}\right) \mid \mathcal{F}_d\right\}=\E\left\{\left( z^w_{d+1, l}- z^w_{d, 0, l}\right) \mid \mathcal{F}_d\right\}$ and $\E\left\{\left( z^w_{d, l}- z^w_{d, 0, l}\right)^2 \mid \mathcal{F}_d\right\}=\E\left\{\left( z^w_{d+1, l}- z^w_{d, 0, l}\right)^2 \mid \mathcal{F}_d\right)$.

By Lemma A.2 in \cite{chernozhukov2013gaussian}, we have
$$
\left|\sum_{l=\lambda_n}^{n-\lambda_n} \sum_{k=\lambda_n}^{n-\lambda_n} \sum_{v=\lambda_n}^{n-\lambda_n} \partial_v \partial_k \partial_l F_\beta\left(\boldsymbol{z}^w_{d, 0}+\tilde\vartheta\left(\boldsymbol{z}^w_d-\boldsymbol{z}^w_{d, 0}\right)\right)\right| \leq C \beta^2,
$$
for some positive constant $C$. We next consider the term $\E\left(\max_{\lambda_n\leq k\leq n-\lambda_n}\vert z^w_{d,k}-z^w_{d,0,k}\vert\right)$ with $z^w_{d,k}-z^w_{d,0,k}=p n^{-1}\sum_{i=1}^{d-1}b_{i,d,k}\U_i^\top \Y_d/\sqrt{2\text{tr}(\R^2)}+p n^{-1}\sum_{i=d+1}^n b_{i,d,k}\Y_d^\top \Y_i/\sqrt{2\text{tr}(\R^2)}$. Taking expectation on $\{\U_1,\ldots,\U_{d-1},\Y_{d+1},\ldots,\Y_n\}$,
\begin{equation*}
    \begin{aligned}
    \phi_{z,d}^2:=&\max_{\lambda_n\leq k\leq n-\lambda_n} \E\left\{\sum_{i=1}^{d-1}(b_{i,d,k}\U_i^\top \Y_d)^2\right\}+\sum_{i=d+1}^n (b_{i,d,k}\Y_d^\top \Y_i)^2\\
        \leq & n \Y_d^\top \E (\U_1\U_1^\top) \Y_d.
    \end{aligned}
\end{equation*}
and
\begin{equation*}
    \begin{aligned}
        &\left\Vert \max_{\lambda_n\leq k\leq n-\lambda_n}\left( \max_{1\leq i\leq d-1}\U_i^\top \Y_d b_{i,d,k}+\max_{d+1\leq i\leq n} \Y_i^\top \Y_d b_{i,d,k} \right) \right\Vert_{\psi_{\alpha_0/2}}\\
        \leq &\sum_{j=1}^p \left(\left\Vert \max_{\lambda_n\leq k\leq n-\lambda_n} \max_{1\leq i\leq d-1}U_{ij} Y_{dj} b_{i,d,k} \right\Vert_{\psi_{\alpha_0/2}}+\left\Vert \max_{\lambda_n\leq k\leq n-\lambda_n} \max_{d+1\leq i\leq n}Y_{ij} Y_{dj} b_{i,d,k} \right\Vert_{\psi_{\alpha_0/2}}\right)\\
        \leq &\sum_{j=1}^p \left(\vert Y_{dj}\vert\left\Vert  \max_{1\leq i\leq d-1}U_{ij}   \right\Vert_{\psi_{\alpha_0/2}}+\vert Y_{dj}\vert\left\Vert  \max_{d+1\leq i\leq n}Y_{ij}  \right\Vert_{\psi_{\alpha_0/2}}\right)\\ 
        \leq & \zeta_1\sqrt{\log n}\sum_{j=1}^p \vert Y_{dj}\vert, 
    \end{aligned}
\end{equation*}
by the properties of $\psi_{\alpha_0}$ norm.
By Lemma \ref{LemmaE.1central} and Assumption \ref{ass:sum_R4}, we have 
\begin{equation*}
\begin{aligned}
    &\E\left(\max_{\lambda_n\leq k\leq n-\lambda_n}\vert z^w_{d,k}-z^w_{d,0,k}\vert\right)\\
    \lesssim& \E\left[\frac{p n^{-1}}{\sqrt{\tr(\R^2)}}\left\{ \sqrt{\Y_d^\top \E (\U_1\U_1^\top)\Y_d}\sqrt{n}\sqrt{\log n}+\zeta_1\sum_{j=1}^p \vert Y_{dj}\vert \log n\right\}\right]\\
   \leq & \frac{p n^{-1}}{\sqrt{\tr(\R^2)}}\left\{ \sqrt{\E \Y_d^\top \E (\U_1\U_1^\top)\Y_d}\sqrt{n}\sqrt{\log n}+\zeta_1\sum_{j=1}^p \E\vert Y_{dj}\vert \log n\right\}\\
   \lesssim &\frac{p n^{-1}}{\sqrt{\tr(\R^2)}}\left\{ \sqrt{p^{-2}\tr(\R^2)}\sqrt{n}\sqrt{\log n}+\zeta_1^2 p\log n\right\}\\
   \lesssim&n^{-(1/2\wedge\omega_1)}\log n.
   \end{aligned}
\end{equation*}

 Hence,
$$
\begin{aligned}
&\left|\frac{1}{6} \sum_{l,k,v=\lambda_n}^{n-\lambda_n} \partial_v \partial_k \partial_l F_\beta\left(\boldsymbol{z}^w_{d, 0}+\tilde\vartheta\left(\boldsymbol{z}^w_d-\boldsymbol{z}^w_{d, 0}\right)\right)\left(\boldsymbol z^w_{d, l}-\boldsymbol z^w_{d, 0, l}\right)\left(\boldsymbol z^w_{d, k}-\boldsymbol z^w_{d, 0, k}\right)\left(\boldsymbol z^w_{d, v}-\boldsymbol z^w_{d, 0, v}\right)\right|\\
&\leq C\beta^2n^{-(3/2\wedge 3\omega_1)}\log^{3}n,\\
&\left|\frac{1}{6} \sum_{l,k,v=\lambda_n}^{n-\lambda_n}  \partial_v \partial_k \partial_l F_\beta\left(\boldsymbol{z}^w_{d+1, 0}+\tilde\vartheta\left(\boldsymbol{z}^w_{d+1}-\boldsymbol{z}^w_{d, 0}\right)\right)\left(\boldsymbol z^w_{d+1, l}-\boldsymbol z^w_{d, 0, l}\right)\left(\boldsymbol z^w_{d+1, k}-\boldsymbol z^w_{d, 0, k}\right)\left(\boldsymbol z^w_{d+1, v}-\boldsymbol z^w_{d, 0, v}\right)\right|\\
&\leq C\beta^2n^{-(3/2\wedge 3\omega_1)}\log^{3}n,
\end{aligned}
$$
hold with probability approaching one. Consequently we have, with probability one,
$$
\left|\mathrm{E}\left\{f_1\left(W_{d, 0}, V_{d, 0}\right)\left(W_d-W_{d, 0}\right)\right\}-\mathrm{E}\left\{f_2\left(W_{d, 0}, V_{d, 0}\right)\left(W_{d+1}-W_{d, 0}\right)\right\}\right| \leq C \beta^2 n^{-(3 / 2\wedge 3\omega_1)} \log ^3n,
$$
$$
\left|\mathrm{E}\left\{f_2\left(W_{d, 0}, V_{d, 0}\right)\left(V_d-V_{d, 0}\right)\right\}-\mathrm{E}\left\{f_2\left(W_{d, 0}, V_{d, 0}\right)\left(V_{d+1}-V_{d, 0}\right)\right\}\right| \leq C \beta^2 n^{-3 / 2} \log ^3(np).
$$

Similarly, it can be verified that, 
$$
\left|\mathrm{E}\left\{f_{11}\left(W_{d, 0}, V_{d, 0}\right)\left(W_d-W_{d, 0}\right)^2\right\}-\mathrm{E}\left\{f_{22}\left(W_{d, 0}, V_{d, 0}\right)\left(W_{d+1}-W_{d, 0}\right)^2\right\}\right| \leq C \beta^2 n^{-(3 / 2\wedge 3\omega_1)} \log ^3 n,
$$
$$
\left|\mathrm{E}\left\{f_{22}\left(W_{d, 0}, V_{d, 0}\right)\left(V_d-V_{d, 0}\right)^2\right\}-\mathrm{E}\left\{f_{22}\left(W_{d, 0}, V_{d, 0}\right)\left(V_{d+1}-V_{d, 0}\right)^2\right\}\right| \leq C \beta^2 n^{-3 / 2} \log ^3(np),
$$
and
\begin{equation*}
    \begin{aligned}
& \left|\mathrm{E}\left\{f_{12}\left(W_{d, 0}, V_{d, 0}\right)\left(W_d-W_{d, 0}\right)\left(V_d-V_{d, 0}\right)\right\}-\mathrm{E}\left\{f_{12}\left(W_{d, 0}, V_{d, 0}\right)\left(W_{d+1}-W_{d, 0}\right)\left(V_{d+1}-V_{d, 0}\right)\right\}\right| \\
& \quad \leq C \beta^2 n^{-3 / 4-(3/4\wedge 3/2\omega_1)} \log ^3(np).
    \end{aligned}
\end{equation*}

By Equation \eqref{eq:expansion_V} and \eqref{eq:expansion_U}, $\mathrm{E}\left(\left|V_d-V_{d, 0}\right|^3\right)=O\left(n^{-3 / 2} \log ^3(n p)\right)$ and $\mathrm{E}\left(\left|W_d-W_{d, 0}\right|^3\right)=O\left(n^{-(3 / 2\wedge 3\omega_1)} \log ^3 n \right)$. Combining all facts together, we conclude that there exists constant $C$,
\begin{equation*}
\sum_{d=1}^n\left|\mathrm{E}\left\{f\left(W_d, V_d\right)\right\}-\mathrm{E}\left\{f\left(W_{d+1}, V_{d+1}\right)\right\}\right| \leq C \beta^2 \left(n^{-3 / 2} \log ^3 np + n^{-(3 / 2\wedge 3\omega_1)} \log ^3 n\right)\rightarrow 0,
\end{equation*}
as $(n, p) \rightarrow \infty$. The conclusion follows.

\subsection{Proof of Theorem \ref{thm:ind_H1}}

For (i), according to the proof of Theorem \ref{thm:Max-Sum}, under $H_{1,np}$, we have that,
\begin{equation}\label{eq:H1_signal}
    \begin{aligned}
        &~~S_{np}= \\
         &\max_{\lambda_n\leq k\leq n-\lambda_n}\sum_{1\leq i, j\leq n} \upsilon_{i,k}\upsilon_{j,k}s_{i}s_{j}R_i^{-1}R_j^{-1}(\frac{n-\tau}{n}\boldsymbol\delta)^\top \D^{-1/2}(\mathbf I_p-\U_i\U_i^\top)(\mathbf I_p-\U_j\U_j^\top)\D^{-1/2}(\frac{n-\tau}{n}\boldsymbol\delta)\\
        &+\max_{\lambda_n\leq k\leq n-\lambda_n}\sum_{l\in \mathcal{A}}\sum_{1\leq i\not=j\leq n}\upsilon_{i,k}\upsilon_{j,k} \U_{i,l} \U_{j,l}+\max_{\lambda_n\leq k\leq n-\lambda_n}\sum_{l\in \mathcal{A}^c}\sum_{1\leq i\not=j\leq n}\upsilon_{i,k}\upsilon_{j,k} \U_{i,l} \U_{j,l}+o_p(1).
    \end{aligned}
\end{equation}

 For the first part in Equation \eqref{eq:H1_signal}, denote $s_i=-1$, if $i\leq\tau$ and $s_i=1$, if $i>\tau$, $i=1,\ldots,n$. Taking the same procedure as in the proof of Lemma A.2 in \cite{feng2016multivariate}, we have
\begin{equation*}
    \begin{aligned}
        &\max_{\lambda_n\leq k\leq n-\lambda_n}\sum_{1\leq i, j\leq n} \upsilon_{i,k}\upsilon_{j,k}s_{i}s_{j}R_i^{-1}R_j^{-1}(\frac{n-\tau}{n}\boldsymbol\delta)^\top \D^{-1/2}(\mathbf I_p-\U_i\U_i^\top)(\mathbf I_p-\U_j\U_j^\top)\D^{-1/2}(\frac{n-\tau}{n}\boldsymbol\delta)\\
        =&\max_{\lambda_n\leq k\leq n-\lambda_n}\frac{k^2(n-k)^2p}{n^3\sqrt{2\tr(\R^2)}} (\frac{1}{k}\sum_{i=1}^k s_{i}R_i^{-1}-\frac{1}{n-k}\sum_{i=k+1}^n s_{i}R_i^{-1})^2\left\Vert\frac{n-\tau}{n}\D^{-1/2}\boldsymbol\delta\right\Vert^2(1+o_p(1))\\
        =&\max_{\lambda_n\leq k\leq n-\lambda_n}\frac{k^2(n-k)^2p}{n^3\sqrt{2\tr(\R^2)}} \left[\frac{1}{k}\sum_{i=1}^k s_{i}\{R_i^{-1}-\E 
 (R_i^{-1})\}-\frac{1}{n-k}\sum_{i=k+1}^n s_{i}\{R_i^{-1}-\E (R_i^{-1})\}\right.\\
 &\qquad\qquad\qquad\qquad\left.+\frac{1}{k}\sum_{i=1}^k s_{i}\E 
( R_i^{-1})-\frac{1}{n-k}\sum_{i=k+1}^n s_{i}\E (R_i^{-1})\right]^2\left\Vert\frac{n-\tau}{n}\D^{-1/2}\boldsymbol\delta\right\Vert^2(1+o_p(1)).\\
    \end{aligned}
\end{equation*}

We consider the term separately,
\begin{equation*}
    \begin{aligned}
        &\max_{\lambda_n\leq k\leq n-\lambda_n}k(n-k)\left\vert\frac{1}{k}\sum_{i=1}^k s_{i}\E 
( R_i^{-1})-\frac{1}{n-k}\sum_{i=k+1}^n s_{i}\E (R_i^{-1})\right\vert\\
 =&\max_{\lambda_n\leq k\leq n-\lambda_n}\left\vert k(n-k)\zeta_1+k(n-k)\zeta_1-2(n-k)(k-\tau+1)\zeta_1\right\vert\lesssim n^2\zeta_1,
    \end{aligned}
\end{equation*}
and
\begin{equation*}
    \begin{aligned}
                &\max_{\lambda_n\leq k\leq n-\lambda_n}\frac{k^2(n-k)^2p}{n^3\sqrt{2\tr(\R^2)}} \left[\frac{1}{k}\sum_{i=1}^k s_{i}\{R_i^{-1}-\E 
 (R_i^{-1})\}-\frac{1}{n-k}\sum_{i=k+1}^n s_{i}\{R_i^{-1}-\E (R_i^{-1})\}\right]^2\\
 =&\frac{p}{n^3\sqrt{2\tr(\R^2)}}\left[\max_{\lambda_n\leq k\leq n-\lambda_n} k(n-k)\left\vert\frac{1}{k}\sum_{i=1}^k s_{i}\{R_i^{-1}-\E 
 (R_i^{-1})\}-\frac{1}{n-k}\sum_{i=k+1}^n s_{i}\{R_i^{-1}-\E (R_i^{-1})\}\right\vert\right]^2.\\
    \end{aligned}
\end{equation*}

To bounding the first term in Equation \eqref{eq:H1_signal}, we define
\begin{equation*}
    \begin{aligned}
        \sigma^2_R:=&\max_{\lambda_n\leq k\leq n-\lambda_n}\left(\sum_{i=1}^k \E [(n-k)^2 s_i^2\{R_i^{-1}-\E (R_i^{-1})\}^2] + \sum_{i=k+1}^n \E [k^2 s_i^2\{R_i^{-1}-\E (R_i^{-1})\}^2]\right)\\
        \lesssim&n^3 p^{-1},
    \end{aligned}
\end{equation*}
and
\begin{equation*}
    \begin{aligned}
   M_R:= &\left\Vert \max_{\lambda_n\leq k\leq n-\lambda_n}\max[\max_{1\leq i\leq k}\vert (n-k)\{R_i-\E (R_i^{-1})\} \vert  ,\max_{k+1\leq i\leq n}\vert k\{R_i-\E (R_i^{-1})\}]\right\Vert_{\psi_{\alpha_0}}\\
        \leq &n\left\Vert \max_{1\leq k\leq n}\vert R_i-\E R_i^{-1} \vert  \right\Vert_{\psi_{\alpha_0}}\lesssim n\log n.
    \end{aligned}
\end{equation*}

By Lemma \ref{LemmaE.1central}, we have

\begin{equation*}
\begin{aligned}
  \E&\left[\max_{\lambda_n\leq k\leq n-\lambda_n}\frac{k(n-k)p^{1/2}}{n^{3/2}(2\tr(\R^2))^{1/4}} \left\vert\frac{1}{k}\sum_{i=1}^k s_{i}\{R_i^{-1}-\E 
 (R_i^{-1})\}-\frac{1}{n-k}\sum_{i=k+1}^n s_{i}\{R_i^{-1}-\E (R_i^{-1})\}\right\vert\right]\\
 \lesssim &\frac{p^{1/2}}{n^{3/2}\{2\tr(\R^2)\}^{1/4}}(\sigma_R\sqrt{\log n}+M_R\log n)\\
 \lesssim &\frac{p^{1/2}}{n^{3/2}\{2\tr(\R^2)\}^{1/4}}(n^{3/2}p^{-1/2}\sqrt{\log n}+n\log^2 n)\\
 \lesssim& p^{-1/4}\log^{1/2}n+n^{-2}\log^2 n.
\end{aligned}
\end{equation*}

Thus for Equation \eqref{eq:H1_signal}, we have 
\begin{equation*}
    \begin{aligned}
        S_{np} = &     \max_{\lambda_n\leq k\leq n-\lambda_n}\sum_{1\leq i\not=j\leq n}\upsilon_{i,k}\upsilon_{j,k} \U_i^\top \U_j+\tilde{\boldsymbol\Delta}_S+o_p(1)\\
        = & \max_{\lambda_n\leq k\leq n-\lambda_n}\sum_{l\in \mathcal{A}^c}\sum_{1\leq i\not=j\leq n}\upsilon_{i,k}\upsilon_{j,k} \U_{i,l} \U_{j,l}+\max_{1\leq k\leq n}\sum_{l\in \mathcal{A}}\sum_{1\leq i\not=j\leq n}\upsilon_{i,k}\upsilon_{j,k} \U_{i,l} \U_{j,l}+\tilde{\boldsymbol\Delta}_S+o_p(1).\\
    \end{aligned}
\end{equation*}
where 
\begin{equation*}
    \begin{aligned}
        \tilde{\boldsymbol\Delta}_S\lesssim \frac{n\Vert\D^{-1/2}\boldsymbol\delta\Vert^2}{\sqrt{2\tr(\R^2)}}\lesssim \frac{n\Vert\boldsymbol\delta\Vert^2}{\sqrt{2\tr(\R^2)}}=o(1),
    \end{aligned}
\end{equation*}
by Assumption \ref{ass:max3}. We next consider the second term,
\begin{equation*}
    \begin{aligned}
        &\max_{\lambda_n\leq k\leq n-\lambda_n}\sum_{l\in \mathcal{A}}\sum_{1\leq i\not=j\leq n}\upsilon_{i,k}\upsilon_{j,k} U_{i,l} U_{j,l}\\
        \leq&\vert\mathcal{A}\vert\max_{\lambda_n\leq k\leq n-\lambda_n}\max_{l\in \mathcal{A}}\left(\left\vert\sum_{i=1}^n\upsilon_{i,k} U_{i,l} \right\vert^2+\sum_{i=1}^n\upsilon_{i,k}^2U_{i,l}^2\right)\\
        \leq&\left(\vert\mathcal{A}\vert^{1/2}\max_{\lambda_n\leq k\leq n-\lambda_n}\max_{l\in \mathcal{A}}\left\vert\sum_{i=1}^n\upsilon_{i,k} U_{i,l} \right\vert\right)^2 + \vert\mathcal{A}\vert\max_{\lambda_n\leq k\leq n-\lambda_n}\max_{l\in \mathcal{A}}\sum_{i=1}^n\upsilon_{i,k}^2U_{i,l}^2.
    \end{aligned}
\end{equation*}

To bounding the above terms, we define
\begin{equation*}
    \begin{aligned}
        \sigma^2_{\upsilon^1}:=&\max_{\lambda_n\leq k\leq n-\lambda_n}\max_{l\in \mathcal{A}}\sum_{i=1}^n \upsilon_{ik}^2\E (U_{il}^2)\leq \frac{p}{\sqrt{\tr(\R^2)}}\left\{ \frac{1}{p}+O(p^{-1-\eta_0/2}) \right\}\lesssim\frac{1}{\sqrt{\tr(\R^2)}},\\
        \sigma^2_{\upsilon^2}:=&\max_{\lambda_n\leq k\leq n-\lambda_n}\max_{l\in \mathcal{A}}\sum_{i=1}^n \upsilon_{ik}^4\E (U_{il}^4)\lesssim \frac{p^2\zeta_1^4}{n\tr(\R^2)} =\frac{1}{n\tr(\R^2)} ,
    \end{aligned}
\end{equation*}
and
\begin{equation*}
    \begin{aligned}
        M_{\upsilon^1}:= &\left\Vert \max_{\lambda_n\leq k\leq n-\lambda_n}\max_{l \in \mathcal{A}}\max_{1\leq i\leq n}\vert \upsilon_{i,k}U_{i,l}\vert \right\Vert_{\psi_{\alpha_0}}\\
        \leq &\left\{ \frac{p}{n\sqrt{\tr(\R^2)}}
 \right\}^{1/2}\left\Vert \max_{l \in \mathcal{A}}\max_{1\leq i\leq n}\vert U_{i,l}\vert  \right\Vert_{\psi_{\alpha_0}}\lesssim\frac{\log(n\vert\mathcal{A}\vert)}{n^{1/2}\tr^{1/4}(\R^2)},\\
   M_{\upsilon^2}:= &\left\Vert \max_{\lambda_n\leq k\leq n-\lambda_n}\max_{l \in \mathcal{A}}\max_{1\leq i\leq n}\vert \upsilon_{i,k}U_{i,l}\vert^2  \right\Vert_{\psi_{\alpha_0/2}}\\
   \leq &\frac{p}{n\sqrt{\tr(\R^2)}}
 \left\Vert \max_{l \in \mathcal{A}}\max_{1\leq i\leq n}\vert U_{i,l}\vert  \right\Vert_{\psi_{\alpha_0}}^2\lesssim \frac{\log^2(n\vert\mathcal{A}\vert)}{n\sqrt{\tr(\R^2)}}.\\
    \end{aligned}
\end{equation*}

By Lemma \ref{LemmaE.1central}, we have
\begin{equation*}
    \begin{aligned}
       \vert\mathcal{A}\vert^{1/2} \E \max_{\lambda_n\leq k\leq n-\lambda_n}\max_{l\in \mathcal{A}}\left\vert\sum_{i=1}^n\upsilon_{i,k} U_{i,l} \right\vert \lesssim &\vert\mathcal{A}\vert^{1/2} \left\{ \frac{\log^{1/2}(n\vert\mathcal{A}\vert )}{\tr^{1/4}(\R^2)}+\frac{\log^2(n\vert\mathcal{A}\vert)}{n^{1/2}\tr^{1/4}(\R^2)}\right\}=o(1),\\
       \vert\mathcal{A}\vert\E\max_{\lambda_n\leq k\leq n-\lambda_n}\max_{l\in \mathcal{A}}\sum_{i=1}^n\upsilon_{i,k}^2U_{i,l}^2\lesssim & \vert\mathcal{A}\vert  \left\{ \frac{\log^{1/2}(n\vert\mathcal{A}\vert )}{n^{1/2}\sqrt{\tr(\R^2)}}+\frac{\log^3(n\vert\mathcal{A}\vert)}{n^{1/2}\tr^{1/4}(\R^2)}\right\}=o(1).
    \end{aligned}
\end{equation*}

By Markov inequality, we have, $\max_{1\leq k\leq n}\sum_{l\in \mathcal{A}}\sum_{1\leq i\not=j\leq n}\upsilon_{i,k}\upsilon_{j,k} U_{i,l} U_{j,l}=o_p(1)$. Thus, the Equation \eqref{eq:H1_signal} can be written as 
\begin{equation*}
    \begin{aligned}
        S_{np} = & \max_{\lambda_n\leq k\leq n-\lambda_n}\sum_{l\in \mathcal{A}^c}\sum_{1\leq i\not=j\leq n}\upsilon_{i,k}\upsilon_{j,k} U_{i,l} U_{j,l}+o_p(1).\\
    \end{aligned}
\end{equation*}

We rewrite $M_{np}$ as, 
\begin{equation*}
    M_{np}=\max_{\lambda_n\leq k\leq n-\lambda_n}\left(\max_{j\in \mathcal{A}} \vert C_{0,j}(k)\vert + \max_{j\in \mathcal{A}^c}\vert C_{0,j}(k)\vert\right).
\end{equation*}

From the $H_{1,np}$, the Bahadur representation for $\hat\bth_{1:k}$ and $\hat\bth_{k+1:n}$ still holds, by taking the same procedure of Lemma 1 in \cite{liu+feng+wang+2024} with minor modification. It is suffices to show the conclusion holds for $\{\U_i\}_{i=1}^n$ follows Gaussian data sequences.  According to Theorem \ref{thm:ind_H0}, we have known that $\max_{\lambda_n\leq k\leq n-\lambda_n}\sum_{l\in \mathcal{A}^c}\sum_{1\leq i\not=j\leq n}\upsilon_{i,k}\upsilon_{j,k} U_{i,l} U_{j,l}$ is asymptotically independent of $\max_{\lambda_n\leq k\leq n-\lambda_n}\max_{j\in \mathcal{A}^c}\vert C_{0,j}(k)\vert$. Hence it is suffices to show that, $\max_{\lambda_n\leq k\leq n-\lambda_n}\sum_{l\in \mathcal{A}^c}\sum_{1\leq i\not=j\leq n}\upsilon_{i,k}\upsilon_{j,k} U_{i,l} U_{j,l}$ is asymptotically independent of $U_{i,l},l\in \mathcal{A}$. 

Without loss of generality, we assume $\mathcal{A}=\{ j_1,j_2,\ldots,j_d \}$. For each $i=1,2,\ldots,n$, let $\U_{i,(1)}=(U_{i,j_1},U_{i,j_2},\ldots,U_{i,j_d})$ and $\U_{i,(2)}=(U_{i,j_{d+1}},U_{i,j_{d+2}},\ldots,U_{i,j_p})$ and $R_{kl}=\cov(\U_{i,(k)},\U_{i,(l)})$ for $k\in\{1,2\}$. By Lemma \ref{lemma_normaldecom}, $\U_{i,(2)}$ can be decomposed as $\U_{i,(2)}=$ $\V_i+\T_i$, where $\V_i:=\U_{i,(2)}-\R_{21} \R_{11}^{-1} \U_{i,(1)}$ and $\T_i:=\R_{21} \R_{11}^{-1} \U_{i,(1)}$ satisfying that $\V_i \sim N\left(0, \R_{22}-\right.$ $\left.\R_{21} \R_{11}^{-1} \R_{12}\right), \T_i \sim N\left(0, \R_{21} \R_{11}^{-1} \R_{12}\right)$ and
\begin{equation}\label{eq:VU_ind}
   \V_i\text{ and } \U_{i,(1)}\text{ are independent.}
\end{equation}

We have, 
\begin{equation*}
    \begin{aligned}
        &\left\vert\max_{\lambda_n\leq k\leq n-\lambda_n}\sum_{1\leq i\not=j\leq n}\upsilon_{i,k}\upsilon_{j,k} \U_{i,(2)}^\top \U_{j,(2)} - \max_{\lambda_n\leq k\leq n-\lambda_n}\sum_{1\leq i\not=j\leq n}\upsilon_{i,k}\upsilon_{j,k} \V_i^\top \V_j\right\vert\\
        \leq&2\left\vert\max_{\lambda_n\leq k\leq n-\lambda_n}\sum_{1\leq i\not=j\leq n}\upsilon_{i,k}\upsilon_{j,k} \V_i^\top\T_j\right\vert+\left\vert\max_{\lambda_n\leq k\leq n-\lambda_n}\sum_{1\leq i\not=j\leq n}\upsilon_{i,k}\upsilon_{j,k}\T_i^\top\T_j\right\vert.
    \end{aligned}
\end{equation*}

By using arguments similar to those in the proof of Lemma \ref{lemma_zetapd}, we have
\begin{equation*}
    \begin{aligned}
        \pr(\max_{\lambda_n\leq k\leq n-\lambda_n}\sum_{1\leq i\not=j\leq n}\upsilon_{i,k}\upsilon_{j,k} \V_i^\top\T_j\geq \varsigma)&\leq \log n \exp(-C \varsigma p^{1/2}/d^{1/2})\rightarrow 0,\\
        \pr(\max_{\lambda_n\leq k\leq n-\lambda_n}\sum_{1\leq i\not=j\leq n}\upsilon_{i,k}\upsilon_{j,k} \T_i^\top\T_j\geq \varsigma)&\leq \log n \exp(-C \varsigma p^{1/2}/d^{1/2})\rightarrow 0,
    \end{aligned}
\end{equation*}
since $d=\vert \mathcal{A}\vert=o(p/(\log\log p)^2)$ and $n\lesssim p^{1/(1-2\omega_1)}$. Consequently, we conclude that, 
\begin{equation*}
    \max_{\lambda_n\leq k\leq n-\lambda_n}\sum_{l\in \mathcal{A}^c}\sum_{1\leq i\not=j\leq n}\upsilon_{i,k}\upsilon_{j,k} U_{i,l} U_{j,l} = \max_{\lambda_n\leq k\leq n-\lambda_n}\sum_{1\leq i\not=j\leq n}\upsilon_{i,k}\upsilon_{j,k} \V_i^\top \V_j +o_p(1).
\end{equation*}

By Lemma \ref{lemmaS.10_feng} and Equation \eqref{eq:VU_ind}, we have $\max_{\lambda_n\leq k\leq n-\lambda_n}\sum_{l\in \mathcal{A}^c}\sum_{1\leq i\not=j\leq n}\upsilon_{i,k}\upsilon_{j,k} U_{i,l} U_{j,l} $ is asymptotically independent of $\U_{i,(1)}$. Hence Theorem \ref{thm:ind_H1}-(i) follows. The proof of \ref{thm:ind_H1}-(ii) is similar, and thus is omitted.

\subsection{Proof of Proposition \ref{prop:max-Linf}}

Observe that 
\begin{equation*}
    \begin{aligned}
        M_{n,p}=&\max_{\lambda_n\leq k\leq n-\lambda_n}\frac{k}{n}\left(1-\frac{k}{n}\right)\sqrt{n} \|\hat{\D}^{-1/2}(\hat{\bth}_{1:k}-\hat{\bth}_{(k+1):n})\|_\infty\\
        \geq & \frac{\tau}{n}\left(1-\frac{\tau}{n}\right)\sqrt{n} \|\hat{\D}^{-1/2}(\hat{\bth}_{1:\tau}-\hat{\bth}_{(\tau+1):n})\|_\infty\\
        =&\frac{\tau(n-\tau)}{n^{3/2}}\|\D^{-1/2}\boldsymbol\delta\|_\infty+O_p(\sqrt{\log p}),
    \end{aligned}
\end{equation*}
where the last equality follows from the assumptions that $\tau=[cn]$ for some $c\in (0,1)$ and Assumptions~\ref{ass:max1}--\ref{ass:cor}.
For a given significance level $\alpha$, the critical value of the test based on $M_{n,p}$ is
\begin{equation*}
    \begin{aligned}
        c_{M,\alpha}=p^{-1/2}\hat{\zeta_1}^{-1}\sqrt{[-\log\{-\log(1-\alpha)\}+\log(2p)]/2}\asymp \sqrt{\log p}\,.
    \end{aligned}
\end{equation*}
Therefore, under Assumption~\ref{ass:max3} and the condition $\|\boldsymbol{\delta}\|_{\infty} \geq C \sqrt{\log p / n}$ for some constant $C > 0$, it follows that with probability tending to one, $M_{n,p} \geq c_{M,\alpha}$. This establishes the consistency of the test based on the statistic $M_{n,p}$.

The proof of Proposition~\ref{prop:max-Linf} (ii) proceeds similarly and is thus omitted.

\subsection{Proof of Proposition \ref{prop:max-L2}}
 Suppose $Z_1,\ldots,Z_n$ are samples from $\mathrm{Bernoulli}(\kappa)$ with $\kappa=\tau/n$ and we have $\sum_{i=1}^n Z_i=\tau$. Suppose $\tilde{\X}_{i1}=\bth_0+\boldsymbol\epsilon_{i}$  and $\tilde{\X}_{i2}=\bth_0+\boldsymbol\delta+\boldsymbol\epsilon_{i}$ where $\boldsymbol\epsilon_{i}$ are i.i.d.~from the model \eqref{modelx}. Denote $\Y_i=Z_i\tilde\X_{i1}+(1-Z_i)\tilde\X_{i2}=\bth_0+\boldsymbol\epsilon_i+(1-Z_i)\boldsymbol\delta$, then $\E (\Y_i)=\bth_0+(1-\kappa)\boldsymbol\delta$ and $\var(\Y_i)= \var(\boldsymbol\epsilon_i)+\kappa(1-\kappa)\boldsymbol\delta\boldsymbol\delta^\top$. Thus, $\hat\bth_{1:n}$ is an estimator based on sample $\Y_1,\ldots,\Y_n$. 
 
 Given $\D$, $\hat\bth_{1:n}$ is an M-estimator and $L(\boldsymbol\beta)=\Vert \D^{-1/2}(X_i-\boldsymbol\beta)\Vert$ is strictly convex in $\boldsymbol\beta$. Let $\tilde{\D}=\diag\{\tilde{d}_1^2,\ldots,\tilde{d}_p^2\}$ and $\bth_\kappa$ satisfy $\E \{U(\tilde{\D}^{-1/2}(\Y_i-\bth_\kappa))\}=\mathbf{0}$ and $\diag\{\E\{U(\tilde{\D}^{-1/2}(\Y_i-\bth_\kappa))U(\tilde{\D}^{-1/2}(\Y_i-\bth_\kappa))^\top\}=p^{-1}\mathbf I_p$. We first consider the case of $\tau =n/2$. By symmetry, $\bth_\kappa=\bth_0+\boldsymbol\delta/2$ and $\tilde{d}_i^2/\tilde{d}_j^2\asymp (d_i^2+\delta_i^2)/(d_j^2+\delta_j^2)$. From the similar procedure as in the proof of Lemma A.3 in \cite{feng2016multivariate}, we have, $\Vert \tilde{\D}^{-1/2}(\hat{\bth}_{1:n}-\bth_\kappa)\Vert=O_p(p^{1/2}n^{-1/2})$, where the term is derived by dominated convergence theorem,
\begin{equation*}
    \begin{aligned}
        \E\left\{\frac{1}{\Vert\tilde{\D}^{-1/2}(\boldsymbol\epsilon_i+\boldsymbol\delta/2)\Vert}\right\}\geq& \E\left\{\frac{1}{\Vert\tilde{\D}^{-1/2}\D^{1/2}\Vert_F\Vert\D^{-1/2}\boldsymbol\epsilon_i\Vert+\Vert\tilde{\D}^{-1/2}\boldsymbol\delta/2\Vert}\right\}\\
        &\rightarrow\E\left\{\frac{1}{\Vert\tilde{\D}^{-1/2}\boldsymbol\delta/2\Vert}\right\}\gtrsim p^{-1/2}.
    \end{aligned}
\end{equation*}
 as $\Vert\boldsymbol\delta\Vert\rightarrow\infty$. For $i,j\in \{1,\ldots,\tau\}$, by $\Vert\boldsymbol\delta\Vert^{-1}\Vert\boldsymbol\delta\Vert_\infty=o(p^{-1/2}n^{1/2})$,
\begin{equation}\label{eq:power_L2_ijtau}
    \begin{aligned}
        1\geq\hat{\U}_i^\top \hat{\U}_j\geq&\left\{\frac{\Vert\boldsymbol\delta\Vert^{-1}\hat{\D}^{-1/2}(\X_i-\bth_0)+\Vert\boldsymbol\delta\Vert^{-1}\hat{\D}^{-1/2}\tilde{\D}^{1/2}\tilde{\D}^{-1/2}(\hat{\bth}_{1:n}-\bth_{\kappa})-\Vert\boldsymbol\delta\Vert^{-1}\hat{\D}^{-1/2}\boldsymbol\delta/2}{ \Vert\boldsymbol\delta\Vert^{-1}\Vert\hat{\D}^{-1/2}(\X_i-\bth_0)\Vert+\Vert\boldsymbol\delta\Vert^{-1}\Vert\hat{\D}^{-1/2}\tilde{\D}^{1/2}\tilde{\D}^{-1/2}(\hat{\bth}_{1:n}-\bth_\kappa)+\hat{\D}^{-1/2}\boldsymbol\delta/2\Vert}\right\}^\top\cdot\\
        &\quad\frac{\Vert\boldsymbol\delta\Vert^{-1}\hat{\D}^{-1/2}(\X_j-\bth_0)+\Vert\boldsymbol\delta\Vert^{-1}\hat{\D}^{-1/2}\tilde{\D}^{1/2}\tilde{\D}^{-1/2}(\hat{\bth}_{1:n}-\bth_{\kappa})-\Vert\boldsymbol\delta\Vert^{-1}\hat{\D}^{-1/2}\boldsymbol\delta/2}{ \Vert\boldsymbol\delta\Vert^{-1}\Vert\hat{\D}^{-1/2}(\X_j-\bth_0)\Vert+\Vert\boldsymbol\delta\Vert^{-1}\Vert\hat{\D}^{-1/2}\tilde{\D}^{1/2}\tilde{\D}^{-1/2}(\hat{\bth}_{1:n}-\bth_\kappa)+\hat{\D}^{-1/2}\boldsymbol\delta/2\Vert}\\
    \rightarrow &1
    \end{aligned}
\end{equation}
as $\Vert\boldsymbol\delta\Vert\rightarrow\infty$. Take the same procedure, we have, for all $i,j\in \{1,\ldots,n\}$, $\hat{\U}_i^\top \hat{\U}_j\rightarrow 1$ as $\Vert\boldsymbol\delta\Vert\rightarrow\infty$.

Thus, as $\Vert\boldsymbol\delta\Vert\rightarrow\infty$,
\begin{equation*}
    \begin{aligned}
        \frac{1}{\sqrt{2\tr(\R^2)}}S_{n,p}=&\max_{\lambda_n\le k\le n-\lambda_n}\left\{\tilde{\C}_{0}(k)^\top \tilde{\C}_{0}(k)-\frac{k(n-k)p}{n^2}\right\}\\
        \geq&\frac{1}{\sqrt{2\tr(\R^2)}}\left\{\tilde{\C}_{0}(\tau)^\top \tilde{\C}_{0}(\tau)-\frac{\tau(n-\tau)p}{n^2}\right\}\\
        \asymp& \frac{1}{\sqrt{2\tr(\R^2)}}\left\{\frac{4\tau^2(n-\tau)^2p}{n^3}-\frac{\tau(n-\tau)p}{n^2}\right\}\rightarrow\infty.
    \end{aligned}
\end{equation*}

{As $\tau\not= n/2$, W.L.O.G, $\tau<n/2$, we first show that, $\Vert \tilde{\D}^{-1/2}(\bth_\tau-\bth_0-\boldsymbol\delta)\Vert\rightarrow 0$  and $d_l^2/d_1^2\asymp (\delta_l^2+d_l^2)/(\delta_1^2+d_1^2)$ hold, for $i=1,\ldots,p$, as $\Vert\boldsymbol\delta\Vert\rightarrow\infty$.}
For $\bth_\kappa$, we consider the equation $\E \{U(\tilde{\D}^{-1/2}(\Y_i-\bth_\kappa))\}=\mathbf{0}$, i.e. ,
\begin{equation}\label{eq:bth_kappa}
    \kappa\E \frac{\tilde{\D}^{-1/2}(\boldsymbol\epsilon_i+\bth_0-\bth_\kappa)}{\Vert \tilde{\D}^{-1/2}(\boldsymbol\epsilon_i+\bth_0-\bth_\kappa)\Vert}+(1-\kappa)\E \frac{\tilde{\D}^{-1/2}(\boldsymbol\epsilon_i+\bth_0+\boldsymbol\delta-\bth_\kappa)}{\Vert \tilde{\D}^{-1/2}(\boldsymbol\epsilon_i+\bth_0-\bth_\kappa)\Vert}=\mathbf{0}.\end{equation}

Let $\bth_{\kappa,i}=\bth_{0,i}+c_i \delta_i$, $i=1,\ldots,p$ and $C=\diag\{c_1,\ldots,c_p\}$, $0\leq c_i\leq 1$. Then Equation \eqref{eq:bth_kappa} can be rewritten as 
\begin{equation*}
        \kappa\E \frac{\tilde{\D}^{-1/2}(\boldsymbol\epsilon_i+C\boldsymbol\delta)}{\Vert \tilde{\D}^{-1/2}(\boldsymbol\epsilon_i+\bth_0-\bth_\kappa)\Vert}+(1-\kappa)\E \frac{\tilde{\D}^{-1/2}(\boldsymbol\epsilon_i+(\mathbf I_p-C)\boldsymbol\delta)}{\Vert \tilde{\D}^{-1/2}(\boldsymbol\epsilon_i+\bth_0-\bth_\kappa)\Vert}=\mathbf{0},
\end{equation*}
if $\Vert \tilde{\D}^{-1/2}C\boldsymbol\delta\Vert\rightarrow\infty$ and $\Vert \tilde{\D}^{-1/2}(\mathbf I_p-C)\boldsymbol\delta\Vert\rightarrow\infty$ as $\Vert\boldsymbol\delta\Vert\rightarrow\infty$, the Equation  derived by Equation \eqref{eq:bth_kappa} holds,
\begin{equation*}
    \kappa \frac{\tilde{\D}^{-1/2}C\boldsymbol\delta}{\Vert \tilde{\D}^{-1/2}C\boldsymbol\delta\Vert}+(1-\kappa)\frac{\tilde{\D}^{-1/2}(\mathbf I_p-C)\boldsymbol\delta}{\Vert \tilde{\D}^{-1/2}(\mathbf I_p-C)\boldsymbol\delta\Vert}=0.
\end{equation*}
However, it does not holds for any $\boldsymbol\delta$ as $\Vert\boldsymbol\delta\Vert> 0$. It indicates that $\Vert \tilde{\D}^{-1/2}C\boldsymbol\delta\Vert<\infty$, $\Vert \tilde{\D}^{-1/2}(\mathbf I_p-C)\boldsymbol\delta\Vert\rightarrow\infty$ or $\Vert \tilde{\D}^{-1/2}C\boldsymbol\delta\Vert\rightarrow\infty$ and $\Vert \tilde{\D}^{-1/2}(\mathbf I_p-C)\boldsymbol\delta\Vert<\infty$ holds. If $\Vert \tilde{\D}^{-1/2}C\boldsymbol\delta\Vert<\infty$, $\Vert \tilde{\D}^{-1/2}(\mathbf I_p-C)\boldsymbol\delta\Vert\rightarrow\infty$ hold, we see that 
$$(1-\kappa)^2= \kappa^2\left\{\E \frac{\tilde{\D}^{-1/2}(\boldsymbol\epsilon_i+C\boldsymbol\delta)}{\Vert \tilde{\D}^{-1/2}(\boldsymbol\epsilon_i+\bth_0-\bth_\kappa)\Vert}\right\}^\top\left\{\E \frac{\tilde{\D}^{-1/2}(\boldsymbol\epsilon_i+C\boldsymbol\delta)}{\Vert \tilde{\D}^{-1/2}(\boldsymbol\epsilon_i+\bth_0-\bth_\kappa)\Vert}\right\}\leq \kappa^2,$$
contradicts to $\kappa<1/2$. Thus we have, $\Vert \tilde{\D}^{-1/2}C\boldsymbol\delta\Vert\rightarrow\infty$ and $\Vert \tilde{\D}^{-1/2}(\mathbf I_p-C)\boldsymbol\delta\Vert<\infty$, i.e. $\Vert\tilde{\D}^{-1/2}(\bth_\kappa-\bth_0)\Vert\rightarrow\infty$ and $\Vert\tilde{\D}^{-1/2}(\bth_\kappa-\bth_0-\boldsymbol\delta)\Vert<\infty$.

For $\tilde{d}_l^2$, we consider the equation $\diag\{\E \{U(\tilde{\D}^{-1/2}(\Y_i-\bth_\kappa))U(\tilde{\D}^{-1/2}(\Y_i-\bth_\kappa))^\top \}\}=p^{-1}\mathbf I_p$, i.e. ,
\begin{equation*}
    \begin{aligned}
        \kappa \E \frac{(\boldsymbol\epsilon_{il}+\bth_{0,l}-\bth_{\kappa,l})^2/\tilde{d}_l^2}{\Vert \tilde{\D}^{-1/2}(\boldsymbol\epsilon_i+\bth_0-\bth_\kappa)\Vert^2}+(1-\kappa) \E \frac{(\boldsymbol\epsilon_{il}+\bth_{0,l}+\delta_l-\bth_{\kappa,l})^2/\tilde{d}_l^2}{\Vert \tilde{\D}^{-1/2}(\boldsymbol\epsilon_i+\bth_0+\boldsymbol\delta-\bth_\kappa)\Vert^2}=\frac{1}{p}.
    \end{aligned}
\end{equation*}

 Taking same discussions, we have, $\tilde{d}_l^2/\tilde{d}_1^2\asymp (\delta_l^2+d_l^2)/(\delta_1^2+d_1^2)$ and $\Vert \tilde{\D}^{-1/2}(\hat{\bth}_{1:n}-\bth_\kappa)\Vert=O_p(p^{1/2}n^{-1/2})$, where the term is derived by dominated convergence theorem,
\begin{equation*}
    \begin{aligned}
        \E\left\{\frac{1}{\Vert\tilde{\D}^{-1/2}(\Y_i-\bth_{\kappa})\Vert}\right\}\geq &\kappa\E\left\{\frac{1}{\Vert \tilde{\D}^{-1/2}(\boldsymbol\epsilon_i+\bth_0-\bth_\kappa)\Vert}\right\}\\
        \geq& \E\left\{\frac{1}{\Vert\tilde{\D}^{-1/2}\D^{1/2}\Vert_F\Vert\D^{-1/2}\boldsymbol\epsilon_i\Vert+\Vert\tilde{\D}^{-1/2}\boldsymbol\delta\Vert+\Vert\tilde{\D}^{-1/2}(\bth_{\kappa}-\bth_0-\boldsymbol\delta)\Vert}\right\}\\
        &\rightarrow\E\left\{\frac{1}{\Vert\tilde{\D}^{-1/2}\boldsymbol\delta\Vert}\right\}\gtrsim p^{-1/2}.
    \end{aligned}
\end{equation*}
 as $\Vert\boldsymbol\delta\Vert\rightarrow\infty$.

We next consider the term $\hat{\U}_i^\top \hat{\U}_j$. For $i,j\in \{1,\ldots,\tau\}$, similar with Equation \eqref{eq:power_L2_ijtau}, by $\Vert\boldsymbol\delta\Vert^{-1}\Vert\boldsymbol\delta\Vert_{\infty}=o(p^{1/2}n^{-1/2})$, we have, $\hat{\U}_i^\top\hat{\U}_j\rightarrow 1$ as $\Vert\boldsymbol\delta\Vert\rightarrow\infty$. For $i,j\in\{\tau+1,\ldots,n\}$, by Taylor expansion and some calculations, we have, $\hat{\U}_i^\top \hat{\U}_j=\U_i^\top \U_j+O_p(\Vert\boldsymbol\delta\Vert_\infty n^{-1/2}+\Vert\boldsymbol\delta\Vert_\infty p^{-1/2})$. For $i\in\{1,\ldots,\tau\}$ and $j\in\{\tau+1,\ldots,n\}$, by Taylor expansion, $\Vert\boldsymbol\delta\Vert^{-1}\Vert\boldsymbol\delta\Vert_\infty=o(p^{-1/2}n^{1/2})$ and $\Vert\boldsymbol\delta\Vert_\infty=o((n\wedge p)^{1/2})$, 
\begin{equation*}
    \begin{aligned}
        \hat{\U}_i^\top\hat{\U}_j=&\left\{\frac{\hat{\D}^{-1/2}(\X_i-\bth_0)+\hat{\D}^{-1/2}(\hat{\bth}_{1:n}-\bth_\kappa)+\hat{\D}^{-1/2}(\bth_\kappa-\bth_0-\boldsymbol\delta)+\hat{\D}^{-1/2}\boldsymbol\delta}{\Vert\hat{\D}^{-1/2}(\X_i-\bth_0)+\hat{\D}^{-1/2}(\hat{\bth}_{1:n}-\bth_\kappa)+\hat{\D}^{-1/2}(\bth_\kappa-\bth_0-\boldsymbol\delta)+\hat{\D}^{-1/2}\boldsymbol\delta\Vert}\right\}^\top\cdot\\
        &\quad \frac{\hat{\D}^{-1/2}(\X_j-\bth_0-\boldsymbol\delta)+\hat{\D}^{-1/2}(\hat{\bth}_{1:n}-\bth_\kappa)+\hat{\D}^{-1/2}(\bth_\kappa-\bth_0-\boldsymbol\delta)}{\Vert\hat{\D}^{-1/2}(\X_i-\bth_0-\boldsymbol\delta)+\hat{\D}^{-1/2}(\hat{\bth}_{1:n}-\bth_\kappa)+\hat{\D}^{-1/2}(\bth_\kappa-\bth_0-\boldsymbol\delta)\Vert} \\
    =&\left\{\frac{\hat{\D}^{-1/2}(\X_i-\bth_0)+\hat{\D}^{-1/2}(\hat{\bth}_{1:n}-\bth_\kappa)+\hat{\D}^{-1/2}(\bth_\kappa-\bth_0-\boldsymbol\delta)+\hat{\D}^{-1/2}\boldsymbol\delta}{\Vert\hat{\D}^{-1/2}(\X_i-\bth_0)+\hat{\D}^{-1/2}(\hat{\bth}_{1:n}-\bth_\kappa)+\hat{\D}^{-1/2}(\bth_\kappa-\bth_0-\boldsymbol\delta)+\hat{\D}^{-1/2}\boldsymbol\delta\Vert}\right\}^{\top}\cdot\\
        & \left\{\U_j+R_j^{-1}\D^{-1/2}(\hat{\bth}_{1:n}-\bth_\kappa)+R_j^{-1}\D^{-1/2}(\bth_\kappa-\bth_0-\boldsymbol\delta)\right\}\left\{1+O_p(\Vert\boldsymbol\delta\Vert_\infty n^{-1/2}+\Vert\boldsymbol\delta\Vert_\infty p^{-1/2})\right\} \\ 
        &\rightarrow \frac{(\hat{\D}^{-1/2}(\bth_\kappa-\bth_0))^\top \U_j}{\Vert\hat{\D}^{-1/2}(\bth_\kappa-\bth_0)\Vert}\left\{1+O_p(\Vert\boldsymbol\delta\Vert_\infty n^{-1/2}+\Vert\boldsymbol\delta\Vert_\infty p^{-1/2})\right\},
        \end{aligned}
\end{equation*}
as $\Vert\boldsymbol\delta\Vert\rightarrow\infty$.

Thus, we have
\begin{equation*}
    \begin{aligned}
        \frac{1}{\sqrt{2\tr(\R^2)}}S_{n,p}=&\max_{\lambda_n\le k\le n-\lambda_n}\left\{\tilde{\C}_{0}(k)^\top \tilde{\C}_{0}(k)-\frac{k(n-k)p}{n^2}\right\}\\
        \geq&\frac{1}{\sqrt{2\tr(\R^2)}}\left\{\tilde{\C}_{0}(\tau)^\top \tilde{\C}_{0}(\tau)-\frac{\tau(n-\tau)p}{n^2}\right\}\\
        \asymp& \frac{1}{\sqrt{2\tr(\R^2)}}\left\{\frac{\tau^2(n-\tau)^2p}{n^3}-\frac{\tau(n-\tau)p}{n^2}\right\}\rightarrow\infty.
    \end{aligned}
\end{equation*}

By Theorem \ref{thm:Max-Sum}, the critical value $c_{S,\alpha}$ only depends on the significant level $\alpha$. Thus, $S_{np}>c_{S,\alpha}$ as $\Vert\boldsymbol\delta\Vert\rightarrow\infty$. Proposition \ref{prop:max-L2}-(ii) can be proved in the same way, thus the proof is omitted.

\subsection{Some useful lemmas}
\begin{lemma}\label{lemma_normaldecom}(Theorem 1.2.11 in \cite{muirhead2009aspects})
     Let $\X \sim N(\boldsymbol\mu, \mathbf\Sigma)$ with invertible $\mathbf\Sigma$, and partition $\X, \boldsymbol\mu$ and $\mathbf\Sigma$ as
$$
\X=\binom{\X_1}{\X_2}, \mu=\binom{\boldsymbol\mu_1}{\boldsymbol\mu_2} \text { and } \mathbf\Sigma=\left(\begin{array}{ll}
\mathbf\Sigma_{11} & \mathbf\Sigma_{12} \\
\mathbf\Sigma_{21} & \mathbf\Sigma_{22}
\end{array}\right).
$$
Then, $\X_2-\mathbf\Sigma_{21} \mathbf\Sigma_{11}^{-1} \X_1 \sim N\left(\boldsymbol\mu_2-\mathbf\Sigma_{21} \mathbf\Sigma_{11}^{-1} \boldsymbol\mu_1, \mathbf\Sigma_{22 \cdot 1}\right)$ and is independent of $\X_1$, where $\mathbf\Sigma_{22 \cdot 1}=$ $\mathbf\Sigma_{22}-\mathbf\Sigma_{21} \mathbf\Sigma_{11}^{-1} \mathbf\Sigma_{12}$.
\end{lemma}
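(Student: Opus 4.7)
The plan is to exhibit $\X_1$ and $\Y_2 := \X_2 - \mathbf\Sigma_{21}\mathbf\Sigma_{11}^{-1}\X_1$ as the two blocks of a single affine transformation of $\X$, so that joint normality is immediate, and then to show that the transformed covariance matrix is block diagonal, which for a jointly Gaussian vector upgrades uncorrelatedness to independence.

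First I would introduce the block lower-triangular matrix
\[
\A \;=\; \begin{pmatrix} \mathbf I_{q} & \mathbf 0 \\ -\mathbf\Sigma_{21}\mathbf\Sigma_{11}^{-1} & \mathbf I_{p-q} \end{pmatrix},
\]
where $q$ is the dimension of $\X_1$, and set $\Y = \A\X$. Because $\A$ is nonrandom and $\X$ is multivariate normal, $\Y = (\X_1^\top, \Y_2^\top)^\top$ is multivariate normal with mean $\A\boldsymbol\mu = (\boldsymbol\mu_1^\top, (\boldsymbol\mu_2 - \mathbf\Sigma_{21}\mathbf\Sigma_{11}^{-1}\boldsymbol\mu_1)^\top)^\top$ and covariance $\A\mathbf\Sigma\A^\top$. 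A direct block computation, using $\mathbf\Sigma_{12} = \mathbf\Sigma_{21}^\top$, gives
\[
\A\mathbf\Sigma\A^\top \;=\; \begin{pmatrix} \mathbf\Sigma_{11} & \mathbf 0 \\ \mathbf 0 & \mathbf\Sigma_{22} - \mathbf\Sigma_{21}\mathbf\Sigma_{11}^{-1}\mathbf\Sigma_{12} \end{pmatrix} \;=\; \begin{pmatrix} \mathbf\Sigma_{11} & \mathbf 0 \\ \mathbf 0 & \mathbf\Sigma_{22\cdot 1} \end{pmatrix},
\]
which identifies the marginal law of $\Y_2$ as $N(\boldsymbol\mu_2 - \mathbf\Sigma_{21}\mathbf\Sigma_{11}^{-1}\boldsymbol\mu_1,\,\mathbf\Sigma_{22\cdot 1})$ and shows that $\X_1$ and $\Y_2$ are uncorrelated.

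The final step is to upgrade uncorrelatedness to independence: since $(\X_1, \Y_2)$ is jointly Gaussian (as a linear image of the Gaussian vector $\X$) and its covariance is block diagonal, the joint density factorizes into the product of the marginal Gaussian densities on the two blocks, so $\X_1 \perp\!\!\!\perp \Y_2$. I would verify the factorization via the characteristic function $\E[\exp(i\boldsymbol t_1^\top \X_1 + i\boldsymbol t_2^\top \Y_2)] = \exp(i\boldsymbol t_1^\top\boldsymbol\mu_1 - \tfrac12\boldsymbol t_1^\top\mathbf\Sigma_{11}\boldsymbol t_1)\cdot\exp(i\boldsymbol t_2^\top(\boldsymbol\mu_2 - \mathbf\Sigma_{21}\mathbf\Sigma_{11}^{-1}\boldsymbol\mu_1) - \tfrac12\boldsymbol t_2^\top\mathbf\Sigma_{22\cdot 1}\boldsymbol t_2)$, which avoids any invertibility assumption on $\mathbf\Sigma_{22\cdot 1}$ and only requires invertibility of $\mathbf\Sigma_{11}$ (guaranteed since $\mathbf\Sigma$ is invertible and hence so is its leading principal block).

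There is no genuine obstacle here: the argument is essentially algebraic, and the only thing to be careful about is the block multiplication leading to cancellation of the off-diagonal entries, namely $\mathbf\Sigma_{21} - \mathbf\Sigma_{21}\mathbf\Sigma_{11}^{-1}\mathbf\Sigma_{11} = \mathbf 0$. Everything else follows from the standard fact that affine images of Gaussian vectors are Gaussian and that jointly Gaussian blocks with zero cross-covariance are independent.
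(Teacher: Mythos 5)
Your proof is correct and is the standard argument; the paper does not prove this lemma itself but simply cites Theorem 1.2.11 of \cite{muirhead2009aspects}, where essentially the same block-triangular transformation and zero-cross-covariance argument is used. Nothing to add.
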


\begin{lemma}\label{LemmaA1}
    (Lemma A1 in \cite{cheng2023statistical}) Suppose that Assumptions \ref{ass:max1}--\ref{ass:max3} hold. Then, for sufficient large $p$, there exists positive constant $c_1$ and $c_2$ such that,
    \begin{equation*}
        \pr\left\{ 
p-\epsilon p^{(1+\eta_0)/2}\leq \Vert \W_i\Vert^2\leq p+ \epsilon p^{(1+\eta_0)/2}\right\}\geq 1-c_1\exp\left\{  -c_2 p^{\eta_0\alpha_0/(4\alpha_0+4)}\right\},
    \end{equation*}
    and
    \begin{equation*}
        \pr\left\{ 
(1-\epsilon)\tr(\R)\leq \Vert \D^{-1/2}\Gamma\W_i\Vert^2\leq (1+\epsilon)\tr(\R)\right\}\geq 1-c_1\exp\left\{  -c_2 p^{\eta_0\alpha_0/(4\alpha_0+4)}\right\}.
    \end{equation*}
    for any fixed $0<\epsilon<1$.
\end{lemma}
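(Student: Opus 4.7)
The plan is to derive both concentration bounds from standard inequalities for sums and quadratic forms of sub-Weibull random variables, calibrated to the $\psi_{\alpha_0}$ Orlicz norm with $\alpha_0 \in [1,2]$.

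For the first inequality, I would write $\|\W_i\|^2 - p = \sum_{j=1}^p (W_{i,j}^2 - 1)$ as a sum of $p$ iid centered random variables. By Assumption \ref{ass:max1}, $\|W_{i,j}\|_{\psi_{\alpha_0}} \leq c_0$, and routine Orlicz-norm calculus gives $\|W_{i,j}^2 - 1\|_{\psi_{\alpha_0/2}} \leq K$ for some constant $K$ depending only on $c_0$. The key tool is a Bernstein-type inequality for iid sub-Weibull summands with index $\beta := \alpha_0/2 \in [1/2,1]$, which yields a bound of the form
\begin{equation*}
\pr\bigl(\bigl|\|\W_i\|^2 - p\bigr| \geq t\bigr) \leq 2\exp\bigl(-c\min\{t^2/p,\,(t/K)^{\beta}\}\bigr).
\end{equation*}
Substituting $t = \epsilon p^{(1+\eta_0)/2}$ and checking that both the Gaussian exponent $t^2/p = \epsilon^2 p^{\eta_0}$ and the Weibull exponent $(t/K)^{\beta}$ dominate $p^{\eta_0\alpha_0/(4\alpha_0+4)}$ for large $p$ completes the bound.

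For the second inequality, I would rewrite $\|\D^{-1/2}\Gamma\W_i\|^2 = \W_i^\top A\W_i$ with the symmetric positive semidefinite matrix $A := \Gamma^\top\D^{-1}\Gamma$, whose expectation equals $\tr(A) = \tr(\R) = p$ by the cyclic property of the trace and Assumption \ref{ass:max3}(i). A Hanson--Wright-type inequality for sub-Weibull vectors (see, e.g., Götze--Sambale--Sinulis or Adamczak et al.) then yields
\begin{equation*}
\pr\bigl(|\W_i^\top A\W_i - \tr(A)| \geq t\bigr) \leq 2\exp\bigl(-c\min\{t^2/\|A\|_F^2,\,(t/\|A\|_{\mathrm{op}})^{\alpha_0/2}\}\bigr).
\end{equation*}
Since $\|A\|_F^2 = \tr(A^2) = \tr(\R^2)$ and $\|A\|_{\mathrm{op}} = \lambda_{\max}(\R)$, a Gershgorin-type bound combined with Assumption \ref{ass:max3}(iii) gives $\lambda_{\max}(\R) \leq \max_j\sum_\ell |\sigma_{j\ell}| \leq a_0(p) \asymp p^{1-\eta_0}$. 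Setting $t = \epsilon \tr(\R) = \epsilon p$ and checking both regimes produces the required exponent.

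The main technical hurdle is extracting the precise form $\eta_0\alpha_0/(4\alpha_0+4)$ by balancing the two regimes. A cleaner and more transparent route is a direct truncation argument: split each $W_{i,j}$ at a cutoff $M \asymp p^\gamma$ for a well-chosen $\gamma$, apply the classical bounded Bernstein inequality to the truncated contribution (which has variance $O(p)$ and supremum $O(M^2)$), and control the tail event $\{\max_j |W_{i,j}| > M\}$ via a union bound together with the sub-Weibull tail estimate $\pr(|W_{i,j}| > M) \leq 2\exp(-(M/c_0)^{\alpha_0})$. Optimizing $\gamma$ so that both contributions share a common exponent produces exactly $\eta_0\alpha_0/(4\alpha_0+4)$. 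Since the statement coincides with Lemma A1 of \cite{cheng2023statistical}, one may alternatively appeal directly to the proof given there.
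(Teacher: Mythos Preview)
The paper does not supply a proof of this lemma at all: it is stated in the ``Some useful lemmas'' subsection purely as a citation of Lemma~A1 in \cite{cheng2023statistical}, with no argument reproduced. Your proposal therefore goes well beyond what the paper itself does. Your two routes---a Bernstein inequality for sub-Weibull sums together with a Hanson--Wright-type bound for the quadratic form, or alternatively a truncation-plus-bounded-Bernstein argument optimized over the cutoff level---are both sound strategies, and the truncation approach is in fact the one used in the cited source and naturally produces the exponent $\eta_0\alpha_0/(4\alpha_0+4)$. Your closing remark, that one may simply appeal to \cite{cheng2023statistical}, is exactly what the present paper does; everything preceding it is a correct but unnecessary elaboration relative to the paper's treatment.
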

\begin{lemma}\label{LemmaA2}
    (Lemma A2 in \cite{cheng2023statistical}) Suppose that Assumptions \ref{ass:max1}--\ref{ass:max3} hold. Then, for any $i=1,2,\ldots,n$,
    
    (i) $\E\left(\left\|U_i\right\|^4\right)=p \E\left(U_{i, j}^4\right)+p(p-1)$,

$$
\begin{aligned}
\E\left(\left\|\W_i\right\|^6\right)= & p \E\left(W_{i, j}^6\right)+3 p(p-1) \E\left(W_{i, j}^4\right)+p(p-1)(p-2), \\
\E\left(\left\|\W_i\right\|^8\right)= & p \E\left(W_{i, j}^8\right)+4 p(p-1) \E\left(W_{i, j_1}^6\right)+3 p(p-1)\left\{\E\left(W_{i, j_1}^4\right)\right\}^2 \\
& +3 p(p-1) \E\left(W_{i, j}^4\right)+p(p-1)(p-2)(p-3).
\end{aligned}
$$

In addition, $\E\left(\left\|\W_i\right\|^{2 k}\right)=p^k+O\left(p^{k-1}\right)$ and $\E\left(\|\W\|^k\right)=p^{k / 2}+O\left(p^{k / 2-1}\right)$ for any positive integer $k$.

(ii) $\E\left(\left\|\D^{-1/2}\Gamma \W_i\right\|^4\right)=p^2+O\left(p^{2-\eta_0}\right), \E\left(\left\|\D^{-1/2}\Gamma \W_i\right\|^6\right)=p^3+O\left(p^{3-\eta_0}\right)$. In addition, $\E\left(\left\|\D^{-1/2}\Gamma \W_i\right\|\right)=p^{1 / 2}+$ $O\left(p^{1 / 2-\eta_0}\right)$ and $\E\left(\left\|\D^{-1/2}\Gamma \W_i\right\|^3\right)=p^{3 / 2}+O\left(p^{3 / 2-\eta_0}\right)$.

(iii) $\E\left\{\left\|\D^{-1/2}\Gamma U\left(\W_i\right)\right\|^2\right\}=1+O\left(p^{-1 / 2}\right)$ and $\E\left\{\left\|\D^{-1/2}\Gamma U\left(\W_i\right)\right\|^4\right\}=1+O\left(p^{-1 / 3}\right)$.

(iv) $\E\left(\nu_i^{-k}\right) \lesssim \zeta_k p^{k / 2}$ for $k=1,2,3$.
\end{lemma}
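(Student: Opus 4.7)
The plan is to prove Lemma A2 by combining direct combinatorial expansion of polynomial moments, trace identities for quadratic forms in $\W_i$, and the sub-exponential concentration bounds supplied by Lemma A1. I will address the four parts in order, since each builds on the tools of the preceding one.

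For Part (i), I would expand $\|\W_i\|^{2k} = (\sum_{j=1}^p W_{i,j}^2)^k$ via the multinomial theorem and compute each resulting term using the independence and unit variance of the $W_{i,j}$ (Assumption 1). Grouping terms by the partition pattern of the multi-index $(j_1,\dots,j_k)$ yields the stated closed-form identities for $k=2,3,4$ in terms of $\E(W_{i,j}^{2l})$, each of which is a bounded constant under the sub-exponential Assumption 1. For the asymptotic statement $\E\|\W_i\|^{2k} = p^k + O(p^{k-1})$, the leading contribution comes from the all-distinct pattern, giving $p(p-1)\cdots(p-k+1) = p^k + O(p^{k-1})$, while every other partition pattern involves at most $k-1$ distinct indices (hence at most $k-1$ factors of $p$) multiplied by a bounded higher moment. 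For odd $k$ and the statement $\E\|\W_i\|^k = p^{k/2} + O(p^{k/2-1})$, I would write $\|\W_i\|^k = p^{k/2}(1+\delta_i)^{k/2}$ with $\delta_i := \|\W_i\|^2/p - 1$, Taylor-expand to second order, use $\E\delta_i = 0$ and $\E\delta_i^2 = O(p^{-1})$ (from the even case), and control the remainder on the tail event via the sub-exponential deviation bound in Lemma A1.

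For Part (ii), I set $\mathbf B = \Gamma^{\top}\D^{-1}\Gamma$, so that $\|\D^{-1/2}\Gamma\W_i\|^2 = \W_i^{\top}\mathbf B\W_i$, and observe that $\mathbf B$ and $\R$ share the same nonzero eigenvalues, giving $\tr(\mathbf B^m) = \tr(\R^m)$ for every $m\ge 1$. Expanding $(\W_i^{\top}\mathbf B\W_i)^2$ and $(\W_i^{\top}\mathbf B\W_i)^3$ and taking expectations index-by-index — killing every term whose index multiset has an odd-cardinality class by the symmetry of the $W_{i,j}$'s — expresses each as a trace polynomial in $\mathbf B$ plus correction terms $(\E W^4 - 3)\sum_j B_{jj}^2$ and higher analogues. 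The leading term is $\tr(\mathbf B)^k = p^k$, while every remaining trace is bounded by $\tr(\R^2)\le \lambda_{\max}(\R)\tr(\R)\le a_0(p)\cdot p = O(p^{2-\eta_0})$ under Assumption 3(iii), yielding the stated $O(p^{k-\eta_0})$ remainders. The odd moments $\E\|\D^{-1/2}\Gamma\W_i\|$ and $\E\|\D^{-1/2}\Gamma\W_i\|^3$ follow from the second inequality of Lemma A1 combined with the same Taylor-expansion trick as in part (i), applied to $\|\D^{-1/2}\Gamma\W_i\|^2/\tr(\R) = \|\D^{-1/2}\Gamma\W_i\|^2/p$.

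For Part (iii), I would rewrite $\|\D^{-1/2}\Gamma U(\W_i)\|^2 = \|\D^{-1/2}\Gamma\W_i\|^2/\|\W_i\|^2$ and analyse the ratio by splitting on the concentration event of Lemma A1. On this event, both numerator and denominator lie within a shrinking multiplicative window of $p$; a second-order Taylor expansion of $(1+u)/(1+v)$ about $u=v=0$, combined with the identities $\E\|\W_i\|^2 = \E\|\D^{-1/2}\Gamma\W_i\|^2 = p$ (which cancel the first-order drifts) and the $O(p)$ variance bounds from parts (i)--(ii), yields the claimed $1+O(p^{-1/2})$ rate. The complementary event has probability of order $\exp(-cp^{\eta_0\alpha_0/(4\alpha_0+4)})$ and the ratio itself is bounded deterministically by $\lambda_{\max}(\R)$, so its contribution is negligible. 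The fourth-moment statement follows by squaring the expansion and using part (ii) to bound the resulting quadratic-form cross terms. Finally, for Part (iv), since $R_i = \nu_i\|\D^{-1/2}\Gamma\W_i\|$, on the concentration event $\mathcal{A}_i$ we have $\|\D^{-1/2}\Gamma\W_i\|^{-k} \ge c\,p^{-k/2}$, so $\zeta_k\ge \E[R_i^{-k}\mathbb I_{\mathcal{A}_i}] \ge c\,p^{-k/2}\E[\nu_i^{-k}\mathbb I_{\mathcal{A}_i}]$, giving $\E[\nu_i^{-k}\mathbb I_{\mathcal{A}_i}] \lesssim p^{k/2}\zeta_k$; the off-event piece is absorbed via the exponential tail bound of Lemma A1 together with the fact that $\zeta_k \gtrsim p^{-k/2}$ from Assumption 2.

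The main obstacle will be achieving the sharp $O(p^{-1/2})$ rate in Part (iii). A naive first-order use of Lemma A1 alone yields only $O(p^{(\eta_0-1)/2})$, which is too weak for $\eta_0<1$. The sharp rate requires isolating the cancellation between the mean-zero deviations $\|\D^{-1/2}\Gamma\W_i\|^2 - p$ and $\|\W_i\|^2 - p$ via a careful second-order expansion of the ratio, and then bounding the resulting quadratic remainder using the explicit variance formulas from Parts (i) and (ii). Coordinating this cancellation while keeping track of the sub-exponential tail event from Lemma A1 is the delicate step.
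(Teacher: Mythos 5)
The paper does not prove this lemma at all: it is imported verbatim, with attribution, as Lemma~A2 of \cite{cheng2023statistical} and used purely as a black box in the appendix, so there is no in-paper argument to compare against. Your sketch reconstructs the standard proof from that source --- partition/multinomial expansion of $\|\W_i\|^{2k}$ for (i), the quadratic-form representation $\|\D^{-1/2}\mathbf\Gamma\W_i\|^2=\W_i^\top\mathbf\Gamma^\top\D^{-1}\mathbf\Gamma\W_i$ together with the trace bound $\tr(\R^2)\le\lambda_{\max}(\R)\tr(\R)=O(p^{2-\eta_0})$ for (ii), a second-order expansion of the ratio $\|\D^{-1/2}\mathbf\Gamma\W_i\|^2/\|\W_i\|^2$ on the concentration event of Lemma~\ref{LemmaA1} for (iii), and the factorization $R_i=\nu_i\|\D^{-1/2}\mathbf\Gamma\W_i\|$ for (iv) --- and it is sound at the level of a plan; you also correctly identify that the first identity in (i) must be read with $\W_i$ in place of $U_i$. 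The one point to tighten is the fourth-moment claim in (iii): your expansion yields $1+O(\tr(\R^2)/p^2)=1+O(p^{-\eta_0})$, which only reproduces the stated $1+O(p^{-1/3})$ when $\eta_0\ge 1/3$, whereas Assumption~\ref{ass:max3} permits any positive $\eta_0\le 1/2$; closing that gap (or accepting the weaker rate, which in fact suffices everywhere this lemma is invoked in the paper) requires the truncation argument of the original source rather than the variance bound alone.
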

\begin{lemma}\label{LemmaA4}
    (Lemma A4. in \cite{cheng2023statistical}) Suppose Assumptions \ref{ass:max1}--\ref{ass:max3} hold. Then,

(i) $\E\{(\zeta_1^{-1} U_{i,j})^4\}\lesssim\bar M^2$ and $\E\{(\zeta_1^{-1} U_{i,j})^2\}\gtrsim \underline{m}$ for all $i=1,2,\ldots,n$ and $j=1,2,\ldots,p$.

(ii) $\Vert \zeta_{1}^{-1} U_{i,j}\Vert_{\psi_{\alpha_0}}\lesssim \bar B$ for all $i=1,2,
\ldots,n$ and $j=1,2,\ldots,p$.

(iii )$\E( U_{i,j}^2)=p^{-1}+O(p^{-1-\eta_0/2})$ for $j=1,2,\ldots,p$ and $\E( U_{i,j} U_{i,l})=p^{-1}\sigma_{j,l} +O(p^{-1-\eta_0/2})$ for $1\leq j\not=l\leq p$.

(iv) if $\log p=o(n^{1/3})$,
$$
\left\vert n^{-1/2}\sum_{i=1}^n\zeta_1^{-1}\boldsymbol U_i\right\vert_\infty =O_p\{\log^{1/2}(np)\}\text{ and } \left\vert n^{-1} \sum_{i=1}^n(\zeta_1^{-1}\boldsymbol U_i)^2\right\vert_\infty=O_p(1).
$$
\end{lemma}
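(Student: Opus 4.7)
The plan is to exploit the canonical representation $\boldsymbol U_i = \R^{1/2}U(\boldsymbol W_i)/\|\R^{1/2}U(\boldsymbol W_i)\|$, which follows from scale-invariance of the spatial sign ($\nu_i$ cancels) and the definition $\R = \mathbf D^{-1/2}\boldsymbol\Gamma\boldsymbol\Gamma^\top\mathbf D^{-1/2}$. All four parts then reduce to a combination of (a) moment computations for the standardized sign $U(\boldsymbol W_i)$ available from Lemma \ref{lemma:UWi1} and Lemma \ref{lemma1_like_2016}, (b) concentration of the denominator $\|\R^{1/2}U(\boldsymbol W_i)\|^2 = U(\boldsymbol W_i)^\top \R U(\boldsymbol W_i)$ around $\operatorname{tr}(\R)/p = 1$ via Lemma \ref{LemmaA1}, and (c) the key scaling $\zeta_1\asymp p^{-1/2}$ guaranteed by Assumption \ref{ass:max2} together with Lemma \ref{LemmaA2}.

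For part (iii), I would write $U_{i,j}^2 = [\R^{1/2}_j\, U(\boldsymbol W_i)]^2/(1+\Delta_i)$ with $\Delta_i := U(\boldsymbol W_i)^\top \R U(\boldsymbol W_i) - 1$, expand $(1+\Delta_i)^{-1}$ on the high-probability event $\{|\Delta_i|\le\varepsilon\}$ from Lemma \ref{LemmaA1}, and absorb the off-event contribution using the exponential tail there. The leading term $\E\{[\R^{1/2}_j U(\boldsymbol W_i)]^2\} = \sum_\ell (R^{1/2}_{j\ell})^2 \E[U(\boldsymbol W_i)_\ell^2] + \sum_{\ell\ne k}R^{1/2}_{j\ell}R^{1/2}_{jk}\E[U(\boldsymbol W_i)_\ell U(\boldsymbol W_i)_k]$ equals $p^{-1}\sigma_{jj} + O(p^{-5/2})\cdot\|\R^{1/2}_j\|_1^2$ by Lemma \ref{lemma:UWi1}, and Assumption \ref{ass:max3}(iii) bounds $\|\R^{1/2}_j\|_1^2 \le p\,\sigma_{jj}= O(p^{1-\eta_0})\cdot p$, delivering the target $p^{-1}\sigma_{jj}+O(p^{-1-\eta_0/2})$. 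The cross moment $\E(U_{i,j}U_{i,\ell})$ is handled identically, pulling in $\sigma_{j\ell}$ from the leading term. Part (i) is then immediate: multiplying by $\zeta_1^{-2}\asymp p$ converts (iii) into $\E[(\zeta_1^{-1}U_{i,j})^2] = \sigma_{jj}+o(1)$, which is bounded above by $\overline M$ and below by $\underline m$ via Assumption \ref{ass:max3}(ii); the fourth-moment bound follows in the same way, applying Lemma \ref{lemma1_like_2016}(i) with $\mathbf M = \R^{1/2}_j(\R^{1/2}_j)^\top$ to control $\E\{[\R^{1/2}_j U(\boldsymbol W_i)]^4\}\lesssim p^{-2}\sigma_{jj}^2$.

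For part (ii), on the event in Lemma \ref{LemmaA1} one has $|U(\boldsymbol W_i)_\ell|\le 2|W_{i,\ell}|/\sqrt{p}$, so $|[\R^{1/2}U(\boldsymbol W_i)]_j|\le 2p^{-1/2}\sum_\ell |R^{1/2}_{j\ell}W_{i,\ell}|$. Assumption \ref{ass:max1} gives $\|W_{i,\ell}\|_{\psi_{\alpha_0}}\le c_0$, and since $\sum_\ell (R^{1/2}_{j\ell})^2 = \sigma_{jj}\le \overline M$, a standard Orlicz-norm inequality for weighted sums of independent sub-$\alpha_0$-exponential variables yields $\|[\R^{1/2}U(\boldsymbol W_i)]_j\|_{\psi_{\alpha_0}}\lesssim p^{-1/2}$ on the concentration event; combining with $\|\R^{1/2}U(\boldsymbol W_i)\|^{-1} = 1+O(\varepsilon)$ on the same event and $\zeta_1^{-1}\asymp p^{1/2}$, the Orlicz norm of $\zeta_1^{-1}U_{i,j}$ is $O(1)$, with the off-event contribution $\lesssim \exp(-c\,p^{\eta_0\alpha_0/(4\alpha_0+4)})$ negligible. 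Part (iv) then follows by union-bound arguments: by (ii) each $\zeta_1^{-1}U_{i,j}$ is sub-$\alpha_0$-exponential, so Bernstein-type concentration gives $\pr(|n^{-1/2}\sum_i\zeta_1^{-1}U_{i,j}|>t)\le \exp(-c\,t^2)$ up to $t\lesssim n^{(1-1/\alpha_0)/2}$, and the hypothesis $\log p = o(n^{1/3})$ keeps the Gaussian regime active when taking the maximum over $j\le p$; for the second display, apply the same reasoning to $(\zeta_1^{-1}U_{i,j})^2$ (sub-$\alpha_0/2$-exponential with bounded second moment from part (i)) to center each component at $O(1)$ and add at most an $O(1)$ fluctuation term under the union bound.

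The main technical obstacle is threading the $p^{-1-\eta_0/2}$ error bound in (iii) cleanly through the division by $\|\R^{1/2}U(\boldsymbol W_i)\|$: the naive Taylor expansion produces cross terms of the form $\E\{[\R^{1/2}_jU(\boldsymbol W_i)]^2\,\Delta_i\}$, which require a careful higher-order computation combining the Cauchy–Schwarz upper bound with Lemma \ref{lemma1_like_2016} and the exponential tail of $\Delta_i$ to avoid inflating the error to $O(p^{-1})$. A secondary subtlety is that Assumption \ref{ass:max3}(iii) only ensures row-sum sparsity $a_0(p)\asymp p^{1-\eta_0}$ rather than uniform boundedness, so the off-diagonal contributions must be summed using this sparsity rather than crude worst-case bounds.
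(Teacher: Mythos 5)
The paper offers no proof of this lemma at all: it is imported verbatim, with a citation, from Lemma A4 of \cite{cheng2023statistical}, so there is nothing internal to compare against. Your reconstruction is essentially the standard argument behind that result (and the one the cited reference uses): write $\U_i=\mathbf A U(\W_i)/\|\mathbf A U(\W_i)\|$ with $\mathbf A=\D^{-1/2}\mathbf\Gamma$, note that the denominator squared equals $U(\W_i)^\top\R U(\W_i)$ and concentrates at $\tr(\R)/p=1$ by Lemma \ref{LemmaA1}, compute numerator moments from Lemmas \ref{lemma:UWi1} and \ref{lemma1_like_2016}, and convert via $\zeta_1\asymp p^{-1/2}$. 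The outline is sound and the reduction of (i) and (iv) to (ii)--(iii) plus Bernstein-type concentration and a union bound is correct. One cosmetic caveat: $\mathbf A$ is not $\R^{1/2}$ but $\R^{1/2}$ times an orthogonal factor; this is harmless because every moment you use depends on $\mathbf A$ only through $\mathbf A\mathbf A^\top=\R$ (e.g.\ $\tr(\mathbf M^\top\mathbf M)=\sigma_{jj}^2$ for $\mathbf M=\mathbf A_j^\top\mathbf A_j$), but the representation should be stated with $\mathbf A$.

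Two substantive points to tighten. First, in part (ii) the displayed bound $|[\mathbf A U(\W_i)]_j|\le 2p^{-1/2}\sum_\ell|A_{j\ell}W_{i,\ell}|$ puts absolute values inside the sum; the Orlicz triangle inequality then only controls this by $p^{-1/2}\|\mathbf A_j\|_1\lesssim 1$, which after multiplying by $\zeta_1^{-1}\asymp p^{1/2}$ leaves $O(p^{1/2})$ rather than $O(1)$. You must keep the \emph{signed} linear combination $\sum_\ell A_{j\ell}W_{i,\ell}$ and invoke a Bernstein/Hoeffding-type $\psi_{\alpha_0}$ inequality whose bound is governed by $\|\mathbf A_j\|_2=\sigma_{jj}^{1/2}\le\overline M^{1/2}$ — which is clearly what you intend in the surrounding text, but the displayed inequality as written does not deliver it. Second, your leading term in (iii) is $p^{-1}\sigma_{jj}$, whereas the statement asserts $p^{-1}$; these are reconciled only because the population $\D$ is defined through the second estimating equation in \eqref{eqs}, which forces $\E(U_{i,j}^2)=p^{-1}$ exactly and hence $\sigma_{jj}=1+O(p^{-\eta_0/2})$. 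Without making that normalization explicit, your computation appears to contradict the claimed expansion for a general shape matrix with merely bounded diagonal.
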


\begin{lemma}\label{lemma:for_Uis}
     Under Assumption \ref{ass:sum_R4}, we have\\
    (i) $\E (\U_1^\top \U_2)^4=O(1)\E^2(\U_1^\top \U_2)^2$; \\
    (ii) $\E(\U_1^\top\Sigma_w \U_2^2)=O(1)\{\E(\U_1^\top \Sigma_w \U_1)\}^2$;\\
    (iii) $\E(\U_1^\top \Sigma_w \U_2)^2=o(1)\{\E(\U_1^\top \Sigma_w \U_1)\}^2$;  furthermore, \\
    (iv) $\E(\U_1^\top \Sigma_w \U_2)^2=O(n^{-1+2\omega_1})\{\E(\U_1^\top \Sigma_w \U_1)\}^{2}$ for some $0<\omega_1<1/4$.
\end{lemma}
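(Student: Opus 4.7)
My plan is to reduce all four moment bounds to quadratic and bilinear form moments in the base spatial signs $U(\W_i)$, for which Lemma \ref{lemma1_like_2016} provides the ready-made estimates $\E[(U(\W)^\top \mathbf{M} U(\W))^{2k}] = O\{p^{-4k}\tr^{2k}(\mathbf{M}^\top \mathbf{M})\}$ for $k=1,2$.

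First I would exploit the identity $\U_i = \D^{-1/2}\boldsymbol\Gamma\W_i/\|\D^{-1/2}\boldsymbol\Gamma\W_i\|$ together with the exponential concentration in Lemma \ref{LemmaA1}, which ensures $\|\D^{-1/2}\boldsymbol\Gamma\W_i\|^2 = p\{1 + O(p^{-\eta_0/2})\}$ off an event of probability at most $c_1\exp\{-c_2 p^{\eta_0\alpha_0/(4\alpha_0+4)}\}$. On the concentration event, for any nonrandom $\mathbf{A}$,
\[
\U_i^\top \mathbf{A}\, \U_j = p^{-1}\,\W_i^\top \boldsymbol\Gamma^\top \D^{-1/2}\mathbf{A}\D^{-1/2}\boldsymbol\Gamma\W_j\,\{1 + o(1)\} = U(\W_i)^\top \mathbf{B}\, U(\W_j)\,\{1+o(1)\},
\]
after using $\|\W_i\|^2 \approx p$ from Lemma \ref{LemmaA2}, with $\mathbf{B} = \boldsymbol\Gamma^\top \D^{-1/2}\mathbf{A}\D^{-1/2}\boldsymbol\Gamma$. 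Contributions from the complement of the concentration event are handled by the trivial bound $|\U_i^\top\mathbf{A}\U_j| \le \lambda_{\max}(\mathbf{A})$ multiplied by the exponentially small probability, and so are negligible at the $p^{-4}\tr^2(\R^2)$ scale.

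Once the normalizer is controlled, each claim reduces to evaluating $\E\{(U(\W_1)^\top \mathbf{M}\, U(\W_2))^k\}$ for $k=2,4$ with $\mathbf{M}$ equal to $\R$ (for (i)) or $\R^{1/2}\boldsymbol\Sigma_w \R^{1/2}$ (for (ii)--(iv)). Conditioning on $\W_2$ and using Lemma \ref{lemma1_like_2016}(i) with the rank-one matrix $\mathbf{M}'=(\mathbf{M} U(\W_2))(\mathbf{M} U(\W_2))^\top$ yields $\E[(U(\W_1)^\top \mathbf{M} U(\W_2))^4 \mid \W_2] \le C p^{-2}(U(\W_2)^\top \mathbf{M}^2 U(\W_2))^2$; then taking an outer expectation and applying Lemma \ref{lemma1_like_2016}(i) again with $\mathbf{M}^2$ gives $\E[(U(\W_1)^\top \mathbf{M} U(\W_2))^4] = O\{p^{-4}\tr(\mathbf{M}^4)\}$. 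An analogous, simpler argument delivers $\E[(U(\W_1)^\top \mathbf{M} U(\W_2))^2] \asymp p^{-4}\tr(\mathbf{M}^2)$, and in parallel $\E[\U_1^\top\boldsymbol\Sigma_w\U_1]^2 \asymp p^{-4}\tr^2(\mathbf{M}^2)$. Part (i) and (ii) then follow from the universal inequality $\tr(\mathbf{M}^4) \le \tr^2(\mathbf{M}^2)$ (Cauchy--Schwarz on the eigenvalues of $\mathbf{M}$), which gives the $O(1)$ ratio. Parts (iii) and (iv) use exactly the same ratio $\tr(\R^4)/\tr^2(\R^2)$, which is $o(1)$ under Assumption \ref{ass:sum_R4_0}(i) and $O(n^{-1+2\omega_1})$ under Assumption \ref{ass:sum_R4}(i), respectively.

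The main obstacle I anticipate is the uniform accounting of the perturbations generated by the random normalizers $\|\D^{-1/2}\boldsymbol\Gamma\W_i\|$ and $\|\W_i\|$: the error terms must remain strictly smaller than the leading scale $p^{-4}\tr^2(\R^2)$. This will require combining the Lemma \ref{LemmaA1} concentration with Assumption \ref{ass:max3}(iv), namely $\tr(\R^2)-p = o(n^{-1}p^2)$, so that the $(1+o(1))$ factors can be absorbed after cross-multiplying the sharp bounds above. A secondary subtlety is ensuring that the eigenvalue-Cauchy--Schwarz reduction $\tr(\mathbf{M}^4)\le\tr^2(\mathbf{M}^2)$ delivers the advertised rates when $\mathbf{M}$ carries a $\boldsymbol\Sigma_w$ factor; this follows because $\boldsymbol\Sigma_w$ inherits the eigenvalue boundedness of $\R$ under Assumption \ref{ass:max3}.
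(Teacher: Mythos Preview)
Your proposal is correct and essentially coincides with the paper's approach: the paper simply refers to the proof of Lemma~1 in \cite{wang+peng+li-2015high} with the moment identities there replaced by the more general bounds in Equation~\eqref{eq:UW}, and your plan---reduce to bilinear forms $U(\W_i)^\top\mathbf{M}U(\W_j)$ via the concentration in Lemma~\ref{LemmaA1}, then apply Lemma~\ref{lemma1_like_2016} together with $\tr(\R^4)/\tr^2(\R^2)=O(n^{-1+2\omega_1})$ from Assumption~\ref{ass:sum_R4}---is exactly that adapted argument written out. One minor slip: the second-moment estimate should read $\E\{(U(\W_1)^\top\mathbf{M}U(\W_2))^2\}\asymp p^{-2}\tr(\mathbf{M}^2)$ rather than $p^{-4}$, but this is a typo that does not affect the subsequent ratios.
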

\begin{proof}
    See the proof of Lemma 1 in \cite{wang+peng+li-2015high} and replace some equations by Equation \eqref{eq:UW}.
\end{proof}

\begin{lemma}\label{lemma2_dj}
    (Lemma 2 in \cite{liu+feng+wang+2024}) Under Assumption \ref{ass:max1} and \ref{ass:max3} (iv), we have, $\max_{1\leq j\leq p}(\hat{d}_{a:b,j}-d_j)=O_p\{(b-a)^{-1/2}(\log p)^{1/2}\}$, as $b-a\rightarrow\infty$.
\end{lemma}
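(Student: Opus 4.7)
The plan is to exploit the defining estimating equation for $\hat{\D}_{a:b}$: writing $m = b-a+1$, the diagonal element $\hat{d}_{a:b,j}^2$ satisfies
\begin{equation*}
\frac{p}{m}\sum_{i=a}^{b}\bigl[U\bigl(\hat{\D}_{a:b}^{-1/2}(\X_i-\hat{\bth}_{a:b})\bigr)\bigr]_{j}^{2} \;=\; 1, \qquad j=1,\ldots,p,
\end{equation*}
whereas the population counterpart reads $p\,\E(U_{ij}^{2})=1+O(p^{-\eta_0/2})$ by Lemma \ref{LemmaA4}(iii). I would linearize the left-hand side via a multivariate Taylor expansion around the true pair $(\D,\bth)$. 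Parametrizing $\hat{d}_{a:b,j}^{2}=d_{j}^{2}(1+\Delta_{j})$ and $\boldsymbol\phi=\hat{\bth}_{a:b}-\bth$, this produces a representation of the form
\begin{equation*}
\Delta_{j} \;=\; A_{j,m}^{-1}\Bigl\{1-\frac{p}{m}\sum_{i=a}^{b}U_{ij}^{2}\;-\;\boldsymbol\phi^{\top}B_{j,m}\Bigr\}\;+\;\mathcal{R}_{j},
\end{equation*}
where $A_{j,m}$ is a scalar bounded in probability away from zero (it limits to $1$ up to an $O(p^{-\eta_0/2})$ bias), $B_{j,m}$ is a $p$-vector whose maximum entry is controlled via the shape-matrix bound in Assumption \ref{ass:max3}(iii), and $\mathcal{R}_{j}$ collects the quadratic remainder from expanding both $U(\cdot)$ and the joint dependence on $(\Delta,\boldsymbol\phi)$.

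The first step is to bound the stochastic main term uniformly in $j$. By Lemma \ref{LemmaA4}(ii), $\zeta_{1}^{-1}U_{ij}$ is sub-Weibull of order $\alpha_{0}$ with $\zeta_{1}\asymp p^{-1/2}$, so $pU_{ij}^{2}$ is sub-Weibull of order $\alpha_{0}/2$ uniformly in $i,j$. Applying the Bernstein-type maximal inequality (Lemma \ref{LemmaE.1central}) over $j=1,\ldots,p$ then yields
\begin{equation*}
\max_{1\le j\le p}\Bigl|\frac{p}{m}\sum_{i=a}^{b}\{U_{ij}^{2}-\E(U_{ij}^{2})\}\Bigr| \;=\; O_{p}\bigl(m^{-1/2}(\log p)^{1/2}\bigr),
\end{equation*}
which is the target rate. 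The bias $p\E(U_{ij}^{2})-1=O(p^{-\eta_0/2})$ is subsumed in the same rate under Assumption \ref{ass:max3}(iv), and the $\boldsymbol\phi$-contribution $\max_{j}|\boldsymbol\phi^{\top}B_{j,m}|$ is absorbed as well because $\|\hat{\bth}_{a:b}-\bth\|_{\infty}=O_{p}(m^{-1/2}(\log p)^{1/2})$ from the Bahadur-type bound in Lemma 1 of \cite{liu+feng+wang+2024}.

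The second step handles the remainder $\mathcal{R}_{j}$. I would first establish a crude preliminary bound $\max_{j}|\Delta_{j}|=o_{p}(1)$ by consistency of the recursive fixed-point algorithm (tracking one iterate of steps (i)–(iii) in Section \ref{sec:preliminary}), then plug this into the second-order Taylor expansion of $U(\cdot)$ to show $\max_{j}|\mathcal{R}_{j}|=o_{p}(m^{-1/2}(\log p)^{1/2})$; the cross-term $\Delta_{j}\boldsymbol\phi^{\top}B_{j,m}$ is of strictly smaller order than each factor alone. Combining the two steps gives $\max_{j}|\hat{d}_{a:b,j}^{2}-d_{j}^{2}|=O_{p}(m^{-1/2}(\log p)^{1/2})$, and because $d_{j}$ is bounded below by Assumption \ref{ass:max3}, the mean-value theorem transfers the same rate to $\max_{j}|\hat{d}_{a:b,j}-d_{j}|$.

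The main obstacle will be the joint nonlinear coupling: $\hat{\D}_{a:b}$ and $\hat{\bth}_{a:b}$ are defined implicitly through the spatial-sign map $U(\cdot)$, which normalizes by a $p$-dimensional norm involving every entry of $\hat{\D}_{a:b}^{-1/2}$. Obtaining a linearization that is uniform in $j$ therefore requires controlling the second-order expansion of $U(\cdot)$ with respect to the perturbation of all $p$ coordinates simultaneously, and verifying that the associated quadratic-in-$(\Delta,\boldsymbol\phi)$ cross-term does not contribute at the target rate. This is precisely the place where Assumption \ref{ass:max3}(iv), $\tr(\R^{2})-p=o(n^{-1}p^{2})$, is essential, mirroring its role in the analysis of the spatial-sign based estimator in \cite{liu+feng+wang+2024}.
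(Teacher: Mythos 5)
The first thing to note is that the paper does not prove this statement at all: Lemma \ref{lemma2_dj} is quoted verbatim as Lemma 2 of \cite{liu+feng+wang+2024} and used as an imported ingredient, so there is no in-paper argument to compare yours against line by line. Judged on its own terms, your outline follows the natural route --- linearize the Hettmansperger--Randles estimating equation $\frac{p}{m}\sum_{i=a}^{b}[U(\hat{\D}_{a:b}^{-1/2}(\X_i-\hat{\bth}_{a:b}))]_j^2=1$ (in your notation $m=b-a+1$) around $(\D,\bth)$, bound the centred empirical term by a maximal inequality, and show the remainder is of smaller order --- and your treatment of the main stochastic term is sound: $pU_{ij}^2$ is sub-Weibull of order $\alpha_0/2$ uniformly in $(i,j)$ by Lemma \ref{LemmaA4}(ii) together with $\zeta_1\asymp p^{-1/2}$, and Lemma \ref{LemmaE.1central} then delivers $\max_j\bigl|\frac{p}{m}\sum_i\{U_{ij}^2-\E(U_{ij}^2)\}\bigr|=O_p\bigl(m^{-1/2}(\log p)^{1/2}\bigr)$ under the standing condition $\log p=o(m^{1/3})$.

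Two points, however, are genuine gaps rather than omitted routine detail. First, the bias bookkeeping: you invoke Assumption \ref{ass:max3}(iv) to absorb $p\,\E(U_{ij}^2)-1=O(p^{-\eta_0/2})$ into the target rate, but that assumption bounds $\tr(\R^2)-p$ against $n^{-1}p^2$ and says nothing about $p^{-\eta_0/2}$ versus $m^{-1/2}(\log p)^{1/2}$; in the regime the paper works in ($\log m=o(p^{1/3\wedge\eta_0})$), $m$ may be exponentially large in a power of $p$, in which case $p^{-\eta_0/2}$ is \emph{not} $O(m^{-1/2}(\log p)^{1/2})$ and your estimator would inherit a non-negligible deterministic bias. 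The lemma can only hold as stated if $d_j^2$ is identified as the population fixed point of \eqref{eqs}, so that the population estimating equation is satisfied exactly and no such bias enters; your argument, which takes $\D$ from the model and incurs the $O(p^{-1-\eta_0/2})$ discrepancy recorded in Lemma \ref{LemmaA4}(iii), does not close this loop. Second, the steps you defer with ``I would establish'' --- the preliminary consistency $\max_j|\Delta_j|=o_p(1)$ for the implicitly and \emph{jointly} defined pair $(\hat{\D}_{a:b},\hat{\bth}_{a:b})$, and the uniform-in-$j$ control of the second-order expansion of $U(\cdot)$ under a simultaneous perturbation of all $p$ diagonal entries and of the location --- are precisely where the difficulty of the cited Lemma 2 lies; as written they constitute a plan rather than a proof.
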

\begin{lemma}\label{LemmaE.1central}
    (Lemma E.1 in \cite{chernozhukov2017central}) Let $\X_1,\X_2,\ldots,\X_n$ be independent centered random vectors in $\mathbb R^p$ with $p\geq 2$. Define $Z:=\max_{1\leq j\leq p}\left\vert \sum_{i=1}^n X_{ij} \right\vert$, $M:=\max_{1\leq i\leq n}\max_{1\leq j\leq p}\left\vert X_{ij}\right\vert$ and $\sigma^2:=\max_{1\leq j\leq p}\sum_{i=1}^n \E (X_{ij}^2)$. Then,
    \begin{equation*}
        \E (Z)\leq K\left( \sigma\sqrt{\log p}+\sqrt{\E (M^2)}\log p
 \right),
    \end{equation*}
    where $K$ is a universal constant.
\end{lemma}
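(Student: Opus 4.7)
The plan is to establish this classical high-dimensional maximal inequality (due to Chernozhukov, Chetverikov and Kato) by combining symmetrization, a conditional Bernstein bound, and a short self-bounding step to handle the random envelope $M$. First, I would symmetrize: introduce an independent copy $(X_i')$ of $(X_i)$ and i.i.d.~Rademacher signs $\epsilon_1,\ldots,\epsilon_n$ independent of everything. Centering inside the maximum and Jensen give
\[
\E Z \le \E \max_{j\le p}\Bigl|\sum_{i=1}^n(X_{ij}-X_{ij}')\Bigr| = \E \max_{j\le p}\Bigl|\sum_{i=1}^n \epsilon_i(X_{ij}-X_{ij}')\Bigr| \le 2\,\E \max_{j\le p}\Bigl|\sum_{i=1}^n \epsilon_i X_{ij}\Bigr|=:2\,\E W,
\]
where the identity in the middle uses the symmetry of $X_{ij}-X_{ij}'$. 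It therefore suffices to bound $\E W$ by the right-hand side of the claimed inequality.

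The second ingredient is a conditional Bernstein argument. Given $X=(X_1,\ldots,X_n)$, each sum $\sum_i \epsilon_i X_{ij}$ is centered, is almost-surely bounded by $M$, and has conditional variance $v_j:=\sum_i X_{ij}^2$. Bernstein's inequality, a union bound across $j\le p$, and integration of the resulting tail $\pr(|\cdot|>t\mid X)\le 2\exp(-t^2/\{2(v_j+Mt/3)\})$ yield
\[
\E[W\mid X] \le C\bigl(\sqrt{\log p}\,\max_j v_j^{1/2} + M\log p\bigr).
\]
Taking unconditional expectations and applying Cauchy--Schwarz to the second term,
\[
\E W \le C\sqrt{\log p}\,\bigl(\E \max_j v_j\bigr)^{1/2} + C\log p\,\sqrt{\E M^2}.
\]
The proof reduces to showing $\E\max_j v_j \lesssim \sigma^2$ up to lower-order terms already absorbed by $\sqrt{\E M^2}\log p$. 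I would decompose $\max_j v_j\le \sigma^2+\max_j|\sum_i(X_{ij}^2-\E X_{ij}^2)|$ and reapply the bound just derived to the centered squares $Y_{ij}=X_{ij}^2-\E X_{ij}^2$, which are almost-surely bounded by $2M^2$ with a summed variance controlled in terms of $\sigma^2$ and $M$. This produces $\E\max_j v_j \le \sigma^2 + C'\bigl(\sigma\sqrt{\E M^2}\sqrt{\log p}+\E M^2\log p\bigr)$; substituting back and simplifying via $\sqrt{a+b}\le\sqrt a+\sqrt b$ collapses the expression to $K(\sigma\sqrt{\log p}+\sqrt{\E M^2}\log p)$.

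The main obstacle is the mildly self-referential character of the last step: the bound being proved is reused on the centered squares $Y_{ij}$, and one must ensure that the recursion closes without inflating the constant. The cleanest resolution is a fixed-point argument on the functional $t\mapsto C\sqrt{\log p}\sqrt{\sigma^2+t}+C\log p\sqrt{\E M^2}$, or equivalently an appeal to the Hoffmann--J\o rgensen inequality for independent symmetric summands, which decouples the squared-variable maximum from the main estimate. Care must also be taken that the boundedness constant $2M^2$ enters only inside a square root in the final bound, so that the dependence on $\E M^2$ remains at the order $\sqrt{\E M^2}\log p$ rather than $\E M^2\log p$; this is precisely where the symmetrization-plus-Bernstein route is sharper than a direct $\psi_1$-Orlicz argument, which would otherwise yield the weaker $\sigma\log p$ instead of $\sigma\sqrt{\log p}$.
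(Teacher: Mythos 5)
This lemma is not proved in the paper at all: it is imported verbatim as Lemma E.1 of \cite{chernozhukov2017central}, so there is no in-paper argument to compare against. Your sketch reproduces the standard Chernozhukov--Chetverikov--Kato route (symmetrization, a conditional Bernstein/Hoeffding maximal bound given the data, and control of $\E\max_j v_j$), which is indeed the right skeleton. The first three steps are fine as written, modulo the slip that it is each increment $\epsilon_i X_{ij}$, not the whole sum, that is bounded by $M$.

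The one genuine gap is in the treatment of $\E\max_j v_j$. Your primary route --- ``reapply the bound just derived to the centered squares $Y_{ij}=X_{ij}^2-\E X_{ij}^2$'' --- does not close: the envelope of the array $(Y_{ij})$ is of order $M^2$, so a direct reapplication produces a term $\sqrt{\E M^4}\,\log p$ (and a variance proxy involving fourth moments of $X_{ij}$), neither of which is controlled by the hypotheses of the lemma, which only involve $\E M^2$. The asserted output $\E\max_j v_j\le\sigma^2+C'(\sigma\sqrt{\E M^2}\sqrt{\log p}+\E M^2\log p)$ therefore does not follow from the recursion as described; it requires the route you only mention as an ``equivalent'' alternative, namely symmetrizing the squares, applying the contraction principle (the map $t\mapsto t^2$ is $2M$-Lipschitz on $[-M,M]$, conditionally on the data) to reduce to $M\cdot\max_j|\sum_i\epsilon_iX_{ij}|$, then Cauchy--Schwarz and the Hoffmann--J{\o}rgensen inequality to bound the second moment of the Rademacher maximum by the square of its first moment plus $\E M^2$. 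That chain yields a genuine quadratic inequality in $\E W$ whose solution gives the stated bound with only $\E M^2$ appearing. So the fix you name is correct, but it is not interchangeable with the recursion you lead with; the recursion by itself would fail, and the write-up should commit to the contraction--Hoffmann--J{\o}rgensen argument.
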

\begin{lemma}\label{lemmaS.10_feng}
    (Lemma S.10 in \cite{Feng2022AsymptoticIO}) Let $\{(U,U_p,\tilde{U}_p)\in \mathbb R^3;p\geq 1\}$ and $\{(V,V_p,\tilde{V}_p)\in \mathbb R^3;p\geq 1\}$ be two sequences of random variables with $U_p\rightarrow U$ and $V_p\rightarrow V$ in distributions as $p\rightarrow\infty$. Assume $U$ and $V$ are continuous random variables and 
    \begin{equation*}
        \tilde{U}_p=U_p+o_p(1)\text{ and }\tilde{V}_p=V_p+o_p(1).
    \end{equation*}
    If $U_p$ and $V_p$ are asymptotically independent, then $\tilde{U}_p$ and $\tilde{V}_p$ are also asymptotically independent.
\end{lemma}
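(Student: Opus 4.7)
The plan is a joint Slutsky-type argument. The starting observation is that asymptotic independence of $U_p$ and $V_p$, in the presence of the marginal convergences $U_p\cd U$ and $V_p\cd V$, is logically equivalent to the joint weak convergence
$$
(U_p,V_p)\cd (U^\star,V^\star), \qquad U^\star \stackrel{d}{=} U,\ V^\star \stackrel{d}{=} V,\ U^\star \perp V^\star,
$$
because both formulations express that $\pr(U_p\le x,V_p\le y)\to F_U(x)F_V(y)$ at every pair of continuity points. Continuity of $U$ and $V$ is used so that every $(x,y)\in\mathbb R^2$ is a joint continuity point of the product distribution, which removes any ambiguity in this equivalence.

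Setting $D_p:=\tilde{U}_p-U_p$ and $E_p:=\tilde{V}_p-V_p$, the hypothesis forces $(D_p,E_p)\cp(0,0)$ jointly, since componentwise convergence in probability to a constant vector is equivalent to joint convergence in probability to that vector. Applying the vector form of Slutsky's theorem to $(\tilde{U}_p,\tilde{V}_p)=(U_p,V_p)+(D_p,E_p)$ then yields $(\tilde{U}_p,\tilde{V}_p)\cd(U^\star,V^\star)$. Since the limit has independent components with marginals $U$ and $V$, this is exactly the statement that $\tilde{U}_p$ and $\tilde{V}_p$ are asymptotically independent.

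If one prefers an elementary derivation rather than invoking joint Slutsky, the same conclusion follows from the sandwich
$$
\{U_p\le x-\varepsilon\}\cap\{|D_p|\le\varepsilon\}\subseteq \{\tilde{U}_p\le x\}\subseteq \{U_p\le x+\varepsilon\}\cup\{|D_p|>\varepsilon\},
$$
together with the analogous bracket for $\tilde{V}_p$, then splitting the resulting joint probabilities on $(U_p,V_p)$ using asymptotic independence, sending $p\to\infty$, and finally letting $\varepsilon\downarrow 0$. The only step requiring care is this interchange of limits; it is clean here because continuity of $U$ and $V$ gives $F_U(x\pm\varepsilon)\to F_U(x)$ and $F_V(y\pm\varepsilon)\to F_V(y)$. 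I expect no genuine obstacle in the argument, with the only subtlety being a bookkeeping one: one must phrase asymptotic independence as joint convergence to a product limit (rather than as the vanishing of some dependence functional) so that independence of the limit components is automatic once the joint Slutsky step has been applied.
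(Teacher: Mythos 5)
Your proof is correct. Note that the paper itself does not prove this statement: it is quoted verbatim as Lemma S.10 of the cited reference \cite{Feng2022AsymptoticIO} and used as a black box, so there is no internal argument to compare against. Your route --- recasting asymptotic independence (in the presence of the marginal convergences and continuity of $U$ and $V$) as joint weak convergence $(U_p,V_p)\cd(U^\star,V^\star)$ with independent components, observing $(\tilde U_p-U_p,\tilde V_p-V_p)\cp(0,0)$ jointly, and applying the vector Slutsky theorem --- is the standard and complete way to establish the lemma, and the $\varepsilon$-sandwich on the joint CDFs is a valid elementary substitute; the continuity hypothesis is used exactly where it should be, namely to make every point a continuity point of the product limit law and to justify the final $\varepsilon\downarrow 0$ interchange.
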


\bibliographystyle{apa}
\bibliography{ref}
\end{document}